\theoremstyle{plain}
\newtheorem{lem}{Lemma}[section] 
\newtheorem{cor}[lem]{Corollary} 
\newtheorem{prop}[lem]{Proposition} 
\newtheorem{thm}[lem]{Theorem}   
\newtheorem{con}[lem]{Conjecture}   
\theoremstyle{definition}
\newtheorem{defi}{Definition}[section]
\newcommand{\tr}{\operatorname{tr}}
\newcommand{\ket}[1]{\ensuremath{\vert#1\rangle}}
\newcommand{\bra}[1]{\ensuremath{\langle #1|}}
\newcommand{\braket}[2]{\ensuremath{\langle #1\vert#2\rangle}}
\newcommand{\ketbra}[2]{\ensuremath{\vert#1\rangle\!\langle #2\vert}}
\newcommand{\ZX}{\ensuremath{Z\!X}}
\newcommand{\etal}{\emph{et~al.}}
\newcommand{\ens}[0]{\ensuremath}
\newcommand{\iE}[0]{\ens{\mathrm{i}}}
\newcommand{\EZ}[0]{\ens{\mathrm{e}}} 
\newcommand{\GF}[1]{\ens{\F_{#1}}}
\newcommand{\GL}[2]{\ens{\mathrm{GL}_{#1}(#2)}}
\newcommand{\Eins}[0]{\ens{\mathbbm{1}}}
\newcommand{\F}[0]{\ens{\mathbb{F}}}
\newcommand{\N}[0]{\ens{\mathbb{N}}}
\newcommand{\Z}[0]{\ens{\mathbb{Z}}}
\newcommand{\R}[0]{\ens{\mathbb{R}}}
\newcommand{\C}[0]{\ens{\mathbb{C}}}
\newcommand{\cB}[0]{\ens{\mathcal{B}}}
\newcommand{\cC}[0]{\ens{\mathcal{C}}}
\newcommand{\cM}[0]{\ens{\mathcal{M}}} 
\newcommand{\fS}[0]{\ens{\mathfrak{S}}}
\newcommand{\fC}[0]{\ens{\mathfrak{C}}}
\newcommand{\cH}[0]{\ens{\mathcal{H}}}
\newcommand{\Mg}[1]{\ens{\left\lbrace #1 \right\rbrace}}
\newcommand{\MgN}[1]{\ens{\Mg{0,\dots,#1}}}
\newcommand{\MgE}[1]{\ens{\Mg{1,\dots,#1}}}
\newcommand{\abs}[1]{\ens{\left|#1\right|}}
\newcommand{\gen}[1]{\ens{\langle #1 \rangle}}
\newcommand{\suc}[0]{\ens{\mathrm{suc}}}
\newcommand{\key}[0]{\ens{\bm{k}}}
\newcommand{\word}[0]{\ens{\bm{w}}}
\newcolumntype{C}[1]{>{\centering\arraybackslash}p{#1}}
\newcommand*\patchAmsMathEnvironmentForLineno[1]{%
  \expandafter\let\csname old#1\expandafter\endcsname\csname #1\endcsname
  \expandafter\let\csname oldend#1\expandafter\endcsname\csname end#1\endcsname
  \renewenvironment{#1}%
     {\linenomath\csname old#1\endcsname}%
     {\csname oldend#1\endcsname\endlinenomath}}%
\newcommand*\patchBothAmsMathEnvironmentsForLineno[1]{%
  \patchAmsMathEnvironmentForLineno{#1}%
  \patchAmsMathEnvironmentForLineno{#1*}}%
\newlength{\linespace}
\let\ps@plain\ps@empty
\newsavebox{\ChpNumBox}
\definecolor{ChapBlue}{rgb}{0.00,0.65,0.65}
\newcommand*{\thickhrulefill}{%
\leavevmode\leaders\hrule height 1\p@ \hfill \kern \z@}
\newcommand*\BuildChpNum[2]{%
\thispagestyle{empty}
\begin{tabular}[t]{@{}c@{}}
\makebox[0pt][c]{#1\strut} \\[.5ex]
\colorbox{ChapBlue}{%
\rule[-10em]{0pt}{0pt}%
\rule{1ex}{0pt}\color{black}#2\strut
\rule{1ex}{0pt}}%
\end{tabular}}
\renewcommand{\chapnamefont}{\large\scshape}
\renewcommand{\chapnumfont}{\Huge\bfseries}
\renewcommand{\printchapternum}{%
\sbox{\ChpNumBox}{%
\BuildChpNum{\chapnamefont\@chapapp}%
{\chapnumfont\thechapter}}}
\renewcommand{\printchapternonum}{%
\sbox{\ChpNumBox}{%
\BuildChpNum{\chapnamefont\vphantom{\@chapapp}}%
{\chapnumfont\hphantom{\thechapter}}}%
}
\begin{document}
 
\pagenumbering{roman}

\selectlanguage{american}


\thispagestyle{empty} 
\begin{center}
  \vspace{0.5cm}
  {\LARGE \textsc{Cyclic Mutually Unbiased Bases}\vspace{0.7cm}}\\ 
  {\large \textsc{and Quantum Public-Key Encryption}}\vspace{1.2cm}\\

   \includegraphics[height=70pt]{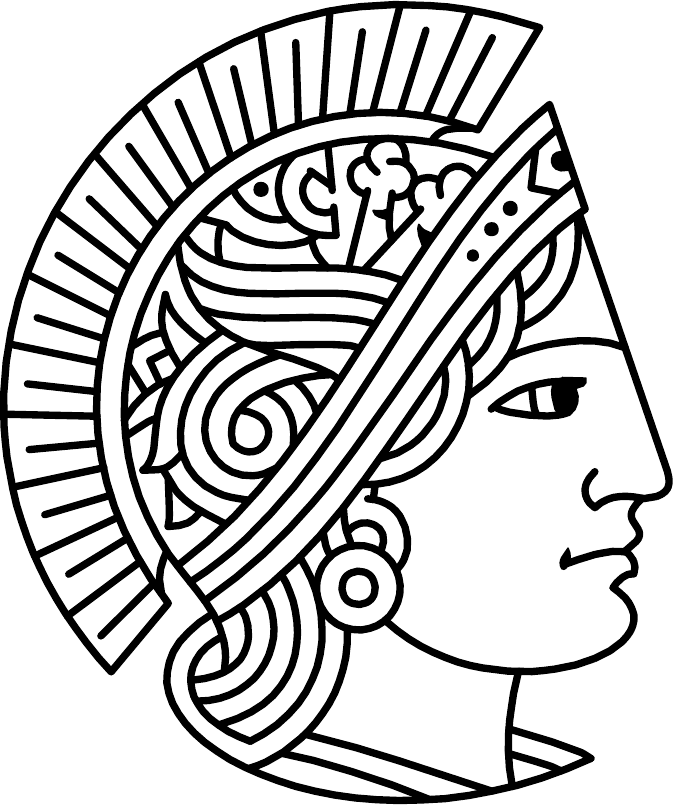}\vspace{1.2cm}\\

  Vom Fachbereich Physik\\
  der Technischen Universität Darmstadt\\
  zur Erlangung der Würde\\
  eines Doktors der Naturwissenschaften\\
  (Dr. rer. nat.)\\
  genehmigte\vspace{1cm}\\

  \textsc{D i s s e r t a t i o n}\vspace{1cm}\\

  von\vspace{1cm}\\

  \textbf{Dipl.-Phys. Ulrich Seyfarth}\vspace{1cm}\\

  aus Hadamar\vspace{1.5cm}\\

  Darmstädter Dissertation\\
  Darmstadt 2013\\
  D17 
  \vspace{0.5cm}\\
\end{center}


\thispagestyle{empty}\enlargethispage{5cm}

\newpage\phantom{Dies hier nur, damit vfill wirkt.}\vfill 
\begin{center}\begin{tabular}{ll}
  Referent:                   & Prof. Dr. rer. nat. Gernot Alber\\
  Korreferent:                & Prof. Dr. rer. nat. Robert Roth\\
  Tag der Einreichung:        & Mittwoch, 28. November 2012\\
  Tag der mündlichen Prüfung: & Montag, 11. Februar 2013\\
\end{tabular}\end{center}

\thispagestyle{empty}
\newpage




%


\chapter*{Cyclic mutually unbiased bases and quantum public-key encryption}
\section*{Abstract}
Based on quantum physical phenomena, quantum information theory has a potential which
goes beyond the classical conditions. Equipped with the resource of complementary information as an intrinsic
property it offers many new perspectives. The field of quantum key distribution, which enables the
ability to implement unconditional security, profits directly from this resource. To measure the state of
quantum systems itself for different purposes in quantum information theory, which may be related to
the construction of a quantum computer, as well as to realize quantum key distribution schemes, a
certain set of bases is necessary. A type of set which is minimal is given by a complete set of mutually unbiased bases.
The construction of these sets is discussed in the first part of this work. We present complete sets of
mutually unbiased bases which are equipped with the additional property to be constructed cyclically, which means,
each basis in the set is the power of a specific 
generating basis of the set. Whereas complete sets of mutually unbiased bases are related to many mathematical problems,
it is shown that a new construction of cyclic sets is related to Fibonacci polynomials. Within this context,
the existence of a symmetric companion matrix over the finite field $\F_2$ is conjectured. For all
Hilbert spaces which have a finite dimension that is a power of two ($d=2^m$), the cyclic sets can be generated explicitely with the discussed
methods. Results for $m=\MgE{600}$ are given. A generalization of this construction is able to generate sets with
different entanglement structures. It is shown that for dimensions $d=2^{2^k}$ with $k$ being a positive
integer, a recursive construction of complete sets exists at least for $k \in \MgN{11}$, where for higher dimensions
a direct connection to an open conjecture in finite field theory by Wiedemann is identified. All discussed sets can
be implemented directly into a quantum circuit by an invented algorithm. The (unitary) equivalence of the considered sets is
discussed in detail.\par
In the second part of this work the security of a quantum public-key encryption protocol is discussed, which was recently
published by Nikolopoulos~\cite{Niko08}, where the information of all published keys is taken into account. Lower bounds on two different
security parameters are given and an attack on single qubits is introduced which is asymptotically equivalent to
the optimal attack. Finally, a generalization of this protocol is given that permits a noisy-preprocessing step
and leads to a higher security against the presented attack for two leaked copies of the public key and to first
results for a non-optimal implementation of the original protocol.


\chapter*{Zyklische komplementäre Basen und Quantenkryptographie mit öffentlichen Schlüsseln}
\section*{Kurzfassung}
\selectlanguage{german}
Quantenmechanische Phänomene verleihen der Quanteninformationstheorie ein Potenzial, welches
über die klassische Informationstheorie hinausgeht. Die hierin verankerte Fähigkeit,
komplementäre Information zu erzeugen, bietet viele neue Möglichkeiten. Die Theorie zur
Quantenschlüsselverteilung nutzt diese Information unmittelbar aus, um beweisbar sichere kryptografische
Verfahren umzusetzen. Zur Realisierung einer solchen Quantenschlüsselverteilung, aber auch zur Bestimmung eines Quantenzustandes,
beispielsweise um einen Quantencomputer zu realisieren, werden gewisse Mengen von Messbasen
benötigt. Eine kleinst\-mö\-gli\-che Menge dieser Messbasen ist eine sogenannte vollständige Menge von komplementären Basen.
Im ersten Teil dieser Arbeit wird die Konstruktion solcher Mengen betrachtet. Diese haben die zusätzliche
Eigenschaft, zyklisch zu sein, d.\,h. jedes Element der Menge lässt sich als Vielfaches
eines bestimmten Generatorelementes der Menge erzeugen. Es wurde bereits gezeigt, dass die
Theorie der komplementären Basen mit einigen anderen mathematischen Gebieten verwandt ist.
Hier wird ein Zusammenhang der Konstruktion von zyklischen komplementären Basen
mit Fibonaccipolynomen beleuchtet. Weiterhin wird die Existenz einer symmetrischen
Begleitmatrix über dem endlichen Körper $\F_2$ vermutet. Die behandelten zyklischen Mengen von
komplementären Basen können für alle endlichen Hilbertraumdimensionen explizit
erzeugt werden, deren Dimension ein Vielfaches von zwei ist ($d=2^m$); Ergebnisse für $m=\MgE{600}$
werden aufgeführt. Eine Verallgemeinerung dieser Konstruktion ist in der Lage, Mengen zu erzeugen, welche
eine alternative Struktur der Verschränkung aufweisen.
Für den Fall dass die Dimension des Hilbertraumes
$d=2^{2^k}$ beträgt, wobei $k$ eine positive ganze Zahl ist, existiert eine rekursive Erzeugungsmethode,
solange $k \in \MgN{11}$ gilt. Für alle höheren Werte von~$k$ wird diese Konstruktion mit einer offenen Vermutung von Wiedemann
aus dem Bereich der Theorie endlicher Körper in Verbindung gebracht.
Alle behandelten Mengen lassen sich mithilfe
eines vorgestellten Algorithmus unmittelbar als
Quantenschaltkreis realisieren. Die (unitäre) Äquivalenz verschiedener behandelter Mengen wird ebenfalls im
Detail betrachtet.\par
Der zweite Teil dieser Arbeit behandelt die Sicherheit eines kürzlich von Nikolopoulos \cite{Niko08}
vorgestellten asymmetrischen Verschlüsselungsprotokolls, welches öffentliche Quantenschlüssel verwendet, wobei der Informationsgewinn
aus allen ver\-öffent\-lichten Schlüsseln für die Betrachtung eines potenziellen Lauschers berücksichtigt wird.
Es werden untere Schranken für zwei verschiedene Sicherheitsparameter angegeben sowie ein Angriff besprochen,
welcher einfach zu realisieren ist, da er nur einzelne Qubits misst. Es wird weiterhin gezeigt, dass dieser
asymptotisch äquivalent zu einem optimalen Angriff ist, welcher physikalisch schwieriger umzusetzen ist.
Abschließend wird eine Verallgemeinerung des Protokolls vorgestellt, welche durch das absichtliche Einbauen
von Störungen zu einer höheren Sicherheit führt. Exemplarisch wird dies für den Fall gezeigt, dass
ein bzw. zwei Exemplare des öffentlichen Schlüssels vom Angreifer abgefangen werden. Diese Verallgemeinerung
kann auch zur Betrachtung einer nicht idealisierten Realisierung des Ausgangsprotokolls genutzt werden.
\selectlanguage{american}

\newpage



\chapter*{Acknowledgments}
\thispagestyle{empty}
This thesis was prepared in the context of my research activities in the group of Prof. Dr. Gernot Alber.
I would like to thank him for his encouragement and support, for his advice and trust in me, and for
inviting great guests.\par
I also like to thank Prof. Dr. Robert Roth for being the second referee of this thesis
and investing his time in studying my work.\par
I am grateful to CASED for supporting me by a scholarship and for the opportunity to work in an interdisciplinary environment.\par
The problem of \emph{constructing complete sets of cyclic mutually unbiased bases}, which is a main goal in the first part
of this work, was brought to me by Christopher Charnes, who visited our group in 2008 for several month
and who deserves many thanks from me, also for introducing me
into mathematical theories like finite field theory or representation theory.\par
Many thanks to Oliver Kern
and Kedar Ranade for having the first ideas on the construction of the sets, for the fruitful collaboration
and many nice and funny discussions.\par
Influenced by discussions with Luis Sánchez-Soto, who I like to thank very much, the focus of the first
part moved a little towards the construction of complete sets of cyclic mutually bases with different
entanglement properties.\par
The second part of this work is based on a manuscript of Georgios Nikolopoulos,
who was also guest in our group and who had the patience to discuss with me my ideas on the security of his
protocol. Let me also thank him for this very nice collaboration.\par
Finally, many thanks for the valuable contributions to both parts of the work which were done in collaboration with
the Bachelor students Niklas Dittmann and Walid Mian.\par
I always enjoyed the environment in the complete group with many discussions and debates on physical topics,
as well as diverse topics which are not related to physics at all. So let me thank all members and former members
I met in this group.\par
I cannot thank Kedar Ranade enough for improving my thesis substantially by proofreading the draft very carefully, for the very fruitful
collaboration, many discussions and his frequent visits.\par
Most importantly, I would like to thank my family, especially for lighten the load for me as much as possible, for tolerating my
addiction to think about research problems almost everywhere, and also for assisting me with nearly everything in 
the last weeks before the submission. Thank you, Andrea! Thank you for your belief in me and your love. Many thanks to Johanna for being such
a nice daughter, who sends laughs the whole day and who had to miss her daddy too much in the last few weeks. It is wonderful to have both of you by my side.\par
Finally, many thanks to my parents, who rendered all this possible and gave me always
the liberty to follow my own path\ldots\thispagestyle{empty}

\cleardoublepage

\tableofcontents
\cleardoublepage

\addcontentsline{toc}{chapter}{Notation}
\chapter*{Notation}
In this document, several mathematical symbols are used. To avoid misinterpretations,
we list the most important symbols in the following, starting with the definition of
different sets, operators, and matrices. The section ends by a description of the notation
of vectors.

\paragraph{Sets}

\begin{center}
\begin{tabular}{ll}
 Symbol & Meaning\\
 $K$ & A general set\\
 $\R$ & Set of real numbers\\
 $\C$ & Set of complex numbers\\
 $\N$ & Set of natural numbers starting with zero\\
 $\N^*$ & Set of natural numbers starting with one\\
 $\Z$ & Set of integers\\
 $\Z_m$ & Set of integers modulo $m$ with $m \in \N^*$\\
 $\F_p$ & Finite field with $p$ elements, $p \in \N^*$ and $p$ prime\\
 $\F_{p^m}$ & Finite field with $p^m$ elements, $p, m \in \N^*$ and $p$ prime\\
 $M_m(K)$ & Set of $m \times m$ matrices with entries from $K$\\
 $\GL{m}{K}$ & Group of invertible $m \times m$ matrices with entries from $K$
\end{tabular}
\end{center}

\paragraph{Operations}\mbox{}\\
\begin{center}
\begin{tabular}{ll}
 Symbol & Meaning\\
 $\oplus_m$ & Addition of two values modulo $m \in \N^*$\\
 $\gcd(a,b)$ & Greatest common divisor of $a,b \in \N^*$\\
 $[x]$ & Largest integer not greater than $x \in \R$ (Gaussian floor)\\
 $\lceil x \rceil$ & Smallest integer not less than $x \in \R$ (Gaussian ceiling)\\
 $\langle a \rangle$ & Group generated by the element $a$\\
 $a \otimes b$ & Tensor product of two linear operations $a,b$\\
  & which act on a finite dimensional Hilbert space
\end{tabular}
\end{center}

\pagebreak

\paragraph{Matrices}\mbox{}\\
\begin{center}
\begin{tabular}{ll}
 Symbol & Meaning\\
 $\Eins_m$ & $m$-dimensional identity matrix\\
 $0_m$ & $m \times m$ zero matrix
\end{tabular}
\end{center}

\paragraph{Vectors}\mbox{}\\
\phantom{a}\\
Usually, vectors are written in the form $\vec a =(a_1,\ldots,a_m)^t$, with numbers
$a_1,\ldots,a_m$ from a certain set, $m \in \N$ and $(\cdot)^t$ denoting the transposition
of a vector or matrix. In few cases, where it is mentioned in addition, the same vector is
given by $\bm{a}$ for a better readability.


\cleardoublepage

\pagenumbering{arabic}

\chapter{Introduction and outline}
As long as humans populate our world, their addiction to analyze the incidents 
and the behavior of this world is gigantic. On the one hand, this attitude may result from
the benefits this knowledge provides. On the other hand, it seems to be based on
the pure exploration urge. Already a long time ago, humans started to systematize their
knowledge; it leads to specifically adapted methods for the different fields and
allows a better overview which is helpful in order to teach younger generations.
A first approach which coined nowadays methods in the occidental culture was given
by the \emph{Platonic Academy}. The idea of science was introduced
and different sciences were defined. Aristoteles was one of the first members of this
Academy who was motivated in discussing natural phenomena; the physical laws
were summarized systematically the first time by Isaac Newton in his famous \emph{Principia Mathematica}
in the $17$th century. Later on, based on the number of results and obviously the different
forms of research, natural sciences were divided amongst others into biology, chemistry,
and physics. Around the $19$th century, different research topics on physics were
of interest, which are nowadays called the areas of \emph{classical physics}, namely
mechanics, electrodynamics, thermodynamics, and optics. With the beginning of the $20$th
century, \emph{modern physics} came up with the theory of \emph{general relativity} and \emph{quantum mechanics}.
The latter became necessary in order to avoid the ultraviolet catastrophe which came up with
the Rayleigh–Jeans law which was derived to describe black-body radiation. Max Planck finally solved this
problem by introducing the so-called Planck constant $h$. The aim of quantum mechanics is,
roughly speaking, the description of physical phenomena at microscopic scales, but many descriptions succeeded this idea
which are contrary to the usual expectation. There are Heisenberg's \emph{uncertainty principle} (\emph{complementary}
variables cannot be measured perfectly), \emph{entanglement} (a system can have more
information than the sum of the information of the subsystems), \emph{no-cloning-theorem}
(\emph{unknown} quantum systems cannot be copied perfectly), and many more. Fundamental
legitimation problems still arise from the Copenhagen interpretation which describes the measurement
process as a wave-function collapse which is not convenient to the normal time evolution in
quantum physics. Bell introduced his famous no-go theorem which describes a test to distinguish
between system states which occur in classical physics and systems states which occur only in quantum
physics. His goal was to solve the famous Einstein-Podolsky-Rosen paradox which challenged
the Copenhagen interpretation in order to ask whether quantum mechanics is complete. Along with these
fundamental questions, the effects which originate from the theory of quantum mechanics (and which
are also observed in experiments) have the potential for several applications.\par

As long as humans populate our world, their creativity in order to avoid undesired duties
is unlimited. An important benefit is to save time for more important concerns, another is possibly
the unattractiveness of tasks which can be schematized easily. A first important innovation were water clocks,
already known in Babylon which were able to measure time automatically. Based on the steam engine,
the industrial revolution started in the $18$th century and helped the workers to implement larger
projects. Around the same time, Charles Babbage invented the first \emph{mechanical computer} and the
concept of a programmable computer in order to automatize tedious and error-prone calculations. 
Ada Lovelace worked theoretically on this computer in the $19$th century and is seen as the first
programmer. The first digital computer was build in the $20$th century and was able to solve simple
mathematical problems; with the miniaturization computers became more and more powerful and helped
to solve complicated mathematical problems. Numerical methods and the idea of simulating systems
with computers took place into the scientific research. Nevertheless, many problems are too hard to
be implementable efficiently into classical computers.\par

As long as humans populate our world, they have secrets which they like to share only with selected
persons. Different \emph{ciphers} were known already in the ancient Greece in order to hide information or
to make information unreadable to third parties. In the $20$th century, more complicated ciphers
were implemented by machines, like the \emph{Enigma} during the second world war, which was able to encrypt
and to decrypt messages in a complicated but logical way. As all algorithms were broken, they became
more and more complex. In 1882, a simple cipher, the \emph{one-time pad} was described by Frank Miller%
\footnote{Steven Bellovin figured out in 2011 that Frank Miller invented the one-time pad $35$ years
before Gilbert Vernam and Joseph Mauborgne, who were in general seen as the inventors \cite{Bellovin11}.}. It
encodes each letter of the message with an individual letter of the key, thus the key has to be as long
as the message. It can be proven by methods of information theory, that this cipher is unbreakable, if
the key is perfectly random, only used once and not leaked by a third party. Those problems seem to
be unsolvable in the classical ways.\par

A combination of these three mentioned research areas, namely \emph{quantum mechanics}, \emph{computer sciences},
and \emph{cryptography}, is given by the field of \emph{quantum information theory}.
Equipped with the effects of quantum mechanics, it seems that a new generation of computers, so-called
\emph{quantum computers}, may have the capability to expand the efficiency of classical computers dramatically.
The first example is the Deutsch–Jozsa algorithm, which scales exponentially faster on a quantum computer, but
is mostly of scientific interest \cite{DJ92}. The first quantum algorithm which demonstrated the practical potential of the
quantum computer was invented by Peter Shor in 1994 \cite{Shor97} and is able to perform prime factorization in polynomial
time--which does not seem to be possible for classical computers. As many common asymmetric cryptographic ciphers such
as RSA are based on the difficulty of this problem, Shor's algorithm would affect instantly the security
of nowadays secret information--which is important for the security of credit cards, for instance.\footnote{In 1996,
another important quantum algorithm was found by Lov Grover \cite{Grover96}, which improves the search in an unsorted list quadratically
whereas the classical algorithm seems to be optimal within a classical setup.} Already in 1984, ideas of Charles Bennett and Gilles Brassard paved the
way to recover a security scheme which may replace the classical cryptography by \emph{quantum cryptography}. After Shor's algorithm
was known, security proofs for the \emph{quantum key distribution} protocol were given.
Finally, as quantum computers seem to be efficient, they are seen as an attractive candidate to simulate
quantum systems for their analysis.\par
To realize a quantum computer, different methods and tools need to be explored. For several purposes, the construction
of \emph{complete sets of cyclic mutually unbiased bases} is relevant as will be seen in Section \ref{sec:intro:mubs}. These bases find
their applications also in the field of quantum cryptography. Regarding the development of the field of quantum cryptography,
the security of many protocols is important and should be analyzed. A discussion on the security of a recently invented quantum
public-key encryption scheme is started in Section~\ref{sec:intro:pqke}.

\section{Cyclic mutually unbiased bases}\label{sec:intro:mubs}
The first part of this work deals with the problem of constructing \emph{complete sets of cyclic mutually unbiased bases} (MUBs).\par
A detailed introduction into the history, the properties and the applications of MUBs is given in \textbf{Chapter \ref{chap:mubsintro}}.
Complete sets of MUBs play an important role for quantum state tomography of finite dimensional quantum systems, as
they define a minimal set of measurement bases. This qualifies them for example to be a candidate which measures states of quantum registers,
a part of a quantum computer. Furthermore, MUBs have a large potential in quantum cryptographic protocols. In order to
increase the efficiency of these protocols it turned out that, by using higher-dimensional information carriers, complete sets of MUBs
need to be constructed for higher dimensions, most suitable with a cyclicity property \cite{Chau02, Chau05, Ranade06}. This cyclicity
is a special property those sets may obey, which means that a whole set is generated by the powers of a single element. As it is not even
known yet if complete sets exist in all complex Hilbert spaces with a finite dimension, MUBs are still a field of current research. \par
Whereas first ideas on complete sets of MUBs were given by Ivanović~\cite{Ivanovic81} and Wootters and Fields \cite{WF89},
applications of cyclic sets and alternative constructions were discussed many years later by Chau \cite{Chau05} and
Gow~\cite{Gow07}. These ideas are retraced in \textbf{Chapter \ref{chap:fundamentals}}, together with mathematical methods
which are used later on.\par
\pagebreak
In \textbf{Chapter \ref{chap:cyclic_mubs}}, a construction of complete sets of cyclic MUBs is introduced.
Following the methods of a systematic scheme of Bandyopadhyay \etal{} \cite{Bandy02}, we published a first work in $2010$ in which we
construct cyclic sets of MUBs in all even prime power dimensions \cite{KRS10}:
\begin{quotation}
 \noindent
 \emph{Complete sets of cyclic mutually unbiased bases in even prime-power dimensions},\\
 by Oliver Kern, Kedar S. Ranade, and Ulrich Seyfarth,\\
 in Journal of Physics A \textbf{43}, 275305 (2010).
\end{quotation}
The introduced methods allow to reduce the problem of explicitely constructing a complete set of MUBs for a Hilbert space of dimension $d=2^m$ in a first step
from $(2^m)^2$ free variables of $\Z_4$ to $m^2$ variables of $\Z_2$.
With the help of a second step, an assumed reduction of the search space, the number of free variables goes below $(m/2)^2$.\par
A second work formalizes and extends these results as it shows the relation of the construction with so-called Fibonacci polynomials \cite{SR12}:
\begin{quotation}
 \noindent
 \emph{Cyclic mutually unbiased bases, Fibonacci polynomials and Wiedemann's conjecture},\\
 by Ulrich Seyfarth and Kedar S. Ranade,\\
 in Journal of Mathematical Physics \textbf{53}, 062201 (2012).
\end{quotation}
Furthermore, we prove the existence of complete sets of cyclic MUBs for the discussed construction. Results of both manuscripts are given in Section
\ref{sec:fibset}; numerical methods which deal with the search of the solutions in the remaining space which is spanned by the free
variables are summarized in Section~\ref{subsec:fibset:numerical}. An analytical approach which may lead to a \emph{symmetric
companion matrix}\footnote{The conjectured construction of a symmetric companion matrix assumes, that for each polynomial with
coefficients in $\F_2$, a symmetric matrix can be constructed which has that polynomial as its characteristic polynomial.} is
given in Section \ref{subsec:fibset:analytical}. In Section \ref{sec:fermatset} results of the second work
are presented, which show that for $m=2^k$ with $k\in \N$, a complete set of cyclic MUBs for dimension $d=2^{2^{k}}$ can be constructed
recursively. Namely, it can be constructed from the complete set of cyclic MUBs for dimension $d=2^{2^{k-1}}$ at least for $k \in \MgE{11}$, which is shown in
Appendix \ref{app:fermat_based:wiedemann}, limited by the largest \emph{Fermat number} for which the prime factorization is known. For
all $k$, it is proven in that section that the problem is related to an open conjecture in finite field theory
by Wiedemann \cite{Wiedemann88} which is still of current interest~\cite{MS96,Voloch10}; an approach for a proof is given in Appendix \ref{app:wiedemann_proofs}. In Section \ref{sec:standardform}
a unique form of representing complete sets of cyclic MUBs in order to be able to compare different sets is given.\par
Discussions with L.~L.~Sánchez-Soto drew the author's attention to the problem of constructing complete sets of cyclic MUBs with different entanglement properties.
An introduction is given in Section \ref{sec:entangleprop}, a first subclass, presented in Section \ref{sec:homogeneous:group}, is
a generalization of the construction which was explored in the two mentioned articles. First approaches on two other classes are given in Sections~\ref{sec:homogeneous:semigroup}
and \ref{sec:inhomogeneous}. Results can be found in Appendices \ref{app:homosets:group}, \ref{app:homosets:semigroup}, and \ref{app:inhomosets}.\par

An important representation of sets of MUBs are sets of unitary operators. Therefore, a transformation of the different sets into a unitary operator
representation is given in Section \ref{sec:Uconstruction}, which is published in the first work. For those sets which can be constructed recursively,
we published another manuscript~\cite{SR11}:
\begin{quotation}
 \noindent
 \emph{Construction of mutually unbiased bases with cyclic symmetry for qubit systems},\\
 by Ulrich Seyfarth and Kedar S. Ranade,\\
 in Physical Review A \textbf{84}, 042327 (2011).
\end{quotation}
The first part of these results is shown in Section \ref{sec:unitop:fermset}, namely that the corresponding unitary operator can also be constructed recursively. Finally,
the generators of these complete sets of MUBs can be implemented by a quantum circuit into an experimental setup. In the context of
the Bachelor thesis of N.~Dittmann, the construction of the circuit was generalized for all discussed sets of cyclic MUBs and even
more general operators, which is presented in Section \ref{sec:circuit}. As the implementation of a large cyclic set may accumulate
errors, a more practical implementation for such sets is drawn in Section \ref{sec:gatedecomp:practical}, which makes nevertheless
use of the cyclic structure.\par
\textbf{Chapter \ref{chap:equivalence}} deals with the equivalence of MUBs. A slight generalization of the results of \cite{SR12} is presented
and it is proved
that the introduced method constructs complete sets of cyclic MUBs which are (unitary) equivalent to others like the
non-cyclic sets constructed by Wootters and Fields \cite{WF89}.\par
Finally, the results are concluded in \textbf{Chapter \ref{chap:conclusionsMUBs}} and an outlook on possible future research topics is givens.\par
To keep the sections short, many results are shown in the appendices as well as basic mathematical properties.\
In \textbf{Appendix \ref{app:algebra}}, tools from algebra and quantum information theory which are important for this
work are summarized.\par
An approach to prove Wiedemann's conjecture is presented in \textbf{Appendix~\ref{app:wiedemann_proofs}}.\par
Most of the computational results are given in a compressed form in\linebreak \textbf{Appendix \ref{app:results}}. Section \ref{app:fractals}
shows the appearance of similar fractal patterns in Fibonacci polynomials and characteristic polynomials of certain matrices
which are both important for the construction of cyclic MUBs. Section \ref{app:fibonacci_based:triangle} lists generators
for complete sets of cyclic MUBs for dimensions $d=2^m$ with $m \in \Mg{2,\ldots,600}$ for the method introduced in \cite{KRS10},
taking advantage of the improvements of \cite{SR12}. The results which may indicate the existence (and maybe a construction) of a symmetric companion
matrix are given shortly in Section \ref{app:fibonacci_based:companion}. Then, generators of sets of cyclic MUBs
with different entanglement properties are listed for four-qubit systems in Sections \ref{app:homosets:group} and \ref{app:homosets:semigroup}
and in Section \ref{app:inhomosets}, respectively. Finally, the \emph{Matlab} code which is used to test Wiedemann's
conjecture for $k \in \MgN{11}$ is shown in Section \ref{app:fermat_based:wiedemann}.

\section{Quantum public-key encryption}\label{sec:intro:pqke}
The second part of this work treats the security of an asymmetric protocol of quantum cryptography which was recently published by
Nikolopoulos \cite{Niko08}.
As quantum cryptography wants to offer \emph{unbounded security}, detailed security analyses are essential. Since no security
proof exists yet for this commonly known \emph{quantum public-key encryption} (QPKE) protocol, a detailed analysis was published \cite{SNA12}:
\begin{quotation}
 \noindent
 \emph{Symmetries and security of a quantum-public-key encryption based on single-qubit rotations},\\
 by Ulrich Seyfarth, Georgios M. Nikolopoulos, and Gernot Alber,\\
 in Physical Review A \textbf{85}, 022342 (2011).
\end{quotation}
In \textbf{Chapter \ref{chap:georgios}}, an introduction into the protocol is given.\par
Within \textbf{Chapter \ref{chap:invest}} all new investigations on the security of the protocol
are depicted. Namely, Section \ref{sec:privkey} deals with the security of the private key
which is essential for the security of the protocol. If though, the security of encrypted messages has to be guaranteed in addition, as they
can be attacked directly and in the worst case by using all copies of the public key. This message security is analyzed in Section~\ref{sec:message},
where the security against a used security parameter is shown for an attack which can be implemented easily. Nevertheless, it is
also shown that this attack has a similar behavior as a class of very general attacks.\par
In the context of the Bachelor thesis of W.~Mian the effect of a noisy-preprocessing step was analyzed. With two simple attacks
in Sections \ref{sec:singletest} and \ref{sec:doubletest} it is shown in Section \ref{sec:noisysecurity} that the security of the
message increases with the help of this method. Nikolopoulos' protocol, which is assumed to work in an \emph{ideal} description of the world,
has to be transformed into a protocol which is used in a \emph{real} description of the world, as errors come into account. But
therefore, this method can also be used in order to model errors in the construction process of the public key. Obviously, the
implementation of error correction protocols needs to be discussed within further work.\par
The second part is concluded and an outlook is given in \textbf{Chapter \ref{chap:sumpk}}.


\part{Cyclic mutually unbiased bases}\label{part:mubs}

\chapter{Introduction}\label{chap:mubsintro}
A fundamental characteristic of quantum mechanical systems is the \emph{uncertainty principle} which was formulated
by Heisenberg in 1927 \cite{Heisenberg27}. It describes the observation that pairs of physical variables
exist which cannot be measured simultaneously with maximal precision. If this mutual influence is maximal,
the variables are called \emph{complementary}.\footnote{A common example is the measurement outcome of the electron
spin; if the spin is known to be in an eigenstate of the Pauli-$\sigma_x$ operator, the outcome of a
measurement in the basis of the Pauli-$\sigma_z$ operator is completely undetermined.} As finite-dimensional quantum mechanical
systems are described by density matrices that are defined in the Hilbert space $\cH = \C^d$, a set of operators
exists that defines a \emph{unitary operator basis} (and is capable to describe properties of complementary variables).
Back in 1960, Schwinger derived initially a \emph{complementary pair of operators} that is able to describe
two complementary variables in an arbitrary finite dimensional Hilbert space \cite{Schwinger60}. It turned out
that the absolute overlap of two vectors from different bases is constant for a given dimension. Ivanović followed
this path, motivated by the complete state estimation of an unknown quantum state, and figured out that the minimal
number of pairwise complementary operators needed to determine the quantum state completely, equals the dimension of
the Hilbert space plus one \cite{Ivanovic81}. His considerations result in a construction method for a \emph{complete set}
in prime Hilbert space dimensions, whereas Wootters coined the notion of \emph{mutually unbiased bases} \index{Mutually unbiased bases} (MUBs) for pairwise
complementary operators \cite{Wootters86}. In collaboration with Fields, he presented a construction method for
complete sets of MUBs in prime power Hilbert space dimensions \cite{WF89}. Pointing back again to Heisenberg's uncertainty
principle, it was conjectured by Kraus and shown by Maassen und Uffink that the optimal solution for a certain
kind of uncertainty relations is a set of MUBs~\cite{Kraus87,MU88}.\par
It still remains an open question that is under current research, how many bases a maximal set of MUBs contains
for \emph{composite} Hilbert space dimensions.%
\footnote{The term ``composite dimension'' refers to dimensions that cannot be represented as the power of a prime number.}
Even for the smallest such dimension, which is $d=6$, this problem is unsolved, where the largest known sets have
only three elements.\footnote{There is strong evidence that these sets are maximal \cite{Zauner99,Grassl04,Bengtsson07,Raynal11}.}
Also many different constructions to that given by Wootters and Fields were figured out (e.\,g.~\cite{Bandy02, Klappenecker04}).\footnote{A recent review article
on MUBs was published by Durt \etal{} \cite{Durt10}.}
In 2005, Chau proved a theorem which predicts the existence of a cyclic group generator that can be used to generate a
complete set of MUBs \cite{Chau05};%
\footnote{The proof by Chau is based on finite field theory, whereas Gow gave a proof based on representation theory \cite{Gow07}.}
in other words, such a complete set of \emph{cyclic} MUBs is given by the powers of
a single unitary operator. It was shown by Gow \cite{Gow07}, that cyclic sets exist only for \emph{even-prime power} dimensions. An example of such a
cyclic set was already used by Gottesman in 1998 in order to transform the three Pauli operators cyclically \cite{Gott98}. The
advantages of a cyclic generation of a set of MUBs were used in proofs of the quantum cryptographic six-state protocol~\cite{Lo01,GL03}
and even an abstract definition of such sets was taken into account to prove generalizations of that
protocol \cite{Chau05}.\par
The aim of the first part of this work is to continue this path by finding explicit constructions of cyclic MUBs in
a straightforward way with suggestions for a direct implementation of these sets into experimental setups. After addressing
the fundamental properties of MUBs, their usage in the fields of \emph{quantum state estimation}, \emph{quantum key distribution},
and further relations in the subsequent sections, different well-known constructions will be discussed in Section \ref{sec:constructions}
of Chapter \ref{chap:fundamentals}. Properties of the so-called \emph{Fibonacci polynomials} are presented and extended in
Section \ref{sec:fibonacci_polynomials}, which are fundamental for the generation of complete sets of cyclic MUBs in
Chapter \ref{chap:cyclic_mubs}. Three different constructions are given in Sections \ref{sec:fermatset}, \ref{sec:homogeneous},
and \ref{sec:inhomogeneous}, that aim on sets with specific entanglement properties which are derived in Section \ref{sec:entangleprop}.
It turns out that another form, the \emph{Fermat-based sets} which are discussed in Section \ref{sec:fibset}, are
related to an open conjecture in finite field theory which was given in 1988 by Wiedemann~\cite{Wiedemann88}. Supposed this conjecture
is true, an unlimited class of complete sets of cyclic MUBs can be created by the given recursive construction. Until then, the results
given in Appendix~\ref{app:fermat_based:wiedemann} can be used for the construction of complete sets of cyclic MUBs for dimensions $d=2^{2^k}$
with $k \in \MgN{11}$.\footnote{The limitation to $k=11$ is limited due to the largest known prime factorization of Fermat numbers.}
By their nice form, these sets can be implemented quite easily into \emph{quantum circuits} as shown in Section \ref{sec:gatedecomp:fibset}.
For more general cyclic MUBs, another method is presented in Section \ref{sec:gatedecomp:homogeneous}. An improved practical
implementation is suggested by a promising method in Section \ref{sec:gatedecomp:practical}. The classification of the generated sets of MUBs
is derived in Chapter \ref{chap:equivalence}, where the \emph{equivalence} of the sets with known constructions is discussed.
The results are summarized and an outlook is provided in Chapter \ref{chap:conclusionsMUBs}.

\section{Relations}
From a mathematical point of view, MUBs are related to many mathematical objects of other research fields. In the context of the existence of
a complete set of MUBs with $d+1$ bases, a well-known connection to \emph{orthogonal Latin squares} was established~\cite{Zauner99,Gibbons04,Paterek09},
but also \emph{finite projective planes} play an important role \cite{Saniga04,Bengtsson04,Bengtsson05}. In the
context of equivalence of MUBs, relations to \emph{symplectic spreads}~\cite{Calderbank97} and \emph{affine planes} \cite{Kantor12}
are of importance.

\section{Quantum state estimation}\label{sec:qsestimation}
As already mentioned, MUBs were originally introduced in the context of quantum state estimation \cite{Ivanovic81,WF89}
and seen as a good candidate for optimal schemes for \emph{quantum state tomography}\index{Quantum state tomography}. For a systematic
approach, we may observe the general state of a quantum system, which is defined in a $d$-dimensional Hilbert space $\cH=\C^d$ by a density operator
\begin{align}
 \rho = \sum_i p_i \ketbra{\psi_i}{\psi_i},
\end{align}
which is diagonal in its orthonormal eigenbasis $\{\ket{\psi_i}\}_i$ with a normalized probability distribution, where $\sum_i p_i = 1$ holds.
More generally, we can describe the state vectors $\ket{\psi_i}$ as the eigenvectors of $\rho$ with their corresponding eigenvalues~$p_i$.
An experiment which measures the state $\rho$ of the system in its orthonormal eigenbasis, will measure the state $\ket{\psi_i}$
with probability $p_i$. Since these probabilities have to be real, quantum mechanics postulates $\rho$ to be Hermitian, thus
$\rho = \rho^\dagger$.\par
If we are aware of the eigenbasis of $\rho$ and have infinitely many copies of the system state $\rho$, we can measure infinitely many
times within the eigenbasis to reconstruct the probability distribution, thus we have to solve a well-known classical problem, namely the approximation
of a probability distribution by sampling.
The number of free parameters of $\rho$ is $d-1$, since $\rho$ can be represented by a diagonal $d\times d$ matrix with a normalized set
of real eigenvalues.\par
Conversely, if we are not aware of the eigenbasis of the state $\rho$, it can still be represented by a normalized Hermitian matrix that
might not be diagonal and has at most $d^2$ nonzero entries. By Hermiticity, those entries can be described by $d^2$ real parameters.
Normalization of the matrix fixes another parameter and leaves $d^2-1$ free parameters\footnote{In contrast to the $d-1$ free parameters in the
case where the system is measured in its eigenbasis, additional free parameters appear which encode the basis information.} that describe an arbitrary quantum state $\rho$
of the $d$-dimensional Hilbert space $\cH$. Measuring on infinitely many copies of the system in some basis $\{\ket{\phi_i}\}_i$, we learn
at most $d-1$ free parameters due to the normalization of the measurement outcome. Since the total number of free parameters of the
quantum state $\rho$ is $d^2-1$ and by a single measurement operator we can figure out $d-1$ of those parameters, it turns out that we
need at least $d+1$ different bases to completely estimate the quantum state.\par
The set of unitary operators of the $d$-dimensional Hilbert space defines exactly the set of possible orthonormal bases, where the row vectors
of the unitary operators are identified as the basis vectors of the corresponding orthonormal basis. Thus, in order to
completely estimate the system state $\rho$, one has to find a set of $d+1$ bases from the set of unitary operators.\footnote{In the following,
we will use this correspondency without mentioning again and declare unitary operators as bases.} Given such a set,
the application of a unitary transformation to all bases will not affect the amount of information that is extracted by the measurements.
Hence, we are free to choose one of the bases to be the standard basis. Measuring a state that is diagonal in that standard basis with 
one of the $d$ remaining bases should lead to equally probable outcomes, even if the system is in a definite state (in the standard
basis). To achieve this goal, the overlap of every vector of this basis with every vector from the standard basis has to be constant. If
we want to find such a set of $d+1$ bases to reconstruct the state $\rho$ completely, the operators have to fulfill those properties
pairwise, which leads to the concept of MUBs.
\begin{defi}[Mutually unbiased bases]\label{defi:limits:mubs}\hfill\\
 A set of orthonormal bases $\fS = \Mg{\cB_0, \ldots, \cB_{r-1}}$, $r \in \N^*$, of the $d$-dimensional Hilbert space $\cH= \C^d$ is called
 a \emph{set of mutually unbiased bases (MUBs)}, if for every pair $(\cB_k, \cB_l)$ with $k\neq l$, the absolute value of the overlap of their basis vectors
 is constant. With $\cB_k = \Mg{\ket{\psi_1^k},\ldots, \ket{\psi_d^k}}$ and $\cB_l = \Mg{\ket{\psi_1^l},\ldots, \ket{\psi_d^l}}$,
 there holds
 \begin{align}
   \abs{\braket{\psi_i^k}{\psi_j^l}} = 1/ \sqrt{d},
 \end{align}
 for $k,l \in \Mg{0,\ldots,r-1}$, $k \neq l$, and $i,j \in \Mg{1,\ldots,d}$.
\end{defi}
Such a set of MUBs is called \emph{complete} if no set in the same Hilbert space exists which has a higher number of elements, it is therefore a
\emph{maximal} set. For a Hilbert space of dimension $d$ this size
is in general still unknown. Nevertheless, it was shown by Wootters and Fields with geometric arguments that each set has at most
$d+1$ elements \cite{WF89};\footnote{Wootters and Fields give also a construction of sets with $d+1$ elements if the dimension of the
Hilbert space is a power of a prime. For none of the remaining dimensions, even for $d=6$, a construction of $d+1$ elements is known.}
another proof based on the connection of sets of MUBs with \emph{pairwise orthogonal matrices} was given
by Bandyopadhyay \etal{} \cite{Bandy02}.

\section{Quantum key distribution}\label{sec:qkd}
Another area of application for sets of MUBs emerged in 1984, when Bennett and Brassard
introduced their ideas for \emph{quantum key distribution}\index{Quantum key distribution} (QKD) \cite{BB84}.
The aim of this approach is to solve the classically unsolved problem of a secure distribution of a common
secret key to two distinct parties, called \emph{Alice} and \emph{Bob}.
The classical cipher which is known as \emph{one-time pad}\index{One-time pad} guarantees a secret transmission
of a message between two parties, provided that both parties share a perfectly random bit-key, which has at least
the size of the message and is used only once. Classically, this key cannot be distributed secretly from Alice to
Bob, but a guaranteed secure key transmission may be implemented with the help of QKD. \par
In a general
formulation, Bennett and Brassard use in their approach a \emph{qubit}\footnote{A \emph{qubit} denotes a two-level
quantum system, which may be in any complex superposition of two possible quantum states $\ket{0}$ and $\ket{1}$, namely
$\alpha \ket{0} + \beta \ket{1}$, with $\alpha,\beta \in \C^2, \vert\alpha\vert^2 + \vert\beta\vert^2 =1$.}
in order to send \emph{quantum information}
from Alice to Bob. At random, Alice prepares the qubit either in an eigenstate of the Pauli-$\sigma_x$ or
Pauli-$\sigma_z$ operator.\footnote{More information about the Pauli operators is given in Appendix \ref{app:pauli}.}
By construction, the eigenbases of these operators are mutually unbiased in the
sense of Definition \ref{defi:limits:mubs}. In the chosen basis, quantum state $\ket{0}$ or $\ket{1}$ is again
taken randomly with equal probability. After the qubit is sent to Bob, he measures randomly in one of the
two distinct bases. If he chooses the same basis as Alice, he will obtain the encoded bit perfectly, if not,
the property of the MUBs leads to a random output.\footnote{Imagine, the eigenstates of the Pauli-$\sigma_z$ operator
are denoted by $\ket{0}$ and $\ket{1}$. If the quantum state $2^{-1/2}(\ket{0}+\ket{1})$ (which is then an
eigenstate of the Pauli-$\sigma_x$ operator) is measured by the operator $\ketbra{1}{1}$, the respective expectation
value is $1/2$, thus no information about the prepared state is measured.} Therefore, Alice and Bob communicate
after the measurement of the qubits over a \emph{classical authenticated channel}\footnote{To guarantee, that the
classical messages Alice and Bob receive are not sent by Eve, they have to be authenticated with the help of a
previously shared key; used schemes take then advantage of \emph{universal hash functions} as introduced by Wegman and Carter \cite{WC81}.}
and discard those pairs of bits, where they have chosen different
bases. If the transmission would be perfect, both parties would share a common key at this point, but as the
transmitted single qubits encounter quantum noise on the channel, a perfect correlation between the qubits of Alice
and Bob cannot be guaranteed. By declaring some of the transmitted qubits as test qubits, they can calculate
the induced \emph{bit error rate}; in principle, this error rate cannot be distinguished from an error a
possible eavesdropper \emph{Eve} induced, so it is seen as information which is leaked to her. Possible attacks
can be considered within a quite general way, but for simplicity reasons one may focus on a rather straightforward,
but powerful attack, which is called \emph{intercept-and-resend attack}. In this case, Eve measures each qubit in
one of the two preparation bases randomly. Consequently, in average, in every second case the basis is correct and in cases
where Eve took a different basis than Alice and Bob, her probability to measure the correct result is only
$1/2$. Therefore, she induces an error of $25\%$ in the case she attacks each qubit. Alice and Bob can
correct this error by using classical \emph{error correction} protocols and try
to rule out the information the eavesdropper got, by so-called \emph{privacy amplification} protocols. The \emph{tolerable bit error rate}
(BER) by using the mentioned \emph{post-processing} protocols is
limited by the corresponding proof of the protocol, which guarantees \emph{unconditional security}. For the
protocol by Bennett and Brassard a first rigorous proof was given by Mayers which allows a tolerable BER of
$7.5\%$ \cite{Mayers96}. Many versions of this BB84 called QKD protocol are known (see e.\,g. \cite{Ekert91, Bennett92}) and
further rigorous proofs also in order to raise this limit were given (e.\,g. \cite{LC99,SP00,GL03}). Finally,
Chau proved that this limit can be raised asymptotically to $20\%$ \cite{Chau02}. A formulation of
this result which is usable in a larger context was given by Ranade and Alber~\cite{Ranade06}.\par
A promising approach which uses the properties of MUBs
and generalizes the BB84 protocol was considered by Bruß in 1998 and is called \emph{six-state} protocol \cite{Bruss98};
its security was again rigorously proven \cite{Inamori00,Lo01}. The six-state protocol makes use of the third
variable $\sigma_y$, that is complementary to $\sigma_x$ and $\sigma_z$. So the protocol uses then a complete set
of three MUBs in a Hilbert space of dimension two. By the same arguments as above, it is clear that only $1/3$ of the
transmitted pairs can be used. It was again proven by Chau as well as by Ranade and Alber, that the tolerable error rate is roughly $27.6\%$,
which is clearly above the limit for the BB84 protocol~\cite{Chau02,Ranade06}. Considerations of possible generalizations of this protocol
show that this rate can be raised asymptotically up to $50\%$ by using complete sets of MUBs and \emph{qudits}\footnote{A \emph{qudit} denotes the
generalization of a qubit, namely a $d$-level quantum system with all possible complex superpositions of $d$ orthogonal states for which their absolute squares sum up to
one.} as information carriers,
which are defined in an $d$-dimensional Hilbert space \cite{Ranade07}. Examinations of these protocols indicate,
that a \emph{cyclic} property of the set of MUBs is advantageous \cite{Chau05}. Those cyclic sets have the property,
that all elements within the set are given by the powers of one element of the set of bases.

\chapter{Fundamentals}\label{chap:fundamentals}
The first natural step when aiming on the construction of complete sets of cyclic MUBs is the
reconstruction of existing complete sets of MUBs with the purpose of finding steps in the construction
that are suitable in order to generate cyclic sets. It may be useful to combine different aspects and
ideas from different approaches in order to achieve this goal. Ultimately, these considerations may
lead to new ideas that require fundamental observations of more distant aspects which become relevant.
The aim of this chapter is to retrace exactly this path. Within Section \ref{sec:constructions}, two
different approaches for the construction of complete sets of MUBs will be discussed in order to get
a notion of MUBs and to have a playground which enhances the potential for the construction of cyclic
sets. It will turn out later in this work (cf. Section \ref{sec:fibset}), that the properties of
so-called \emph{Fibonacci polynomials} are useful for that construction. The basis properties and their
relation to the usual \emph{Fibonacci series} will be discussed in Section \ref{sec:fibonacci_polynomials},
as well as advanced results that appeared in literature and own results.

\section{Constructions}\label{sec:constructions}
Many different constructions of complete sets of MUBs are known in literature. As later discovered
by Klappenecker and Rötteler, Alltop gave a construction for all prime dimensions with
$p\geq 5$ unknowingly in 1980 by solving a different problem \cite{Alltop80}. One year later,
a construction which works for all prime dimensions was given by Ivanović~\cite{Ivanovic81} and
generalized to prime power dimensions by Wootters and Fields~\cite{WF89} many years later. This construction is based
on basic observations of the properties of complete sets of MUBs and known results from number theory
and field theory, respectively. Klappenecker and Rötteler gave a precise formulation of all these
constructions more than a decade later, using finite fields and Galois rings more explicitly~\cite{Klappenecker04}. In the
meantime, a different construction was discussed by Bandyopadhyay \etal{}, which is based on the
partition of the set of Pauli operators~\cite{Bandy02}.\par
As these two different approaches seem to be the most important constructions of complete sets of
MUBs that appeared in literature, they will be discussed in the following two sections. In Section
\ref{sec:wootters}, the ideas of Ivanović and Wootters and Fields are summarized, whereas Section
\ref{sec:bandy} concerns with the construction suggested by Bandyopadhyay \etal{}.

\subsection{Exponential sum analysis}\label{sec:wootters}
The first general construction of complete sets of MUBs was given by Ivanović for all finite dimensional Hilbert spaces with a prime dimension $d=p$ \cite{Ivanovic81}.
To follow his approach, let us assume that a complete set of MUBs exists. According to Definition~\ref{defi:limits:mubs},
it is clear by the usage of the scalar product, that the application of any unitary transformation to all elements of the set
causes again a complete set of MUBs.\footnote{As will be seen later on, this transformation is one of the transformations which
leads to an \emph{equivalent set} of MUBs (cf. Chapter \ref{chap:equivalence}).} Therefore, w.l.o.g., if a complete set of MUBs
exists, there is always another complete set, which includes the standard basis. To fulfill then Definition \ref{defi:limits:mubs},
all remaining bases should have only numbers as entries with an absolute value of $p^{-1/2}$.\par
For odd dimensions, Ivanović uses a property of number theory, namely
\begin{align}\label{eqn:wootters:ivanovictrick}
 \left\vert \sum_{j=0}^{p-1} \EZ^{(2\pi \iE /p) ( s j^2 + t j)} \right\vert = \sqrt{p},
\end{align}
which holds for all $t \in \N$, $s \in \N^*$ and $p$ being an odd prime number. This expression is the absolute value of a \emph{generalized
quadratic Gauss sum}\index{Gauss sum} \cite[p. 13]{BEW98}. If the component $l$ of the vector $k$
within the basis $r$ is denoted as $(v_k^{(r)})_l$, the standard basis is given by
\begin{align}\label{eqn:wootters:stand}
 \left(v_k^{(0)}\right)_l = \delta_{kl},
\end{align}
with $k,l \in \MgN{p-1}$. All remaining bases, i.\,e. with $r\in \MgE{d}$ within a complete set of MUBs read in this construction as
\begin{align}\label{eqn:wootters:nonstand}
 \left(v_k^{(r)}\right)_l = \frac{1}{\sqrt{p}} \EZ^{(2\pi \iE /p) ( r l^2 + kl)}.
\end{align}
It can easily be checked that the bases given by Equation \eqref{eqn:wootters:nonstand} define unitary operators and that all of them
are mutually unbiased with respect to the standard basis.
To test the mutual unbiasedness of all remaining pairs of bases in the fashion of Definition~\ref{defi:limits:mubs}, the expression
given by Equation \eqref{eqn:wootters:ivanovictrick} appears and guarantees the expected result. Wootters and Fields used
the fact, that a generalization of Equation \eqref{eqn:wootters:ivanovictrick} exists in finite field theory and turns out
to be a good candidate in order to generalize this construction of complete sets of MUBs to odd prime-power dimensions \cite{WF89}.
Namely, it is known that
\begin{align}\label{eqn:wootters:wootterstrick}
 \left\vert \sum_{j\in \F_{p^m}} \EZ^{(2\pi \iE /p) \tr( s j^2 + t j)} \right\vert = \sqrt{p^m},
\end{align}
holds for $s\neq 0$ and $s, t$ being elements of the finite field with $2^m$ elements \cite{LN08}.\footnote{It was mentioned
by Klappenecker and Rötteler that Equation \eqref{eqn:wootters:wootterstrick} is related to a stronger version of Weil's theorem \cite[Theorem 5.37]{LN08}.}
The trace denotes the trace defined in field theory, which maps an element of the field $\F_{p^m}$ to the field $\F_p$ according
to Definition \ref{defi:app:algfield:trace} with resulting properties.
The non-standard bases read finally similar to Equation~\eqref{eqn:wootters:nonstand} as
\begin{align}\label{eqn:wootters:nonstandpp}
 \left(v_k^{(r)}\right)_l = \frac{1}{\sqrt{p^m}} \EZ^{(2\pi \iE /p) \tr ( r l^2 + kl)},
\end{align}
with $r,k,l \in \F_{2^m}$. Unitarity can again be checked easily. The standard basis is given analogously to Equation \eqref{eqn:wootters:stand},
but with $k,l$ being elements of the finite field $\F_{p^m}$.\par
As the left-hand side of Equation \eqref{eqn:wootters:wootterstrick} turns to zero in the case that the characteristic of the field
equals $p=2$, Wootters and Fields have reformulated the bases defined by Equation \eqref{eqn:wootters:nonstandpp} in a different
representation in order to generate complete sets of MUBs for even prime-power dimensions as will be discussed in a brief summary in the
following. In principle, every finite field $\F_{p^m}$
can be seen as a vector space over the ground field $\F_p$, thus every element $\beta \in \F_{p^m}$ can be written in a basis as
$\beta = \sum_{i=1}^m \beta_i f_i$ with $f_i$ being the basis vectors and $\beta_i$ the coefficients. The product of two basis vectors
can always be expressed within the bases by a set of coefficients like $f_i f_j =  \sum_{n=1}^m \alpha_{ij}^{(n)} f_n$. With the help
of this argumentation, the expression $l^2$ which appears in the argument of the trace in Equation~\eqref{eqn:wootters:nonstandpp} can
be rewritten as
\begin{align}\label{eqn:wootters:lsquare}
 l^2 = \left(\sum_{i=1}^m l_i f_i\right)^2 = \sum_{n=1}^m \bm{l}^t \alpha^{(n)} \bm{l} f_n,
\end{align}
with $\bm{l} = (l_1,\ldots,l_m)^t$ on the right-hand side being a column vector and using that $\alpha^{(n)}$ is a symmetric
$m \times m$ matrix. Following this notion and that the trace in Equation~\eqref{eqn:wootters:nonstandpp} can be rewritten as
\begin{align}
 \tr (rl^2 + k l) = \sum_{n=1}^m \bm{l}^t \alpha^{(n)} \bm{l} \tr(r f_n) + \tr(kl),
\end{align}
the non-standard bases of Equation \eqref{eqn:wootters:nonstandpp} are given by
\begin{align}\label{eqn:wootters:nonstandnofield}
 \left(v_{\bm{d}}^{(\bm{c})}\right)_{\bm{l}} =  \frac{1}{\sqrt{p^m}} \,\EZ^{(2\pi \iE / p)(\bm{l}^t (\bm{c} \cdot \bm{\alpha}) \bm{l} + \bm{d}^t \bm{l} )},
\end{align}
where $\bm{c}$ is a vector of the coefficients which appear by a similar transformation of that given in Equation \eqref{eqn:wootters:lsquare} from $r$ and $\bm{d}$ a vector
of the coefficients which appear analogously from $k$. Finally, $\bm{\alpha}$ denotes a column vector of the matrices $\alpha^{(1)},\ldots, \alpha^{(n)}$.
The mutual unbiasedness of the bases was proven by Wootters and Fields also in the non-field representation of Equation \eqref{eqn:wootters:nonstandnofield},
but again only for odd prime-power dimensions. For even prime-power dimensions, the construction of the set of bases has to be adapted slightly as
\begin{align}
 \left(v_{\bm{d}}^{(\bm{c})}\right)_{\bm{l}} =&  \frac{1}{\sqrt{2^m}} \,\EZ^{(2\pi \iE / 2)(\bm{l}^t (\bm{c} \cdot \bm{\alpha}) \bm{l} / 2 + \bm{d}^t \bm{l} )}\\
   =& \frac{1}{\sqrt{2^m}} \, \iE^{\bm{l}^t (\bm{c} \cdot \bm{\alpha}) \bm{l}} \, (-1)^{\bm{d}^t \bm{l}},\label{eqn:wootters:nonstandnofieldtwo}
\end{align}
and was again proven to be a complete set of MUBs (cf. also \cite[p. 47]{BEW98}). Identifying the rightmost term as an $m$-folded tensor product of
the Hadamard matrix as given in Equations \eqref{eqn:app:hadamard:mfolded} and \eqref{eqn:app:hadamard:-1rep},
leads to an abbreviated form of Equation \eqref{eqn:wootters:nonstandnofieldtwo}, namely
\begin{align}\label{eqn:wootters:nonstandnofieldtwoH}
 \left(\bm{v}^{(\bm{c})}\right)_{\bm{l}} = \iE^{\bm{l}^t (\bm{c} \cdot \bm{\alpha}) \bm{l}} H^{\otimes m},
\end{align}
where $(\bm{v}^{(\bm{c})})_{\bm{l}}$ denotes the row vector with entries $(\bm{v}_{\bm{d}}^{(\bm{c})})_{\bm{l}}$.\par
A more formal description of the discussed complete sets of MUBs was given by Klappenecker and Rötteler \cite{Klappenecker04},
which is based on \emph{Weil sums} in the case of odd prime-power dimensions and in even prime-power dimensions on \emph{Galois rings},
which form by their roots and the zero element the so-called \emph{Teichmüller set}.

\subsection{Pauli operators partition} \label{sec:bandy}
In $2002$ a paper by Bandyopadhyay \etal{} appeared that follows another approach to construct MUBs in
a way that highlights the structure of these bases~\cite{Bandy02} (cf. also \cite{Lawrence02}). In this construction, in a Hilbert space of dimension $d \in \N^*$,
each basis is seen as the set of common eigenvectors belonging
to a maximal set of commuting Pauli operators within the set of all $d^2$ generalized Pauli operators.
If we exclude the unity operator, which commutes with all operators, we
can find at most $d-1$ pairwise commuting operators within a single set. A partition of $d+1$ such sets results in a
complete set of MUBs. These partitions exist, as the previous construction (cf. Section~\ref{sec:wootters}),
for prime power dimensions, although, the initial approach is limited artificially, which leads to very
specific sets. In this section we will introduce the ideas of the approach Chapter \ref{chap:cyclic_mubs} is based on.
Possible alternatives in the construction process are pointed out.\par

We define the generators of the \emph{generalized Pauli operators}\index{Generalized Pauli operators} $Z$ and $X$, acting
on a state $\ket{i}$ of the $d$-dimensio\-nal Hilbert space $\cH=\C^d$ as
\begin{align}\label{eqn:bandy:ZX}
 Z \ket{i} = \omega^i \ket{i} \quad\mathrm{and}\quad X \ket{i} = \ket{i \oplus_d 1},
\end{align}
with $\omega = \exp{(2 \pi \iE /d)}$ being the first $d$-th root of unity. 
The group which is generated by these operators, i.\,e. $H := \gen{Z,X}$, is known as
\emph{\mbox{(Weyl-)Hei}senberg group}\index{Heisenberg group}\index{Weyl-Heisenberg group} or sometimes
\emph{generalized Pauli group}\index{Generalized Pauli group}. Since phases are not relevant for unitary operators
in quantum theory, we will only refer to elements with a real and positive phase in the following, thus we factorize
the group by its center $\{\pm 1, \pm \iE\} \Eins$ and get the \emph{set of Pauli operators} $\tilde H$. Any element of this set is given by
\begin{align}
 \ZX(k,l) :=
\begin{cases}
  (-\iE)^{k l} Z^k X^l& \mathrm{for}\; d=2,\\
  Z^k X^l&\mathrm{else},
\end{cases}
\end{align}
with $k,l \in \N$, whereas $Z^d = X^d = \Eins_d$ by construction. For prime-power dimensions $d=p^m$ with $p$ being
a prime number and $m \in \N^*$, each Pauli operator of this set can be seen as the representation of a $2m$-dimensional vector
which is an element of the finite field $\F_p^{2m}$. Then, the generalized Pauli operators can be written in a
tensor-product structure as
\begin{align}\label{eqn:bandy:ZXa}
 \ZX(\vec a) = 
\begin{cases}
     (-\iE)^{a^z_1 a^x_1} Z^{a^z_1} X^{a^x_1} \otimes\dots\otimes (-\iE)^{a^z_m a^x_m} Z^{a^z_m} X^{a^x_m}& \mathrm{for}\; p=2,\\
     Z^{a^z_1} X^{a^x_1} \otimes\dots\otimes Z^{a^z_m} X^{a^x_m}&\mathrm{else},
\end{cases}
\end{align}
with $\vec a = (a^z_1,\ldots,a^z_m, a^x_1,\ldots,a^x_m)^t \in \F_p^{2m}$. The commutation relation of two elements of this
set $\vec a, \vec b \in \F_p^m$ is given by
\begin{align}
 \ZX(\vec a) \cdot \ZX(\vec b) = \omega^{(\vec a, \vec b)_{\mathrm{sp}}} \ZX(\vec b) \cdot \ZX(\vec a),
\end{align}
with the \emph{symplectic product}\index{Symplectic product} $(\vec a, \vec b)_{\mathrm{sp}}$ as defined in Definition \ref{def:app:clifford:symplecticproduct}.
Thus, two Pauli operators $\ZX(\vec a)$ and $\ZX(\vec b)$ commute, if and only if the symplectic
product $(\vec a, \vec b)_{\mathrm{sp}}$ equals zero, where the symplectic product is additive (cf. Corollary \ref{cor:app:clifford:addsymp}).

As it is shown by Lemma \ref{lem:pauli:orthogonal}, the set of Pauli operators
is an orthogonal basis of linear operators. Thus, a possible choice is to describe every projective measurement operator by this
set. Since we aim on constructing MUBs which can be represented as unitary operators it is natural to restrict the consideration to projective measurements in the following. 
A discussion on more general measurements is given at the end of this section.
As we can parametrize any unitary measurement basis uniquely with the elements of the set of Pauli operators, we will choose an
exceptional set of MUBs, if each basis is given directly by the common eigenspace of a subset of the generalized Pauli operators,
instead of taking a set of commuting unitary operators in general.
Therefore, we partition the set of Pauli operators into disjoint classes $\cC_j$, such that
\begin{align}
 \tilde H \setminus \{\Eins_d \} = \bigcup\limits_{j=0}^{d} \cC_j. \label{eqn:bandy:partition}
\end{align}
Each class $\cC_j$ is a set of $d+1$ commuting Pauli operators and is created by a $2m \times m$ generator matrix $G_j$
with entries in $\F_p$, as
\begin{align}\label{eqn:bandy:classes}
  \cC^{\prime}_j = \cC_j \cup \{ \Eins_d \} = \{ \ZX(\vec a) : \vec a = G_j \cdot \vec c: \vec c \in \F_p^m \}.
\end{align}
By using the generator matrix in the mentioned way, the resulting set of vectors~$\vec a$ forms an $m$-dimensional subspace
of $\F_p^{2m}$ and the class is therefore called \emph{linear}.\par
It was shown by Bandyopadhyay \etal{}, using Equation \eqref{eqn:bandy:ZX}, that the \emph{Hilbert-Schmidt inner product}\index{Hilbert-Schmidt inner product}
of two elements $\vec a, \vec b \in \tilde H$ of the set of Pauli operators, namely
\begin{align}
 \braket{\ZX(\vec a)}{\ZX(\vec b)}_{\mathrm{HS}} := \tr(\ZX(\vec a)^\dagger \ZX(\vec b)),
\end{align}
vanishes for all $\vec a \neq \vec b$ \cite[Theorem 4.2]{Bandy02}. Pairs of matrices with a vanishing Hilbert-Schmidt inner product are called \emph{orthogonal}.
Since all operators within a single class $\cC_j$ commute, they have a common eigenbasis. But the Hilbert-Schmidt inner product
is invariant under basis transformation and reduces for diagonal matrices to the inner product of their diagonal vectors. As at most $d$ vectors
can be found that are mutually orthogonal in a $d$-dimensional space (and form therefore an orthogonal basis), at most $d$ (mutually orthogonal)
elements of the set of Pauli operators can be found, that commute pairwise \cite[Lemma 3.1]{Bandy02}. These elements will always form a linear class:
\begin{lem}[Linearity of maximal class]\label{lem:bandy:linear}\hfill\\
 A class of $d$ commuting elements of the set of Pauli operators can always be created by a generator matrix and is therefore always linear.
\end{lem}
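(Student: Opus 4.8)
The plan is to translate the statement into the symplectic vector space $\F_p^{2m}$ and then apply the standard bound on isotropic subspaces. First I would use the parametrisation of Equation~\eqref{eqn:bandy:ZXa}: the map $\vec a \mapsto \ZX(\vec a)$ is a bijection from $\F_p^{2m}$ onto the set of Pauli operators $\tilde H$ (both have $p^{2m}=d^2$ elements), so a given class of $d=p^m$ pairwise commuting Pauli operators corresponds to a subset $S \subseteq \F_p^{2m}$ with $|S| = d$. Since $\ZX(\vec a)$ and $\ZX(\vec b)$ commute if and only if the symplectic product $(\vec a, \vec b)_{\mathrm{sp}}$ vanishes, and since $(\vec a, \vec a)_{\mathrm{sp}} = 0$ holds for every vector (including in characteristic two, directly from the defining formula), every pair of elements of $S$ — distinct or not — is symplectically orthogonal.

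Next I would pass to the $\F_p$-linear span $V := \operatorname{span}_{\F_p}(S)$ and show it is totally isotropic. Writing $\vec u = \sum_i \lambda_i \vec a_i$ and $\vec v = \sum_j \mu_j \vec b_j$ with all $\vec a_i, \vec b_j \in S$ and expanding via bilinearity and additivity of the symplectic product (Corollary~\ref{cor:app:clifford:addsymp}), one obtains $(\vec u, \vec v)_{\mathrm{sp}} = \sum_{i,j} \lambda_i \mu_j (\vec a_i, \vec b_j)_{\mathrm{sp}} = 0$. Hence $V \subseteq V^{\perp}$, where $V^{\perp} := \{ \vec w \in \F_p^{2m} : (\vec v, \vec w)_{\mathrm{sp}} = 0 \text{ for all } \vec v \in V \}$.

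The key step is the dimension count. The symplectic form on $\F_p^{2m}$ is non-degenerate (Definition~\ref{def:app:clifford:symplecticproduct}), so $\dim V + \dim V^{\perp} = 2m$; together with $V \subseteq V^{\perp}$ this forces $\dim V \le m$, hence $|V| \le p^m = d$. As $S \subseteq V$ and $|S| = d$, we conclude $V = S$. In particular $S$ is an $m$-dimensional subspace of $\F_p^{2m}$ containing $\vec 0$, so the class automatically contains $\Eins_d = \ZX(\vec 0)$. Choosing any basis $\vec g_1, \dots, \vec g_m$ of $V$ and setting $G_j := (\vec g_1 \mid \cdots \mid \vec g_m) \in \F_p^{2m \times m}$, which has rank $m$, we get $S = \{ G_j \vec c : \vec c \in \F_p^m \}$, and applying $\ZX$ shows that the class equals $\{ \ZX(\vec a) : \vec a = G_j \vec c,\ \vec c \in \F_p^m \}$, exactly the linear form of Equation~\eqref{eqn:bandy:classes}.

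I expect the only genuine obstacle to be the isotropic-dimension bound $\dim V \le m$, which rests on non-degeneracy of the symplectic form; the rest is careful bookkeeping of the dictionary between Pauli operators and vectors of $\F_p^{2m}$, together with the observation that it is precisely the maximality of the class ($d$ elements) that upgrades the inclusion $S \subseteq V$ to the equality $S = V$.
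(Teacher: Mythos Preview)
Your argument is correct and shares the same overall structure as the paper's proof: both show that the $\F_p$-span of the given class is again totally isotropic (the paper phrases this as ``linear combinations of commuting elements commute'', invoking Corollary~\ref{cor:app:clifford:addsymp}), and then use a size bound to force the span to coincide with the original class of $d$ elements. The genuine difference lies in how that bound is established. You derive $\dim V \le m$ purely inside the symplectic vector space $\F_p^{2m}$, via non-degeneracy of the form ($\dim V + \dim V^{\perp} = 2m$ together with $V \subseteq V^{\perp}$). The paper instead invokes the operator-theoretic bound proved in the paragraph immediately preceding the lemma: pairwise commuting Pauli operators are simultaneously diagonalisable and mutually Hilbert--Schmidt orthogonal, so at most $d$ of them can coexist. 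Your route is self-contained and purely algebraic, requiring no reference to the Hilbert space $\C^d$ at all; the paper's route reuses an argument already in place and keeps the physical picture (orthogonality of eigenvalue vectors) in view.
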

\begin{proof}
 It was shown above that at most $d$ elements of the set of Pauli operators commute pairwise in a $d$-dimensional Hilbert space. By Corollary
 \ref{cor:app:clifford:addsymp}, also those elements $\ZX(\vec a_k)$ and $\ZX(\vec a_l)$ commute, which can be constructed by the linear combinations of their 
 generating vectors $\vec a_k$ and $\vec a_l$. But there is no class with more than $d$ elements, thus there exists always a basis of $m$
 elements which we call generator, as in Equation \eqref{eqn:bandy:classes}.
\end{proof}
A unitary operator basis, i.\,e. a basis for unitary operators, with $d^2$ elements that can be partitioned into $d+1$ classes with mutual orthogonal
and pairwise commuting elements, is called a \emph{maximal commuting basis}\index{Maximal commuting basis} and can be used to construct
a maximal set of $d+1$ MUBs \cite[Theorem 3.2]{Bandy02}. Conversely, also a complete set of MUBs implies the existence of a maximal commuting
basis \cite[Theorem 3.4]{Bandy02}. It remains an open question to figure out all (or at least all non-equivalent) maximal commuting
bases (see Chapter \ref{chap:equivalence}).\par

Therefore, in order to construct the classes $\cC_j$, we need--according to Equa-\linebreak tion~\eqref{eqn:bandy:classes}--to find $d+1$ generators $G_j$ that partition the set of Pauli operators into disjoint classes
of $d-1$ pairwise commuting elements each\footnote{As the unity element appears obviously in all classes it is excluded in order to construct disjoint classes.}. It will turn out later on
in this work that we are free to fix one of the classes to construct a certain set of MUBs (cf. Chapter~\ref{chap:equivalence}). But still,
specific separability properties of the MUBs are modified by this choice (cf. Section \ref{sec:entangleprop}). A possible choice is to set
the generator of the class $\cC_0$ as
\begin{align}\label{eqn:bandy:G0}
 G_0 = \begin{pmatrix}\Eins_m\\ 0_m\end{pmatrix},
\end{align}
which generates all Pauli-$Z$ operators that obviously commute; the symbols $\Eins_m$ and $0_m$ refer to the identity matrix and a quadratic zero matrix, respectively,
where $\Eins_m, 0_m \in M_m(\F_2)$. In order to
obtain classes $\cC_j$ with $j \in \MgE{d}$ that are disjoint with the class $\cC_0$, the column vectors of their generators have to be
linearly independent, which is exactly true if the block matrices $(G_0, G_j)$ are invertible for $j \in \MgE{d}$.\footnote{A class with a maximal number
of elements can only be created if the column vectors of a single generator are linearly independent.} By Lemma \ref{lem:algfund:detblock} this is true if the determinant
of $G_0^z G_j^x - G_j^z G_0^x$ is not zero for $j\neq 0$, where $G_j=(G_j^z, G_j^x)^t$. Since $G_0^x=0_m$, this equation can only hold if $G_j^x$
is invertible. But if $G_j^x$ is invertible, we can write all generators with $j\neq0$ as 
\begin{align}\label{eqn:bandy:Gj}
 G_j = \begin{pmatrix}G_j^z\\ \Eins_m\end{pmatrix},
\end{align}
which will be proven in 
Corollary \ref{cor:standfor:generatorfree} and called \emph{standard form}\index{Standard form} in Section~\ref{sec:standardform}.
Within this form, the elements in a single class commute, if the symplectic product of all pairs
of vectors of the generating set (thus, all column vectors of the generator) have a vanishing symplectic product. With $G_j^z=(\vec a^z_1,\ldots, \vec a^z_m)$
follows
\begin{align}
 \vec a^z_k \vec a^x_l - \vec a^x_k \vec a^z_l = 0 \quad \mathrm{for}\; k,l \in \MgE{m},
\end{align}
and with $G_j^x=\Eins_m$ finally
\begin{align}
 a^z_{k,l} - a^z_{l,k} = 0 \quad \mathrm{for}\; k,l \in \MgE{m}.
\end{align}
Thus, the matrices $G_j^z$ with $j\in \MgE{d}$ have to be symmetric \cite[Lemma~4.3]{Bandy02}. The last point we have to achieve is that arbitrary pairs of
generators $G_j$ with $j\neq 0$ do not span the same vectors spaces, thus for $k,l \in \MgE{m}$ the determinant of $G_k^z G_l^x - G_l^z G_k^x$ does not
vanish. But with $G_k^x = G_l^x = \Eins_m$ we find
\begin{align}
 \det(G_k^z - G_l^z) \neq 0 \quad \mathrm{for}\; k,l \in \MgE{m}.
\end{align}

In summary, this leads to the following three conditions in order to construct a maximal commuting basis if we set $G_0 = (\Eins_m, 0_m)^t$:
\begin{enumerate}[\quad(1)]
 \item $G_j = (G_j^z, \Eins_m)$ for $j\in \MgE{d}$.
 \item $G_j^z$ is symmetric.
 \item $\det(G_k^z - G_l^z) \neq 0$ for $k, l \in \MgE{d}, k \neq l$.
\end{enumerate}

If the unitary operators within a class $\cC'_j$ are given by
\begin{align}
 \cC'_j = \Mg{U_{j,0}, \ldots, U_{j,d-1}},
\end{align}
with $U_{j,0}$ referring to the unity matrix $\Eins_d$, there is an orthonormal basis in which all of these operators are diagonal. This leads
to a set of eigenvalues $\lambda_{j,k,l}$ where $k \in \MgN{d-1}$ indicates the operator $U_{j,k}$ and $l \in \MgE{d}$ belongs to the eigenvector index. Bandyopadhyay \etal{} 
have shown, that the following construction generates a complete set of MUBs from the maximal commuting basis with $M_j$ being the unitary operator
that is a common eigenbasis of the elements of $\cC'_j$ and serves as an element of the set of MUBs:
\begin{align}\label{eqn:bandy:Mj}
 M_j = 
\begin{pmatrix}
  \lambda_{j,0,1} & \lambda_{j,0,2} & \cdots & \lambda_{j,0,d}\\
  \lambda_{j,1,1} & \lambda_{j,1,2} & \cdots & \lambda_{j,1,d}\\
  \vdots & \vdots & \ddots & \vdots\\
  \lambda_{j,d-1,1} & \lambda_{j,d-1,2} & \cdots & \lambda_{j,d-1,d}
\end{pmatrix}.
\end{align}\par


Finally, we like to mention, that the set of generalized Pauli operators as it was chosen to construct the mutually unbiased bases which may serve as a set
for completely estimating the quantum state of a system, is only a subset of the most general set of measurement operators which contains
all \emph{positive operator valued measurements (POVMs)}.\footnote{A POVM is a set of positive operators, $\{E_n\}$, which are defined by measurement operators $M_n$
as $E_n = M_n^{\dagger} M_n$ with $\sum_n E_n = \Eins$ for $n \in \N^*$. The probability that outcome $n$ occours is given by $p(n) = \bra{\psi} E_n \ket{\psi}$
if the state of the system is given by $\ket{\psi}$.}
Investigations on so-called \emph{symmetric informationally complete POVMs (SIC-POVMs)}\index{SIC-POVM} may expand the state estimation techniques
raised by MUBs. After this concept was introduced in the seventies~\cite{Lemmens73,Prugovecki77}, Zauner conjectured that a
complete state estimation with SIC-POVMs is possible for all Hilbert space dimensions and showed this explicitly for
dimensions $d\leq 5$ \cite{Zauner99}. Whereas Renes \etal{} \cite{Renes04} found numerical evidences for larger dimensions, an explicit construction
of a SIC-POVM for the smallest composite dimension $d=6$ was given by Grassl~\cite{Grassl04}. A formal discussion of the former
results with the notion of an \emph{extended Clifford group}\index{Extended Clifford group} was done by Appleby \cite{Appleby05, Appleby09}.


\section{Fibonacci polynomials}\label{sec:fibonacci_polynomials}
Fibonacci polynomials play an important role in the process of constructing the so-called \emph{cyclic MUBs}, which will be considered in Chapter \ref{chap:cyclic_mubs}.
We will discuss those
properties in detail which are necessary in order to understand the features of most of the sets of cyclic MUBs which are constructed in this work 
(cf. Sections~\ref{sec:fibset} and~\ref{sec:homogeneous:group}).
This section starts by defining the
Fibonacci polynomials and enters directly into the area of important and generally available properties of these polynomials. We will then
examine properties which are limited to the case where these polynomials are defined over the finite field $\GF{2}$.
Most of these lemma with similar proofs were done for \cite{KRS10,SR12}. Some were already given in \cite{Webb69,Bicknell70,Goldwasser97,Goldwasser02},
as well as further properties. The existence of complete sets of cyclic MUBs, using the constructions of Sections~\ref{sec:fibset} and \ref{sec:homogeneous:group},
can be proven with the presented results.

The well-known Fibonacci sequence can be generalized in a way to generate the so-called Fibonacci polynomials $F_n(x)$, which are defined
recursively.
\begin{defi}[Fibonacci polynomials]\hfill\\
 The polynomial $F_n(x)$ is called the \emph{Fibonacci polynomial}\index{Fibonacci!polynomials} of index $n$, and recursively defined as
 \begin{align}\label{eqn:fibpol:normal}
   F_{n+1}(x) := x\cdot F_n(x) + F_{n-1}(x)
 \end{align}
 with $F_0 := 0$ and $F_1 := 1$.
\end{defi}
For $x=1$ we end up with the usual Fibonacci sequence given by the ordered set $\{F_n\}_0^\infty = \Mg{0,1,1,2,3,5,\ldots}$.
For further investigations it is a great advantage to have a generalized recursion relation.
\begin{lem}[General recursion relation]\label{lem:fibpol:general}\hfill\\
  The Fibonacci polynomial $F_{k+l}(x)$ with $k,l \in \N$ can be derived with the help of the Fibonacci polynomials $F_k(x)$ and $F_l(x)$ as
\begin{align}\label{eqn:fibpol:general}
 F_{k+l}(x)=F_{k}(x) F_{l+1}(x) + F_{k-1}(x) F_{l}(x).
\end{align}
\end{lem}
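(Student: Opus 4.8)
The statement to prove is the general recursion relation
\[
F_{k+l}(x)=F_{k}(x)F_{l+1}(x)+F_{k-1}(x)F_{l}(x),
\]
and the natural approach is induction on $l$ for fixed $k$, using the defining recursion \eqref{eqn:fibpol:normal}. First I would check the base cases. For $l=0$ the right-hand side is $F_k(x)F_1(x)+F_{k-1}(x)F_0(x)=F_k(x)\cdot 1+F_{k-1}(x)\cdot 0=F_k(x)$, which matches the left-hand side. For $l=1$ the right-hand side is $F_k(x)F_2(x)+F_{k-1}(x)F_1(x)=F_k(x)\cdot x+F_{k-1}(x)=F_{k+1}(x)$ by the definition of $F_{k+1}$. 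So the identity holds for $l\in\{0,1\}$.

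For the inductive step, I would assume the formula holds for indices $l-1$ and $l$ (a two-step induction, which is why two base cases are needed) and prove it for $l+1$. Starting from the defining recursion applied to the index $k+l$, namely $F_{k+l+1}(x)=x\,F_{k+l}(x)+F_{k+l-1}(x)$, substitute the induction hypothesis for both $F_{k+l}(x)$ and $F_{k+l-1}(x)$:
\[
F_{k+l+1}(x)=x\bigl(F_k F_{l+1}+F_{k-1}F_l\bigr)+\bigl(F_k F_l+F_{k-1}F_{l-1}\bigr),
\]
where I suppress the argument $x$ for readability. Regrouping by $F_k$ and $F_{k-1}$ gives
\[
F_{k+l+1}=F_k\bigl(xF_{l+1}+F_l\bigr)+F_{k-1}\bigl(xF_l+F_{l-1}\bigr)=F_k F_{l+2}+F_{k-1}F_{l+1},
\]
using \eqref{eqn:fibpol:normal} twice (for $F_{l+2}$ and $F_{l+1}$). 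This is exactly the claimed identity with $l$ replaced by $l+1$, completing the induction.

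There is essentially no hard part here; the only things to be careful about are handling the $F_{-1}(x)$ that appears when $k=0$ (one extends the recursion backwards via $F_{-1}(x):=1$, or equivalently just notes that for $k=0$ the identity reduces to $F_l(x)=F_0(x)F_{l+1}(x)+F_{-1}(x)F_l(x)$, forcing the convention $F_{-1}=1$, consistent with $F_1=xF_0+F_{-1}$), and making sure two base cases rather than one are verified because the recursion is second-order. One could alternatively give a slicker proof via the matrix identity $\begin{pmatrix}F_{n+1}&F_n\\F_n&F_{n-1}\end{pmatrix}=\begin{pmatrix}x&1\\1&0\end{pmatrix}^{n}$ and reading off the $(k+l)$ entry from the product of the $k$-th and $l$-th powers, but the direct induction is shortest and self-contained.
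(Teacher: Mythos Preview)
Your proof is correct and follows essentially the same approach as the paper: induction on $l$ using the defining recursion \eqref{eqn:fibpol:normal}. You are in fact more careful than the paper, which checks only the single base case $l=0$ and leaves the two-step nature of the induction (and the convention $F_{-1}=1$ needed when $k=0$) implicit; your explicit treatment of both base cases and the regrouping in the inductive step is cleaner.
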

\begin{proof}
 We show this formula by induction. For $l=0$ we find $F_k(x)=F_k(x)$. Assuming that \eqref{eqn:fibpol:general} holds, we get
 $F_{k+(l+1)}(x) = F_k(x) F_{(l+1)+1}(x) + F_{k-1}(x) F_{l+1}(x)=x F_{k+l}(x) + F_{k + (l-1)}(x)$ by using Equation \eqref{eqn:fibpol:normal}.
\end{proof}

We can use this relation in order to prove an important lemma on the divisibility properties of the Fibonacci polynomials. Beforehand,
we need the auxiliary lemma which follows.
\begin{lem}[Coprime Fibonacci polynomials]\label{lem:fibpol:coprime}\hfill\\
 The polynomials $F_n(x)$ and $F_{n+1}(x)$ are coprime for $n \in \N^*$.
\end{lem}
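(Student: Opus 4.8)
The plan is to run the Euclidean-style descent that is built into the defining recursion \eqref{eqn:fibpol:normal}. Suppose a polynomial $d(x)$ divides both $F_n(x)$ and $F_{n+1}(x)$. Rearranging \eqref{eqn:fibpol:normal} as $F_{n-1}(x) = F_{n+1}(x) - x\,F_n(x)$ shows that $d(x)$ then also divides $F_{n-1}(x)$. Hence every common divisor of the pair $(F_n, F_{n+1})$ is a common divisor of the pair $(F_{n-1}, F_n)$.

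Iterating this observation $n-1$ times, $d(x)$ divides every $F_k(x)$ for $1 \le k \le n+1$; in particular it divides $F_1(x) = 1$. Therefore $d(x)$ is a unit (a nonzero constant), which is exactly the statement that $F_n(x)$ and $F_{n+1}(x)$ are coprime. Equivalently, one can phrase this as an induction on $n$: the base case is $\gcd(F_1, F_2) = \gcd(1, x) = 1$ since $F_2(x) = x F_1(x) + F_0(x) = x$, and the inductive step is precisely the reduction $\gcd(F_n, F_{n+1}) = \gcd(F_{n-1}, F_n)$ just derived.

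There is essentially no hard part here: the only thing to be slightly careful about is the ambient ring in which ``coprime'' is interpreted (polynomials over a field, or $\Z[x]$, so that ``unit'' means a nonzero constant, resp.\ $\pm 1$), and making sure the descent is anchored at $F_1 = 1$ rather than running past it into $F_0 = 0$. This lemma will then feed directly into the divisibility result for $F_{k+l}(x)$ via the general recursion relation of Lemma~\ref{lem:fibpol:general}.
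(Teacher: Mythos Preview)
Your proof is correct and matches the paper's approach essentially verbatim: both use the recursion \eqref{eqn:fibpol:normal} to reduce $\gcd(F_n,F_{n+1})$ to $\gcd(F_{n-1},F_n)$, with base case $\gcd(F_1,F_2)=\gcd(1,x)=1$. The paper states it as a forward induction while you first phrase it as a descent and then rephrase it inductively, but the content is identical.
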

\begin{proof}
 It is given by construction that $\gcd ( F_1(x), F_2(x)) = \gcd(1,x) =1$. If we assume that $\gcd ( F_n(x), F_{n+1}(x))=1$ we can
 step the induction forward by $\gcd ( F_{n+1}(x),$ $F_{n+2}(x)) = \gcd ( F_{n+1}(x), x F_{n+1}(x) + F_n(x)) = 1$, using the assumption.
\end{proof}
This basic divisibility property leads to a more fundamental property of the Fibonacci polynomials.
\begin{lem}[Divisibility of Fibonacci polynomials]\label{lem:fibpol:divisibility}\hfill\\
 The polynomial $F_n(x)$ is divisible by $F_m(x)$ if and only if $m$ divides $n$, with $m, n \in \N^*$.
\end{lem}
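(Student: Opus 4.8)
The plan is to prove the two implications separately. Both rest on the general recursion relation of Lemma~\ref{lem:fibpol:general}, on the coprimality of consecutive Fibonacci polynomials from Lemma~\ref{lem:fibpol:coprime}, and on the elementary fact that $\deg F_n(x) = n-1$ for $n \in \N^*$ (with leading coefficient $1$), which follows at once by induction from Equation~\eqref{eqn:fibpol:normal}. In particular $F_n(x) \neq 0$ for $n \geq 1$, so a nonzero polynomial cannot be divisible by a polynomial of strictly greater degree.

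First I would settle the ``if'' direction: assuming $m \mid n$, write $n = mk$ and induct on $k$. The base case $k=1$ is immediate, and for the inductive step I apply Lemma~\ref{lem:fibpol:general} to the splitting $m(k+1) = mk + m$, obtaining
\begin{align}
 F_{m(k+1)}(x) = F_{mk}(x)\,F_{m+1}(x) + F_{mk-1}(x)\,F_m(x).
\end{align}
The second summand is visibly divisible by $F_m(x)$, and the first one is by the induction hypothesis, so $F_m(x) \mid F_{m(k+1)}(x)$.

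For the ``only if'' direction, suppose $F_m(x) \mid F_n(x)$; the case $m=1$ is trivial since $1 \mid n$ always, so take $m \geq 2$. By division with remainder write $n = qm + r$ with $0 \leq r < m$. Since $F_n(x) \neq 0$ has degree $n-1$, divisibility by $F_m(x)$ forces $n \geq m$ and hence $q \geq 1$. Applying Lemma~\ref{lem:fibpol:general} to $n = qm + r$ gives $F_n(x) = F_{qm}(x)F_{r+1}(x) + F_{qm-1}(x)F_r(x)$; by the ``if'' direction just proved, $F_m(x) \mid F_{qm}(x)$, so we deduce $F_m(x) \mid F_{qm-1}(x)F_r(x)$. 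The key point is then that $\gcd(F_m(x), F_{qm-1}(x)) = 1$: indeed $F_{qm-1}(x)$ and $F_{qm}(x)$ are coprime by Lemma~\ref{lem:fibpol:coprime} (here $qm - 1 \geq 1$, since $m \geq 2$), while $F_m(x) \mid F_{qm}(x)$, so any common factor of $F_m$ and $F_{qm-1}$ would divide $\gcd(F_{qm}, F_{qm-1}) = 1$. Hence $F_m(x) \mid F_r(x)$, and since $0 \leq r < m$ the polynomial $F_r(x)$ is either $0$ or of degree $r-1 < m-1 = \deg F_m(x)$; divisibility then forces $F_r(x) = 0$, i.e.\ $r = 0$, so $m \mid n$.

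I expect the ``only if'' direction to carry whatever content there is: the arithmetic with the division algorithm is routine, but one has to notice that the coprimality of \emph{consecutive} Fibonacci polynomials propagates --- via $F_m \mid F_{qm}$ --- to coprimality of $F_m$ with $F_{qm-1}$, which is exactly what allows the factor $F_r(x)$ to be isolated and then killed by the degree bound. The ``if'' direction and the degree bookkeeping are straightforward.
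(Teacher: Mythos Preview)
Your proof is correct and follows essentially the same approach as the paper's: induction via the general recursion relation for the ``if'' direction, and division with remainder together with Lemma~\ref{lem:fibpol:coprime} and a degree count for the converse. Your choice of splitting $n = qm + r$ with $k = qm$, $l = r$ in Lemma~\ref{lem:fibpol:general} is in fact slightly more direct than the paper's (which takes $k = r$, $l = qm$ and then expands $F_{qm+1}$ via the ordinary recursion), but the underlying argument is the same.
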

\begin{proof}
 To show the implication, let us assume that $n=m m'$ with $m' \in \N^*$. We note that $F_m(x)$ divides $F_n(x)$ trivially for $m'=1$.
 Using relation \eqref{eqn:fibpol:general} with $k:=m$ and $l:=m (m'-1)$, we proceed by induction with the assumption that
 $F_{m(m'-1)}(x)$ is divisible by $F_m(x)$ and see that this implies that
 $F_{m m'}(x) = F_{ m + m (m'-1)}(x) = F_m F_{m (m'-1)+1}(x) + F_{m-1}(x) F_{m (m'-1)}(x)$ is also divisible by $F_m(x)$.
 \par To show the converse we set $n = m m' +r$ with some remainder $r \in \N$ such that $r < m$. The generalized recursion relation
 \eqref{eqn:fibpol:general} gives $F_n(x) = F_{r+mm'}(x) = F_r F_{m m' +1}(x) + F_{r-1}(x) F_{m m'}(x)$. Using the normal recursion relation~\eqref{eqn:fibpol:normal}
 we get $F_n(x) = x F_r(x) \cdot F_{m m'}(x) + F_r(x) F_{m m'-1}(x) + F_{r-1}(x) F_{m m'}(x) $. By the implication, the first and the
 last term are divisible by $F_m(x)$. Since we assume that $F_n(x)$ is divisible by $F_m(x)$, the term $F_r(x) F_{m m' -1}(x)$ should also be divisible by $F_m(x)$ or
 vanish. From Lemma~\ref{lem:fibpol:coprime} we know that $F_{m m'}(x)$ and $F_{m m'-1}(x)$ are coprime, thus $F_{m m'-1}(x)$ is coprime to $F_m(x)$.
 Since we further assume that $r<m$ which implies that the degree of $F_m(x)$ is larger than the degree of $F_r(x)$, the
 polynomial $F_r(x)$ cannot be divisible by $F_m(x)$, thus should vanish by identifying $r=0$.
\end{proof}
Keeping these properties in mind, we can expand our investigation on the Fibonacci polynomials by discussing their coefficients. Therefore, it is useful to
read off the coefficients $a_k^{(n)}$ from Equation \eqref{eqn:fibpol:normal}, meaning the coefficient belonging to $x^k$ in the polynomial $F_n(x)$, and get
\begin{align}\label{eqn:fibpol:coefficents}
 a_k^{(n+1)} = a_{k-1}^{(n)} + a_k^{(n-1)},
\end{align}
with $a_k^{(0)}=0$ for $k \in \N$, $a_0^{(1)}=1$ and $a_k^{(1)}=0$ for $k \in \N^*$. Using this relation, we are able to show the following lemma:
\begin{lem}[Coefficients of Fibonacci polynomials]\hfill\\
 For $F_n(x) = \sum_{k=0}^{n} a_k^{(n)} x^k$ and $n \in \N$, there holds
 \begin{align}
      a_k^{(n)} = \begin{cases}
		  \begin{pmatrix}(n+k+1)/2\\(n-k+1)/2\end{pmatrix},        &\text{ if $n-k \equiv 1 \mod 2$},\\
                  0,                        			           &\text{  otherwise}.
      \end{cases}
 \end{align}
\end{lem}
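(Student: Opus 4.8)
The plan is to prove the closed form by induction on $n$ using the coefficient recursion \eqref{eqn:fibpol:coefficents}, namely $a_k^{(n+1)} = a_{k-1}^{(n)} + a_k^{(n-1)}$, with the initial data $a_k^{(0)} = 0$ for all $k$, $a_0^{(1)} = 1$, and $a_k^{(1)} = 0$ for $k \in \N^*$. First I would verify the base cases: for $n = 0$ the claimed formula gives $0$ (since $n - k = -k$ is congruent to $1$ only when $k$ is odd, but then $(n-k+1)/2 = (1-k)/2 < 0$, so the binomial vanishes, and otherwise the formula is $0$ by definition), matching $F_0 = 0$; for $n = 1$ the formula gives a nonzero value only when $1 - k$ is odd, i.e. $k$ even, and then $\binom{(k+2)/2}{(2-k)/2}$, which is $1$ at $k = 0$ and has a negative lower index for $k \geq 2$, hence vanishes — matching $F_1 = 1$.

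Next, assuming the formula holds for indices $n-1$ and $n$, I would compute $a_k^{(n+1)} = a_{k-1}^{(n)} + a_k^{(n-1)}$ by substituting the closed form into each summand. The parity bookkeeping is the crucial check: $a_{k-1}^{(n)}$ is nonzero precisely when $n - (k-1) = n - k + 1 \equiv 1 \pmod 2$, i.e. when $n - k \equiv 0$, i.e. when $n + 1 - k \equiv 1 \pmod 2$; similarly $a_k^{(n-1)}$ is nonzero precisely when $(n-1) - k \equiv 1 \pmod 2$, again equivalent to $n + 1 - k \equiv 1 \pmod 2$. So both terms are "on" together (and both "off" together, giving $0 = 0$), and the parity condition propagates correctly to $n+1$. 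In the nonzero case, writing the two binomials with the appropriate arguments, one gets
\begin{align}
 a_k^{(n+1)} = \binom{(n+k)/2}{(n-k+2)/2} + \binom{(n+k)/2}{(n-k)/2},
\end{align}
and Pascal's identity $\binom{N}{j} + \binom{N}{j-1} = \binom{N+1}{j}$ (with $N = (n+k)/2$ and $j = (n-k+2)/2$) collapses this to $\binom{(n+k+2)/2}{(n-k+2)/2} = \binom{((n+1)+k+1)/2}{((n+1)-k+1)/2}$, exactly the claimed value for index $n+1$.

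The main obstacle is not the algebra of Pascal's identity but the careful handling of boundary and degeneracy cases: one must make sure the binomial coefficients are interpreted consistently (as $0$ when the lower index is negative or exceeds the upper index), check that the induction step still goes through when $k = 0$ or $k = 1$ (where $a_{k-1}^{(n)}$ refers to an index $-1$, which should be $0$), and confirm that the parity argument and the half-integer arithmetic $(n \pm k)/2$ stay integral exactly on the support. I would treat these edge cases explicitly — observing that all the identities above remain valid under the convention $\binom{N}{j} = 0$ for $j < 0$ or $j > N$ — and note that for $k > n$ the formula correctly returns $0$, so that $F_n(x)$ has degree $n$ as expected, which also serves as a consistency check against Definition \eqref{eqn:fibpol:normal}.
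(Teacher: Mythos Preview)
Your approach is essentially identical to the paper's: both verify the base cases $F_0=0$, $F_1=1$, apply the coefficient recursion $a_k^{(n+1)}=a_{k-1}^{(n)}+a_k^{(n-1)}$, and reduce the inductive step to Pascal's identity, while noting that the zero (wrong-parity) case is preserved by the recursion. Your version is more explicit about the parity bookkeeping and the boundary conventions for binomial coefficients, but the argument is the same.
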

\begin{proof}
 For $F_0(x)=0$ and $F_1(x)=1$ the statement holds. Using Equation~\eqref{eqn:fibpol:coefficents}, we have to show that 
 \begin{align}
  \binom{\frac{n+k}{2}+1}{\frac{n-k}{2}+1} = \binom{\frac{n+k}{2}}{\frac{n-k}{2}} + \binom{\frac{n+k}{2}}{\frac{n-k}{2}+1}
 \end{align}
holds for the first case. But this is known from Pascal's triangle. Using again Equation \eqref{eqn:fibpol:coefficents}, we see that
the second case always stays zero.
\end{proof}
We can restate the result of this lemma as
\begin{align}
 F_n(x) = \sum\limits_{k=0}^{[n/2]} \binom{n-k}{k+1} x^{n-2k-1}.
\end{align}

\par Since the
Fibonacci polynomials define a \emph{linear recurring sequence}\index{Linear recurring sequence}, namely an \emph{impulse response sequence}\index{Impulse response sequence} \cite[Chapter 8]{LN08},
we can represent the sequence given by Equation \eqref{eqn:fibpol:normal} by the associated \emph{companion matrix}\index{Companion matrix} $A$ of its characteristic polynomial
$\chi_F(Z) = Z^2-x\cdot Z-1$, i.\,e.
\begin{align}
 A=\begin{pmatrix}
     0&1\\1&x
   \end{pmatrix}.
\end{align}
With the \emph{initial state vector}\index{Initial state vector}\footnote{In the theory of linear recurring sequences, an initial state vector is defined
as the vector which describes the initial state of the \emph{feedback shift register} (cf. \cite[Chapter 8]{LN08}).} $(F_0,F_1)^t = (0,1)^t$, we obtain the Fibonacci polynomials by
\begin{align}
 \begin{pmatrix}F_n(x)\\F_{n+1}(x) \end{pmatrix} = \begin{pmatrix}0&1\\1&x\end{pmatrix}^n
 \cdot \begin{pmatrix}F_0\\F_1\end{pmatrix}.
\end{align}
If we identify the entries within the companion matrix\index{Companion matrix} $A$ with their associated Fibonacci polynomials, we find
\begin{align}
 A^n= \begin{pmatrix}
     F_0&F_1\\F_1&F_2(x)
   \end{pmatrix}^n = 
   \begin{pmatrix}
     F_{n-1}(x)&F_n(x)\\F_n(x)&F_{n+1}(x)
   \end{pmatrix}.
\end{align}
In the case that $n$ is decomposable into integers as $n = k + l$, the relation $A^{k + l} = A^k \cdot A^l$, namely
\begin{align}
\begin{pmatrix}
     F_{k+l-1}(x)&F_{k+l}(x)\\F_{k+l}(x)&F_{k+l+1}(x)
\end{pmatrix}= 
\begin{pmatrix}
     F_{k-1}(x)&F_{k}(x)\\F_{k}(x)&F_{k+1}(x)
\end{pmatrix} \cdot
\begin{pmatrix}
     F_{l-1}(x)&F_{l}(x)\\F_{l}(x)&F_{l+1}(x)
\end{pmatrix}
\end{align}
leads directly to the generalized recursion relation \eqref{eqn:fibpol:general}.

\subsection{Fibonacci polynomials over $\F_2$}
If we limit the coefficients of the Fibonacci polynomials to be an element of the finite field which has characteristic two and is prime, namely $\GF{2}$, we
find further properties which are important for the construction of cyclic MUBs. A major feature of fields with characteristic two is the fact
that the minus sign can be replaced by the plus sign. We may then create a second form of Equation \eqref{eqn:fibpol:normal}, i.\,e.
\begin{align}\label{eqn:fibpol:reverse}
 F_{n-1}(x) = x\cdot F_n(x) + F_{n+1}(x).
\end{align}
This leads to Fibonacci polynomials with a negative index, being equal to their positive counterparts, such that $F_{-n}(x) = F_n (x)$. A similar result with
alternating signs can be deduced for different prime fields. Some of the Fibonacci
polynomials have special properties if they are defined over the field $\GF{2}$.
\begin{lem}[Fibonacci polynomials $F_{2n}(x)$]\label{lem:fibpol:evenindex}\hfill\\
  For Fibonacci polynomials that have an even index and are defined over the finite field $\GF{2}$, holds $F_{2n}(x) = x\,F_n(x)^2$ for $n \in \N$. More general, we can state
  $F_{2^m\cdot r}(x) = F_r^{2^m}(x)\, x^{2^m-1}$ with $m,r\in \N$.
\end{lem}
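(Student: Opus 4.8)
The plan is to establish the base identity $F_{2n}(x) = x\,F_n(x)^2$ first, and then bootstrap it to the general formula by a short induction on $m$. For the base identity I would use the general recursion relation of Lemma~\ref{lem:fibpol:general} with $k:=l:=n$, which gives $F_{2n}(x) = F_n(x)F_{n+1}(x) + F_{n-1}(x)F_n(x) = F_n(x)\bigl(F_{n+1}(x)+F_{n-1}(x)\bigr)$. Now I invoke the defining recursion~\eqref{eqn:fibpol:normal} in the form $F_{n+1}(x) = x\,F_n(x) + F_{n-1}(x)$, so that $F_{n+1}(x) + F_{n-1}(x) = x\,F_n(x) + 2F_{n-1}(x)$. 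Over $\GF{2}$ the term $2F_{n-1}(x)$ vanishes, leaving $F_{n+1}(x)+F_{n-1}(x) = x\,F_n(x)$, hence $F_{2n}(x) = x\,F_n(x)^2$, as claimed. (One should also check the degenerate case $n=0$ separately, where both sides are $0$, but that is immediate.)

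For the general statement I would induct on $m$. The case $m=0$ is trivial since $F_{r}(x) = F_r(x)^{1}\,x^{0}$. Assume $F_{2^m r}(x) = F_r^{2^m}(x)\,x^{2^m-1}$ holds for some $m\ge 0$ and all $r\in\N$. Applying the base identity with $n := 2^m r$ gives $F_{2^{m+1} r}(x) = F_{2\cdot 2^m r}(x) = x\,\bigl(F_{2^m r}(x)\bigr)^2$. Substituting the inductive hypothesis yields
\begin{align}
 F_{2^{m+1} r}(x) = x\,\bigl(F_r^{2^m}(x)\,x^{2^m-1}\bigr)^2 = x\cdot F_r^{2^{m+1}}(x)\cdot x^{2^{m+1}-2} = F_r^{2^{m+1}}(x)\,x^{2^{m+1}-1},
\end{align}
which completes the induction.

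The argument is entirely routine; there is essentially no serious obstacle. The only point requiring any care is making sure the characteristic-two simplification is applied at the right place — i.e.\ recognizing that $F_{n+1}(x)+F_{n-1}(x)$ collapses to $x\,F_n(x)$ precisely because $-1 = 1$ in $\GF{2}$, so that the symmetric combination of neighbouring Fibonacci polynomials degenerates. A secondary bookkeeping point is tracking the exponent of $x$ through the squaring step in the induction, where $x\cdot(x^{2^m-1})^2 = x^{2^{m+1}-1}$; this is the kind of exponent arithmetic one simply has to get right, but it presents no conceptual difficulty.
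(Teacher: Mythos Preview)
Your proof is correct and follows essentially the same route as the paper: apply the general recursion with $k=l=n$, simplify $F_{n+1}(x)+F_{n-1}(x)$ to $x\,F_n(x)$ via the defining recursion over $\GF{2}$, and then induct on $m$ for the general formula. The paper's proof is more terse but identical in substance.
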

\begin{proof}
 Using the generalized recursion relation \eqref{eqn:fibpol:general}, we find $F_{n+n}(x)=$\linebreak $F_n(x)F_{n+1}(x)+F_{n-1}(x)F_n(x)$. With
 Equation \eqref{eqn:fibpol:normal} this yields the requested result. The second part follows by induction over $m$.
\end{proof}
Despite the general recursion relation \eqref{eqn:fibpol:general}, we can deduce a symmetric generalization of the normal recursion relation~\eqref{eqn:fibpol:normal},
that is only valid for Fibonacci polynomials over $\GF{2}$.
\begin{lem}[Symmetric recursion relation]\label{lem:fibpol:symmetricrecursion}\hfill\\
  For Fibonacci polynomials that are defined over $\GF{2}$, there holds $F_{k+l}(x)+F_{k-l}(x) = x\,F_k(x) \, F_l(x)$ with $k,l \in \N$.
\end{lem}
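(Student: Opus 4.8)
The plan is to prove the symmetric recursion relation $F_{k+l}(x)+F_{k-l}(x) = x\,F_k(x)\,F_l(x)$ over $\GF{2}$ by a direct computation starting from the general recursion relation of Lemma~\ref{lem:fibpol:general}, combined with the identity $F_{-n}(x)=F_n(x)$ which holds over $\GF{2}$ (established just before this lemma via Equation~\eqref{eqn:fibpol:reverse}). First I would write out $F_{k+l}(x)$ using Lemma~\ref{lem:fibpol:general} as $F_{k+l}(x)=F_k(x)F_{l+1}(x)+F_{k-1}(x)F_l(x)$. The key observation is that $F_{k-l}(x)$ can be handled the same way: since $k-l = k + (-l)$, applying the general recursion relation with second index $-l$ gives $F_{k-l}(x) = F_k(x)F_{-l+1}(x) + F_{k-1}(x)F_{-l}(x)$, and then $F_{-l}(x)=F_l(x)$ while $F_{-l+1}(x)=F_{-(l-1)}(x)=F_{l-1}(x)$.

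Adding the two expressions, the $F_{k-1}(x)F_l(x)$ terms cancel (in characteristic two, $a+a=0$), leaving
\begin{align}
 F_{k+l}(x)+F_{k-l}(x) = F_k(x)\bigl(F_{l+1}(x)+F_{l-1}(x)\bigr).
\end{align}
By the normal recursion relation~\eqref{eqn:fibpol:normal}, $F_{l+1}(x) = x\,F_l(x)+F_{l-1}(x)$, so over $\GF{2}$ we get $F_{l+1}(x)+F_{l-1}(x) = x\,F_l(x)$, and substituting this yields exactly $x\,F_k(x)\,F_l(x)$, as claimed.

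The only mild subtlety — and the one point I would state carefully rather than wave past — is the range of validity of $F_{-n}(x)=F_n(x)$ and of Lemma~\ref{lem:fibpol:general} for negative indices: the general recursion relation was proved for $k,l\in\N$, so to apply it with second index $-l$ I should either invoke the extension of the Fibonacci polynomials to all integer indices (consistent via~\eqref{eqn:fibpol:reverse}, under which~\eqref{eqn:fibpol:general} continues to hold), or, if one prefers to avoid negative indices entirely, split into the cases $k\ge l$ and $k<l$ and use $F_{k-l}(x)=F_{l-k}(x)$ in the second case, reducing to a nonnegative-index computation. Either way the arithmetic is the short cancellation above; there is no genuine obstacle, only this bookkeeping about indices.
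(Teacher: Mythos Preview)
Your proof is correct. The approach, however, differs from the paper's. The paper argues by induction on $l$: the base case $l=1$ is the basic recursion $F_{k+1}+F_{k-1}=xF_k$, and the induction step combines \eqref{eqn:fibpol:normal} and \eqref{eqn:fibpol:reverse} to obtain
\[
F_{k+(l+1)}+F_{k-(l+1)} = x\bigl(F_{k+l}+F_{k-l}\bigr) + \bigl(F_{k+(l-1)}+F_{k-(l-1)}\bigr),
\]
after which two applications of the induction hypothesis and the recursion $xF_l+F_{l-1}=F_{l+1}$ finish the argument. Your route instead applies the general recursion relation of Lemma~\ref{lem:fibpol:general} twice (once with index $-l$, using $F_{-n}=F_n$) and cancels in characteristic two. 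This is shorter and avoids the induction entirely, at the cost of the bookkeeping you correctly flag about extending Lemma~\ref{lem:fibpol:general} to negative indices; the companion-matrix description $A^{k+l}=A^kA^l$ given just before this subsection makes that extension immediate, since $A$ is invertible over $\GF{2}$. The paper's induction stays within nonnegative indices throughout but is slightly more work.
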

\begin{proof}
  To show the lemma by induction, we start with the normal recursion relation~\eqref{eqn:fibpol:normal}. We can proceed the induction over $l$ by using
  Equations \eqref{eqn:fibpol:normal} and \eqref{eqn:fibpol:reverse} as $F_{k+(l+1)}(x)+F_{k-(l+1)}(x)=F_{k+l}(x) + F_{k-l}(x) + F_{k+(l-1)}(x) + F_{k-{l-1}}(x)$.
  With the assumption this leads to the expected result.
\end{proof}
Using this lemma we can formulate a subtractive type of a recursion relation.
\begin{lem}[Subtractive recursion relation]\label{lem:fibpol:subtractive}\hfill\\
  For Fibonacci polynomials that are defined over the finite field $\GF{2}$, there holds\linebreak $F_{k+1}(x) F_l (x) + F_k(x) F_{l+1}(x) = F_{\vert k-l \vert}(x)$ with $k,l \in \N$.
\end{lem}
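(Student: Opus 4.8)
The plan is to read the identity off the generalized recursion relation of Lemma~\ref{lem:fibpol:general} after feeding it a \emph{negative} index, exploiting that $F_{-n}(x) = F_n(x)$ over $\GF{2}$ (which follows from Equation~\eqref{eqn:fibpol:reverse}). First I would observe that Lemma~\ref{lem:fibpol:general}, written as $F_{p+q}(x) = F_p(x)F_{q+1}(x) + F_{p-1}(x)F_q(x)$, actually holds for $p \in \N$ and \emph{all} $q \in \Z$: for fixed $p$ both sides are sequences in $q$ obeying the Fibonacci recursion~\eqref{eqn:fibpol:normal} (the left side by definition, the right side by linearity), they agree for $q \ge 0$ by the lemma as stated, and since that recursion can be run backwards via~\eqref{eqn:fibpol:reverse}, a solution is pinned down by its values at two consecutive indices; hence the two sides agree for every integer $q$. (Equivalently, one simply re-runs the induction of Lemma~\ref{lem:fibpol:general} in the decreasing direction using~\eqref{eqn:fibpol:reverse}.)

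Given this, I would substitute $p := k+1$ and $q := -(l+1)$, so that $p+q = k-l$, $q+1 = -l$ and $p-1 = k$, obtaining
\begin{align}
  F_{k-l}(x) = F_{k+1}(x)\,F_{-l}(x) + F_k(x)\,F_{-(l+1)}(x).
\end{align}
Applying $F_{-n}(x) = F_n(x)$ to the two factors on the right turns this into $F_{k-l}(x) = F_{k+1}(x)F_l(x) + F_k(x)F_{l+1}(x)$, and applying it once more on the left gives $F_{k-l}(x) = F_{\vert k-l\vert}(x)$ regardless of the sign of $k-l$. That is exactly the claimed identity, and the restriction $k,l \in \N$ is harmless since $q = -(l+1)$ is an integer, which is all the extended form of Lemma~\ref{lem:fibpol:general} needs.

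The only delicate point is this extension of Lemma~\ref{lem:fibpol:general} to negative second index; everything else is a one-line substitution. If one prefers to stay inside $\N$, the identity can instead be proved by induction on $l$: the cases $l=0$ and $l=1$ are immediate from $F_0 = 0$, $F_1 = 1$, $F_2(x) = x$ together with $1+1 = 0$ in $\GF{2}$, and for the step one expands $F_{l+1}(x)$ and $F_{l+2}(x)$ via~\eqref{eqn:fibpol:normal} to reach $F_{k+1}F_{l+1} + F_k F_{l+2} = x\big(F_{k+1}F_l + F_k F_{l+1}\big) + \big(F_{k+1}F_{l-1} + F_k F_l\big)$, then invokes the induction hypothesis for $l$ and $l-1$ to obtain $x F_{\vert k-l\vert}(x) + F_{\vert k-l+1\vert}(x)$, and finally checks, by a case split on the sign of $k-l$, that this equals $F_{\vert k-l-1\vert}(x)$ using~\eqref{eqn:fibpol:normal} or its $\GF{2}$-reversal~\eqref{eqn:fibpol:reverse}. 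I expect the absolute-value bookkeeping along this inductive route to be the only mildly tedious part, which is precisely what the negative-index substitution sidesteps.
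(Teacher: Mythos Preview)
Your argument is correct, but it is not the route the paper takes. The paper's proof leans on the immediately preceding Lemma~\ref{lem:fibpol:symmetricrecursion}: it multiplies $F_{k+1}F_l + F_kF_{l+1}$ by $x$, applies $xF_aF_b = F_{a+b}+F_{a-b}$ to each product so that the two $F_{k+l+1}$ terms cancel over $\GF{2}$, leaving $F_{k-l+1}+F_{k-l-1}$, and then applies the same lemma in reverse (with second index $1$) to recognise this as $x\,F_{k-l}$; cancelling the introduced factor $x$ in the integral domain $\F_2[x]$ finishes.

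Your main approach instead bypasses Lemma~\ref{lem:fibpol:symmetricrecursion} altogether and goes back to Lemma~\ref{lem:fibpol:general}, extending it to negative second index and then specialising. That is a cleaner explanation of \emph{why} the identity holds---it is literally the general recursion read at $q=-(l+1)$ once one knows $F_{-n}=F_n$---and it makes the absolute value come for free. The cost is that you must justify the extension to $q\in\Z$, which you do correctly via the uniqueness of solutions to the reversible recursion~\eqref{eqn:fibpol:reverse}. The paper's route avoids that discussion but pays for it with the multiply-by-$x$ trick and a second invocation of the symmetric relation. Your alternative inductive sketch is also sound; the case split on the sign of $k-l$ is exactly the bookkeeping the paper's method hides inside $F_{-n}=F_n$.
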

\begin{proof}
  If we multiply the expression $F_{k+1}(x) F_l (x) + F_k(x) F_{l+1}(x)$ by $x$ and apply the implication of Lemma \ref{lem:fibpol:symmetricrecursion}, we have
  $F_{k-1+1}(x) + F_{k-l-1}(x)$. Using the converse of this lemma and ruling out the initially multiplied $x$ we get the expected result.
\end{proof}
With Lemma \ref{lem:fibpol:symmetricrecursion} in mind, we can prove the main theorem of this section.

\begin{thm}[Completeness of Fibonacci polynomials]\label{thm:fibpol:completeness}\hfill\\
 All polynomials of degree $m \in \N^*$ or degree $m' \in \N^*$ such that $m'$ divides $m$, which are irreducible over the field $\GF{2}$, divide either $F_{2^m-1}(x)$ or $F_{2^m+1}(x)$, or equal $x$.
\end{thm}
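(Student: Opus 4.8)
The plan is to reduce the statement to unique factorisation in $\F_2[x]$. I would first express $F_{2^m-1}(x)$ and $F_{2^m+1}(x)$ in closed form (up to explicit factors) in terms of the single variable $x$, and then compare these with the classical fact that $x^{2^m-1}+1$ is exactly the product of all monic irreducible polynomials over $\GF{2}$ whose degree divides $m$, with $x$ itself removed.

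First I would collect three identities valid over $\GF{2}$. From Lemma~\ref{lem:fibpol:evenindex} with $r=1$ one gets $F_{2^m}(x)=x^{2^m-1}$. Applying the symmetric recursion relation, Lemma~\ref{lem:fibpol:symmetricrecursion}, with $k=2^m$ and $l=1$ gives
\begin{align}
 F_{2^m+1}(x)+F_{2^m-1}(x)=x\,F_{2^m}(x)\,F_1(x)=x^{2^m}.
\end{align}
Applying the subtractive recursion relation, Lemma~\ref{lem:fibpol:subtractive}, with $k=2^m$ and $l=2^m-1$ gives $F_{2^m+1}(x)F_{2^m-1}(x)+F_{2^m}(x)^2=F_1(x)=1$, so over $\GF{2}$
\begin{align}
 F_{2^m+1}(x)\,F_{2^m-1}(x)=F_{2^m}(x)^2+1=x^{2^{m+1}-2}+1=\bigl(x^{2^m-1}+1\bigr)^2.
\end{align}

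Next I would observe that $F_{2^m-1}(x)$ and $F_{2^m+1}(x)$ are coprime: by the first identity every common factor divides $x^{2^m}$ and is therefore a power of $x$, but $F_n(0)=n\bmod 2$ (immediate from the recursion with $F_0=0$, $F_1=1$), so $F_{2^m-1}(0)=1$ and $x$ does not divide $F_{2^m-1}(x)$. Now fix an irreducible $p(x)$ over $\GF{2}$ of degree $d$ with $d\mid m$ --- this covers both $d=m$ and any $d=m'$ dividing $m$ --- and suppose $p(x)\neq x$. Using the standard fact that $x^{2^m}+x$ is the product of all monic irreducibles over $\GF{2}$ of degree dividing $m$, together with $x^{2^m}+x=x\,(x^{2^m-1}+1)$, the factor $x^{2^m-1}+1$ is the product of all such irreducibles other than $x$; hence $p(x)\mid x^{2^m-1}+1$, and so $p(x)$ divides $\bigl(x^{2^m-1}+1\bigr)^2=F_{2^m+1}(x)\,F_{2^m-1}(x)$. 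As $p$ is irreducible it divides one of these two factors (exactly one, by coprimality), which is the assertion. The polynomial $x$ is a genuine exception, because $x\mid F_n$ only for even $n$ whereas $2^m\pm 1$ is odd.

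The whole argument is short once the recursion identities of this section are available, so I do not expect a real obstacle. The only points that need care are keeping the roles of $k$ and $l$ straight when invoking Lemmas~\ref{lem:fibpol:symmetricrecursion} and~\ref{lem:fibpol:subtractive} (and using freely that $-1=1$ in $\GF{2}$), and citing the standard factorisation of $x^{2^m}+x$ over $\GF{2}$, which is exactly what pins down which irreducibles can appear.
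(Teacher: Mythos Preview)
Your argument is correct and follows essentially the same route as the paper: both reduce to the identity $F_{2^m-1}(x)\,F_{2^m+1}(x)=(x^{2^m-1}+1)^2$ and then invoke the classical factorisation of $x^{2^m}+x$. The only cosmetic difference is that the paper obtains the product identity in one step from the symmetric recursion (Lemma~\ref{lem:fibpol:symmetricrecursion}) applied with $k=2^m+1$, $l=2^m-1$, whereas you go through the subtractive recursion (Lemma~\ref{lem:fibpol:subtractive}); your added coprimality argument and the explicit treatment of the exceptional factor $x$ are extra detail the paper omits.
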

\begin{proof}
 If we calculate the product of $x$ with $F_{2^m-1}(x)$ and $F_{2^m+1}(x)$ using Lemma~\ref{lem:fibpol:symmetricrecursion}, we get $F_{2^{m+1}}(x)+F_2(x)$, which
 equals $x^{2^{m+1}-1} + x = x (x^{2^m-1} +1)^2$ by Lemma~\ref{lem:fibpol:evenindex}. But the term $x^{2^m} + x$ is, as a well-known result from finite field theory
 (see e.\,g. Lemma~2.13 of \cite{LN08}), equal to the product of all irreducible polynomials over the field $\GF{2}$, whose degree divides $m$.
\end{proof}
This theorem enables us to well-define a property of irreducible polynomials which are defined over $\GF{2}$ that will be important in this work later on.
\begin{defi}[Fibonacci index]\label{lem:fibpol:fibindex}\hfill\\
 Every irreducible polynomial $p \in \F[x]$ of degree $m$ has a property, namely the \emph{Fibonacci index}\index{Fibonacci!index} (sometimes called \emph{depth}), which is
 given by the minimal positive number $d \in \N$, such that $p$ divides $F_d$.
\end{defi}
\begin{cor}[Maximal Fibonacci index]\label{cor:fibpol:fibindex}\hfill\\
 The Fibonacci index of an irreducible polynomial $p \in \F[x]$ of degree $m$ is upper bounded by $2^m+1$.
\end{cor}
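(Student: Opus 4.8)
The plan is to obtain this corollary as an immediate consequence of Theorem \ref{thm:fibpol:completeness}. Since the Fibonacci index of $p$ is by definition the \emph{least} positive integer $d$ with $p \mid F_d(x)$, it suffices to exhibit a single value $d \le 2^m + 1$ for which $p$ divides $F_d(x)$; the stated bound on the index then follows automatically from minimality.

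First I would dispose of the exceptional case $p = x$, which forces $m = 1$. From $F_2(x) = x\,F_1(x) + F_0(x) = x$ one reads off $x \mid F_2(x)$, so the Fibonacci index of $x$ equals $2$, and indeed $2 \le 3 = 2^1 + 1$.

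In all remaining cases $p$ is irreducible of degree $m$ with $p \neq x$, and Theorem \ref{thm:fibpol:completeness}, applied to a polynomial of degree exactly $m$, asserts that $p$ divides $F_{2^m - 1}(x)$ or $F_{2^m + 1}(x)$. In the first situation the Fibonacci index of $p$ is at most $2^m - 1$, and in the second it is at most $2^m + 1$; either way the claimed bound holds, which completes the argument.

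I do not foresee a genuine obstacle, since the substantive content has already been packaged into Theorem \ref{thm:fibpol:completeness} and the corollary is pure bookkeeping. The only point deserving a little care is to check that the bound is meant uniformly over all $m \in \N^*$, including the small degrees; for instance $p = x + 1$ divides $F_3(x) = x^2 + 1 = (x+1)^2$ over $\GF{2}$, so its Fibonacci index is $3 = 2^1 + 1$, showing that the bound is in fact attained.
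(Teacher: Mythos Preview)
Your proof is correct and follows exactly the paper's approach: the paper's proof consists of the single sentence ``The statement follows directly by applying Theorem~\ref{thm:fibpol:completeness},'' and you have simply spelled out the details of that application, including the $p=x$ case. Your additional remark that the bound is attained (e.g.\ by $p=x+1$) is a nice observation not present in the paper but entirely consistent with it.
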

\begin{proof}
 The statement follows directly by applying Theorem \ref{thm:fibpol:completeness}.
\end{proof}
To limit the possible Fibonacci index of a specific irreducible polynomial, we state the following lemma which was proven by Sutner (cf. Theorem 3.1 of \cite{Sutner00}).
\begin{lem}[Irreducible factors of $F_{2^m \pm 1}$]\label{lem:fibpol:linear}\hfill\\
 All irreducible polynomials which divide $F_{2^m+1}(x)$, have a linear term, whereas the linear term of those irreducible polynomials which divide $F_{2^m-1}(x)$ vanishes.
\end{lem}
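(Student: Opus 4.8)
The plan is to translate the assertion --- which is about the coefficient of $x$ in an irreducible polynomial $p\in\GF{2}[x]$ of degree $m$ --- into a trace condition on a root of $p$, and then to read that condition off from whether $p$ divides $F_{2^m-1}(x)$ or $F_{2^m+1}(x)$. (Here $m$ is the degree of the polynomial under consideration, and one may assume $p\neq x$, the case $p=x$ being vacuous since $x$ divides neither $F_{2^m-1}$ nor $F_{2^m+1}$.) First I would fix a root $\alpha\in\F_{2^m}$, $\alpha\neq0$, of $p$, so that $p(x)=\prod_{i=0}^{m-1}\bigl(x+\alpha^{2^i}\bigr)$ over $\GF{2}$. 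Reading off the coefficient of $x$ as the $(m-1)$-st elementary symmetric function of the conjugates and dividing through by their product $\alpha^{1+2+\dots+2^{m-1}}=\alpha^{2^m-1}=1$ gives that this coefficient equals $\sum_{i=0}^{m-1}(\alpha^{-1})^{2^i}=\tr_{\F_{2^m}/\F_2}(\alpha^{-1})$. So ``the linear term of $p$ vanishes'' is equivalent to $\tr_{\F_{2^m}/\F_2}(\alpha^{-1})=0$, and this is the quantity I must control.

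Next I would introduce the substitution $\alpha=\phi+\phi^{-1}$. Since $\alpha\neq0$, the polynomial $t^{2}+\alpha t+1$ has in $\overline{\GF{2}}$ a root $\phi\notin\{0,1\}$ with $\phi+\phi^{-1}=\alpha$, and $\phi\in\F_{2^{2m}}$. A short induction shows that $(\phi^{n}+\phi^{-n})/(\phi+\phi^{-1})$ obeys the Fibonacci recursion in the variable $\alpha=\phi+\phi^{-1}$ with the right initial data, so $F_{n}(\alpha)=(\phi^{n}+\phi^{-n})/(\phi+\phi^{-1})$ and hence $F_{n}(\alpha)=0\iff\phi^{2n}=1$. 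I would also invoke the standard criterion that $t^{2}+\alpha t+1$ splits over $\F_{2^m}$ exactly when $\tr_{\F_{2^m}/\F_2}(\alpha^{-2})=0$, i.e. --- the absolute trace being Frobenius-invariant --- exactly when $\tr_{\F_{2^m}/\F_2}(\alpha^{-1})=0$; as this quadratic is separable of degree two, ``$\phi\in\F_{2^m}$'' is the same as ``it splits over $\F_{2^m}$''. Thus the lemma is reduced to deciding membership of $\phi$ in $\F_{2^m}$.

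Then I would run the two implications. If $p\mid F_{2^m-1}$ then $\phi^{2(2^m-1)}=1$; but $\phi$ lies in a field of characteristic two, so its multiplicative order is \emph{odd}, hence already divides $2^m-1$, hence $\phi\in\F_{2^m}$, so $\tr_{\F_{2^m}/\F_2}(\alpha^{-1})=0$ and the linear term of $p$ vanishes. If instead $p\mid F_{2^m+1}$ then $\phi^{2(2^m+1)}=1$, and by the same odd-order argument the order of $\phi$ divides $2^m+1$; if $\phi$ were also in $\F_{2^m}$ its order would divide $2^m-1$ as well, hence divide $\gcd(2^m-1,2^m+1)=1$, forcing $\phi=1$ and $\alpha=0$, a contradiction. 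Hence $\phi\notin\F_{2^m}$, so $\tr_{\F_{2^m}/\F_2}(\alpha^{-1})=1$ and $p$ has a nonzero linear term. These two statements are exactly the lemma.

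The symmetric-function bookkeeping of the first step and the closed-form verification of the second are routine. The one place that needs an idea is the observation that in characteristic two every multiplicative order is odd: this is what lets one strip the spurious factor of $2$ from $2(2^m\pm1)$ and so force the quadratic $t^{2}+\alpha t+1$ to split over $\F_{2^m}$ in the ``$-$'' case and to remain irreducible in the ``$+$'' case. Beyond stating the trace criterion (and its Frobenius-invariance) cleanly, I expect no real obstacle.
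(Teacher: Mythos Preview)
Your argument is correct. The paper does not prove this lemma itself --- it simply cites Sutner (Theorem~3.1 of \cite{Sutner00}) and moves on --- so there is no in-paper proof to compare yours against. Your explicit reading that $m$ is the degree of $p$ is the intended one and is in fact required for the ``$-1$'' half: for instance $x+1$ divides $F_{3}=F_{2^{2}-1}$ while having a nonzero linear term, so the assertion fails for irreducible factors of degree strictly smaller than $m$. With that interpretation, all of your steps (the identification of the linear coefficient with $\tr_{\F_{2^m}/\F_2}(\alpha^{-1})$, the closed form $F_n(\alpha)=(\phi^{n}+\phi^{-n})/(\phi+\phi^{-1})$, the trace criterion for splitting of $t^{2}+\alpha t+1$, and the odd-order reduction from $\phi^{2(2^m\pm1)}=1$ to $\phi^{2^m\pm1}=1$) go through cleanly.
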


Finally, we like to mention that the structure of the Fibonacci polynomials is related to the Sierpinski triangle (see Appendix \ref{app:fractals}) and to structures that appear
with characteristic polynomials of \emph{reduced stabilizer matrices} with \emph{submatrix $A$} equal to zero, as defined in Equation \eqref{eqn:fibset:A}.


\chapter{Construction of cyclic sets}\label{chap:cyclic_mubs}
In Section \ref{sec:constructions} we have discussed different constructions of MUBs that refer to nice
mathematical structures which lead to direct constructions of the different unitary matrices within a
set of MUBs. Only the construction introduced by Bandyopadhyay \etal{} \cite{Bandy02} breaks this rule by introducing a preceding step,
namely the arrangement of the set of Pauli operators into classes of commuting elements. From a physical point
of view it is useful to equip complete sets of MUBs with further features. One of these features is important
both for theoretical investigations and also for experimental implementations of MUBs. It is a cyclic structure
we might only find for complete sets of MUBs for Hilbert space dimensions which are a power of two~\cite{Gow07},
thus $d=2^m$ with $m \in N^*$. Namely, if one of the bases within the set--which can be represented as a unitary operator--is given by $U$, we
get a complete set of MUBs $\fS = \gen{U} = \Mg{U, U^2, \ldots U^{d+1}}$ with $U^{d+1}=\Eins_d$.
Therefore, we will refer to these sets as \emph{sets of cyclic mutually unbiased bases (cyclic MUBs)} from
now on. It is the main aim of the first part of this work to investigate the properties of cyclic
MUBs, to provide constructions and to expand these ideas. Initially the existence of the chosen construction will be proven. From
a theoretical point of view, cyclic MUBs are easier to handle since they allow elegant proofs, e.\,g. for
QKD protocols~\cite{Gott98,Lo01,Chau05}.
For the experimental aspect cyclic MUBs feature their advantage of being implementable by a single
quantum circuit. As we restrict our quantum systems to be describable in a Hilbert space of dimension
$d=2^m$ with $m \in \N^*$, we can consider our systems to be systems of interacting qubits. Formally,
this provides us a simple tensor product structure. Conceptually, the restriction refers to most
of the valuable applications, which are built on qubits. However, extensions to general prime-power dimensions
appear to be feasible by dropping some dispensable properties, but demand further investigations. In
fact, it seems that the set of MUBs can only be reduced to a \emph{set of generators} \cite{Gow07}.\par
This chapter is organized as follows: In Section \ref{sec:fibset} a method is introduced which is proven to be
able to generate complete sets of cyclic MUBs for all dimensions $2^m$. Numerical methods are discussed in
subsections. Section \ref{sec:fermatset} discusses a special kind of solutions for $m=2^k$ with $k\in\N$
which can be constructed recursively and is related to a mathematical conjecture of finite field theory by Wiedemann.
In Section~\ref{sec:standardform} the \emph{standard form} is introduced which will be used later on in order to
compare sets of cyclic MUBs, followed by a discussion about their entanglement properties (Section~\ref{sec:entangleprop}).
The resulting \emph{homogeneous} and \emph{inhomogeneous} sets are shown in Sections \ref{sec:homogeneous} and \ref{sec:inhomogeneous},
respectively. As all sets are given in a special representation, the construction of the corresponding
unitary operator is given in Section \ref{sec:Uconstruction}. Finally, the implementation of all discussed
sets into a quantum circuit is possible with the methods specified in Section \ref{sec:circuit}.

\section{Fibonacci-based sets}\label{sec:fibset}
The first class of cyclic MUBs we create in this work is based on properties of Fibonacci polynomials which were discussed in detail in
Section \ref{sec:fibonacci_polynomials}. Contrary to other classes, which will be discussed later on in Sections \ref{sec:homogeneous}
and \ref{sec:inhomogeneous}, the connection to Fibonacci polynomials allows a straightforward construction scheme that is based
on finite field theory. The method we will use, finds solutions for the construction of Bandyopadhyay \etal{} \cite{Bandy02}
that was introduced in Section \ref{sec:bandy}, with an additional cyclicity property. Therefore, we will establish a \emph{symplectic matrix} $C \in M_{2m} (\F_2)$ which will be called
\emph{stabilizer matrix}\index{Stabilizer matrix} in the following, that cyclically permutes the $d+1$ classes $\mathcal{C}_j$ of Equation~\eqref{eqn:bandy:classes}.
In other words, that stabilizer matrix is a matrix representation of a generator of a finite group of order $d+1$ and generates a complete
set of MUBs. This section starts by discussing the required properties of the stabilizer matrix and different approaches to achieve them, as
published in \cite{KRS10, SR12}. We then show the existence of the stabilizer matrix for each dimension $d=2^m$ with $m \in \N^*$, followed
by a conjecture that predicts a \emph{symmetric companion matrix}\index{Symmetric companion matrix} of polynomials defined over $\F_2$, which raises the ability of calculating
the stabilizer matrix from a certain polynomial directly. A list of solutions for the symmetric stabilizer matrix is given in
Appendix \ref{app:fibonacci_based:companion} for dimensions with $m=2, \ldots, 36$. Solutions for stabilizer matrices in a certain form are listed for
dimensions with $m=2, \ldots, 600$ in Appendix \ref{app:fibonacci_based:triangle}. Section \ref{sec:unitop:fibset} deals
with a unitary representation of the stabilizer matrix $C$.\par

As the method we choose to generate complete sets of cyclic MUBs is based on the construction of Bandyopadhyay \etal{}, we need to partition
the set of Pauli operators into disjoint classes of commuting elements, as seen by Equation \eqref{eqn:bandy:partition}.
For each of these classes $\mathcal{C}_j$ with $j \in \MgN{d}$, by repeating Equation \eqref{eqn:bandy:classes} with $p=2$, we define
a $2m \times m$ generator matrix $G_j$ that generates the $d-1$ elements of the class $\mathcal{C}_j$
as well as the identity as
\begin{align}\label{eqn:fibset:classes}
  \mathcal{C}^{\prime}_j = \mathcal{C}_j \cup \{ \Eins_d \} = \{ \ZX(\vec a) : \vec a = G_j \cdot \vec c: \vec c \in \F_2^m \}.
\end{align}
To implement a cyclic structure into this partition, we demand that the stabilizer matrix $C \in M_{2m}(\F_2)$ permutes the generators as
$C^k \cdot G_l = G_{k \oplus l}$, which implies that $C^{d+1} = \Eins_{2m}$ and leads to
\begin{align}\label{eqn:fibset:classesC}
  \mathcal{C}^{\prime}_j = \mathcal{C}_j \cup \{ \Eins_d \} = \{ \ZX(\vec a) : \vec a = C^j \cdot G_0 \cdot \vec c: \vec c \in \F_2^m \}.
\end{align}
As we are free to fix one of the generators, we set
$G_0 = ( \Eins_m, 0_m )^t$. To guarantee disjoint classes, the generators need to span non-overlapping vector spaces, i.\,e. the matrix
$(G_k, G_l)$ is invertible for all values $k, l \in \MgN{d}$ with $k \neq l$. Since the set of automorphisms of the Heisenberg group is the
Clifford group and the commutation relations of $\ZX(\vec a)$ indicate a symplectic structure (see Appendix \ref{app:clifford}),
we need $C$ to be symplectic. More precisely, the matrix $C \in M_{2m} (\F_2)$ is a $2m\times 2m$ representation of
a \emph{Clifford unitary}\index{Clifford unitary} matrix. So the problem is boiled down from finding a $2^m \times 2^m$ unitary matrix to the task of finding
this $2m \times 2m$ stabilizer matrix $C$. In summary, we have the following restrictions on the stabilizer matrix $C$:
\begin{enumerate}[(I)]
 \item $C$ is symplectic.
 \item $C^{d+1}=\Eins_{2m}$.
 \item 
  $(G_k, G_l)$ is invertible for $k, l \in \MgN{d}$, $k \neq l$, with $G_k = C^k \cdot (\Eins_m, 0_m)^t$,
\end{enumerate}
where $(G_k, G_l)$ defines a \emph{block matrix} in the form
\begin{align}
 (G_k, G_l) = \begin{pmatrix} G_k^z & G_l^z\\ G_k^x & G_l^x\end{pmatrix},\quad\mathrm{with}\quad G_k = \begin{pmatrix}G_k^z\\ G_k^x\end{pmatrix}.
\end{align}
These restrictions yield still a lot of freedom to create complete sets of cyclic MUBs. We will refer to alternative classes of
possible solutions in Sections \ref{sec:homogeneous} and \ref{sec:inhomogeneous}. Within this section, we limit the stabilizer matrix $C$ to
\begin{align}\label{eqn:fibset:B110}\index{Fibonacci!sets}
 C = \begin{pmatrix}B & \Eins_m \\ \Eins_m & 0_m \end{pmatrix},
\end{align}
with $B \in M_m(\F_2)$. Symplecticity is given (see Definition \ref{def:app:clifford:symplecticmatrix}) when
\begin{align}
 C^t \cdot \begin{pmatrix}0_m & -\Eins_m \\ \Eins_m & 0_m \end{pmatrix} \cdot C \equiv \begin{pmatrix}0_m & -\Eins_m \\ \Eins_m & 0_m \end{pmatrix} \mod 2,
\end{align}
thus, the matrix $B$, which we will refer to as \emph{reduced stabilizer matrix}\index{Reduced stabilizer matrix}, needs to be symmetric to fulfill Condition (I).
As we consider the powers of $C$, namely $C^n$ with $n \in \N$, we get
\begin{align}
 C^n = \begin{pmatrix} F_{n+1} (B) & F_{n} (B) \\ F_{n} (B) & F_{n-1} (B) \end{pmatrix},\label{eqn:fibset:Cn}
\end{align}
with $F_n(x)$ being the Fibonacci polynomials as defined in Section~\ref{sec:fibonacci_polynomials}\index{Fibonacci!polynomials}. In order to achieve Condition (II),
we can read off from Equation \eqref{eqn:fibset:Cn} that for $n=d+1$, $F_{n}(B)$ has to be equal to $0_m$ and $F_{n-1} (B)$, as well as $F_{n+1} (B)$
have to equal $\Eins_m$. But the last property follows from the first two properties by the recursion relation of the Fibonacci polynomials over $\GF{2}$
(see Equation \eqref{eqn:fibpol:reverse}). We conclude this symmetry with the following lemma:
\begin{lem}[Symmetry of Fibonacci polynomials with reduced stabilizer matrix]\label{lem:fibset:symmetry}\hfill\\
 For a reduced stabilizer matrix $B \in M_m(\F_2)$ that creates a complete set of cyclic MUBs, there holds $F_n(B) = F_{2^m+1-n}(B)$ for $n \in \MgN{2^m+1}$;
 in particular, $F_{2^{m-1}}(B) = F_{2^{m-1}+1}(B)$.
\end{lem}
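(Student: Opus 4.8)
The plan is to exploit the two conditions that a reduced stabilizer matrix $B$ creating a complete set of cyclic MUBs must satisfy: $F_{2^m+1}(B) = 0_m$ and $F_{2^m}(B) = \Eins_m$ (these come from Condition (II) applied to Equation \eqref{eqn:fibset:Cn}, as noted just before the lemma), together with the ``reverse'' recursion relation \eqref{eqn:fibpol:reverse} valid over $\GF{2}$, which already gave us $F_{-n}(x) = F_n(x)$ as polynomials. The key observation is that once $B$ is fixed with $C^{d+1} = \Eins_{2m}$ (i.e. $d+1 = 2^m+1$ is the relevant period), the matrix-valued sequence $n \mapsto F_n(B)$ becomes periodic with period dividing $2^m+1$. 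So one should first establish that $F_{n+(2^m+1)}(B) = F_n(B)$ for all $n$, and then combine this periodicity with the even/negative-index symmetry $F_{-n}(B) = F_n(B)$ to fold the index around.

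Concretely, I would argue as follows. First, from \eqref{eqn:fibset:Cn} with $n = 2^m+1$ and Condition (II), read off $F_{2^m+1}(B) = 0_m$, $F_{2^m}(B) = \Eins_m$, and (via \eqref{eqn:fibpol:reverse}) $F_{2^m+2}(B) = \Eins_m$. Hence $C^{2^m+1} = \Eins_{2m}$, so $C^{n + (2^m+1)} = C^n$ for every $n \in \N$; comparing entries in \eqref{eqn:fibset:Cn} yields $F_{n}(B) = F_{n+(2^m+1)}(B)$ for all $n \in \N$. Next, recall that over $\GF{2}$ the reverse recursion \eqref{eqn:fibpol:reverse} gives $F_{-n}(x) = F_n(x)$ as polynomials, hence $F_{-n}(B) = F_n(B)$. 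Combining, for $n \in \MgN{2^m+1}$ we get
\begin{align}
 F_{2^m+1-n}(B) = F_{-(n - (2^m+1))}(B) = F_{n-(2^m+1)}(B) = F_n(B),
\end{align}
which is the claimed symmetry. Setting $n = 2^{m-1}$ gives $2^m + 1 - n = 2^{m-1} + 1$, so $F_{2^{m-1}}(B) = F_{2^{m-1}+1}(B)$, proving the ``in particular'' clause.

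The only subtlety — and the step I would be most careful about — is the logical status of what is ``given.'' The hypothesis ``$B$ creates a complete set of cyclic MUBs'' is what licenses Condition (II), so one must be explicit that $F_{2^m+1}(B) = 0_m$ and $F_{2^m}(B) = \Eins_m$ are consequences of that hypothesis (via the requirement $C^{d+1} = \Eins_{2m}$ with $d = 2^m$), rather than assumed separately; the paragraph immediately preceding the lemma already does exactly this bookkeeping, so it can be cited. A minor point worth a sentence is that $F_{-n}(B) = F_n(B)$ should be justified as a \emph{polynomial} identity over $\GF{2}$ (so that substituting the matrix $B$ is legitimate), which follows by the same induction that establishes \eqref{eqn:fibpol:reverse}. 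Beyond that, the argument is a short index manipulation and requires no new machinery.
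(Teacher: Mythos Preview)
Your proposal is correct and takes essentially the same approach as the paper: both use the boundary data $F_{2^m+1}(B)=0_m$, $F_{2^m}(B)=\Eins_m$ together with the reverse recursion \eqref{eqn:fibpol:reverse}. The paper phrases this as running the recursion forward from $(F_0,F_1)=(0_m,\Eins_m)$ and backward from $(F_{2^m+1},F_{2^m})=(0_m,\Eins_m)$ and matching ``by simply counting,'' whereas you package the same idea more explicitly via periodicity of $n\mapsto F_n(B)$ combined with the polynomial identity $F_{-n}=F_n$.
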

\begin{proof}
 With $F_0(B) = 0_m$ and $F_1(B) = \Eins_m$, the recursion relation of Equation~\eqref{eqn:fibpol:normal} defines all $F_n(B)$ with $n \in \N$. As seen above
 (cf. Equation \eqref{eqn:fibset:Cn}), to create a complete set of cyclic MUBs, it holds $F_{2^m+1}(B) = 0_m$ as well as $F_{2^m}(B) = \Eins_m$. The inverse recursion
 relation of Equation \eqref{eqn:fibpol:reverse} defines again all $F_n(B)$ with $n \in \N$. By simply counting we find that $F_{2^{m-1}}(B) = F_{2^{m-1}+1}(B)$.
\end{proof}
The generator matrix $G_j$ of the class $\cC_j$ with $j \in \MgN{N}$ is given by
\begin{align}\label{eqn:fibset:gen}
 G_j = C^j \cdot G_0 = \begin{pmatrix} F_{j+1}(B) \\ F_j (B) \end{pmatrix},
\end{align}
so to achieve Condition (III), we have to check that the matrices
\begin{align}
  (C^k \cdot G_0, C^l \cdot G_0) = \begin{pmatrix} F_{k+1} (B) & F_{l+1} (B) \\ F_{k} (B) & F_{l} (B) \end{pmatrix},
\end{align}
with $k, l \in \MgN{d}$ and $k \neq l$ are invertible. Thus, as Lemma \ref{lem:algfund:detblock}
indicates, the expression $F_{k+1} (B) \cdot F_{l} (B) + F_{l+1} (B) \cdot F_{k} (B)$ has to be invertible. But this is equal to
$F_{\vert k-l \vert}(B)$ as Lemma \ref{lem:fibpol:subtractive} shows.

Using the results of this discussion and the approach of Equation \eqref{eqn:fibset:B110}, we can reformulate Conditions (I)-(III):
\begin{enumerate}[(i)]
 \item $B$ is symmetric.
 \item $F_k(B)$ equals $\Eins_m$ for $k=d$ and $0_m$ for $k=d+1$.
 \item $F_k(B)$ is invertible for $k \in \MgE{d}$.
\end{enumerate}
By Lemma \ref{lem:fibset:symmetry} we need to check Condition (iii) only for $k \in \MgE{d/2}$, but for the subsequent discussion the chosen
form is preferable.
At this point, it is useful to call the properties of Fibonacci polynomials (cf. Section \ref{sec:fibonacci_polynomials}) back into mind in order
to reduce this list of conditions. In the following, we will discuss how the structure of Fibonacci polynomials implies that Condition (iii) 
follows from Condition (ii). Therefore, we start with a lemma which shows that the characteristic polynomial of the reduced stabilizer matrix
is irreducible, if a complete set of MUBs is created by that matrix.
\begin{lem}[Characteristic polynomial of reduced stabilizer matrix]\label{lem:fibset:charpol}\hfill\\
 The characteristic polynomial $\chi_B$ of a reduced stabilizer matrix $B \in M_m (\F_2)$ with $m \in \N^*$ is irreducible and coincides with the minimal polynomial of $B$,
 if and only if the Fibonacci index of $\chi_B$ equals either $2^m-1$ or $2^m+1$.
\end{lem}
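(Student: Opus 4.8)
The plan is to reduce everything to the arithmetic of a single root $\phi$ attached to $\chi_B$ through the stabilizer matrix. Starting from the explicit powers of $C$ in Equation~\eqref{eqn:fibset:Cn}, a direct block-determinant computation gives $\chi_C(t)=\det\!\left(t^2\Eins_m+tB+\Eins_m\right)=t^m\,\chi_B\!\left(t+t^{-1}\right)$, so the roots $\phi$ of $\chi_C$ in $\overline{\F_2}$ are exactly the solutions of $\phi^2+\beta\phi+1=0$ as $\beta$ runs over the roots of $\chi_B$. The Binet-type formula for the linear recurrence~\eqref{eqn:fibpol:normal} then reads $F_n(\beta)=\beta^{-1}\!\left(\phi^n+\phi^{-n}\right)$, so, once $\chi_B$ is irreducible (hence automatically the minimal polynomial of $B$, with $\F_2[B]\cong\F_{2^m}$), we obtain $F_n(B)=0_m$ if and only if $\phi^n=1$; thus the Fibonacci index of $\chi_B$ equals the multiplicative order of $\phi$. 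Since the minimal polynomial of $\phi$ over $\F_2[\beta]=\F_{2^m}$ is either the quadratic $t^2+\beta t+1$, forcing the norm relation $\phi^{2^m+1}=1$, or a linear polynomial, forcing $\phi\in\F_{2^m}^{\ast}$ and hence $\phi^{2^m-1}=1$, the order of $\phi$ --- and therefore the Fibonacci index --- divides $2^m+1$ or $2^m-1$; which alternative occurs is governed by whether that quadratic splits over $\F_{2^m}$, i.e.\ by $\mathrm{Tr}_{\F_{2^m}/\F_2}(\beta^{-1})$, which is precisely the dichotomy recorded in Sutner's Lemma~\ref{lem:fibpol:linear}. Together with Theorem~\ref{thm:fibpol:completeness} and Corollary~\ref{cor:fibpol:fibindex} this already yields half of the forward implication.

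The main work --- and what I expect to be the key obstacle --- is upgrading ``divides $2^m\pm1$'' to ``equals $2^m\pm1$'', i.e.\ excluding proper divisors. Here I would bring in the order of $C$ itself: writing $\nu$ for the Fibonacci index, the relation $F_\nu(B)=0_m$ together with $F_{\nu+1}=xF_\nu+F_{\nu-1}$ over $\GF{2}$ gives $C^{\nu}=\mathrm{diag}\!\left(F_{\nu-1}(B),F_{\nu-1}(B)\right)$, which is invertible and lies in $\F_2[B]^{\ast}\cong\F_{2^m}^{\ast}$; hence $C$ has order $\nu\cdot\mathrm{ord}\!\left(F_{\nu-1}(B)\right)$ with $\mathrm{ord}\!\left(F_{\nu-1}(B)\right)\mid 2^m-1$. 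Since $\nu$ itself divides one of $2^m\mp1$ and $\gcd(2^m-1,2^m+1)=1$, I expect the last step to require more than the bare symmetry of $B$ --- plausibly the full strength of Conditions~(I)--(III) (so that the generators $G_k=C^kG_0$ of Equation~\eqref{eqn:fibset:gen} are pairwise distinct, the $d+1$ classes of Equation~\eqref{eqn:fibset:classesC} form a complete set, and $C$ therefore has order exactly $d+1=2^m+1$ in view of $C^{d+1}=\Eins_{2m}$ from Condition~(II)), whence $F_{\nu-1}(B)=\Eins_m$ and $\nu=2^m+1$ (respectively $\nu=2^m-1$ in the dual case). The symmetry $F_n(B)=F_{2^m+1-n}(B)$ from Lemma~\ref{lem:fibset:symmetry} should keep this bookkeeping under control.

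For the converse, observe first that the very phrase ``the Fibonacci index of $\chi_B$ equals $2^m-1$ or $2^m+1$'' already presupposes that $\chi_B$ is irreducible, by the way the index was introduced in Definition~\ref{lem:fibpol:fibindex}, and an irreducible polynomial of degree $m$ automatically coincides with the minimal polynomial of $B$; read literally, this direction is therefore formal. If instead one reads the hypothesis as ``$2^m\pm1$ is the least $d$ with $\chi_B\mid F_d$'' without assuming irreducibility a priori, I would argue by contradiction: a proper irreducible factor $p$ of $\chi_B$ of degree $m'<m$ divides $F_{2^{m'}-1}$ or $F_{2^{m'}+1}$ by Theorem~\ref{thm:fibpol:completeness}, and, using the divisibility results of Lemmas~\ref{lem:fibpol:coprime}, \ref{lem:fibpol:divisibility} and~\ref{lem:fibpol:subtractive} --- which combine into the identity $\gcd\!\left(F_a(x),F_b(x)\right)=F_{\gcd(a,b)}(x)$ over $\GF{2}$ by a Euclidean-algorithm induction --- one finds that $\chi_B$ would then divide some $F_d$ with $d<2^m\pm1$, contradicting minimality; a repeated factor is excluded because the least $d$ with $p^{e}\mid F_d$ is even once $e\geq2$ by Lemma~\ref{lem:fibpol:evenindex}, whereas $2^m\pm1$ is odd.
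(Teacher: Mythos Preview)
Your route is far more elaborate than the paper's. The paper gives a three-sentence argument: it asserts that an irreducible factor of $\chi_B$ with smallest Fibonacci index annihilates $B$, that the corresponding Fibonacci polynomial therefore annihilates $B$, and then invokes Theorem~\ref{thm:fibpol:completeness} to say that any irreducible with Fibonacci index $2^m\pm1$ has degree at least $m$ --- which, combined with $\deg\chi_B=m$, forces $\chi_B$ to be that very irreducible and hence the minimal polynomial. There is no passage to the algebraic closure, no Binet formula, no order-of-$\phi$ analysis; the argument is pure degree-counting against Corollary~\ref{cor:fibpol:fibindex}.

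Your suspicion about the forward implication is entirely correct, and it is not a gap in \emph{your} reasoning but in the lemma as stated. Bare irreducibility of $\chi_B$ does \emph{not} force the Fibonacci index to equal $2^m\pm1$: for instance $p(x)=x^6+x^5+x^4+x+1$ is irreducible over $\F_2$ and one checks $F_{13}=p^2$, so its Fibonacci index is $13$, a proper divisor of $2^6+1=65$; a symmetric $B$ with $\chi_B=p$ exists by the result cited in the proof of Theorem~\ref{thm:fibset:existence}. The paper's three sentences do not actually establish this direction either, and in the surrounding text only the converse is ever used (to pass from Condition~(ii$'$) to irreducibility and thence to Proposition~\ref{prop:fibset:rep}). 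Your ``formal'' reading of the converse --- that the hypothesis already presupposes irreducibility via Definition~\ref{lem:fibpol:fibindex} --- is precisely what the paper needs and intends. The machinery you developed ($\chi_C(t)=t^m\chi_B(t+t^{-1})$, Binet, $\mathrm{ord}(\phi)$) is correct and foreshadows the Wiedemann material in Section~\ref{sec:fermatset}, but is not required here; your gcd-based alternative for the converse under the relaxed interpretation is a genuine strengthening over what the paper provides.
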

\begin{proof}
 Any irreducible polynomial $p$ which is a factor of the characteristic polynomial of $B$ and has in the set of all factors the smallest Fibonacci index (as defined in Section~\ref{lem:fibpol:fibindex}),
 annihilates $B$. Therefore, the Fibonacci polynomial where $p$ appears first as a factor, annihilates $B$. But all irreducible polynomials
 which have Fibonacci index $2^m \pm 1$, have minimal degree $m$ (see Theorem \ref{thm:fibpol:completeness}), which is also the dimension of $B$.
\end{proof}
In the case that the characteristic polynomial of $B$, namely $\chi_B$, has Fibonacci index $d+1$ with $d=2^m$, the Fibonacci polynomial
$F_{d+1}(B)$ equals~$0_m$. With the help of the following corollary this implies that $F_{d}(B) = \Eins_m$.
\begin{cor}[Multiplicative order of reduced stabilizer matrix]\label{cor:fibset:morder}\hfill\\
 If the characteristic polynomial of $B$ has Fibonacci index $d+1$ with $d \in \N^*$, the order\footnote{
 The order of a non-zero polynomial $f(x) \in \F_{p^m}[x]$ is defined to be the smallest natural number $e$ for which the polynomial divides $x^e-1$ \cite[Definition 3.2]{LN08}.
 If the polynomial has no trivial root, the order of the polynomial equals the order of any of its roots in the multiplicative group $\F^*_{p^m}$ \cite[Theorem 3.3]{LN08}.}
 of $\chi_B$ and therefore the multiplicative order of $B$ divides $d-1$.
\end{cor}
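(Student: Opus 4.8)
The plan is to read off the statement from Lemma~\ref{lem:fibset:charpol}, which supplies the structural input. Recall that $B \in M_m(\F_2)$ and $d = 2^m$, so the hypothesis is that $\chi_B$ has Fibonacci index $2^m+1$; by Lemma~\ref{lem:fibset:charpol} this forces $\chi_B$ to be irreducible of degree $m$ and to coincide with the minimal polynomial of $B$. In particular $\chi_B \neq x$, since $x$ divides $F_2(x)$ and hence has Fibonacci index $2 \neq 2^m+1$; therefore $\chi_B$ has a nonzero constant term, and its polynomial order is well-defined.

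Next I would exploit the standard structure of irreducible polynomials over finite fields. Because $\chi_B$ is irreducible of degree $m$ and is the minimal polynomial of $B$, the algebra $\F_2[B]$ generated by $B$ is isomorphic to $\F_2[x]/(\chi_B) \cong \F_{2^m}$, and under this isomorphism $B$ corresponds to a root $\alpha$ of $\chi_B$, which is a nonzero element of $\F_{2^m}$. Hence the multiplicative order of $B$ equals the multiplicative order of $\alpha$ in $\F_{2^m}^*$, and that order divides $\abs{\F_{2^m}^*} = 2^m - 1 = d-1$. The same integer is the order of the polynomial $\chi_B$: for an irreducible polynomial with nonzero constant term the polynomial order coincides with the multiplicative order of its roots (Theorem~3.3 of \cite{LN08}, as recalled in the footnote), and equivalently $\chi_B$ divides $x^e - 1$ exactly when $B^e = \Eins_m$.

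I do not expect a genuine obstacle; once Lemma~\ref{lem:fibset:charpol} is in hand the remainder is bookkeeping. The only points requiring a little care are verifying that Lemma~\ref{lem:fibset:charpol} really applies — that is, that the Fibonacci index of $\chi_B$ equals exactly $2^m+1$, which is precisely the hypothesis — and ruling out the degenerate polynomial $\chi_B = x$, for which the order is undefined. Everything else follows from the identification of $B$ with a field element together with the elementary bound on the order of a nonzero element of $\F_{2^m}$.
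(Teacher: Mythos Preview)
Your proposal is correct and follows essentially the same route as the paper: invoke Lemma~\ref{lem:fibset:charpol} to obtain irreducibility of $\chi_B$ with degree $m$, identify $B$ (via Cayley--Hamilton) with a root living in $\F_{2^m}$, and conclude that its multiplicative order divides $\abs{\F_{2^m}^*} = 2^m - 1 = d-1$. Your version is slightly more careful than the paper's in explicitly ruling out $\chi_B = x$ and in spelling out the identification $\F_2[B] \cong \F_{2^m}$, but the argument is the same.
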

\begin{proof}
 As shown by Lemma \ref{lem:fibset:charpol}, the characteristic polynomial of $B$ is irreducible and has degree $m$. Since it is defined over the
 field $\F_2$, its splitting field is isomorphic to $\F_{2^m}$, thus all roots of $\chi_B$ are elements of the multiplicative group of $\F_{2^m}$
 with $2^m-1$ elements. Since $B$ is a root of $\chi_B$ by the Hamilton-Cayley theorem, it has a multiplicative order that divides $d-1$.
\end{proof}
If we take Lemma \ref{lem:fibpol:evenindex} and set $r=1$, it follows with $F_1(x)=1$ that $F_{2^m}(x) = x^{2^m-1}$. Hence, with Corollary \ref{cor:fibset:morder}
we get $F_d(B)= \Eins_m$ for \mbox{$d=2^m$}. This reduces Condition (ii) to the requirement that the characteristic polynomial of $B$ has Fibonacci
index $d+1$. To combine this new statement with Condition (iii), the following field-theoretic proposition is essential.
\begin{prop}[Representation of fields]\label{prop:fibset:rep}\hfill\\
 Let $L/K$ be a field extension\footnote{An introduction and further information about field extensions can be found in Lidl and Niederreiter \cite[Section 1.4]{LN08}.}
 of order $n \in \N^*$ and let $A \in M_n(K)$ be a matrix that has an irreducible characteristic polynomial $\chi_A$,
 which is naturally an element of the polynomial ring $K[x]$. Then, the set $\Mg{f(A) \vert f \in K[x]}$ is isomorphic to the extension field $L$.
\end{prop}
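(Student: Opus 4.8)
The plan is to realize the commutative algebra $K[A] := \Mg{f(A) \vert f \in K[x]}$ as a quotient of the polynomial ring $K[x]$, show that this quotient is a field extension of $K$ of degree $n$, and finally identify it with $L$.

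First I would introduce the evaluation homomorphism
\begin{align}
 \varphi : K[x] \rightarrow M_n(K), \qquad f \mapsto f(A),
\end{align}
which is a homomorphism of $K$-algebras whose image is precisely the set $K[A]$ under consideration. Since $K[x]$ is a principal ideal domain, the kernel of $\varphi$ — the ideal of all polynomials annihilating $A$ — is generated by a single monic polynomial, namely the minimal polynomial $\mu_A$ of $A$. By the Hamilton--Cayley theorem $\chi_A$ annihilates $A$, so $\mu_A$ divides $\chi_A$; as $\chi_A$ is irreducible and $\mu_A$ is non-constant, both polynomials being monic forces $\mu_A = \chi_A$. The homomorphism theorem then yields a $K$-algebra isomorphism
\begin{align}
 K[A] \;\cong\; K[x] / (\chi_A).
\end{align}

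Next I would exploit irreducibility directly: in the principal ideal domain $K[x]$ the ideal generated by an irreducible polynomial is maximal, hence $K[x]/(\chi_A)$ is a field. It contains the image of $K$ and, as a $K$-vector space, has the residue classes of $1, x, \ldots, x^{n-1}$ as a basis, so it is a field extension of $K$ of degree $n = \deg \chi_A$. At this stage $K[A]$ is already known to be an $n$-dimensional extension field of $K$ (equivalently, $K^n$ is a cyclic $K[x]$-module isomorphic to $K[x]/(\chi_A)$).

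The remaining — and only delicate — step is to identify this extension with the prescribed one, $L$. For this I would invoke the uniqueness of extensions of a given degree that is available in the setting relevant here, namely $K$ a finite field: up to isomorphism there is exactly one extension of $K$ of degree $n$ (the field $\F_{q^n}$ when $K = \F_q$), whence $K[A] \cong K[x]/(\chi_A) \cong L$. I expect this to be the point that needs the most care, since for a general base field two extensions of the same finite degree need not be isomorphic; in all applications of this proposition one has $K = \F_2$, $L = \F_{2^m}$ and $A$ the reduced stabilizer matrix with irreducible characteristic polynomial of degree $m$, so the finite-field uniqueness applies. Everything else is routine linear and commutative algebra.
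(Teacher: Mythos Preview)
Your proof is correct and follows essentially the same approach as the paper: both use Hamilton--Cayley to identify $A$ as a root of the irreducible polynomial $\chi_A$ and then realize $K[A]$ as the simple extension $K[x]/(\chi_A)$. The paper's proof is a single sentence that leaves implicit exactly the steps you spell out (the evaluation map, $\mu_A=\chi_A$, maximality of $(\chi_A)$), and it silently relies on the finite-field uniqueness of degree-$n$ extensions that you explicitly flag as the only delicate point.
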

\begin{proof}
 Since one of the roots of the irreducible polynomial $\chi_A$ is $A$ itself (by Hamilton-Cayley), we can simply adjoin this root $A$ to the
 ground field $K$ in order to obtain the extension $L$ which is isomorphic to the polynomial ring $K[x]$ (cf. \cite[pp. 66]{LN08}).
\end{proof}
Given that Condition (ii) is fulfilled, namely the Fibonacci index of the characteristic polynomial $\chi_B$ of the reduced stabilizer
matrix $B \in M_m(\F_2)$ equals $d+1$, no Fibonacci polynomial with a positive index smaller than $d+1$ annihilates $B$. But since
all of those polynomials are elements of the polynomial ring $\F_2[B]$ which is isomorphic to a finite field (as Proposition \ref{prop:fibset:rep} states),
they are invertible. But this is Condition (iii). Finally, we can rewrite Condition (ii) (which implies Condition (iii)) as:
\begin{enumerate}[(ii')]
 \item The characteristic polynomial of $B$ has Fibonacci index\index{Fibonacci!index} $d+1$.
\end{enumerate}
Hence, if we are interested in the construction of a complete set of mutually unbiased bases with a cyclic generator for a Hilbert space
of dimension $d$--how it is discussed here--we may propose the following algorithm:
\begin{enumerate}
 \item Find an irreducible polynomial $p$ of degree $m$ that divides $F_{d+1}$.
 \item Check if $p$ divides any Fibonacci polynomial with an index that divides $d+1$, if so, go back to $1$.
 \item Find a symmetric matrix $B \in M_m(\F_2)$ that has $p$ as its characteristic polynomial.
\end{enumerate}
To ensure the reliability of this algorithm we need to prove the existence of an appropriate Fibonacci polynomial
as well as the corresponding reduced stabilizer matrix $B$.
\begin{thm}[Existence of reduced stabilizer matrices]\label{thm:fibset:existence}\hfill\\
 For any dimension $d=2^m$ with $m \in \N^*$ there exists a reduced stabilizer matrix $B \in M_m(\F_2)$ that
 is symmetric and has an irreducible characteristic polynomial with Fibonacci index $d+1$, hence fulfills
 Conditions (i) and (ii').
\end{thm}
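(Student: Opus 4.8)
The plan is to execute the three-step algorithm stated just above, which reduces the theorem to two existence claims: (a) there is an irreducible polynomial $p\in\GF{2}[x]$ of degree exactly $m$ whose Fibonacci index (Definition~\ref{lem:fibpol:fibindex}) equals $d+1=2^m+1$; and (b) such a $p$ occurs as the characteristic polynomial of some \emph{symmetric} matrix $B\in M_m(\GF{2})$. Granting (a) and (b), the matrix $B$ satisfies Condition~(i) by construction and Condition~(ii$'$) because $\chi_B=p$; by Lemma~\ref{lem:fibset:charpol} this in turn already forces $\chi_B$ irreducible of degree $m$, so nothing further is needed.

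For (a) I would argue inside $\GF{2^{2m}}$, whose multiplicative group is cyclic of order $(2^m-1)(2^m+1)$. Since $\gcd(2^m-1,2^m+1)=1$, it contains a unique cyclic subgroup $\mu$ of order $2^m+1$; fix a generator $\lambda$ of $\mu$, so $\operatorname{ord}(\lambda)=2^m+1$, and set $\beta:=\lambda+\lambda^{-1}$. From $\lambda^{2^m+1}=1$ we get $\lambda^{-1}=\lambda^{2^m}$, hence $\beta^{2^m}=\lambda^{2^m}+\lambda^{2^{2m}}=\beta$, so $\beta\in\GF{2^m}$; also $\beta\neq 0$ since $\lambda\neq\lambda^{-1}$. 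I claim $\beta$ generates $\GF{2^m}$ over $\GF{2}$: if $\beta^{2^e}=\beta$ with $e\mid m$, then $(\lambda+\lambda^{-1})^{2^e}=\lambda^{2^e}+\lambda^{-2^e}=\beta$ shows that $\{\lambda^{2^e},\lambda^{-2^e}\}$ is the root set of $t^2+\beta t+1$, i.e. equals $\{\lambda,\lambda^{-1}\}$; thus $\lambda^{2^e}=\lambda$ (forcing $(2^m+1)\mid 2^e-1$) or $\lambda^{2^e}=\lambda^{-1}$ (forcing $(2^m+1)\mid 2^e+1$), both impossible for $0<e<m$, so $e=m$. Therefore $p:=$ minimal polynomial of $\beta$ over $\GF{2}$ is irreducible of degree $m$ and $p\neq x$. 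To pin down its Fibonacci index I would use the companion-matrix/Binet description of the Fibonacci polynomials from Section~\ref{sec:fibonacci_polynomials}: evaluated at $\beta$, the sequence $F_n(\beta)$ solves the recurrence with characteristic polynomial $\chi_F(Z)=Z^2-\beta Z-1$, whose roots are exactly $\lambda$ and $\lambda^{-1}$, so (these being distinct, and using characteristic two, $\lambda-\lambda^{-1}=\lambda+\lambda^{-1}=\beta$) one has $F_n(\beta)=(\lambda^n+\lambda^{-n})/\beta$. Hence $F_n(\beta)=0\iff\lambda^{2n}=1\iff\lambda^n=1\iff(2^m+1)\mid n$, and since $p\mid F_n\iff F_n(\beta)=0$, the Fibonacci index of $p$ is precisely $2^m+1$. (As a cross-check, Theorem~\ref{thm:fibpol:completeness} together with Corollary~\ref{cor:fibpol:fibindex} says $p$ must divide $F_{2^m-1}$ or $F_{2^m+1}$, and Lemma~\ref{lem:fibpol:linear} identifies which.)

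For (b), given an irreducible $p$ of degree $m$ over $\GF{2}$, put $L:=\GF{2}[x]/(p)\cong\GF{2^m}$ and let $\theta\in L$ be the residue class of $x$. Multiplication by $\theta$ is a $\GF{2}$-linear operator on $L$ whose minimal and hence characteristic polynomial is $p$, and it is self-adjoint for the trace form $\langle u,v\rangle:=\tr_{L/\GF{2}}(uv)$, since $\langle\theta u,v\rangle=\tr_{L/\GF{2}}(\theta uv)=\langle u,\theta v\rangle$. This symmetric bilinear form is nondegenerate by separability of $L/\GF{2}$, and it is not alternating because $\langle u,u\rangle=\tr_{L/\GF{2}}(u^2)=\tr_{L/\GF{2}}(u)$ is a nonzero $\GF{2}$-linear functional. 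A nondegenerate, non-alternating symmetric bilinear form over $\GF{2}$ admits an orthonormal basis: the map $u\mapsto\langle u,u\rangle$ is $\GF{2}$-linear in characteristic two, so one may choose $v$ with $\langle v,v\rangle=1$, split $L=\langle v\rangle\perp v^{\perp}$, and observe that $\langle\cdot,\cdot\rangle|_{v^{\perp}}$ is again nondegenerate and non-alternating (it could fail to be non-alternating only if $v^{\perp}$ coincided with the kernel of $u\mapsto\langle u,u\rangle$, which would force $v$ to be unique — impossible once $\dim\ge 2$), so induction applies. In such an orthonormal basis the matrix $B$ of multiplication-by-$\theta$ is symmetric and $\chi_B=p$; this $B$ is the required reduced stabilizer matrix.

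I expect the genuine difficulty to lie in part (b): producing a symmetric matrix with a prescribed characteristic polynomial fails over fields such as $\R$ and is only conjectural over $\GF{2}$ for arbitrary polynomials, so the argument must essentially use irreducibility — through the trace form together with the characteristic-two classification of symmetric bilinear forms. Part~(a) is comparatively routine once one passes to $\GF{2^{2m}}$; the one point needing care there is verifying that the chosen generator $\lambda$ really produces a $\beta$ of full degree $m$ rather than one lying in a proper subfield of $\GF{2^m}$.
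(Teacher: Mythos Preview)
Your proof is correct and takes a genuinely different, more self-contained route than the paper. For part~(a) the paper simply cites a counting formula (Theorem~8 of \cite{Goldwasser02}) stating that there are $\phi(2^m+1)/(2m)>0$ irreducible polynomials of degree~$m$ with Fibonacci index $2^m+1$; you instead construct one explicitly via $\beta=\lambda+\lambda^{-1}$ for a generator $\lambda$ of the order-$(2^m+1)$ subgroup of $\GF{2^{2m}}^{*}$, and then read off the Fibonacci index from the Binet formula. For part~(b) the paper again cites an external result (Lemma~2 of \cite{BT98}) that every monic polynomial over a finite field is the characteristic polynomial of some symmetric matrix; you prove the case you need directly, using self-adjointness of multiplication-by-$\theta$ with respect to the trace form and the classification of non-alternating symmetric forms over $\GF{2}$. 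Your version has the advantage of being constructive and citation-free; the paper's version is shorter and covers arbitrary (not necessarily irreducible) polynomials via the cited lemma. One small correction to your closing remarks: the existence of a symmetric matrix with prescribed characteristic polynomial over $\GF{2}$ is \emph{not} conjectural --- it is exactly the cited result of \cite{BT98}; what is conjectural in the paper (Conjecture~\ref{con:fibset:exsymmcomp}) is only the special anti-diagonal-constant form of Equation~\eqref{eqn:fibset:symmComp}.
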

\begin{proof}
 We will follow the checklist given above. The number of polynomials with Fibonacci index $2^m+1$ is given by
 $\frac{\phi(d+1)}{2m}$, where $\phi$ denotes Euler's totient function (cf. Theorem 8 of \cite{Goldwasser02});
 by Theorem \ref{thm:fibpol:completeness} their degree is $m$.
 Since this expression is non-zero, there exists at least one polynomial for any dimension $d$ that has
 Fibonacci index $d+1$.\par
 For the last step, we take the well-known fact that all polynomials $p \in K[x]$ have a companion matrix $A$,
 i.\,e. a matrix which has a characteristic polynomial that equals $p$.\footnote{Despite the fact that different
 constructions for a companion matrix exist, it is possible to define such a matrix with a simple structure
 uniquely.} Finally, for every monic polynomial that is
 defined over a finite field $K=\F_q$ there exists a symmetric matrix that is similar to the companion matrix and,
 therefore, has the same characteristic polynomial (cf. Lemma 2 of \cite{BT98}).
\end{proof}
Having this theorem in mind, we are free in the way of how to construct the stabilizer matrix. We propose
three different ways:

\begin{itemize}
 \item Testing all symmetric matrices $B \in M_m(\F_2)$.
 \item Testing a reduced set of matrices by defining a more specific form of $B$.
 \item Creating $B$ directly, as a \emph{symmetric companion matrix}\index{Symmetric companion matrix}.
\end{itemize}

The first two suggestions need numerical methods, which will be discussed in the following paragraph, whereas
the last approach aims on an analytic solution that leads to a conjecture given in the subsequent paragraph.

\subsection{Numerical construction of reduced stabilizer matrix}\label{subsec:fibset:numerical}
Within this paragraph we concentrate on the first two suggestions given above to find an appropriate reduced stabilizer
matrix $B$. We will estimate the runtime of the proposed algorithm (Steps 1.--3.) and compare this result with an alternative approach.\par
This idea is not based on the construction of any polynomial, it uses the resulting Condition (ii') indirectly and starts
with testing a chosen set of possible matrices that fulfill Condition (i) by construction. The task of Condition~(ii') is to
ensure, that the characteristic polynomial of $B$ appears the first time as a factor in the Fibonacci polynomial $F_{d+1}$ with $d=2^m$, such that
$F_{d+1}(B)=0_m$. If so, then Corollary \ref{cor:fibset:morder} tells us that $F_{d}(B)=\Eins_m$, thus $C^{d+1}$ equals $\Eins_{2m}$.
Furthermore, it has to be guaranteed that all Fibonacci polynomials $F_j(B)$ with $j \in \MgE{d}$ do not equal $0_m$. But
by Lemma \ref{lem:fibpol:divisibility}, only such Fibonacci polynomials $F_j(B)$ divide $F_{d+1}(B)$, that have an index
$j$ that is a divisor of~$d+1$. Thus, if none of those Fibonacci polynomial $F_j(B)$ is zero, where $j$ is a non-trivial divisor
of $d+1$, the Fibonacci index of $B$ is $d+1$ and Condition~(ii') is fulfilled. This brings us to the following alternative algorithm:
\begin{enumerate}[1'.]
 \item Take a stabilizer matrix $C$ from a previously defined test set.
 \item Check if the power to $d+1$ of this stabilizer matrix equals $\Eins_{2m}$.
 \item Check that for any power of $C$ to a non-trivial divisor of $d+1$ the off-diagonal blocks do not equal $0_m$,
\end{enumerate}
whereas we construct $C$ in the form of Equation \eqref{eqn:fibset:B110}.\footnote{After finishing this thesis, the author
realized that only those powers of $C$ have to be tested, which are given by $d+1$ divided by one of the prime factors.
This holds due to Lemma~\ref{lem:fibpol:divisibility} and gives an exponential advantage over the discussed method. In the
calculation of the computational costs the divisor function has to be replaced by another function with
exponentially reduced values.}\par

To be able to compare the runtime of both methods, we examine this question by calculating the computational costs. For
simplicity reasons, we will take the number of \emph{array accesses}\footnote{The most time consuming operations are reading
and writing elements of matrices, as the complete matrices cannot be stored directly into the processor registers. In comparison
to these operations, arithmetical operations are done instantaneously.} as an appropriate measure.\par

For the first method (Steps 1.--3.) we have to calculate the characteristic polynomial of an $m \times m$ matrix.
Therefore, we may derive the sums of all principal
minors. One of these is the determinant of $B$, which is calculated by the Leibniz formula
\begin{align}
 \det(B) = \sum\limits_{\sigma \in S_m} \mathrm{sgn}(\sigma) \prod\limits_{i=1}^m  b_{i,\sigma(i)},
\end{align}
where $b_{i,j}$ denotes the matrix element of $B$ and $S_m$ is the symmetric group of all possible permutations of
the index set $\MgE{m}$. The calculation of the determinant in this way yields to $\vert S_m \vert \cdot m = m \cdot m!$
array accesses. For the next minor sum, we have to sum up all determinants where any pair $(i,i)$ with $i \in \MgE{m}$
of row and column is deleted. This yields $m$ principal minors with $\vert S_{m-1} \vert \cdot (m-1) = (m-1) \cdot (m-1)!$
array accesses each, thus $(m-1) \cdot m !$ array accesses. If we follow this approach and take the sum of all array
accesses, we end up with
\begin{align}\label{eqn:fibset:badlimit}
 \left( \frac{ (m-1)\cdot m}{2} \right) \cdot m!
\end{align}
array accesses for the calculation of the characteristic polynomial of an $m \times m$ matrix.\par

For the alternative method (Steps 1'. --3'.), we need to calculate the power to $2^m+1$ of a stabilizer matrix $C$ as given in
Equation \eqref{eqn:fibset:B110}. To do this efficiently, we can
multiply the matrix $C$ with itself and write the answer to a new variable, square this, and so on. We would
end with $2^m$, which we multiply by $C$ again and write this final result. This procedure needs
$3 (m+1)$ full read outs or writings of an $2m \times 2m$ matrix, thus $12 m^3 + 12 m^2$ array accesses. At the end, we have
to check if this matrix equals $\Eins_{2m}$ which yields in summary $12 m^3 + 16 m^2$ array accesses.
Since we have to check all possible factors to be not equal to a block-diagonal matrix with $m \times m$ submatrices, we have to take a similar procedure
$d(2^m+1)$ times, where $d(n)$ with $n \in \N^*$ denotes the divisor function that counts the number of divisors of
an integer $n$. Since any divisor of $2^m+1$ can be represented as an $m$-bit number (with leading zeros), we can
use the same number of array accesses as an upper bound for this calculation. We have numerical evidence, that for odd integers in the
form $n=2^m+1$ with $m\geq 2$, the divisor function is upper bounded by $\sqrt{n}$ (including $1$ and $n$), as it is true for at
least $m=200$. This limits the total number of array accesses to 
\begin{align}\label{eqn:fibset:mmlimit}
 (\sqrt{2^m+1}-1) \cdot (12 m^3 + 16 m^2).
\end{align}

It is clear that the second method scales much better since the factorial in Equation~\eqref{eqn:fibset:badlimit} grows faster than
any polynomial and obviously faster than $2^m$.
An explicit calculation shows that for $m \geq 7$ the second method is preferable. We are aware of the fact that we
left out some terms like the generation of all divisors of a number, since they are not that much relevant in
the scaling and cannot easily be taken into account. We did also not mention the fact, that we have to find a polynomial with 
an appropriate Fibonacci index in the first case, which is another reason to use the second method.\par

As a next step we should decide which set of matrices we take as our test set. If we test all possible $m \times m$ matrices which have only ones
and zeros as entries, this yields $m^2$ free parameters, which we denote by $g$. Consequently, we have $2^{m^2}$ matrices to test. 
But we are able to fulfill Condition (i) by construction, by shrinking the number of free parameters, such that the
resulting matrices are symmetric which leaves us $2^{(m^2 + m)/2}$ matrices to test. In this case we have $g=(m^2 + m)/2$.
The number of all $m \times m$ matrices $B$ that are symmetric and invertible is given by\footnote{Refer to 
\emph{Sloane's A086812} and Brent \etal{} \cite{Brent88}.}
\begin{align}
 a(m) = 2^{m (m+1)/2} \frac{\prod_{j=1..2 \cdot \lceil \frac{m}{2} \rceil } (1-(1/2)^j)}{ \prod_{j=1..\lceil \frac{m}{2} \rceil} (1-(1/4)^j)},
\end{align}
but it would be quite challenging to implement a construction that creates only these matrices with higher efficiency
than creating all symmetric matrices.\par

A method that really shrinks the number of free parameters is published in \cite{KRS10}. Therefore, the upper left half of the matrix
is fixed to one, the lower right to zero, but with the lower-right corner being a $r \times r$ matrix $A$ with free parameters ($r \in \MgN{[ m/2 ]}$). For $m=5$
the possible reduced stabilizer matrices would look like
\begin{align}\label{eqn:fibset:A}
 B= \begin{pmatrix}
     1 & 1 & 1 & 1 & 1\\
     1 & 1 & 1 & 1 & 0\\
     1 & 1 & 1 & 0 & 0\\
     1 & 1 & 0 & a_{11} & a_{12}\\
     1 & 0 & 0 & a_{12} & a_{22}
    \end{pmatrix},
\end{align}
leaving $3$ free parameters. For dimensions $m=2$ and $3$ this matrix $A = (a_{i,j})$ is a $1 \times 1$ matrix. For $m=1$
the matrix $B=(1)$ is already a solution ($A$ does not exist by construction). For most of the dimensions with $m\leq30$, the matrix $A$ needs only to
be a $2 \times 2$ matrix. Starting with dimensions $m=12,20,21,25,28$ it turns out, that this matrix needs to be at least a $3 \times 3$ matrix,
thus having six free parameters.
In higher dimensions that size has to be increased further. Comparing with the general $(m^2 + m)/2$ parameters to
test, this approach is quite a good improvement, but we have no rigorous proof that the set of matrices this scheme produces, contains always a solution.
Nevertheless, solutions for $m=2,\ldots,600$ are given in Appendix \ref{app:fibonacci_based:triangle}.\par

An approach that shrinks even more the number of possible matrices in the test set is given in the following paragraph, where we set the number of free
parameters to ${2m-1}$.

\subsection{Reduced stabilizer matrix as symmetric companion matrix}\label{subsec:fibset:analytical}\index{Symmetric companion matrix}
To calculate a certain reduced stabilizer matrix $B$, we like to investigate the existence of a symmetric companion matrix, which means, given
any polynomial $p \in \F_2[x]$ with degree $m$, we can directly construct a matrix that has $p$ as its characteristic polynomial\footnote{Ideally, this
matrix would be unique.}. Thus,
we do not have to test different matrices on their suitability. This construction aims on the first algorithm that was given above (Steps 1.--3.).\\
The idea is the following, based on observations on the different solutions for reduced stabilizer matrices:
The reduced stabilizer matrix $B$ is described by a set of ${2m-1}$ free parameters $\Mg{s_1,\ldots,s_{2m-1}}$ as
\begin{align}\label{eqn:fibset:symmComp}
 B= \begin{pmatrix}
     s_1    & s_2     & \cdots & s_{m-1} & s_m\\
     s_2    & s_3     &        & s_m     & s_{m+1}\\
     \vdots &         & \ddots &         & \vdots\\
     s_{m-1}& s_{m}   &        & s_{2m-3}& s_{2m-2}\\
     s_m    & s_{m+1} & \cdots & s_{2m-2}& s_{2m-1}
    \end{pmatrix}.
\end{align}
In other words, the matrix elements $b_{i,j}$ with $i,j \in \MgE{m}$ of $B$ are given by
\begin{align}\label{eqn:fibset:antidiag}
 b_{i,j} = s_{i+j-1}.
\end{align}
As we were not able to construct this companion matrix uniquely yet, we can only conjecture its existence and guess a possible solution.
\begin{con}[Existence of symmetric companion matrix]\label{con:fibset:exsymmcomp}\hfill\\
 For any polynomial $p \in \F_2^m$ with $m \in \N^*$ there exists a symmetric companion matrix $B \in M_m(F_2)$ in the form of Equation \eqref{eqn:fibset:symmComp} which has
 characteristic polynomial $\chi_B = p$.
\end{con}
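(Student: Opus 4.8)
The first step is to pin down what is really being asked. That a prescribed monic $p$ of degree $m$ over $\F_2$ is the characteristic polynomial of \emph{some} matrix is trivial (take the Frobenius companion matrix), and even a \emph{symmetric} realization is classical at least when $p$ is squarefree: the trace form on $\F_2[x]/(p)$ is then non-degenerate and, the ground field having characteristic two, non-alternating, so it admits a self-dual basis, and in that basis multiplication by $x$ is represented by a symmetric matrix with characteristic polynomial $p$. What the conjecture adds is the rigid requirement that the matrix moreover be \emph{Hankel}, i.e.\ constant along antidiagonals; for $p$ irreducible (the case relevant to cyclic MUBs) this is equivalent to asking that the similarity class of the companion matrix $C_p$ contain a Hankel matrix. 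The essential difficulty is that the interplay between antidiagonal constancy and the characteristic polynomial is not governed by any known normal form, so one cannot simply move $C_p$ into Hankel position the way one moves a matrix into companion or (Jacobi) tridiagonal form.

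My primary attack would be an induction on $m$ via the bordering of a Hankel matrix. Write $B_m=(s_{i+j-1})_{i,j=1}^{m}$ and let $B_{m-1}$ and $B_{m-2}$ be its leading principal submatrices of sizes $m-1$ and $m-2$ (again Hankel, in the parameters $s_1,\dots,s_{2m-3}$ and $s_1,\dots,s_{2m-5}$). Expanding $\det(x\Eins_m-B_m)$ along the last row and column, and using $2=0$ and $s^{2}=s$ in $\F_2$, one obtains
\begin{align*}
 \chi_{B_m}(x) &= (x+s_{2m-1})\,\chi_{B_{m-1}}(x) + s_{2m-2}\,\chi_{B_{m-2}}(x)\\
  &\qquad{}+ \mathbf{b}^{t}\,\operatorname{adj}\bigl(x\Eins_{m-1}-B_{m-1}\bigr)\,\mathbf{b},
\end{align*}
where $\mathbf{b}=(s_m,\dots,s_{2m-3},0)^{t}$ is determined by $B_{m-1}$ and the last term has degree at most $m-2$. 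Passing from $B_{m-1}$ to $B_m$ introduces precisely the two new bits $s_{2m-2}$ and $s_{2m-1}$; the coefficient of $x^{m-1}$ in $\chi_{B_m}$ equals the subleading coefficient of $\chi_{B_{m-1}}$ plus $s_{2m-1}$ and is therefore freely adjustable, and one would then choose $B_{m-1}$ — among the many Hankel matrices realizing a suitable degree-$(m-1)$ polynomial — together with the bit $s_{2m-2}$ so that the remaining $m-1$ low-order coefficients match those of $p$.

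The bookkeeping here is tight: $2m-1$ free bits against $m$ target coefficients, with the only slack coming from the multiplicity of realizations that the induction supplies at the previous level. I expect this to be the heart of the problem: designing an induction hypothesis strong enough to survive the bordering step without overshooting the parameter count. The naive strengthenings fail on dimension grounds — for instance, prescribing $\chi_{B_{m-1}}$ jointly with the correction polynomial $\mathbf{b}^{t}\operatorname{adj}(x\Eins_{m-1}-B_{m-1})\mathbf{b}$ would demand $2m-2$ coefficients from only $2m-3$ parameters — so one needs a subtler invariant, presumably tracking only partial data of the correction term together with a quantitative lower bound on how many Hankel matrices realize each characteristic polynomial. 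A purely enumerative argument does not seem to be available either: although there are $2^{2m-1}$ Hankel matrices and only $2^{m}$ candidate polynomials, so that each should be attained about $2^{m-1}$ times on average, over the two-element field no Lang--Weil-type estimate can convert this heuristic into a proof that every fibre is nonempty. For irreducible $p$ one might alternatively try the field-theoretic route: multiplication by a root $\alpha\in\F_{2^m}$ is self-adjoint for every trace form, and the Gram matrix of $\bigl(\tr_{\F_{2^m}/\F_2}(\alpha^{i-1}\alpha^{j-1})\bigr)_{i,j}$ in the power basis is already Hankel, so it would suffice to exhibit a trace-self-dual basis (which makes the matrix of multiplication by $\alpha$ symmetric) that is in addition ``shift-compatible'' (which makes it Hankel) — but arranging this extra condition on a self-dual basis looks, again, essentially as hard as the conjecture itself.
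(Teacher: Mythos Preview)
The statement you are attempting is labelled a \emph{Conjecture} in the paper, and the paper does not prove it; indeed the author explicitly writes that he was ``not able to construct this companion matrix uniquely yet'' and therefore ``can only conjecture its existence''. The supporting evidence offered is purely computational: explicit Hankel realizations for $m\in\{2,\dots,36\}$ are tabulated in an appendix. So there is no proof in the paper to compare your proposal against.

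Your write-up is an honest research sketch rather than a proof, and you say so yourself. The bordering identity you derive over $\F_2$,
\[
  \chi_{B_m}(x) \;=\; (x+s_{2m-1})\,\chi_{B_{m-1}}(x)\;+\;s_{2m-2}\,\chi_{B_{m-2}}(x)\;+\;\mathbf{b}^{t}\operatorname{adj}\!\bigl(x\Eins_{m-1}-B_{m-1}\bigr)\mathbf{b},
\]
is correct and cleanly isolates the two new bits $s_{2m-2},s_{2m-1}$, but the step you flag as ``the heart of the problem''---choosing $B_{m-1}$ among many Hankel representatives so that the low-order coefficients match---is exactly the gap. Your own dimension count shows why the obvious strengthenings of the induction hypothesis fail, and you correctly note that the averaging heuristic ($2^{2m-1}$ Hankel matrices over $2^m$ targets) carries no force over $\F_2$. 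The trace-form route you sketch at the end runs into the same wall: a self-dual basis gives symmetry, but forcing the additional shift-compatibility that would make the matrix Hankel is precisely the unresolved content of the conjecture. In short, your analysis of the obstruction is sound and goes somewhat beyond what the paper records, but the problem remains open both in the paper and in your proposal.
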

A table with string representation of possible solutions for $s$ is given for $m=\MgE{36}$ in Appendix \ref{app:fibonacci_based:companion}. Contrary to the
analytical expectation, the method of the last paragraph that finds a matrix $B$ in the form of Equation \eqref{eqn:fibset:A} succeeds faster than finding the symmetric
companion matrix by testing all possibilities. The reason for this contradiction is that, effectively, we find solutions for matrices $A$ that have
a dimension $r$ which is much smaller than the limit of $[m/2]$. Nevertheless, from a mathematical point of view, it is quite interesting to                                                                                                                                                
have a candidate for a symmetric companion matrix that serves for all polynomials over $\GF{2}$.



\section{Fermat sets}\label{sec:fermatset}
A special subclass of the Fibonacci-based sets which were discussed in Section \ref{sec:fibset}~is given by
sets we denote as \emph{Fermat sets}\index{Fermat sets}. These sets are only defined for dimensions $d=2^{2^k}$ with $k \in \N$. Since the number
of elements of a complete set of MUBs in such a dimension is given by $2^{2^k}+1$, they are called Fermat sets,
where the Fermat numbers are defined as
\begin{align}
 \mathcal{F}_k = 2^{2^k}+1.
\end{align}
The outstanding feature of these sets is their recursive construction that supersedes the search of an
appropriate reduced stabilizer matrix as it is common for general Fibonacci-based sets. The only caveat of these
Fermat sets is that in general their completeness (the fact that a complete set of $d+1$ MUBs is created) can
only be conjectured. Within the subsequent discussion
we will derive a connection of this completeness to an open conjecture on \emph{iterated quadratic extensions of $\F_2$}
that was found by D. Wiedemann in 1988 \cite{Wiedemann88}, is of current interest \cite{MS96,Voloch10},
and was tested for $k=0,\ldots,8$ by Wiedemann himself.
By the methods derived in Section \ref{sec:fibset} we are able to raise this limit to $k=11$, the largest
Fermat number where the complete factorization into prime numbers is currently known. It is clear, that the
problem of testing if possible factors divide a Fermat number is easier than testing if the power of the stabilizer
matrix $C$ to the same factors has vanishing off-diagonal blocks of size $m \times m$. The program code we use is
given in Appendix \ref{app:fermat_based:wiedemann}.\par
The solutions we propose to obtain complete sets of cyclic MUBs for dimensions $d=2^{2^k}$ with $k \in \N$ are based on
Equation \eqref{eqn:fibset:B110}. We thus need to find an appropriate reduced stabilizer matrix $B$. In the following
we add an index to $B$ that refers to the integer $m$ which defines the dimension as $d=2^m$.
In the case of Fermat sets the number $m$ is defined by $2^k$. If we recall some already known solutions from
\cite{KRS10}, we find
\begin{align}
 B_{2^0} = (1),\quad
 B_{2^1} = \begin{pmatrix}1&1\\1&0\end{pmatrix},\quad\mathrm{and}\quad
 B_{2^2} = \begin{pmatrix}1&1&1&0\\1&0&0&1\\1&0&0&0\\0&1&0&0\end{pmatrix}.
\end{align}
If we iterate this construction, we end up with the recursion
\begin{align}\label{eqn:ferset:recursion}
 B_{2^{k+1}} = \begin{pmatrix}B_{2^k}&\Eins_{2^k}\\\Eins_{2^k}&0_{2^k}\end{pmatrix} \in M_{2^{k+1}}(\F_2).
\end{align}
Analogously, we define $C_{2^k}$ as
\begin{align}\label{eqn:ferset:C2k}
 C_{2^k} = \begin{pmatrix}B_{2^k}&\Eins_{2^k}\\\Eins_{2^k}&0_{2^k}\end{pmatrix} \in M_{2^{k+1}}(\F_2),
\end{align}
thus $C_{2^k} \equiv B_{2^{k+1}}$.
At this point, we can already use the program code from Appendix \ref{app:fermat_based:wiedemann} to see that the
recursion produces complete sets of cyclic MUBs for $k\in \MgN{11}$. But this does not help us understanding the
problem and developing ideas that may serve to prove this recursion for all $k$. It is clear by construction that the reduced stabilizer
matrices are symmetric, thus Condition~(i) of Section \ref{sec:fibset} is fulfilled. To be able to check if the
construction is also conform with Condition (ii') we shall start with calculating the characteristic polynomial $\chi_B$
of the reduced stabilizer matrix $B$. It will turn out that the notion of \emph{reciprocal}\index{Reciprocal polynomial}
and \emph{self-reciprocal} polynomials is of interest within this context.

\begin{defi}[Reciprocal polynomial]\hfill\\
  The \emph{reciprocal} of a polynomial $f \in K[x]$ is defined by $f^{*} (x) := x^n f(x^{-1})$, where $n \in \N^*$ is
  the degree of $f$. If a polynomial coincides with its reciprocal, it is called \emph{self-reciprocal}.
\end{defi}
We can create a self-reciprocal polynomial by applying the \emph{reciprocal operator}\index{Reciprocal operator} $Q$ on an
arbitrary polynomial $f$ with degree $n$, namely
\begin{align}
 f^Q(x) := x^n f(x+x^{-1}).
\end{align}
The recursion we propose in Equation \eqref{eqn:ferset:recursion} is a matrix representation of the application of the
reciprocal operator, starting with an appropriate polynomial.
\begin{lem}[Characteristic polynomials]\label{lem:ferset:charpol}\hfill\\
 Let $K$ be a finite field of characteristic $2$ and $\chi_S$ be the characteristic polynomial of a matrix $S \in M_n(K)$ with $n \in \N^*$.
 Then $\chi_{S'} = (\chi_S)^Q$ is the characteristic polynomial of 
\begin{align}
 S' := \begin{pmatrix}S&\Eins_n\\\Eins_n&0_n\end{pmatrix} \in M_{2n}(K).
\end{align}
\end{lem}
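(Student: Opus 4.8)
The plan is to compute $\chi_{S'}(x) = \det(x\Eins_{2n} - S')$ directly from the block form of $S'$. Writing
\begin{align*}
 x\Eins_{2n} - S' = \begin{pmatrix} x\Eins_n - S & -\Eins_n \\ -\Eins_n & x\Eins_n \end{pmatrix},
\end{align*}
I would observe that the two lower blocks $-\Eins_n$ and $x\Eins_n$ commute, so that the block-determinant identity of Lemma~\ref{lem:algfund:detblock} applies and collapses the $2n\times 2n$ determinant to an $n\times n$ one:
\begin{align*}
 \chi_{S'}(x) = \det\!\big((x\Eins_n - S)(x\Eins_n) - (-\Eins_n)(-\Eins_n)\big) = \det\!\big(x^2\Eins_n - x\,S - \Eins_n\big).
\end{align*}

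The second step is to evaluate this last determinant spectrally. Passing to the algebraic closure $\overline K$ and writing $\chi_S(y) = \prod_{i=1}^n(y-\lambda_i)$ for the eigenvalues $\lambda_i$ of $S$ (with multiplicity), the matrix $x^2\Eins_n - xS - \Eins_n = g(S)$ with $g(t) = x^2 - xt - 1$ has eigenvalues $g(\lambda_i)$, whence $\chi_{S'}(x) = \prod_{i=1}^n(x^2 - \lambda_i x - 1)$. Now the hypothesis that $K$ has characteristic $2$ enters: since $-1 = 1$, each factor can be rewritten as $x^2 + \lambda_i x + 1 = x\,(x + x^{-1} + \lambda_i)$, so that
\begin{align*}
 \chi_{S'}(x) = \prod_{i=1}^n x\,(x + x^{-1} + \lambda_i) = x^n \prod_{i=1}^n\big((x+x^{-1}) - \lambda_i\big) = x^n\,\chi_S(x + x^{-1}),
\end{align*}
using characteristic $2$ once more for $+\lambda_i = -\lambda_i$. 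As $\deg\chi_S = n$, the right-hand side is by definition $(\chi_S)^Q(x)$, which is the claim.

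There is no deep obstacle here; the two places that need care are (a) the precise hypotheses of the block-determinant lemma — which is why I would explicitly note that $-\Eins_n$ and $x\Eins_n$ commute, so Lemma~\ref{lem:algfund:detblock} is applicable without any invertibility assumption — and (b) the sign bookkeeping in the factorization, which is exactly the point at which the characteristic-$2$ hypothesis is indispensable: it is what converts $x^2 - \lambda_i x - 1$ into $x^2 + \lambda_i x + 1 = x(x+x^{-1}+\lambda_i)$ and thereby produces the reciprocal operator $Q$ rather than some sign-twisted variant. If one prefers to avoid the algebraic closure, the same computation can be run over the rational function field $K(x)$, where $x\Eins_n$ is invertible, giving $\chi_{S'}(x) = \det(x\Eins_n)\cdot\det\!\big((x - x^{-1})\Eins_n - S\big) = x^n\chi_S(x - x^{-1})$ directly; one then notes that the negative powers of $x$ cancel because $\chi_S$ is monic of degree $n$, and that $x - x^{-1} = x + x^{-1}$ in characteristic $2$.
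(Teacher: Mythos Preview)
Your proposal is correct, and in fact your ``alternative'' at the end is precisely the paper's proof: the paper applies Lemma~\ref{lem:algfund:detdecomp} (Schur complement with the invertible block $x\Eins_n$) to obtain $\chi_{S'}(x)=\det(x\Eins_n)\det(x\Eins_n-S-x^{-1}\Eins_n)=x^n\chi_S(x-x^{-1})=(\chi_S)^Q(x)$ in characteristic~$2$.

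Your primary route differs only in packaging: you invoke the commuting-block determinant identity and then factor $\det(x^2\Eins_n-xS-\Eins_n)$ spectrally over $\overline K$. This is fine and arguably more self-contained, but note a small citation mismatch: Lemma~\ref{lem:algfund:detblock} as stated in the paper asserts only the equivalence of \emph{invertibility}, not the determinant identity $\det\begin{pmatrix}A&B\\C&D\end{pmatrix}=\det(AD-BC)$ that you actually use. The stronger determinant formula (valid when $C$ and $D$ commute) is in Silvester's paper cited there, so you should either quote that directly or, since $x\Eins_n$ is invertible over $K(x)$, simply use Lemma~\ref{lem:algfund:detdecomp} as the paper does---at which point the spectral detour becomes unnecessary.
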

\begin{proof}
 The characteristic polynomial $\chi_{S'}$ is given by $\chi_{S^\prime}(x) = \det(x\Eins_{2n} - S^\prime)
 = \det ${\tiny $\begin{pmatrix} x\Eins_n - S & \Eins_n \\ \Eins_n & x\Eins_n \end{pmatrix}$}. If we use Lemma \ref{lem:algfund:detdecomp},
 we get $\chi_{S^\prime}(x) = \det(x\Eins_n) \det(x\Eins_n - S - x^{-1}\Eins_n)$. But the first factor equals $x^n$ and the
 second is given by \mbox{$\chi_S(x - x^{-1})$}. Thus $\chi_{S^\prime}(x) = x^n \cdot \chi_S(x - x^{-1}) = (\chi_S)^Q(x)$.
\end{proof}
We can use Lemma \ref{lem:ferset:charpol} to show that the characteristic polynomials of the~$B_{2^k}$ are factors of the
Fibonacci polynomials $F_{2^m+1}$ with $m=2^k$. With the help of Theorem~\ref{thm:fibpol:completeness} and Lemma \ref{lem:fibpol:linear},
it turns out that these polynomials have to be irreducible and keep their linear term in order to be possible candidates for producing
complete sets of cyclic MUBs.
\begin{lem}[Minimal polynomial of $B_{2^k}$]\label{lem:ferset:minpol}\hfill\\
 The characteristic polynomial of $B_{2^k}$ as defined in Equation \eqref{eqn:ferset:recursion} is given by $f^{Q^k}(x)$ with
 $f(x)=1+x$. It is irreducible and therefore minimal.
\end{lem}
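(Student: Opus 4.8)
The plan is to prove the two assertions of the lemma — the explicit formula $\chi_{B_{2^k}} = f^{Q^k}$ with $f(x) = 1+x$, and the irreducibility — by separate arguments, the first by induction using Lemma~\ref{lem:ferset:charpol}, the second by combining the Fibonacci machinery of Section~\ref{sec:fibonacci_polynomials} with the known existence results.

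First I would establish the formula for the characteristic polynomial. The base case $k=0$ is immediate: $B_{2^0} = (1)$ has characteristic polynomial $x - 1 = x + 1 = f(x) = f^{Q^0}(x)$ over $\F_2$. For the inductive step, observe that Equation~\eqref{eqn:ferset:recursion} writes $B_{2^{k+1}}$ exactly in the block form $S' = \left(\begin{smallmatrix} S & \Eins_n \\ \Eins_n & 0_n \end{smallmatrix}\right)$ of Lemma~\ref{lem:ferset:charpol}, with $S = B_{2^k}$ and $n = 2^k$. Hence $\chi_{B_{2^{k+1}}} = (\chi_{B_{2^k}})^Q = (f^{Q^k})^Q = f^{Q^{k+1}}$, completing the induction. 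This part is routine once Lemma~\ref{lem:ferset:charpol} is in hand.

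The substantive part is irreducibility. Here I would argue as follows. By Theorem~\ref{thm:fibpol:completeness}, every irreducible factor of $F_{2^m \pm 1}(x)$ (with $m = 2^k$) has degree dividing $m$, and by Theorem~\ref{thm:fibset:existence} there exists a symmetric reduced stabilizer matrix in $M_m(\F_2)$ whose irreducible characteristic polynomial has Fibonacci index $d+1 = 2^m+1$ — so an irreducible polynomial of degree exactly $m$ dividing $F_{2^m+1}$ does exist. The task is to identify $f^{Q^k}$ as being of this type. One shows by induction that $\deg f^{Q^k} = 2^k = m$: the reciprocal operator $Q$ applied to a degree-$n$ polynomial produces $x^n f(x + x^{-1})$, which has degree $2n$, so starting from $\deg f = 1$ we get $\deg f^{Q^k} = 2^k$. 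Next, I would show $f^{Q^k}$ divides $F_{2^m+1}(x)$: since $B_{2^k}$ arises (via the $C$-matrix of Equation~\eqref{eqn:ferset:C2k}, which equals $B_{2^{k+1}}$) as a reduced stabilizer matrix built by the recursion, and since — as the text states just before the lemma — the verification via Appendix~\ref{app:fermat_based:wiedemann} confirms the construction yields complete sets for $k \le 11$, one has $F_{2^m+1}(B_{2^k}) = 0_m$, i.e. $\chi_{B_{2^k}} = f^{Q^k}$ annihilates via dividing $F_{2^m+1}$. But a polynomial of degree exactly $m = 2^k$ dividing $F_{2^m+1}$ must be a single irreducible factor: by Theorem~\ref{thm:fibpol:completeness} and Lemma~\ref{lem:fibpol:linear}, the irreducible divisors of $F_{2^m+1}$ have degree dividing $m$ and retain their linear term; a product of two or more such factors of total degree $m$ would have each factor of degree a proper divisor of $m = 2^k$, hence a power of two strictly less than $2^k$, and one checks the Fibonacci-index constraint (Lemma~\ref{lem:fibset:charpol}, Definition~\ref{lem:fibpol:fibindex}) forces such lower-degree factors to already divide $F_j$ for a proper divisor $j$ of $2^m+1$, contradicting that $f^{Q^k}$ first appears in $F_{2^m+1}$. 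Therefore $f^{Q^k}$ is irreducible, and being the characteristic polynomial of an $m\times m$ matrix that it annihilates, it coincides with the minimal polynomial.

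The main obstacle I anticipate is the last reduction — ruling out that $f^{Q^k}$ factors as a product of lower-degree irreducibles. The clean way to do this is not to appeal to the numerical verification at all, but to track the Fibonacci index directly: show by induction on $k$, using Lemma~\ref{lem:ferset:charpol} together with the behaviour of $Q$ under the substitution $x \mapsto x + x^{-1}$ and Lemma~\ref{lem:fibpol:evenindex}/Lemma~\ref{lem:fibpol:symmetricrecursion}, that the Fibonacci index of $f^{Q^k}$ is exactly $2^{2^k}+1$ — equivalently, that $f^{Q^k} \mid F_{2^m+1}$ but $f^{Q^k} \nmid F_j$ for any proper divisor $j$ of $2^m+1$. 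If one can push the induction through at that level, irreducibility drops out of Theorem~\ref{thm:fibpol:completeness} and Lemma~\ref{lem:fibset:charpol} without reference to the computer search, and the lemma holds for all $k \in \N$; this is presumably exactly the point where Wiedemann's conjecture enters, since controlling the index for all $k$ is equivalent to the open statement, so in the unconditional proof one settles for $k \le 11$.
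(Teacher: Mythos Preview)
Your derivation of $\chi_{B_{2^k}} = f^{Q^k}$ by induction via Lemma~\ref{lem:ferset:charpol} matches the paper exactly. The irreducibility argument, however, has a real gap and runs in the wrong direction.

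You try to deduce irreducibility from the Fibonacci index: establish that $f^{Q^k}$ has Fibonacci index exactly $2^{2^k}+1$ and then invoke Lemma~\ref{lem:fibset:charpol}. But pinning the index down to \emph{exactly} $2^{2^k}+1$ (rather than a proper divisor) is precisely Wiedemann's conjecture --- the paper says so explicitly in the paragraph immediately following this lemma and again in Theorem~\ref{thm:ferset:wiedemann}. Your closing paragraph effectively concedes this and retreats to $k\le 11$, but the lemma is asserted and proved unconditionally for all $k\in\N$; irreducibility of $f^{Q^k}$ is a strictly weaker statement than the conjecture, not equivalent to it. Your intermediate step --- appealing to the computer verification to get $f^{Q^k}\mid F_{2^m+1}$ and then arguing that a degree-$m$ divisor must be irreducible --- is both conditional and circular in this sense.

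The paper bypasses the Fibonacci index entirely for irreducibility. It cites a theorem of Varshamov--Garakov (generalised by Meyn): for $g\in\F_2[x]$ irreducible, $g^Q$ is irreducible if and only if the linear coefficient of $g$ is nonzero. One then computes that for $h(x)=\sum_j a_j x^j$ of degree $n$ the linear coefficient of $h^Q$ equals $a_{n-1}$; since $h^Q$ is self-reciprocal of degree $2n$, its coefficient of $x^{2n-1}$ is again $a_{n-1}$, so the relevant coefficient is preserved by $Q$. Starting from $f(x)=1+x$ (irreducible, with $a_0=1$), both irreducibility and the nonvanishing linear coefficient propagate through every application of $Q$, giving irreducibility of $f^{Q^k}$ for every $k$. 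The logical flow is thus the reverse of yours: irreducibility is proved first and independently, and only afterwards combined with Lemma~\ref{lem:fibpol:linear} to conclude that the Fibonacci index \emph{divides} $2^{2^k}+1$; whether it \emph{equals} $2^{2^k}+1$ is what remains open.
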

\begin{proof}
 The first part follows by induction from Lemma \ref{lem:ferset:charpol}. It was shown by Varshamov and Garakov \cite{VG69}
 and in a more general form by Meyn \cite{Meyn90} that for a polynomial $g \in \F_2[x]$ and $g$ being irreducible, $g^Q$ is also irreducible,
 if and only if the linear coefficient of $g$ does not vanish. This is true for $f$ by construction. But for an arbitrary
 polynomial $h(x)=\sum_{j=0}^n a_j x^j$ it holds $h^Q(x) = \sum_{j=0}^n \sum_{i=0}^j \binom{j}{i} a_j x^{n+j-2i}$. The only
 contribution to the linear coefficient from $h^Q(x)$ arises from $j=i=n-1$ which is given by $a_{n-1}$. Since the reciprocal
 operator creates a self-reciprocal function this stays true for all~$k$.
\end{proof}
Thus, regarding Condition (ii') of Section \ref{sec:fibset}, we have shown by\linebreak Lemma~\ref{lem:ferset:minpol} in combination with Lemma \ref{lem:fibpol:linear} that the
Fibonacci index of the reduced stabilizer matrices $B_{2^k}$ in the form of Equation \eqref{eqn:ferset:recursion} equals
$d+1$ or is a divisor of $d+1$ with $d=2^{2^k}$. We have tried to prove that the Fibonacci index is $d+1$, but without
success. A promising approach can be found in Appendix \ref{app:wiedemann_proofs}.\par
However, we are able to relate the question of the Fibonacci index to an open conjecture from finite field theory, namely
\emph{Wiedemann's conjecture}\index{Wiedemann's conjecture}, which was given already in 1988 \cite{Wiedemann88}. 
Wiedemann considered \emph{iterated quadratic extensions} of the finite field $\F_2$ using generators $x_j$ which are
recursively defined as
\begin{align}\label{eqn:ferset:wiedemann_rec}
 x_{j+1} + x_{j+1}^{-1} = x_j,
\end{align}
with $j \in \N$ and $x_0 + x_0^{-1} = 1$, where $x_0 \in \F_2$. Rewriting
Equation \eqref{eqn:ferset:wiedemann_rec} into
\begin{align}\label{eqn:ferset:wiedemann_rec2}
 x^2_{j+1} + x_{j+1}x_j +1 = 0
\end{align}
shows that the roots of this equation are not in the field where $x_j$ belongs to, but in its quadratic extension, due to
the quadratic nature of this equation \cite[Theorem 1]{Wiedemann88}. Accordingly, he defined the (extension) fields $E_j$ as
\begin{align}
 E_{j+1} := E_j (x_{j+1}),
\end{align}
with $E_0 := \F_2(x_0)$. The extension field $E_j$ is then isomorphic to the finite field $\F_{2^{2^{j+1}}}$. As $x_j^{-1}$
is the \emph{conjugate} of $x_j$ with respect to the ground field, $x_j^{-1} = x_j^{2^{2^j}}$ holds \cite[Lemma 2.14]{LN08}.
Thus, the multiplicative order of $x_j$ divides the $j$-th Fermat number $\mathcal{F}_j = 2^{2^j} + 1$. Calculating
the product of the roots as $\tilde{x}_n := x_0x_1\dots x_n \in E_n$, the order of $\tilde{x}_n$ is given by the product
of the orders of the $x_j$, since the Fermat numbers are mutually coprime. Wiedemann finally conjectured, that the orders
of $x_j$ are given by $\mathcal{F}_j = 2^{2^j} + 1$, thus $\vert \tilde{x}_n \vert = \F_{2^{2^{n+1}}}-2$ would hold,
meaning $\tilde{x}_n$ would be a \emph{primitive element} of $E_n$, generating its multiplicative subgroup $E^*_n$.
He successfully tested his conjecture computationally for $j=0,\ldots,8$. The matrices $C_{2^k}$ defined in Equation~\eqref{eqn:ferset:C2k}
can be seen as a realization of the $x_k$.

\begin{thm}[Wiedemann analogy]\label{thm:ferset:wiedemann}\hfill\\
  Wiedemann's conjecture is true, if and only if the characteristic polynomial of all reduced stabilizer matrices $B_{2^k}$ which
  are defined in Equation \eqref{eqn:ferset:recursion}, has Fibonacci index $2^{2^k}+1$.
\end{thm}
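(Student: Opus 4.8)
The plan is to establish, \emph{unconditionally}, the single identity that the Fibonacci index of $\chi_{B_{2^k}}$ equals $\operatorname{ord}(x_k)$ for every $k\in\N$, and then to read off the theorem from it. Indeed, $\operatorname{ord}(x_k)$ always divides the Fermat number $\mathcal{F}_k=2^{2^k}+1$ — this is the unconditional half of Wiedemann's setup, forced by $x_k^{-1}=x_k^{2^{2^k}}$, hence $x_k^{\mathcal{F}_k}=1$ — so $\chi_{B_{2^k}}$ has Fibonacci index equal to $\mathcal{F}_k$ precisely when $\operatorname{ord}(x_k)=\mathcal{F}_k$. Conjoining over all $k$ then gives exactly: all $\chi_{B_{2^k}}$ have Fibonacci index $2^{2^k}+1$ if and only if $\operatorname{ord}(x_k)=\mathcal{F}_k$ for all $k$, which is Wiedemann's conjecture.

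To prove the identity I would fix $k$ and set $m=2^k$. By Lemma~\ref{lem:ferset:minpol}, the reduced stabilizer matrix $B_{2^k}$ of Equation~\eqref{eqn:ferset:recursion} has $\chi_{B_{2^k}}=f^{Q^k}$ with $f(x)=1+x$; this polynomial is irreducible of degree $m$, self-reciprocal, and (from the proof of that lemma) has nonzero linear coefficient, so in particular $f^{Q^k}(0)\neq 0$. Pick a root $\alpha$ of $f^{Q^k}$ in a finite extension of $\F_2$ (so $\alpha\neq 0$) and a root $\beta$ of $Z^2+\alpha Z+1$, i.e.\ $\beta+\beta^{-1}=\alpha$; since $\alpha\neq 0$ we have $\beta\neq 0$ and $\beta\neq\beta^{-1}$. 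An induction on $n$ from the recursion~\eqref{eqn:fibpol:normal}, using $-1=1$ in $\F_2$, yields the characteristic‑$2$ Binet formula $F_n(\alpha)=\alpha^{-1}(\beta^n+\beta^{-n})$ for all $n\geq 0$. Hence $F_n(\alpha)=0$ iff $\beta^{2n}=1$ iff $\operatorname{ord}(\beta)\mid 2n$. Because $f^{Q^k}$ is the minimal polynomial of $\alpha$ over $\F_2$ and the $F_n$ have coefficients in $\F_2$, we have $f^{Q^k}\mid F_n$ iff $F_n(\alpha)=0$, so by Definition~\ref{lem:fibpol:fibindex} the Fibonacci index of $\chi_{B_{2^k}}$ equals $\min\{\,n>0:\operatorname{ord}(\beta)\mid 2n\,\}$.

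Then I would identify $\operatorname{ord}(\beta)$. Applying the reciprocal operator once more, $f^{Q^{k+1}}(\beta)=(f^{Q^k})^Q(\beta)=\beta^{m}f^{Q^k}(\beta+\beta^{-1})=\beta^m f^{Q^k}(\alpha)=0$, so $\beta$ is a root of $f^{Q^{k+1}}$, which by Lemma~\ref{lem:ferset:minpol} is the irreducible characteristic polynomial of $B_{2^{k+1}}=C_{2^k}$ of Equation~\eqref{eqn:ferset:C2k}. Since $C_{2^k}$ realises Wiedemann's generator $x_k$ — so $x_k$ is itself a root of $f^{Q^{k+1}}$ and $\operatorname{ord}(x_k)=\operatorname{ord}(f^{Q^{k+1}})$ — and $f^{Q^{k+1}}$ is irreducible, all of its roots are Galois conjugate and share one multiplicative order; hence $\operatorname{ord}(\beta)=\operatorname{ord}(x_k)$. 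Finally $\operatorname{ord}(x_k)$ divides the odd integer $\mathcal{F}_k$ and is thus odd, so $\operatorname{ord}(\beta)\mid 2n$ is equivalent to $\operatorname{ord}(\beta)\mid n$, whence $\min\{\,n>0:\operatorname{ord}(\beta)\mid 2n\,\}=\operatorname{ord}(\beta)=\operatorname{ord}(x_k)$. This is the claimed identity, and the theorem follows.

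The argument is essentially an assembly of ingredients already in hand — Lemma~\ref{lem:ferset:minpol}, the behaviour of the reciprocal operator $Q$, and a Binet formula over $\F_2$ — so there is no single deep analytic step. The main obstacle will be the index bookkeeping: making sure that $f^{Q^k}$, $f^{Q^{k+1}}$, $x_k$ and $\mathcal{F}_k=2^{2^k}+1$ are lined up correctly, and in particular recognising that it is precisely the oddness of $\mathcal{F}_k$ that collapses $\min\{n>0:\operatorname{ord}(\beta)\mid 2n\}$ to $\operatorname{ord}(\beta)$. The one point left implicit in the text that should be spelled out is the claim that $C_{2^k}$ ``realises'' $x_k$; this reduces to noting that $B_{2^{k+1}}$ and the multiplication‑by‑$x_k$ map on $E_k$ have the same irreducible characteristic polynomial $f^{Q^{k+1}}$, which is immediate once $x_k$ is known to be a root of $f^{Q^{k+1}}$, and then $\operatorname{ord}(x_k)=\operatorname{ord}(f^{Q^{k+1}})$ by the footnote to Corollary~\ref{cor:fibset:morder}.
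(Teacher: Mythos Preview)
Your proof is correct and in fact makes the equivalence sharper than the paper's own argument: you establish the \emph{unconditional} equality $\text{Fibonacci index of }\chi_{B_{2^k}}=\operatorname{ord}(x_k)$, whereas the paper only matches the two sides through the intermediate quantity $\operatorname{ord}(C_{2^k})$.

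The route is genuinely different from the paper's. The paper argues structurally: it checks that $C_{2^0}+C_{2^0}^{-1}=\Eins_2$ and that the characteristic polynomial of $C_{2^k}$ over $\F_2[B_{2^k}]$ is $x^2+C_{2^{k-1}}x+1$, so the $C_{2^k}$ satisfy Wiedemann's recursion~\eqref{eqn:ferset:wiedemann_rec2} with the same initial condition as the $x_k$; the link between $\operatorname{ord}(C_{2^k})$ and the Fibonacci index of $\chi_{B_{2^k}}$ is then inherited from the Section~\ref{sec:fibset} analysis (Conditions~(ii) and~(ii$'$)). You instead compute the Fibonacci index directly via the characteristic-$2$ Binet formula $F_n(\alpha)=\alpha^{-1}(\beta^n+\beta^{-n})$, reducing it to $\operatorname{ord}(\beta)$, and identify $\beta$ with $x_k$ through their common irreducible minimal polynomial $f^{Q^{k+1}}$. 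Your approach is more self-contained and makes the oddness of $\mathcal{F}_k$ visibly the hinge of the argument; the paper's approach buys a cleaner conceptual picture (the $C_{2^k}$ literally ``are'' the $x_k$) at the cost of leaning on earlier machinery. One small point you flag but do not fully carry out: that $x_k$ is a root of $f^{Q^{k+1}}$ should be verified by the same one-line induction you used for $\beta$, starting from $f^{Q^1}(x_0)=x_0^2+x_0+1=0$ and stepping via $f^{Q^{k+1}}(x_k)=x_k^{2^k}f^{Q^k}(x_{k-1})=0$.
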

\begin{proof}
  If the stabilizer matrix $C_{2^k}$ is constructed due to Equation \eqref{eqn:ferset:C2k},\linebreak
  $C_{2^0} + C_{2^0}^{-1} = \Eins_2$ holds obviously. Assuming that the order of $C_{2^k}$ is given by $2^{2^k}+1$, the
  characteristic polynomial of $B_{2^k}=C_{2^{k-1}}$ has Fibonacci index $2^{2^k}+1$ (which is in accordance to Condition (ii') of
  Section \ref{sec:fibset}). The characteristic polynomial of $C_{2^{k}}$ over the field where $B_{2^k}=C_{2^{k-1}}$ belongs to,
  is then by construction given as $x^2 + C_{2^{k-1}} x + 1$. By comparing this result with Equation \eqref{eqn:ferset:wiedemann_rec2},
  the stabilizer matrices $C_{2^{k}}$ can be identified with Wiedemann's $x_k$; in equivalence with Wiedemann's construction
  the $C_{2^k}$ are the roots of their characteristic polynomials.
\end{proof}
With Theorem \ref{thm:ferset:wiedemann} it is shown that the construction of complete sets of cyclic MUBs with a reduced stabilizer matrix
in the form of Equation \eqref{eqn:ferset:C2k} is an instance of Wiedemann's conjecture. A potential approach to proof this conjecture is
given in Appendix \ref{app:wiedemann_proofs}. We followed Wiedemann's idea and tested his conjecture for $j=0,\ldots,11$, limited by
the largest Fermat number with a known prime factorization (cf. Appendix \ref{app:fermat_based:wiedemann}).

\section{Standard form}\label{sec:standardform}
All cyclic sets of MUBs which are based on the ideas of Bandyopadhyay \etal{} \cite{Bandy02}
need to partition the set of Pauli operators into disjoint classes of commuting elements, as discussed
in Section \ref{sec:fibset} that deals with Fibonacci-based sets. The Properties~\text{(I)--(III)} should
always be fulfilled, but the form of the stabilizer matrix $C$ can be different to that given in Equation
\eqref{eqn:fibset:B110}. Additionally, we have the freedom to choose a different generator $G_0$ than
$G_0 = (\Eins_m, 0_m)^t$ in order to prevent the appearance of a set with all Pauli-$Z$ operators (which will be identified as the
standard basis in Section~\ref{sec:Uconstruction}). To be able to observe a complete set of
cyclic MUBs within a common picture--different and more general than by relations to Fibonacci
polynomials--we introduce the so-called \emph{standard form}\index{Standard form}, derived from the
Fibonacci-based sets. Unfortunately, this form does not preserve the cyclic structure of the generators of the
classes, but it can
be seen as an instrument to identify certain properties of the analyzed set of MUBs. It turns out
that this form is also used by Bandyopadhyay \etal{}, but not in this very general form
(cf. Equations \eqref{eqn:bandy:G0} and \eqref{eqn:bandy:Gj} of Section~\ref{sec:bandy}).\par
We start with the generators that follow Equation \eqref{eqn:fibset:Cn} and the fact that $G_0$ is
fixed for Fibonacci-based sets. We find
\begin{align}\label{eqn:standfor:generators}
 G_0 := \begin{pmatrix}\Eins_m\\0_m\end{pmatrix}\quad\mathrm{and}\quad G_j= \begin{pmatrix}F_{j+1}(B)\\F_j(B)\end{pmatrix},
\end{align}
for $j \in \MgE{d}$. As the class $\mathcal{C}_j'$ consists of all Pauli operators with argument
$G_j\cdot \vec c$ and $\vec c \in \F_2^m$ (cf. Equation \eqref{eqn:bandy:classes}),
we have a certain freedom in the choice of $G_j$.
\begin{cor}[Generator matrix]\label{cor:standfor:generatorfree}\hfill\\
 The generator $G_j$ with $j \in \N$ that creates a class in the form 
 $\mathcal{C}^{\prime}_j = \{ \ZX(\vec a) : \vec a = G_j \cdot \vec c: \vec c \in \F_2^m \}$
 can be multiplied from the right by any invertible matrix $P \in \GL{m}{\F_2}$ to produce the same set of Pauli operators, i.\,e.
 \begin{align}
  \mathcal{C}^{\prime}_j \equiv \{ \ZX(\vec a) : \vec a = G_j \cdot P \cdot \vec c: \vec c \in \F_2^m \}.
 \end{align}
\end{cor}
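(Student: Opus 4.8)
The statement is an immediate consequence of the fact that the class $\mathcal{C}'_j$ is determined by the generator matrix $G_j$ only through the \emph{set} of argument vectors $\{G_j\vec c : \vec c\in\F_2^m\}$ it produces --- that is, through the column span (image) of $G_j$ viewed as a linear map $\F_2^m\to\F_2^{2m}$ --- since $\ZX(\cdot)$ is a well-defined function on these vectors. So the plan is simply to check that right-multiplication by an invertible matrix leaves this image unchanged.

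First I would recall that for $P\in\GL{m}{\F_2}$ the map $\vec c\mapsto P\vec c$ is a bijection of $\F_2^m$ onto itself. Hence, writing $\vec c':=P\vec c$, the index $\vec c'$ ranges over all of $\F_2^m$ exactly once as $\vec c$ does, so
\begin{align*}
 \{\,G_j P\vec c : \vec c\in\F_2^m\,\} \;=\; \{\,G_j\vec c' : \vec c'\in\F_2^m\,\}.
\end{align*}
Applying $\ZX(\cdot)$ to both sides, and using that $\mathcal{C}'_j$ is a set so that the order of enumeration is irrelevant, yields the asserted identity $\mathcal{C}'_j\equiv\{\ZX(\vec a):\vec a=G_j\cdot P\cdot\vec c:\vec c\in\F_2^m\}$.

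There is essentially no obstacle here; the only points worth stating explicitly are that the invertibility of $P$ is exactly what makes $\vec c\mapsto P\vec c$ surjective (a singular $P$ would in general generate a proper subclass, of size $2^{\operatorname{rank}(G_jP)}\le 2^{\operatorname{rank}(G_j)}$), and that the cardinality $\abs{\mathcal{C}'_j}=2^{\operatorname{rank}(G_j)}$ is manifestly invariant under such a transformation --- consistent with the requirement that a maximal class be generated by a matrix of full column rank $m$. This degree of freedom is precisely what is used later to bring $G_j$ into standard form: whenever the lower block $G_j^x$ is invertible, one multiplies $G_j$ from the right by $(G_j^x)^{-1}\in\GL{m}{\F_2}$ to replace that block by $\Eins_m$.
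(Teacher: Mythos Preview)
Your proof is correct and follows the same approach as the paper: both argue that since $P\in\GL{m}{\F_2}$ is invertible, the map $\vec c\mapsto P\vec c$ is a bijection (permutation) of $\F_2^m$, so the set $\{G_j P\vec c:\vec c\in\F_2^m\}$ coincides with $\{G_j\vec c:\vec c\in\F_2^m\}$. Your additional remarks on rank and on the subsequent use for bringing $G_j$ into standard form are accurate but go beyond what the paper's one-line proof records.
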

\begin{proof}
 Since the class is created by all vectors $\vec c$ with $\vec c \in \F_2^m$, the invertible matrix $P$ only permutes
 this set by executing the operation $P \cdot \vec c$.
\end{proof}
Using Corollary \ref{cor:standfor:generatorfree}, we can write the generators from above (Equation~\eqref{eqn:standfor:generators})
in a different form by multiplying $G_j$ with $(F_j(B))^{-1}$. Recalling Proposition \ref{prop:fibset:rep}, $F_j(B)$ equals either $0_m$
or its inverse exists. Therefore, we will not change $G_0$ for this form, which yields
\begin{align}\label{eqn:standfor:generatorsstandfor}
 \bar G_0 := \begin{pmatrix}\Eins_m\\0_m\end{pmatrix}\quad\mathrm{and}\quad \bar G_j= \begin{pmatrix}F_{j+1}(B) (F_j(B))^{-1}\\\Eins_m\end{pmatrix},
\end{align}
the generators of a Fibonacci-based set written in \emph{standard form}. We introduce sets which are not
based on Fibonacci polynomials within Sections~\ref{sec:homogeneous} and \ref{sec:inhomogeneous}. It will turn out,
that the following definition is capable to describe them uniquely.
\begin{defi}[Standard form]\label{defi:standfor:standfor}\hfill\\
  Generators $G_j$ with $j \in \MgN{d}, d \in \N^*$, which create classes that partition the set of Pauli operators in order to produce a complete set
  of mutually unbiased bases are in standard form, if there holds
\begin{align}\label{eqn:standfor:standfor}
 \bar G_0 := \begin{pmatrix}\Eins_m\\G_0^x\end{pmatrix}\quad\mathrm{and}\quad \bar G_j= \begin{pmatrix}G_j^z\\\Eins_m\end{pmatrix},
\end{align}
with matrices $G_j^z, G_0^x \in M_m(\F_2)$.
\end{defi}
The values of $G_j^z$ and $G_0^x$ are clear in the case of Fibonacci-based sets. We refer to sets where $G_0^x=0_m$ as 
\emph{homogeneous sets}\index{Homogeneous sets} and sets where $G_0^x\neq 0_m$ as \emph{inhomogeneous sets}\index{Inhomogeneous sets}.
For both sets, we can in principle reuse \mbox{Conditions (I)--(III)} of Section \ref{sec:fibset}.
To fulfill Condition (III) for the homogeneous
sets, all sums $G_k^z + G_l^z$ with $k, l \in \MgE{d}, k \neq l$ have to be invertible due to Lemma \ref{lem:algfund:detblock}.
For the inhomogeneous sets, the expressions $\Eins_m + G_0^x G_j^z$ with $j \in \MgE{d}$ have to be invertible in addition.
In order to generate commuting operators by the $G_j$, the matrices $G_j^z$ and $G_0^x$ have to be symmetric \mbox{\cite[Lemma 4.3]{Bandy02}}.
It is reasonable to discuss the appearance of the different sets in relation to the entanglement properties of the classes of
Pauli operators as will be seen in Section \ref{sec:entangleprop}.

\section{Entanglement properties}\label{sec:entangleprop}
Within the constraints of the approach given by Bandyopadhyay \etal{} (cf. Section \ref{sec:bandy}),
different sets of cyclic MUBs exist. Some of these sets were studied in 2002 by Lawrence \etal{} \cite{Lawrence02},
a complete list of four different sets in $d=2^3$ dimensions was given by Romero \etal{} \cite{Romero05}.
Following the consideration of the latter work we will discuss the properties of the different sets
and formulate conditions on the stabilizer matrix $C$ in order to construct a complete set of cyclic MUBs
with specific properties. Therefore, we first introduce the idea which is behind the classification
of different sets.\par
Recalling Equation \eqref{eqn:bandy:classes},
a single basis within a set of MUBs (that is constructed according to the approach of Bandyopadhyay \etal{})
is given by the eigenspace of the Pauli operators which constitute a class
$\mathcal{C}_j$ with $j \in \MgN{d}$. The elements of such a class are generated by $G_j$ and are by
construction given by a tensor product of Pauli operators of the Hilbert space $\cH=\C^2$ (cf. Equations \eqref{eqn:bandy:classes} and \eqref{eqn:bandy:ZXa}).
Operators which are in the same class commute \cite{Bandy02}. With a single basis we are able to measure those properties
of the complete quantum system, which are describable by the Pauli operators of the corresponding class $\cC_j$.\par 
We can categorize a basis in the following way: If the Pauli operators of each two-dimensional subsystem,
as given by the tensor product decomposition, commute separately, this basis measures properties of a fully
separable system, the \emph{separability count}\index{Separability count} is $s(\cC_j)=m$.
If the Pauli operators of two subsystems within the class do not commute separately, the basis measures properties of
a system that is decomposable into $m-2$ single-qubit subsystems and a two-qubit subsystem. For a system of
$m$ subsystems, the separability count is then $s(\cC_j)=m-1$. The continuation of this classification leads to a couple
of different decomposition structures, which we formalize in the following way:
If a complete set of MUBs is used to determine the quantum state of an $m$-qubit system, we can describe the
decomposition structure of a certain basis by arranging the set of $m$ qubits into different subsets,
where the properties are measured on the completely entangled states of the subsets. For all different possible set structures
we define the separability count, by ordering all possible structures and take the position in this list as the separability count.
This ordering works as follows:
\begin{enumerate}
 \item Order the different structures by their number of subsets, starting with the largest.
 \item Order the different structures which have the same number of subsets by the size of their largest set, starting with the smallest; if they are equal, continue with the second largest set, and so on.
\end{enumerate}
Finally, we describe the decomposition structure of the whole basis set by counting the different decomposition
structures of the classes with a single column vector $\vec n$. The first entry of this vector, namely $n_1$,
describes the number of bases which are fully separable, thus, which measure properties of $m$ completely non-entangled
subsystems. Then follows~$n_2$ which counts the number of bases that describe systems
that are separable into $m-1$ parts. To get $m-2$ parts we have two possibilities. Following the algorithm for the ordering, $n_3$ gives the number
of bases which describe systems that can be decomposed into $m-2$ subsystems, with $m-4$ single-qubit systems and two two-qubit
systems; $n_4$ counts bases that describe $m-3$ single-qubit systems and one triple-qubit system. This goes logically forth until
we end up with a fully-entangled system. This ordering is slightly different from that in \cite{Romero05}, in order to
classify systems with few highly-entangled subsystems as less separable than systems with many sparsely-entangled
subsystems.\par
Since there are only three different Pauli operators for a single-qubit system,
there exist at most three different bases in a complete set of MUBs, that measure properties of a fully separable system.
So the maximum value of $n_1$ is three. The Fibonacci-based sets of Section
\ref{sec:fibset} reach this limit: the generator of the class $\cC_0$ constructs all Pauli-$Z$ operators
that can be totally decomposed into $m$ subsystems. The same holds true for the class $\cC_d$ which is generated by
\begin{align}
 G_d = C^d \cdot G_0 = \begin{pmatrix}0_m&\Eins_m\\\Eins_m& B \end{pmatrix} \cdot \begin{pmatrix}\Eins_m\\0_m\end{pmatrix} = \begin{pmatrix}0_m\\ \Eins_m\end{pmatrix},
\end{align}
and contains all Pauli-$X$ operators. Finally, the class $\cC_{d/2}$ is generated by $G_{d/2} = (F_{d/2}(B), F_{d/2+1}(B))^t$
which equals $(F_{d/2}(B), F_{d/2}(B))^t$ (see Lemma \ref{lem:fibset:symmetry}). Writing this generator in standard form
as allowed by Corollary \ref{cor:standfor:generatorfree}, we find
\begin{align}\label{eqn:entprop:Y}
 \bar G_{d/2} = \begin{pmatrix}\Eins_m\\\Eins_m\end{pmatrix},
\end{align}
that produces all Pauli-$Y$ operators.\par
As an example, we generate a complete Fibonacci-based set of cyclic MUBs for a three-qubit system with the reduced
stabilizer matrix
\begin{align}
 B_3 = \begin{pmatrix}
 1 & 1 & 1 & 1 & 0 & 0\\
 1 & 1 & 0 & 0 & 1 & 0\\
 1 & 0 & 0 & 0 & 0 & 1\\
 1& 0& 0& 0& 0& 0\\
 0& 1& 0& 0& 0& 0\\
 0& 0& 1& 0& 0& 0
 \end{pmatrix},
\end{align}
that generates the classes as follows:
\begin{align}
 \cC_0 =& \Mg{\Eins  \Eins  Z, \Eins  Z  \Eins, \Eins  Z  Z, Z  \Eins  \Eins, Z  \Eins  Z, Z  Z  \Eins, Z  Z  Z},& s(\cC_0)= 3,\nonumber\\
 \cC_1 =& \Mg{\Eins  Y  X, X  X  Z, X  Z  Y, Y  X  Y, Y  Z  Z, Z  \Eins  X, Z  Y  \Eins},& s(\cC_1)= 1,\nonumber\\
 \cC_2 =& \Mg{\Eins  Z  X, X  X  Y, X  Y  Z, Y  \Eins  X, Y  Z  \Eins, Z  X  Z, Z  Y  Y},& s(\cC_2)= 1,\nonumber\\
 \cC_3 =& \Mg{\Eins  Z  Y, X  \Eins  Y, X  Z  \Eins, Y  X  Z, Y  Y  X, Z  X  X, Z  Y  Z},& s(\cC_3)= 1,\nonumber\\
 \cC_4 =& \Mg{\Eins  \Eins  Y, \Eins  Y  \Eins, \Eins  Y  Y, Y  \Eins  \Eins, Y  \Eins  Y, Y  Y  \Eins, Y  Y  Y},& s(\cC_4)= 3,\\
 \cC_5 =& \Mg{\Eins  X  Y, X  Y  X, X  Z  Z, Y  Y  Z, Y  Z  X, Z  \Eins  Y, Z  X  \Eins},& s(\cC_5)= 1,\nonumber\\
 \cC_6 =& \Mg{\Eins  X  Z, X  Y  Y, X  Z  X, Y  \Eins  Z, Y  X  \Eins, Z  Y  X, Z  Z  Y},& s(\cC_6)= 1,\nonumber\\
 \cC_7 =& \Mg{\Eins  Y  Z, X  \Eins  Z, X  Y  \Eins, Y  X  X, Y  Z  Y, Z  X  Y, Z  Z  X},& s(\cC_7)= 1,\nonumber\\
 \cC_8 =& \Mg{\Eins  \Eins  X, \Eins  X  \Eins, \Eins  X  X, X  \Eins  \Eins, X  \Eins  X, X  X  \Eins, X  X  X},& s(\cC_8)= 3,\nonumber
\end{align}
where $\Eins$ abbreviates $\Eins_2$. The numbers
of decomposable systems for the classes are given by the separability counts. So the vector $\vec n$ in the case
of this set of MUBs is given by $\vec n_{1} = (3,0,6)^t$.\footnote{This set is commonly known as the \emph{standard set} \cite{Garcia10}.} Other possible structures are $\vec n_{2} = (2,3,4)^t$, 
$\vec n_{3} = (1,6,2)^t$ and $\vec n_{4} = (0,9,0)^t$ \cite{Romero05}.\par
To generate these structures, which do not have three totally-decomposable classes, we cannot use the already discussed
Fibonacci-based sets of Section \ref{sec:fibset}. Therefore, we have to reduce the number of constraints, which are required
to produce the Fibonacci-based sets, in order to achieve a larger set of possible structures. For Fibonacci-based sets, we find a field structure:
\begin{lem}[Field structure of Fibonacci-based sets]\label{lem:entprop:fibfield}\hfill\\
 The submatrices $G_j^z$ of the generators $\bar G_j = (G_j^z, \Eins_m)^t$ in standard form with $j \in \MgE{d}$ that are used to generate a complete set of cyclic
 Fibonacci-based MUBs, form a representation of the finite field $\F_2^m$.
\end{lem}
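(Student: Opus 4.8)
The plan is to exhibit an explicit isomorphism between the set $\{G_j^z : j \in \MgE{d}\} \cup \{0_m\}$ and the finite field $\F_{2^m}$, using the representation of fields furnished by Proposition~\ref{prop:fibset:rep}. Recall from Equation~\eqref{eqn:standfor:generatorsstandfor} that the standard-form generators of a Fibonacci-based set have $G_j^z = F_{j+1}(B)\,(F_j(B))^{-1}$ for $j \in \MgE{d}$, and by the discussion after Lemma~\ref{lem:fibset:charpol} the characteristic polynomial $\chi_B$ is irreducible of degree $m$ with Fibonacci index $d+1$, so that $F_j(B)$ is invertible for every $j \in \MgE{d}$ and $F_{d+1}(B)=0_m$. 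By Proposition~\ref{prop:fibset:rep} applied to $A = B$ and the field extension $\F_{2^m}/\F_2$, the ring $\F_2[B] = \{f(B) : f \in \F_2[x]\}$ is isomorphic to $\F_{2^m}$; in particular it is a field, and every $G_j^z$ lies in it. So the claim reduces to showing that the $m-1$ matrices $G_j^z$ together with $0_m$ are exactly the $2^m$ elements of $\F_2[B]$, i.e.\ that $j \mapsto F_{j+1}(B)(F_j(B))^{-1}$ together with the assignment of $0_m$ enumerates all of $\F_2[B]$ without repetition.

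First I would establish injectivity: if $F_{k+1}(B)(F_k(B))^{-1} = F_{l+1}(B)(F_l(B))^{-1}$ for $k,l \in \MgE{d}$, then $F_{k+1}(B)F_l(B) = F_{l+1}(B)F_k(B)$, hence $F_{k+1}(B)F_l(B) + F_{l+1}(B)F_k(B) = 0_m$ since the characteristic is two, and by Lemma~\ref{lem:fibpol:subtractive} this equals $F_{|k-l|}(B) = 0_m$; because no Fibonacci polynomial of positive index below $d+1$ annihilates $B$ (Condition~(ii') and Proposition~\ref{prop:fibset:rep}), we get $k=l$. This gives $m-1$ distinct nonzero elements $G_j^z$; adding $0_m$ yields $m$ — wait, this is only $d = 2^m$ elements counting $j$ from $1$ to $d$, i.e.\ $d-1$ nonzero values plus $0_m$ gives $2^m$ elements total, which is exactly $|\F_{2^m}|$. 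Since $\F_2[B] \cong \F_{2^m}$ has precisely $2^m$ elements and we have exhibited $2^m$ distinct elements of it, the set $\{G_j^z : j \in \MgE{d}\} \cup \{0_m\}$ equals $\F_2[B]$, which is a field. This proves that the $G_j^z$ "form a representation of $\F_{2^m}$" in the stated sense.

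The one point requiring care — and what I expect to be the main obstacle — is nailing down which element plays the role of $0_m$ and verifying that the counting is exact: one must check that $G_j^z$ ranges over all \emph{nonzero} field elements as $j$ runs over $\MgE{d}$, which hinges on the fact that $F_{j+1}(B) = 0_m$ precisely when $j = d$ (so $G_d^z = 0_m$ after rewriting, consistent with the class $\cC_d$ of all Pauli-$X$ operators, cf.\ Equation~\eqref{eqn:entprop:Y} and the surrounding computation) and never for $j \in \MgE{d-1}$. Once the indexing boundary cases at $j=1$, $j=d/2$, and $j=d$ are checked against Lemma~\ref{lem:fibset:symmetry} and the known generators of $\cC_0$, $\cC_{d/2}$, $\cC_d$, the pigeonhole argument closes the proof. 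A secondary subtlety is confirming that the field operations on $\F_2[B]$ are the "natural" ones (matrix addition and multiplication), which is immediate from Proposition~\ref{prop:fibset:rep} since the isomorphism there is a ring isomorphism; no separate verification of the field axioms is needed.
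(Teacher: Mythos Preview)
Your proof is correct and follows essentially the same route as the paper's brief argument: invoke Proposition~\ref{prop:fibset:rep} to get $\F_2[B]\cong\F_{2^m}$, note each $G_j^z=F_{j+1}(B)F_j(B)^{-1}$ lies in this field, and count to conclude the set exhausts it; your explicit injectivity step via Lemma~\ref{lem:fibpol:subtractive} fills in what the paper's three-line proof merely asserts (it is the same identity used to verify Condition~(III)). The only cleanup needed is the bookkeeping you yourself flagged: since $F_{d+1}(B)=0_m$, the element $G_d^z=0_m$ is already in the list, so the $d=2^m$ values $G_1^z,\dots,G_d^z$ are precisely the $2^m$ elements of $\F_2[B]$---no separate adjoining of $0_m$ is required, and no further boundary-case analysis is needed.
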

\begin{proof}
 Considering the standard form (cf. Section \ref{sec:standardform}) of the Fibonacci-based sets, we find the $2^m$ generators 
 $G_j = (F_{j+1}(B) F_{j}(B)^{-1}, \Eins_m)^t$. As~$B$ has an irreducible characteristic polynomial of degree $m$
 that generates a polynomial ring and $2^m$ different elements $F_{j+1}(B) F_{j}(B)^{-1}$ exist, the set of those elements
 is a representation of a polynomial ring, in particular of the finite field $\F_2^m$.
\end{proof}
To create cyclic sets of MUBs it is more comfortable to use homogeneous sets, as discussed in Section \ref{sec:standardform},
since their first basis is used to create the standard basis (cf. Section \ref{sec:Uconstruction}). In this case the generator
of the first group is given by $G_0 = (\Eins_m, 0_m)^t$. To diminish the first value of the vector $\vec n$ to two, by keeping
the homogeneity property, it is obvious that we loose the field property of Lemma \ref{lem:entprop:fibfield}. What we
get is an additive group of the values $G_j^z$ without the unity element $\Eins_m$ of the multiplication, thus, a class
of all Pauli-$Y$ operators as given by Equation \eqref{eqn:entprop:Y} is not included.
In this case, Condition~(III) of Section \ref{sec:fibset} is fulfilled if the expression $G_l^z + G_k^z$ is invertible
for \mbox{$k, l \in \MgE{d}$} with $k \neq l$, but this element is again an element of the group. We can thus create this group
of matrices by a generating set, say w.\,l\,o.\,g. $H_k \in \GL{m}{\F_2} $ with $k \in \MgE{m}$ and
\begin{align}\label{eqn:standfor:group}
 G_j^z = \sum_{k=1}^m h_k H_k,\quad \mathrm{for}\quad (h_1,\ldots,h_m) \in \F_2^m,
\end{align}
where all $G_j^z$ with $j\neq 0$ have to be invertible. Equation \eqref{eqn:standfor:group} is equivalent to the 
construction
of Bandyopadhyay \etal{} \cite[Section 4.3]{Bandy02}. Details will be discussed in Section \ref{sec:homogeneous}.\par
To obtain $n_1=1$, we need to exclude also the element $0_m$ which forms the class of Pauli-$X$ operators from the set of matrices
$G_j^z$ with $j \in \MgE{d}$. The highest set structure which fits is then a semigroup, which does not allow a generation as
in Equation \eqref{eqn:standfor:group}.\par
Finally, we have to give up the homogeneity property to set $n_1=0$. To be able to prevent all fully-decomposable classes, we have
to change the generator~$G_0$. In Section~\ref{sec:standardform} we discussed why $\bar G_0=(\Eins_m, G_0^x)^t$ is a good choice.
It is clear, that Condition~(III) of Section \ref{sec:fibset} has to be adapted minimally. Details of the inhomogeneous sets
will be discussed in Section \ref{sec:inhomogeneous}.

\section{Homogeneous sets}\label{sec:homogeneous}
The second class of cyclic MUBs we create in this work includes the class of Fibonacci-based MUBs
formally, but does not share completely their nice way of construction. This generalization is based on the
Conditions (I)--(III) of Section \ref{sec:fibset}. As seen in Section \ref{sec:entangleprop}, the entanglement properties of the bases within
a certain set of MUBs are different for this class of homogeneous sets. Experimentally, complete sets with
a reduced number of bases that measure properties of a fully entangled system, are to be preferred.\par
As already stated in Section \ref{sec:entangleprop}, the homogeneous sets can be build either from
generators with a field structure, as done for the Fibonacci-based MUBs, with an additive group structure
or an additive semigroup structure. The number of bases with a fully-decomposable structure is three, two
and one, respectively, in these cases. As the construction of Fibonacci-based MUBs is already solved,
we give a generalized solution in Section \ref{sec:homogeneous:group}, which generates also
group-based sets. In Section \ref{sec:homogeneous:semigroup}, a first notion on the realization
of semigroup-based sets is presented.

\subsection{Group-based sets}\label{sec:homogeneous:group}
Considering the properties of the stabilizer matrix to generate a complete set of cyclic MUBs,
we use Conditions (1)--(3) of Section \ref{sec:bandy}. We propose a generalized form
of the stabilizer matrix as in Section \ref{sec:fibset} in order to implement an additive group structure of the matrices $G_j^z$ in the
standard form, that has solutions which avoid the field structure (cf. Equation \eqref{eqn:standfor:standfor}). The
stabilizer matrix we propose is given by
\begin{align}\label{eqn:homosets:group:C}
 C = \begin{pmatrix}B & R\\ R^{-1} & 0_m\end{pmatrix},
\end{align}
where the reduced stabilizer matrix $B \in M_m(\F_2)$ appears, but also an invertible matrix
$R \in \mathrm{GL}_m(\F_2)$. The powers of the stabilizer matrix,
as given by Equation \eqref{eqn:homosets:group:C}, are
\begin{align}
 C^n =
\begin{pmatrix}
 F_{n+1}(B) & F_n(B) R\\
 R^{-1} F_n(B) & R^{-1} F_{n-1}(B) R
\end{pmatrix},
\end{align}
with $F_n(x)$ being the Fibonacci polynomial with index $n \in \N$, as defined in Equation \eqref{eqn:fibpol:normal}.
In standard form\index{Standard form} the generators $G_j$ of the classes $\cC_j$ (cf. Equation \eqref{eqn:bandy:classes}) look like
\begin{align}\label{eqn:homosets:group:standfor}
 \bar G_0 = \begin{pmatrix}\Eins_m\\0_m\end{pmatrix}\quad\mathrm{and}\quad \bar G_j= \begin{pmatrix} F_{j+1}(B) F_j(B)^{-1} R\\\Eins_m\end{pmatrix},
\end{align}
thus, the field elements $G_j^z$ of Equation \eqref{eqn:standfor:generatorsstandfor} are multiplied by the
invertible matrix $R$. The new elements $G_j^z$ of Equation \eqref{eqn:homosets:group:standfor} form an additive group which may be written by
a generating set as stated by Equation \eqref{eqn:standfor:group}.
We can define conditions which are similar to Conditions (i) and (ii') of Section~\ref{sec:fibset}.
Since those conditions emerge from Conditions (1)--(3) of Section \ref{sec:bandy}, we can start our discussion there. As seen by
Equation \eqref{eqn:homosets:group:standfor}, the generators have the required form, thus, Condition (1) is fulfilled by construction.
For Condition (2), all elements $G_j^z$ with $j \in \MgE{d}$ and $d=2^m$ have to be symmetric. Those elements are given by the polynomials
$p_n(B)$, defined as $p_n(B) = F_{n+1}(B) F_n(B)^{-1}$ with $n \in \MgE{d}$, multiplied by the matrix $R$. By the following lemma the
number of matrices to be tested on their symmetry can be reduced dramatically to two.
\begin{lem}[Symmetry of Fibonacci polynomials multiplied with invertible matrix]\label{lem:homosets:group:symmetry}\hfill\\
 For invertible matrices $A, B \in \GL{m}{K}$, all polynomials $p(A) \in K[x]$ multiplied with $B$ from the right are symmetric,
 if and only if $B$ and $AB$ are symmetric.
\end{lem}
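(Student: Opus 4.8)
The plan is to prove both implications directly; the statement turns out to be an easy consequence of a single conjugation identity, so no serious machinery is needed.

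For the ``only if'' direction I would simply specialize the polynomial $p$. Taking $p(x)=1$ gives $p(A)=\Eins_m$, so the hypothesis that $p(A)B$ is symmetric for all $p$ forces $B=p(A)B$ to be symmetric; taking $p(x)=x$ gives $p(A)=A$, so $AB=p(A)B$ is symmetric as well. That already exhausts what this direction requires.

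For the ``if'' direction I would first reduce to powers of $A$: every matrix of the form $p(A)$ is a $K$-linear combination of the matrices $A^k$ with $k\in\N$, and since $(X+Y)^t=X^t+Y^t$ and $(\lambda X)^t=\lambda X^t$ for $\lambda\in K$, the symmetric matrices form a $K$-subspace of $M_m(K)$; hence it suffices to show that $A^kB$ is symmetric for every $k\in\N$. The key observation is that symmetry of $B$ and of $AB$ yields $AB=(AB)^t=B^tA^t=BA^t$, and since $B$ is invertible this rearranges to $A^t=B^{-1}AB$. Because $X\mapsto B^{-1}XB$ is multiplicative, iterating gives $(A^k)^t=(A^t)^k=B^{-1}A^kB$ for every $k$, that is, $B(A^k)^t=A^kB$. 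Combining this with $B^t=B$ we obtain $(A^kB)^t=B^t(A^k)^t=B(A^k)^t=A^kB$, so $A^kB$ is symmetric, and the claim follows by the linearity reduction above.

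I do not expect a genuine obstacle here. The only two points that need a little care are handling ``all polynomials'' — which is exactly the reduction to powers of $A$ via linearity — and making sure invertibility of $B$ is actually used, which it is, precisely in passing from $AB=BA^t$ to $A^t=B^{-1}AB$. One could instead phrase an explicit induction on $k$ (base cases $k=0,1$ from the two hypotheses, inductive step $A^{k+1}B=A(A^kB)=AB(A^t)^k$ followed by transposing), but the conjugation identity makes that bookkeeping unnecessary; note also that the characteristic of $K$ plays no role in the argument.
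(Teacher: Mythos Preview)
Your proof is correct and follows essentially the same route as the paper: specialize $p$ to $1$ and $x$ for the ``only if'' direction, and for the ``if'' direction use $BA^t=AB$ to show each $A^kB$ is symmetric and then extend by linearity. The only cosmetic difference is that the paper iterates the identity $B^tA^t=AB^t$ step by step (which in fact avoids invoking $B^{-1}$), whereas you package the same computation as the conjugation formula $A^t=B^{-1}AB$.
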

\begin{proof}
 Let $B$ and $AB$ be symmetric matrices. Then for any matrix $A^k B$ with $k \in \N$ there holds
 \begin{align}
   (A^k B)^t = B^t A^t (A^{k-1})^t = A B^t A^t (A^{k-2})^t = \ldots = A^k B.
 \end{align}
 Any polynomial in $A$ multiplied with $B$ is then a sum of symmetric matrices. But sums of symmetric matrices are again symmetric.
 The converse is obvious, if we consider the cases $k=0$ and $k=1$, thus $\Eins_m B$ and $AB$.
\end{proof}
Using Lemma \ref{lem:homosets:group:symmetry}, Condition (2) of Section \ref{sec:bandy} is fulfilled, if and only if $R$ and $BR$ are symmetric.
Condition (3) is also correct, since the difference of two group elements leads again to a group element. If all three conditions
are fulfilled, the stabilizer matrix $C$ has to be symplectic, but as a precaution we can check this easily by using Corollary
\ref{cor:app:clifford:symplecticmatrixprop} of Appendix \ref{app:clifford}. The product $R (R^{-1})^t$ equals $\Eins_m$ as $R$ is
a symmetric matrix. $B^t R^{-1}$ is symmetric, since its transpose $(R^{-1})^t B$ is the product of a polynomial in $R$
(cf. Proposition \ref{prop:fibset:rep}) with the symmetric matrix $B$, therefore, we can apply Lemma \ref{lem:homosets:group:symmetry}.
To get a complete set of MUBs, the similarity of Equation~\eqref{eqn:homosets:group:standfor} with Equation \eqref{eqn:standfor:generatorsstandfor} indicates,
that the characteristic polynomial of $B$ has to have a Fibonacci index of $d+1$. Finally, we find the following conditions:

\begin{enumerate}[(i)]
 \item $R$ and $B  R$ are symmetric.
 \item The characteristic polynomial of $B$ has Fibonacci index $d+1$.
\end{enumerate}
The polynomials $p_n(B)$ have field structure, but multiplied by the matrix $R$ this structure disappears. Only an additive
group structure is left, as wanted. For $R = \Eins_m$ or more generally for $R = p_n(B) \neq 0_m$ with $n \in \MgE{d}$, we find
the Fibonacci-based sets. Thus, for appropriate values of $R \neq p_n(B)$ we find complete sets of cyclic MUBs that have exactly two bases with
a totally-decomposable structure, namely $\cC_0$ and $\cC_d$.\par
According to Equation \eqref{eqn:homosets:group:C}, we find $126$ different pairs $(B,R)$ for a system of three qubits that have
a decomposition structure of the bases as given in~\cite{Romero05}, namely $\vec n = (2,3,4)^t$. One of the solutions is
given by
\begin{align}\label{eqn:homoset:group:C234}
 B_{(2,3,4)} = 
\begin{pmatrix}
 0 & 1 & 1\\
 0 & 0 & 1\\
 1 & 0 & 0
\end{pmatrix}\quad\mathrm{and}\quad
 R_{(2,3,4)} = 
\begin{pmatrix}
 0 & 0 & 1\\
 0 & 1 & 0\\
 1 & 0 & 0
\end{pmatrix}.
\end{align}
It is expected that most of the $126$ solutions can be transformed into each other with transformations that originate from simple
symmetry considerations. One of those transformations corresponds only to sets which are not related to the Fibonacci-based sets and was
indirectly discussed above. If we replace the matrix $R$ in the construction of Equation \eqref{eqn:homosets:group:C} by the product
of a polynomial $p_l$ in $B$, with the old matrix $R$, we get
\begin{align}\label{eqn:homosets:group:Cfactor}
 C = \begin{pmatrix}B & p_l(B) R\\ R^{-1} p_l(B)^{-1}  & 0_m\end{pmatrix},
\end{align}
with generators in standard form as
\begin{align}\label{eqn:homosets:group:standforfactor}
 \bar G_0 = \begin{pmatrix}\Eins_m\\0_m\end{pmatrix}\quad\mathrm{and}\quad \bar G_j= \begin{pmatrix} F_{j+1}(B) F_j(B)^{-1} p_l(B) R\\\Eins_m\end{pmatrix}.
\end{align}
Since all polynomials in $B$ with maximal degree $m-1$ appear in the groups~$\bar G_j$, the polynomial $p_l(B)$ with $l \in \MgE{d}$ permutes
the original generators~$\bar G_j$. Since there are only $2^m-1$ non-zero polynomials with degree smaller than~$m$, we found a symmetry to reduce
the solutions by a factor of $2^m-1$. In the case discussed above, we can divide $126$ by $2^3-1$ and get $18$. We would get cyclic MUBs with the
same entanglement properties by relabeling the different qubits, which lead to $m!$ permutation operators, in the case of three qubits to six
possible permutations. But it is not clear if this symmetry is uncorrelated with the symmetry that appears from multiplying the matrix $R$ with
a polynomial in $B$. A complete explanation of the appearance of $126$ solutions may be an interesting issue but is above the scope of this work.\par
We list seven systems with a different decomposition structure for four-qubit systems in Appendix \ref{app:homosets:group}. Some of the sets
do not appear in the list given in \cite{Romero05}. This is in accordance to a statement in \cite[Section IV D]{Lawrence11}, which claims
an incompleteness of the list of $16$ MUBs in the former article; the complete classification lists 34 sets \cite{Garcia10}.

\subsection{Semigroup-based sets}\label{sec:homogeneous:semigroup}
In order to obtain complete sets of cyclic MUBs which contain only one class of operators that is fully separable into qubit systems, we need to find a
form of the stabilizer matrix $C$ which is even more general than the form of Section~\ref{sec:homogeneous:group} and is given by Equation
\eqref{eqn:homosets:group:C}. A hypothetical idea would be the following: To construct the Fibonacci sets of Section \ref{sec:fibset}, the
stabilizer matrix $C \in M_{2m}(\F_2)$ was completely determined by the reduced stabilizer matrix $B \in M_m(\F_2)$, for the group-based sets of
Section \ref{sec:homogeneous:group}, a second matrix $R \in \GL{m}{\F_2}$ became relevant. To construct semigroup-based sets we can imagine that
a third matrix will be relevant. Nevertheless, we have not been able to find an appropriate generalization of Equation \eqref{eqn:homosets:group:C}
so far to reach this goal. Conversely, from a top-down point of view, we are able to construct at least some solutions within an obtainable computation
time: If we assume that the upper left submatrix of $C$ remains invertible, we find by the properties which are given by Corollary \ref{cor:app:clifford:symplecticmatrixprop},
that for any symplectic matrix $C = \begin{pmatrix}s & t\\ u  & v\end{pmatrix}$, with $s,t,u,v \in M_m(K)$ and $s$ invertible, there holds
\begin{align}
 v = (s^t)^{-1} (1+u^t t).
\end{align}
Thus, we should be able to construct the semigroup-based sets with the three matrices $s,t,u \in M_m(\F_2)$. A similar observation is possible by
assuming that~$u$ is invertible, but does not lead to a logical generalization of the group-based sets, as we are not able to require a certain
value for the characteristic polynomial of $s$. A cyclic version of the corresponding three-qubit system in \cite{Romero05} is generated by
\begin{align}\label{eqn:homoset:semigroup:C162}
C_{(1,6,2)} = 
\begin{pmatrix}
  0 & 1 & 1 & 0 & 1 & 0\\
  0 & 1 & 1 & 0 & 0 & 1\\
  0 & 0 & 0 & 1 & 0 & 0\\
  0 & 0 & 1 & 0 & 0 & 1\\
  0 & 1 & 0 & 0 & 1 & 0\\
  1 & 0 & 0 & 0 & 0 & 0
\end{pmatrix}.
\end{align}
However, as a suitable form is not found yet, computational results based on these approaches and
Conditions (1)--(3) of Section \ref{sec:bandy} are given in Appendix~\ref{app:homosets:semigroup}.

\section{Inhomogeneous sets}\label{sec:inhomogeneous}
The third and last class of cyclic MUBs that are discussed in this work does not contain any class
of operators that is fully separable into qubit systems in its decomposition structure which was introduced
in Section \ref{sec:entangleprop}. Accordingly, these sets need to have a more general standard form
as given in Definition~\ref{defi:standfor:standfor}, thus, the generator $\bar G_0$ does not equal
$(\Eins_m, 0_m)^t$ as for the Fibonacci-based sets and the homogeneous sets in general. The approach we
choose in order to create an inhomogeneous set is the following: At first, a symplectic matrix $C_0 \in M_{2m}(\F_2)$ is
taken which produces a more general generator $\bar G_0$ from the matrix $(\Eins_m, 0_m)^t$. Secondly,
the ordinary symplectic stabilizer matrix $C$ is taken.\par
If we take the generator $\bar G^{(h)}_0(\Eins_m, 0_m)^t$ of the homogeneous
sets, we can produce any generator $\bar G_0$ as
\begin{align}
 \bar G_0 = C_0 \bar G^{(h)}_0 = \begin{pmatrix}\Eins_m & t\\ G_0^x & v\end{pmatrix} \begin{pmatrix} \Eins_m \\0_m \end{pmatrix} = \begin{pmatrix} \Eins_m \\G_0^x \end{pmatrix},
\end{align}
with $t,v \in M_m(\F_2)$, where $t$ and $v$ can be chosen according to fulfill the properties derived by
Corollary \ref{cor:app:clifford:symplecticmatrixprop} to let $C_0$ be symplectic. For the first of these
three properties, $G_0^x$ has to be symmetric, which has also be true in order to create a class of
commuting elements in $\cC_0$, corresponding to the derived Property~(2) of Section \ref{sec:bandy}.\par
The following discussion gives a first idea of the construction of inhomogeneous sets. The results
can be seen as a proof-of-principle that legitimates the approach. In order to generate a class which
is not fully decomposable, we entangle two qubit systems with indices $i, j \in \Mg{1,m}, i\neq j$ by
setting $(G_0^x)_{i,j}=(G_0^x)_{j,i} = 1$, which has the expected effect, regarding the class generation
which is introduced by Equation \eqref{eqn:bandy:classes}. Following then the ideas of Section \ref{sec:homogeneous:semigroup},
we found some of the solutions for four-qubit systems and list them in Appendix \ref{app:inhomosets}.
A cyclic generator for the three-qubit system that equals the system which is generated in \cite{Romero05} is given by
\begin{align}\label{eqn:inhomoset:C090}
C_{(0,9,0)} = 
\begin{pmatrix}
0 & 0 & 0 & 0 & 0 & 1\\
1 & 0 & 0 & 1 & 0 & 0\\
0 & 0 & 0 & 0 & 1 & 0\\
0 & 0 & 1 & 0 & 0 & 0\\
1 & 0 & 0 & 0 & 0 & 0\\
0 & 1 & 0 & 0 & 0 & 0
\end{pmatrix},
\end{align}
where the corresponding generator $G_0$ equals
\begin{align}\label{eqn:inhomoset:G090}
 (G_0^x)_{(0,9,0)} =
\begin{pmatrix}
 0 & 1 & 0\\
 1 & 0 & 0\\
 0 & 0 & 0
\end{pmatrix},
\end{align}
with $G_0=(\Eins_m, G_0^x)^t$.

\section{Unitary operator}\label{sec:Uconstruction}
So far, we have discussed different sets of cyclic MUBs in the view of a partition of the set
of Pauli operators, as introduced in Section \ref{sec:bandy}. The corresponding stabilizer matrices
were introduced in Sections \ref{sec:fibset}, \ref{sec:homogeneous}, and \ref{sec:inhomogeneous} and
serve as generators of the set of classes which partition the set of Pauli operators. But, for measuring
quantum states, we need to transform those matrices into unitary operators, in the fashion of Section
\ref{sec:bandy}. This section starts by deriving that transformation and leads to a comfortable form
of the unitary operators which result from Fibonacci-based sets (Section \ref{sec:unitop:fibset}).
Section \ref{sec:unitop:fermset} recalls
the iterative construction of the Fermat sets of Section \ref{sec:fermatset} in order to derive again
an iterative construction of these MUBs, but this time on the level of unitary operators as published
in \cite{SR11}. The section closes by discussing shortly the generation of the unitary operator in cases of a
more general stabilizer matrix (Section \ref{sec:unitop:general}).\par
For the derivation of the unitary operator we use the theory of the \emph{stabilizer formalism}\index{Stabilizer formalism}\footnote{%
The theory of the stabilizer formalism was developed by Gottesman \cite{Gott96}. See also his Ph.~D.~thesis \cite{Gott97}.}, as could
have already been figured out by the definition of the stabilizer matrix $C \in M_{2m}(\F_p)$.
Within this theory, substitutes of the usual Pauli operators are
defined, namely the \emph{logical Pauli operators}\index{Logical Pauli operator}. The usual Pauli operators
$\ZX(\vec a)$ were defined in Equation \eqref{eqn:bandy:ZXa}. With $\vec z_k \in \F_p^{2m}$ being a vector
that has a one at position $k$ and zeros else and $\vec x_k \in \F_p^{2m}$ a vector with a one at position
$m+k$ and zeros else, we can define the \emph{physical operators}\index{Physical operator}, namely a unitary representation of $\F_p^{2m}$,
as $Z_k = \ZX(\vec z_k)$ and $X_k = \ZX(\vec x_k)$; using a stabilizer matrix $A$ results in the
\emph{logical operators}\index{Logical operator} $\bar Z_k = \ZX(A \vec z_k)$ and $\bar X_k = \ZX(A \vec x_k)$ with $k \in \MgE{m}, m \in \N^*$
and $p$ prime. The physical \mbox{Pauli-$Z$} operators produce a local phase factor, whereas physical \mbox{Pauli-$X$} operators induce
bit (or dit\footnote{A dit denotes a variable defined over $\Z_d$, generalizing a bit.}) flips. Therefore, the logical ground state $\ket{0}_\mathrm{L}$ is naturally defined as the joint
eigenstate of all logical \mbox{Pauli-$Z$} operators with eigenvalue~$+1$. The other logical states can be derived by
applying bit (or dit) flips as
\begin{align}\label{eqn:unitop:jL}
 \ket{j}_\mathrm{L} = \bar X_1^{j_1} \cdots \bar X_m^{j_m} \ket{0}_\mathrm{L},
\end{align}
with $\Mg{j_1, \ldots, j_m}$ being the bit (or dit) decomposition of $j \in \F_p^m$.
At first, we focus on the construction of the unitary operator that generates a Fibonacci-based
set of MUBs.
\subsection{Fibonacci-based sets}\index{Fibonacci!sets}\label{sec:unitop:fibset}
In order to facilitate the following consideration, we use the inverse of the stabilizer matrix for the
Fibonacci-based sets (cf. Equation \eqref{eqn:fibset:B110}), which, in the case of qubits, is obviously given by
\begin{align}\label{eqn:unitop:C-1}
 C^{-1} = \begin{pmatrix}0_m& \Eins_m\\ \Eins_m & B\end{pmatrix}.
\end{align}
This leads to the logical operators
\begin{align}
 \bar Z_k = \ZX(C^{-1} \vec z_k)\quad \mathrm{and} \quad \bar X_k = \ZX(C^{-1} \vec x_k),
\end{align}
with $\bar Z_k = X_k$ for $k \in \MgE{m}$. Thus, the logical ground state is (up to a global phase) given by
\begin{align}\label{eqn:unitop:0L}
 \ket{0}_\mathrm{L} = 2^{-m/2} \sum_{j \in \F_2^m} \ket{j}.
\end{align}
For the construction of a complete set of MUBs for a dimension of the Hilbert space which is a power of two,
the set of Pauli operators is partitioned into $d+1=2^m+1$ disjoint classes
of commuting operators (cf. Equation \eqref{eqn:bandy:partition}).
The common eigenbases of those classes are mutually unbiased, as shown by Bandyopadhyay \etal{} \cite{Bandy02}. In
order to get a cyclic set, the transformation that brings a class $\cC_l$ with $l \in \MgN{d}$ of operators
to the class $\cC_{(l + 1)\;\mathrm{mod}\; (d+1)}$, has to be of multiplicative order $d+1$. By construction,
the class $\cC_0$ of the Fibonacci-based sets appears as
\begin{align}
 \cC_0 = \Mg{ \ZX \begin{pmatrix} \vec a_z\\ \vec 0\end{pmatrix} \middle\vert \vec a_z \in \F_2^m \backslash \Mg{0}}.
\end{align}
For a complete set of cyclic MUBs there exists a unitary operator $U \in M_d(\C)$, such that 
$U \cC_l U^\dagger = \cC_{(l+1) \;\mathrm{mod}\; (d+1)}$ and $U^{d+1} = \Eins_d$. With $\Eins_d$ being the common
eigenbasis of the operators of the class $\cC_0$, the columns of $U^l$ form the vectors of the eigenbases of $\cC_l$. Since $U$ maps
Pauli operators onto Pauli operators, it is called a \emph{Clifford unitary operator}\index{Clifford unitary operator}.
Properties of these operators and their relation to symplectic matrices is given in Appendix \ref{app:clifford}.
As mentioned above, we will calculate the inverse of the unitary operator, thus $U^\dagger$. By definition,
the operator $U^\dagger$ transforms a physical state into the corresponding logical state, namely
$U^\dagger \ket{j} = \ket{j}_\mathrm{L}$ which yields with Equation~\eqref{eqn:unitop:jL}
\begin{align}
 U^\dagger \ket{j} = \EZ^{-\iE \Psi} \prod_{k=1}^m \ZX(C^{-1} \vec x_k)^{j_k} \ket{0}_\mathrm{L},
\end{align}
and with Equation \eqref{eqn:unitop:0L} finally
\begin{align}\label{eqn:unitop:Udaggerstart}
 U^\dagger = 2^{-m/2} \EZ^{-\iE \Psi} \sum_{i,j\in \F_2^m}\prod_{k=1}^m \ZX(C^{-1} \vec x_k)^{j_k} \ketbra{i}{j}.
\end{align}
The global phase factor $\EZ^{-\iE \Psi}$ with $\Psi \in [0,2\pi)$ is fixed by assuming that $(U^\dagger)^{d+1} = \Eins_d$,
but the phases of the elements of the set of Pauli operators within a class $\cC_j$ are arbitrary. Thus, this global phase
factor does not follow from the stabilizer formalism.
To calculate the expression of Equation \eqref{eqn:unitop:Udaggerstart}, we plug in $C^{-1}$ in the form of Equation~\eqref{eqn:unitop:C-1}
with $B=(b_{ij})$. The general $k$-th factor looks like
\begin{align}
 \ZX(C^{-1} \vec x_k) = \left( \bigotimes_{t=1}^{k-1} X^{b_{tk}} \right) \otimes (-\iE)^{b_{kk}} Z X^{b_{kk}} \otimes \left( \bigotimes_{t=k+1}^{m} X^{b_{tk}} \right).
\end{align}
Applying the product, we find for a single tensor factor of the qubit at position~$k$ the term
\begin{align}
 (-\iE)^{b_{kk} j_k} X^{b_{1k} j_1 + \ldots + b_{k-1,k} j_{k-1}} \cdot Z^{j_k} \cdot X^{X_{kk} j_k + \ldots + b_{mk} j_m},
\end{align}
that can, by shifting the operator $Z$ to the left, be rewritten as
\begin{align}
 \iE^{b_{kk} j_k} (-1)^{b_{1k} j_1 j_k + \ldots + b_{kk} j_{k} j_k} Z^{j_k} \cdot X^{X_{1k} j_k + \ldots + b_{mk} j_m}.
\end{align}
To be able to write the tensor product in a short form, we define the abbreviations
\begin{align}\label{eqn:unitop:pj*X_j}
 p^*_j :=& \iE^{\sum_{k=1}^m b_{kk} j_k} (-1)^{\sum_{k=1}^m b_{1k} j_1 j_k + \ldots + b_{kk} j_k j_k},\\
 \tilde X_j :=& \bigotimes_{k=1}^{m} X^{b_{1k} j_1 + \ldots + b_{mk} j_m},
\end{align}
which lead to
\begin{align}
 U^\dagger = 2^{-m/2} \EZ^{-\iE \Psi} \sum_{i,j \in \F_2^m} p^*_j \cdot \left( \bigotimes_{k=1}^m Z^{j_k} \right) \cdot \tilde X_j \ketbra{i}{j}.
\end{align}
Only the factor $\tilde X_j$ acts on $\sum_i \ket{i}$, but it has no argument in $i$ and keeps the sum invariant.
Therefore--summarized over $i$--it equals the unity operator and vanishes. The tensor product $\bigotimes_{k=1}^m Z^{j_k}$ can be
identified with $(-1)^{i\cdot j}$ that equals Sylvester's Hadamard matrix (cf. Equation \ref{eqn:app:hadamard:sylvester}). Thus we end up with the short form of $U^\dagger$
as
\begin{align}
  U^{\dagger} = \EZ^{-\iE \Psi} H^{\otimes m} \cdot P^*,
\end{align}
the $m$-folded tensor product of the \emph{normalized Hadamard matrix}\index{Hadamard matrix} (cf. Equation \eqref{eqn:app:hadamard:H}) which is defined as
\begin{align}
 H=\frac{1}{\sqrt{2}}\begin{pmatrix}1&1\\1&-1\end{pmatrix},
\end{align}
and the diagonal \emph{phase system}\index{Phase system} matrix which is given by $P^*=\mathrm{diag}( (p^*_j)_{j \in \F_2^m} )$.
With~$B$ being symmetric and the quadratic form
\begin{align}
 \bra{j} B \ket{j} = \sum_{k,l=1}^m b_{kl} j_k j_l,
\end{align}
where $\bra{j} B \ket{j} \in \Z_4$ holds, as it is only taken as the power of a root of order four, we finally find
\begin{align}\label{eqn:unitop:pj*}
 p^*_j = \iE^{\bra{j} B \ket{j}} (-1)^{\sum_{k=1}^m b_{kk} j_k}.
\end{align}
For the sake of completeness, the original unitary operator $U$ is accordingly given by
\begin{align}\label{eqn:unitop:U}
  U = \EZ^{\iE \Psi} P \cdot H^{\otimes m},
\end{align}
with $P=\mathrm{diag}( (p_j)_{j \in \F_2^m} )$ and $p_j = (-\iE)^{\bra{j} B \ket{j}} (-1)^{\sum_{k=1}^m b_{kk} j_k}$.\footnote{
The additional minus sign comparing to the results of \cite{KRS10, SR12} is due to the fact that we to construct $U$ here
and not $U^{\dagger}$, which does not matter in principle.} The
consideration of the unitary operators $U$ for the derived stabilizer matrices that follow from the matrices $A$ in Appendix
\ref{app:MUBsolutions} leads to an unproven conjecture.

\begin{con}[Spectrum of generators of MUBs]\label{con:unitor:spectrum}\hfill\\ 
 The spectrum of a unitary operator $U$ that generates a complete set of MUBs is non-degenerate; it consists of all roots of
 unity of order $d+1$ with a single exception.
\end{con}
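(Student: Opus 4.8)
The plan is to reduce the statement to a trace computation via harmonic analysis on the cyclic group $\Z_{d+1}$ generated by $U$, and then to bound the traces $\tr(U^n)$ through the Clifford structure of the powers of $U$. Since $U$ generates a cyclic set of MUBs we have $U^{d+1}=\Eins_d$, so every eigenvalue of $U$ is a $(d+1)$-th root of unity; write $\omega=\EZ^{2\pi\iE/(d+1)}$ and let $\mu_k\in\N$ be the multiplicity of $\omega^k$ in the spectrum of $U$, so $\sum_{k=0}^d\mu_k=d$. Fourier inversion on $\Z_{d+1}$ and the Parseval relation give
\begin{gather*}
 \mu_k=\frac{1}{d+1}\sum_{n=0}^{d}\omega^{-kn}\,\tr(U^n),\\
 \sum_{n=0}^{d}\abs{\tr(U^n)}^2=(d+1)\sum_{k=0}^{d}\mu_k^2 .
\end{gather*}
Because $\tr(U^0)=d$, the second identity reads $(d+1)\sum_k\mu_k^2=d^2+\sum_{n=1}^d\abs{\tr(U^n)}^2$, so the whole conjecture follows from the single estimate $\abs{\tr(U^n)}\le 1$ for all $n\in\MgE{d}$. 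Indeed this forces $\sum_k\mu_k^2\le d=\sum_k\mu_k$, and since $\mu_k^2\ge\mu_k$ for non-negative integers, every $\mu_k$ must be $0$ or $1$, with exactly one of them vanishing because the sum is $d=(d+1)-1$. (Equality then holds throughout, so in fact $\abs{\tr(U^n)}=1$ for all such $n$, and $\tr(U)=-\omega^{k_0}$ singles out the missing root $\omega^{k_0}$; replacing $U$ by $\omega^jU$ shifts $k_0$ by $j$, so which root is exceptional depends on the global phase in Equation \eqref{eqn:unitop:U} while its existence does not.)

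It remains to prove $\abs{\tr(U^n)}\le 1$ for $n\in\MgE{d}$. For such $n$ the operator $U^n$ sends the class $\cC_l$ to $\cC_{l+n}\ne\cC_l$, hence it is a genuine Clifford unitary whose symplectic part is the power $C^n$ of the stabilizer matrix. Here I would invoke the standard trace formula for Clifford operators (a finite Gauss-sum evaluation, equivalently the character theory of the Weil representation over $\F_2$): for a Clifford unitary $V$ on $\C^{2^m}$ with symplectic part $S\in M_{2m}(\F_2)$ one has $\abs{\tr V}\in\{0,\ 2^{\dim_{\F_2}\ker(S-\Eins_{2m})/2}\}$. Therefore $\abs{\tr(U^n)}\le 1$ as soon as $C^n-\Eins_{2m}$ is invertible over $\F_2$, i.e.\ as soon as $C^n$ has no non-zero fixed vector.

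For the Fibonacci-based sets this last point is a short computation. By Equation \eqref{eqn:fibset:Cn} the blocks of $C^n-\Eins_{2m}$ are polynomials in $B$ and hence commute, so the block-determinant Lemma \ref{lem:algfund:detblock} yields
\begin{gather*}
 \det(C^n-\Eins_{2m})=\det\!\bigl((F_{n+1}(B)+\Eins_m)(F_{n-1}(B)+\Eins_m)+F_n(B)^2\bigr).
\end{gather*}
Over $\GF{2}$ one has $F_{n+1}(x)F_{n-1}(x)+F_n(x)^2=1$ (the expansion of $\det A^n=(\det A)^n=1$ for the companion matrix $A$ of the Fibonacci recursion, whose $n$-th power has entries $F_{n-1},F_n,F_{n+1}$) together with $F_{n+1}(x)+F_{n-1}(x)=x\,F_n(x)$ from Equation \eqref{eqn:fibpol:normal}; so the bracket collapses to $B\,F_n(B)$ and $\det(C^n-\Eins_{2m})=\det(B)\,\det(F_n(B))$. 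Now $\det(B)\ne0$ because the characteristic polynomial of $B$ is irreducible (Lemma \ref{lem:fibset:charpol}), and $\det(F_n(B))\ne0$ for $n\in\MgE{d}$ is exactly Condition (iii), which holds since by Proposition \ref{prop:fibset:rep} each $F_n(B)$ with $n\in\MgE{d}$ is a non-zero element of the field $\F_2[B]$. Hence $C^n-\Eins_{2m}$ is invertible for every $n\in\MgE{d}$, which finishes the argument for Fibonacci-based generators; for the homogeneous group-based generators of Equation \eqref{eqn:homosets:group:C}, conjugating $C^n$ by the block matrix $\mathrm{diag}(\Eins_m,R)$ reduces it to the same Fibonacci form, so the conclusion transfers verbatim.

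The hard part is the fully general case, and this is the step I expect to be the main obstacle. For an arbitrary cyclic complete set the stabilizer matrix $C$ still has multiplicative order $d+1$, since it cycles the $d+1$ classes; but order $d+1$ alone does not force $C^n-\Eins_{2m}$ to be invertible for every $n\in\MgE{d}$. If the characteristic polynomial of $C$ factored into several distinct irreducible polynomials whose orders are proper divisors of $d+1$ with least common multiple $d+1$, then $C$ would have a non-zero fixed vector already at some $n<d+1$, and the trace formula would give only $\abs{\tr(U^n)}\le 2^{\dim\ker(C^n-\Eins_{2m})/2}$, which is too weak. So the crux for the general statement is to prove that every eigenvalue of $C$ has multiplicative order exactly $d+1$ — equivalently, that the characteristic polynomial of $C$ has order $d+1$ in the polynomial sense of the footnote to Corollary \ref{cor:fibset:morder}. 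I would attempt to extract this from the disjointness Conditions (I)--(III) on the classes, perhaps via Proposition \ref{prop:fibset:rep} applied to the subspace structure preserved by $C$; but without the explicit Fibonacci form of $C^n$ this appears genuinely delicate, and until it is settled the conjecture is secured only for the constructions actually exhibited in this chapter.
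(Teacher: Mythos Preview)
The statement you are addressing is a \emph{conjecture} in the paper, explicitly introduced there as ``an unproven conjecture''; the paper offers no proof and only records the trivial observations that the eigenvalues of $U$ are $(d+1)$-th roots of unity and that a non-degenerate $d\times d$ matrix has $d$ of them. Your proposal therefore does not compete with any argument in the paper --- it goes well beyond it.

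For the Fibonacci-based generators of Section~\ref{sec:fibset} (and, via the conjugation by $\mathrm{diag}(\Eins_m,R)$, for the group-based generators of Section~\ref{sec:homogeneous:group}) your argument is in fact a correct proof. The Fourier/Parseval reduction to $\abs{\tr(U^n)}\le 1$ is clean; the Clifford trace identity you invoke is standard (for a Clifford unitary $V$ with symplectic part $S$ one has $\abs{\tr V}^2=\sum_{a\in\ker(S-\Eins_{2m})}\epsilon(a)$ with signs $\epsilon(a)\in\{\pm1\}$, so $\ker(S-\Eins_{2m})=\{0\}$ forces $\abs{\tr V}=1$ exactly); and your block-determinant evaluation $\det(C^n-\Eins_{2m})=\det\bigl(B\,F_n(B)\bigr)$, combined with the irreducibility of $\chi_B$ and Condition~(iii), settles the invertibility of $C^n-\Eins_{2m}$ for $n\in\MgE{d}$. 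You are also right that the fully general statement hinges on showing this invertibility for arbitrary cyclic generators, which the Fibonacci machinery does not cover for the semigroup-based and inhomogeneous constructions of Sections~\ref{sec:homogeneous:semigroup} and~\ref{sec:inhomogeneous}; this residual gap is precisely why the paper leaves the statement as a conjecture rather than a theorem.
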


It is clear, that the eigenvalues of $U$ have to be roots of unity of order $d+1$ in order to fulfill $U^{d+1}=\Eins_d$. But
$U$ is a $d \times d$ matrix, thus if it is non-degenerate, it has $d$ different eigenvalues. Multiplying a valid $U$ by
any root of unity of order $d+1$ will result again in a valid $U$. We are free to exclude the trivial root $1$ from the spectrum which results
in a spectrum that is symmetric to the real axis. In this case, the trace of $U$ is given by $\tr U = -1$. This way, we can
calculate the global phase $\EZ^{\iE \Psi}$ of $U$ by inserting Equation \eqref{eqn:unitop:U} as
\begin{align}
 \EZ^{\iE \Psi} = - \tr ( P^* H^{\otimes m}).
\end{align}
The consideration of derived unitary operators that generate a complete set of cyclic MUBs, where the eigenvalue $-1$ is
excluded from the spectrum, results again in a conjecture.

\begin{con}[Global phase of generators of MUBs]\label{con:unitor:globalphase}\hfill\\ 
 The global phase of the generator $U$ in the form of Equation \eqref{eqn:unitop:U} of a complete set of cyclic
 MUBs, where the eigenvalue $+1$ is chosen to be excluded from the spectrum of $U$, provided Conjecture \ref{con:unitor:spectrum} is true,
 reads
\begin{align}
 \EZ^{\iE \Psi} =
\begin{cases}
 \frac{-1 - \iE}{\sqrt{2}},&\mathrm{for}\;m\;\mathrm{odd},\\
 -\iE,&\mathrm{for}\;m\;\mathrm{even}.
\end{cases}
\end{align}
\end{con}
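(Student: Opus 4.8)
The plan is to compute $\EZ^{\iE \Psi} = -\tr(P^* H^{\otimes m})$ explicitly using the formula for $p^*_j$ from Equation \eqref{eqn:unitop:pj*}, and to show that the resulting Gauss-type sum has the claimed value. First I would write out
\begin{align*}
 \tr(P^* H^{\otimes m}) = 2^{-m/2} \sum_{j \in \F_2^m} p^*_j = 2^{-m/2} \sum_{j \in \F_2^m} \iE^{\bra{j} B \ket{j}} (-1)^{\sum_{k=1}^m b_{kk} j_k},
\end{align*}
using that the diagonal entries of $H^{\otimes m}$ are all $2^{-m/2}$. Since each $j_k \in \{0,1\}$, we have $j_k^2 = j_k$, so the linear correction term $(-1)^{\sum_k b_{kk} j_k} = \iE^{2\sum_k b_{kk} j_k} = \iE^{-2\sum_k b_{kk} j_k^2}$ can be absorbed into the quadratic form: define the modified symmetric bilinear form $\tilde B$ with off-diagonal entries equal to those of $B$ but diagonal entries replaced by $b_{kk} - 2b_{kk} = -b_{kk}$ (equivalently, $\bra{j}\tilde B\ket{j} = \bra{j}B\ket{j} - 2\sum_k b_{kk} j_k \bmod 4$). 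Then the sum becomes a pure quadratic Gauss sum $2^{-m/2}\sum_{j \in \F_2^m} \iE^{\bra{j}\tilde B \ket{j}}$ over $\Z_4$.

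Next I would invoke the classical evaluation of such sums. For a symmetric matrix over $\Z_2$ viewed in the exponent of $\iE$, the Gauss sum $\sum_{j \in \F_2^m} \iE^{\bra{j}\tilde B\ket{j}}$ has absolute value $2^{m/2}$ precisely when the associated symplectic/quadratic form is nondegenerate — which is guaranteed here because $C$ (hence $B$, and one checks also the modified form) generates a complete set of MUBs, so $F_k(B)$ is invertible for the relevant indices and in particular $B$ itself is invertible for $m \ge 1$. The phase of the sum is then an eighth root of unity determined by the \emph{Arf invariant} (over $\F_2$) together with a signature-type count mod $8$; concretely, for a nondegenerate quadratic form of rank $m$ over $\F_2$ lifted to $\Z_4$, the normalized Gauss sum equals $\EZ^{\iE\pi\sigma/4}$ where $\sigma \equiv m \pmod 2$ pins down whether one lands on $\{\pm 1, \pm\iE\}$ ($m$ even) or on $\{(\pm 1 \pm \iE)/\sqrt 2\}$ ($m$ odd). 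Combining with the overall minus sign from $\EZ^{\iE\Psi} = -\tr(P^* H^{\otimes m})$ and with Conjecture \ref{con:unitor:spectrum} (which fixes $\tr U = -1$ and hence removes the remaining sign ambiguity), I expect to land exactly on $-\iE$ for $m$ even and $(-1-\iE)/\sqrt 2$ for $m$ odd.

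The step I expect to be the main obstacle is showing that the \emph{specific} eighth root of unity is always the stated one rather than one of its Galois conjugates: the Arf invariant of $\tilde B$ could a priori take either value, which would flip $\pm\iE$ or swap $(-1-\iE)/\sqrt2$ with $(-1+\iE)/\sqrt2$. To handle this I would use the extra structure forced by the cyclicity conditions (I)–(III) of Section \ref{sec:fibset}: the matrix $C$ has order $d+1 = 2^m+1$, so $U$ has order $2^m+1$, and the eigenvalue multiset of $U$ is (conjecturally) all $(2^m+1)$-th roots of unity except $+1$. That constraint determines $\tr(P^*H^{\otimes m}) = -\tr U^\dagger = -\overline{(-1)} = 1$... more carefully, $\tr U = -1$ gives $\tr U^\dagger = -1$, and $U^\dagger = \EZ^{-\iE\Psi} H^{\otimes m} P^*$, so $\EZ^{-\iE\Psi} = -1/\tr(H^{\otimes m}P^*) = -1/\overline{\tr(P^* H^{\otimes m})}$; taking this together with $|\tr(P^*H^{\otimes m})| = 1$ forces $\EZ^{\iE\Psi} = -\tr(P^*H^{\otimes m})$ and determines it once the sign of the Gauss sum is known. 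Thus the real content is a parity/Arf-invariant computation for the family $B = B_m$; I would verify the base cases $m=1,2$ by hand (using $B_{2^0}=(1)$ and $B_{2^1}$ from Section \ref{sec:fermatset}) and then argue the invariant is stable, either by a direct induction mirroring the reciprocal-operator recursion \eqref{eqn:ferset:recursion} or by noting that all reduced stabilizer matrices of a given size are related by the symplectic symmetries discussed in Chapter \ref{chap:equivalence}, under which the Arf invariant is preserved.
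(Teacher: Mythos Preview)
The statement you are attempting to prove is labelled as a \emph{conjecture} in the paper and is left unproven there; the text introduces it with the phrase ``results again in a conjecture,'' and the only related computation the paper actually carries out is $\tr V_m = -\iE\,2^{m/2}$ for the Fermat recursion in Section~\ref{sec:unitop:fermset} (Equation~\eqref{eqn:unitop:trVm}), which covers only $m=2^k$. There is therefore no paper proof to compare your proposal against.

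On the proposal itself, there is first a slip in the opening line: the diagonal entries of $H^{\otimes m}$ are $2^{-m/2}(-1)^{j\cdot j}=2^{-m/2}(-1)^{\sum_k j_k}$, not uniformly $2^{-m/2}$, so the trace is $2^{-m/2}\sum_j p^*_j\,(-1)^{\sum_k j_k}$. This extra linear factor can be absorbed into the exponent just as you do with the $b_{kk}$ term, so the Gauss-sum structure survives; but it does change the quadratic form whose Arf invariant you would eventually need.

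The substantive gap is the one you flag yourself, namely fixing the particular eighth root of unity, and neither of your proposed resolutions closes it. The induction along~\eqref{eqn:ferset:recursion} reaches only $m=2^k$, not general $m$. The equivalence machinery in Chapter~\ref{chap:equivalence} relates two admissible matrices $B, B'$ with the same characteristic polynomial by $B'=sBs^t$ for an orthogonal $s$ (Lemma~\ref{lem:equiv:bandy:equivB}), and handles different characteristic polynomials only through an abstract Galois identification (Theorem~\ref{thm:equiv:bandy:equiv}); neither step yields a direct relation between the Gauss sums $\sum_j \iE^{\langle j|\tilde B|j\rangle}$ and $\sum_j \iE^{\langle j|\tilde B'|j\rangle}$. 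In particular, the orthogonal congruence acts on $B$, not on your modified form $\tilde B$ (whose diagonal you have altered, and which the missing $(-1)^{\sum_k j_k}$ alters again), so ``congruent forms have the same Arf invariant'' does not apply as stated. Pinning this down for \emph{all} admissible $B$ is precisely what the paper leaves open.
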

It is surprising that the global phase $\EZ^{\iE \Psi}$ seems to depend not on the stabilizer matrix $C$ at all, but only on the number
of qubits, thus, on the dimension $d=2^m$. Since by construction of the phase factors $p_j$ (cf. Equation \eqref{eqn:unitop:pj*}),
the product $P \cdot H^{\otimes m}$ leads to a matrix with entries that are roots of unity of order four (and a global real normalization
factor), for even $m$ the unitary matrix $U$ has only roots of unity of order four as entries, whereas for odd $m$ it has only entries
that are principal roots of order eight (so not of order four). The complex conjugation of the phase factor in relation to the results
of \cite{KRS10} comes from the fact that here are operators~$\ZX$ used, in~\cite{KRS10} operators $X\!Z$, but $C$ looks the same.
As an advantage, the approach used here results in a symmetric arrangement of the classes $\cC_j$, for example $\cC_0$ consists only 
of Pauli-$Z$ operators and $\cC_d$ of Pauli-$X$ operators.

\subsection{Fermat sets}\index{Fermat sets}\label{sec:unitop:fermset}

In Section \ref{sec:fermatset} it was shown that for dimensions $d=2^{2^k}$ with $k \in \N$, a complete set of cyclic MUBs
can be build recursively, thus, using the complete set of cyclic MUBs from the next smaller dimension. This procedure in terms of
the stabilizer matrix $C$ was shown in Equation \eqref{eqn:ferset:C2k}. Considering the unitary matrix $U$ which results from
$C$ for Fibonacci-based sets as given by Equation~\eqref{eqn:unitop:U}, we find an analogous recursive construction, but this
time in terms of~$U$. And, comparably to the definition of $C_{2^k}$ of Equation \eqref{eqn:ferset:C2k}, we will refer to the
unitary operator $U$ of dimension $2^m$ with $m=2^k$ in the following as $U_{2^k}$.\par
To simplify matters, we define the matrix $V$ by
\begin{align}
 V = P \cdot \bar H^{\otimes m},
\end{align}
which has, compared to the unitary operator $U$ of Equation \eqref{eqn:unitop:U}, a factor of one for the global phase and with
the \emph{Hadamard matrix}\index{Hadamard matrix} as given in Equation~\ref{eqn:app:hadamard:barH}, namely
\begin{align}
 \bar H = \begin{pmatrix}
      1 & 1\\ 1 & -1
     \end{pmatrix},
\end{align}
no normalization factor. Thus, $V$ is not a unitary matrix, but $U$ can be derived from $V$ simply as $U = V / (-\tr V)$. To calculate
a matrix $V_m$ with a $B_m$ as discussed in Section \ref{sec:fermatset}, we have
\begin{align}\label{eqn:unitop:Vm}
 V_{m} = \mathrm{diag}(((-\iE)^{\bra{i} B_m \ket{i}} (-1)^{\sum_{k=1}^m b_{kk} i_k})_{i \in \F_2^m}) \cdot \bar H^{\otimes m},
\end{align}
using Equation \eqref{eqn:unitop:pj*}. By Equation \eqref{eqn:ferset:recursion}, the only value on the diagonal of $B_{2^k}$ that does not vanish,
is $b_{11}$. We can therefore abbreviate Equation \eqref{eqn:unitop:Vm} as
\begin{align}
 V_{m} = \mathrm{diag}(((-\iE)^{\bra{i} B_{m} \ket{i}} (-1)^{i_1})_{i \in \F_2^m}) \cdot \bar H^{\otimes m}.
\end{align}
The same construction holds for $V_{2m}$; we may consider the new vector\linebreak $i' = (i'_1, \ldots i'_{2m})^t \in \F_2^{2m}$ as
$i' = (i,j)^t = (i_1, \ldots, i_m, j_1, \ldots, j_m)^t \in \F_2^{2m}$, which leads to
\begin{align}\label{eqn:unitop:V2m}
 V_{2m} = \mathrm{diag}(((-\iE)^{\bra{i,j} B_{2m} \ket{i,j}} (-1)^{i_1})_{i,j \in \F_2^m}) \cdot \bar H^{\otimes 2 m}.
\end{align}
The matrix element $\bra{i,j} B_{2m} \ket{i,j}$ with $B_{2m} = \begin{pmatrix} B_m & \Eins_m\\ \Eins_m & 0_m\end{pmatrix}$
equals $\bra{i} B_m \ket{i} + 2 ( i \cdot j )$,\footnote{It has to be taken care, that $2 ( i \cdot j ) \in \Z_4$
holds and does therefore not vanish, as it is part of the exponent of $-\iE$ in Equation \eqref{eqn:unitop:V2m}.} thus, we find for Equation \eqref{eqn:unitop:V2m}:
\begin{align}\label{eqn:unitop:V2mH}
 V_{2m} = \mathrm{diag}(((-\iE)^{\bra{i} B_m \ket{i}} (-1)^{i_1} (-1)^{i\cdot j})_{i,j \in \F_2^m}) \cdot \bar H^{\otimes 2 m}.
\end{align}
The factor $(-1)^{i\cdot j}$ defines the $2^m \times 2^m$ dimensional Hadamard matrix $\bar H^{\otimes m}$, the same, that is used to
describe $V_m$. By definition, $i$ represents the lower bit half and $j$ the higher bit half of $i'$ as $i'=(i,j)^t$. We identify $i$ as the row index
and $j$ as the column index of $H^{\otimes m}$, the factors, namely $(-\iE)^{\bra{i} B_m \ket{i}} (-1)^{i_1}$ are invariant of $j$,
thus repeat for each $j$. The phase vector of $V_{2m}$ arises, by calculating $V_m$ and concatenating its column vectors such that
\begin{align}
 \begin{pmatrix}p_1\\p_2\\\hdots\\ p_{2^{2m}}\end{pmatrix} = \begin{pmatrix}(\vec v_1)_m\\ (\vec v_2)_m\\ \hdots \\ ( \vec v_{2^m})_m\end{pmatrix},
\end{align}
where $(\vec v_k)_m$ is the $k$-th column vector of $V_m$. We will refer to the mapping of $V_m$ to the phase vector $P_{2m}$ of $V_{2m}$ as the
\emph{chop map}\index{Chop map} $\cM$, i.\,e. $\cM(V_m) = P_{2m}$. This defines the recursion relation
\begin{align}\label{eqn:unitop:recursion}
 V_{2m} = \mathrm{diag}( \cM( V_m ) )   \cdot \bar H^{\otimes 2 m}.
\end{align}
For $k=0$, thus $m=2^k=1$, we have with $B_1=(1)$:
\begin{align}
 V_1 =&\;\mathrm{diag} ( (-\iE)^{0\cdot 1\cdot 0} (-1)^0, (-\iE)^{1\cdot 1\cdot 1} (-1)^1)  \cdot \begin{pmatrix}1&1\\1&-1\end{pmatrix}\\
     =& \begin{pmatrix}1&1\\\iE&-\iE\end{pmatrix}\label{eqn:unitop:V1}.
\end{align}
This implies with $\cM(V_1) = ( 1, \iE, 1, -\iE)^t$:
\begin{align}
 V_2 = \begin{pmatrix} 1 & 1& 1& 1\\ \iE & -\iE&\iE&-\iE\\1&1&-1&-1\\-\iE&\iE&\iE&-\iE\end{pmatrix},
\end{align}
and can be continued at least to $V_{2048}$ as checked in Appendix \ref{app:fermat_based:wiedemann}, and if Wiedemann's conjecture is true,
for all $V_{2^k}$ with $k \in \N$.\par
To calculate the unitary operator $U_m$ as $U_m = V_m / (-\tr V_m)$, we need to analyze the matrix $V_m$
in detail. If $i\in \F_2$ represents the rows and $j \in \F_2$ the columns of $V_1$, we find with Equation \eqref{eqn:unitop:V1},
\begin{align}
 (V_1)_{i,j} = 
\begin{cases}
 1 \cdot (-1)^{i\cdot j},&\mathrm{for}\; i\equiv 0\quad \mathrm{mod}\quad 2,\\
 \,\iE \cdot (-1)^{i\cdot j},&\mathrm{for}\; i\equiv 1\quad \mathrm{mod}\quad 2.
\end{cases}
\end{align}
Taking Equation \eqref{eqn:unitop:recursion} into account, $V_2$ reads as
\begin{align}
 (V_2)_{i,j} =
\begin{cases}
 1 \cdot (-1)^{i_1\cdot i_2 + i\cdot j},&\mathrm{for}\; i\equiv 0\quad \mathrm{mod}\quad 2,\\
 \,\iE \cdot (-1)^{i_1\cdot i_2 + i\cdot j},&\mathrm{for}\; i\equiv 1\quad \mathrm{mod}\quad 2,
\end{cases}
\end{align}
with $i = \Mg{i_1, \ldots, i_m}$ in general. As seen above, the lengths of the vectors $i$ and
$j$ are one for $V_1$. For $V_2$ the vectors are two bit long, for $V_4$ four bit and so forth.
Using Equation \eqref{eqn:unitop:V2mH} in order to recognize Hadamard matrices in the construction of
$V_m$ recursively, we find
\begin{align}\label{eqn:unitop:Vmij}
 (V_m)_{i,j} =
\begin{cases}
 1 \cdot (-1)^{x},&\mathrm{for}\; i\equiv 0\quad \mathrm{mod}\quad 2,\\
 \,\iE \cdot (-1)^{x},&\mathrm{for}\; i\equiv 1\quad \mathrm{mod}\quad 2,
\end{cases}
\end{align}
with
\begin{align}
 x= i_1\cdot i_2 + (i_1,i_2)\cdot (i_3,i_4)^t + \ldots + (i_1,\ldots,i_{m/2})\cdot (i_{m/2+1},\ldots,i_{m})^t + i\cdot j.
\end{align}
The trace of $V_m$ is given by the sum of the diagonal values, where $j=i$ holds. We will calculate the
real and the imaginary part separately.\par
For the real part of the trace of $V_m$, the first bit value of $i$ has to be zero, thus we find
\begin{align}
 \mathcal{R}\Mg{\tr V_m} = \sum_{i\in \F_2^m, i_1=0} (V_m)_{i,i} = \sum_{i\in \F_2^m, i_1=0} (-1)^x,
\end{align}
with $x= 0\cdot i_2 + (0,i_2)\cdot (i_3,i_4)^t + \ldots + (0,i_2,\ldots,i_{m/2})\cdot (i_{m/2+1},\ldots,i_{m})^t + i\cdot i$.
We arrange the sum in pairs, where for one element of the pair $i_{m/2+1}=0$ holds and for the other element $i_{m/2+1}=1$. Since
$i_1=0$ holds for all terms in the sum, $x$ differs only in the last term for the two elements of a pair, where $i \cdot i$ gives the
Hamming weight of $i$, that differs by one for the different cases. So for one element of each pair, $x$ is an even number, for the
other element it is odd and as exponents of $-1$ can be taken modulo two, the expression $(-1)^0+(-1)^1=0$ is the sum of both
elements and equals zero. Since all elements can be paired this way, the real part of the trace of $V_m$ vanishes.\par
For the imaginary part of the trace of $V_m$ the value of $i_1$ equals one, we find
\begin{align}
 \mathcal{I}\Mg{\tr V_m} = \sum_{i\in \F_2^m, i_1=1} (V_m)_{i,i} = \sum_{i\in \F_2^m, i_1=1} (-1)^x,
\end{align}
with $x= 1\cdot i_2 + (1,i_2)\cdot (i_3,i_4)^t + \ldots + (1,i_2,\ldots,i_{m/2})\cdot (i_{m/2+1},\ldots,i_{m})^t + i\cdot i$.
We will consider two cases: For those elements $i$, where at least one of the bit values of $\Mg{i_2,\ldots, i_{m/2}}$ equals
zero, we can arrange the elements analogously in pairs as done for the real part. The second case treats all remaining
elements, namely those where all bit values of $\Mg{i_2,\ldots, i_{m/2}}$ equal one, which produces terms with
\begin{align}\label{eqn:unitop:x1111}
 x= 1\cdot 1 + (1,1)\cdot (1,1)^t + \ldots + (1,\ldots,1)\cdot (i_{m/2+1},\ldots,i_{m})^t + i\cdot i.
\end{align}
For $m=1$, we have only one term and this adds $-1$ to the imaginary part. For $m=2$, the imaginary part equals
$\sum_{i_2=0}^1 (-1)^{1\cdot i_2 + (1,i_2)\cdot (1,i_2)^t}$ which leads to $-2$. For all $m>2$ we can use the following
observation: For each of the $2^{m/2}$ elements with $i=(1,\ldots,1,i_{m/2+1},\ldots, i_m)$ the two rightmost terms of
Equation \eqref{eqn:unitop:x1111} give the same result modulo two, since the number of ones in the left part is even.
As $m$ is a power of two, all remaining terms without the first one result in powers of two and the first one equals
one. Therefore, the sum of all those elements provides a contribution of $-2^{m/2}$ to the imaginary part of the trace of $V_m$. Thus, we have
\begin{align}\label{eqn:unitop:trVm}
 \tr V_m = - \iE 2^{m/2}.
\end{align}
Taking all results together, we finally find for the unitary operator $U_m$ which results for the Fermat-based sets,
\begin{align}
 U_{2m} = -\iE 2^{-m} \mathrm{diag}( \cM( V_m ) )  \cdot \bar H^{\otimes 2 m}.
\end{align}
The difference of the operators to those given in \cite{SR11} is due to the replacement of $X\!Z$ operators by $\ZX$ operators
which was motivated in Section \ref{sec:unitop:fibset}. Using the results of this section may lead to a proof of Wiedemann's
conjecture by the analogy that was proven in Theorem \ref{thm:ferset:wiedemann}. An idea is given in Appendix~\ref{app:wiedemann_proofs}.
 
\subsection{More general sets}\label{sec:unitop:general}
The construction of a unitary operator that generates a complete set of cyclic MUBs which is based on a more general stabilizer
matrix, like those discussed in Sections \ref{sec:homogeneous} and \ref{sec:inhomogeneous} is possible in a similar way as was done
for the Fibonacci-based sets in Section \ref{sec:unitop:fibset} by using the same steps. But, presumably, the resulting form will
not be decomposable as simple as in Equation \eqref{eqn:unitop:U}. The search for a nice construction form is beyond the scope of this work;
nevertheless, a physical implementation of general stabilizer matrices will be discussed in Section \ref{sec:circuit}
that is more applicable than the representation as a unitary matrix with a simple construction.

\section{Gate decomposition}\label{sec:circuit}
We have seen so far, that the construction of complete sets of cyclic MUBs for a Hilbert space of
dimension $d=2^m$ with $m \in \N^*$ is based on the construction of a stabilizer matrix (cf. Sections \ref{sec:fibset}, \ref{sec:fermatset}, \ref{sec:homogeneous}, and \ref{sec:inhomogeneous}),
which can be represented as a unitary matrix (cf. Section \ref{sec:Uconstruction}). What still lacks
is the implementation of the individual bases into an experimental setup (e.\,g. a quantum computer).
A common general description of an implementation is the idea of a \emph{quantum circuit}\index{Quantum circuit}.
This is a scheme, similar to a classical circuit, that gives a description of the behavior of a unitary
operation that can be seen as a recipe on how to implement the operation in \emph{quantum gates}.
The challenge in deriving this circuit from the operation is to get a minimal number of those gates
where each gate acts on a minimal number of qubits.\footnote{An advantage of cyclic MUBs is that
a single circuit can be used to implement a complete set of MUBs, where for a non-cyclic set, $d+1$
different circuits have to be implemented, as was done e.\,g. by Klimov \etal{} \cite{Klimov08}.}
It was shown that single-qubit gates and two-qubit gates
produce a \emph{universal set}\index{Universal set}, which can be used to implement any unitary operation \cite{DiVincenzo95,DBE95,Lloyd95,Boykin00}.
It is important to mention that the application of gates in quantum circuits have to be read from left
to right, rather than the application of quantum operations on quantum states.
Within this section, we provide a simple generation of a quantum circuit for the Fibonacci-based sets of
Section \ref{sec:fibset} and show the minimal form of this circuit in the case of the Fermat sets of Section \ref{sec:fermatset}.
Finally, an approach for constructing a quantum circuit from a more general stabilizer matrix is given.
A short discussion on performance and error optimizations closes this section.\par

\subsection{Fibonacci-based sets}\label{sec:gatedecomp:fibset}
In the case of Fibonacci-based sets, it turns out that the construction of an appropriate quantum circuit
that implements the generator of a complete set of cyclic MUBs can be read out from the reduced
stabilizer matrix $B \in M_m(\F_2)$ as it was introduced in Equation \eqref{eqn:fibset:B110}. But, honestly
speaking, this fact results from the direct construction of the unitary operator, as given by Equation~\eqref{eqn:unitop:U}.
Since the global phase cannot be measured and is therefore irrelevant in the physical implementation, we have
to generate a quantum circuit for the operator
\begin{align}
 U' = P \cdot H^{\otimes m}.
\end{align}
Obviously, the application of the Hadamard matrix $H^{\otimes m}$ can be done by applying $H$ to each of the
$m$ qubits individually. The phase system is given by $P = \mathrm{diag}((p_j)_{j \in \F_2^m})$, where the
binary representation of $j$ refers to a specific quantum state. The bit $j_k$ addresses then with $k \in \MgE{m}$
the qubit at position $k$. This phase system depicts directly the implementation into a quantum circuit. More precisely,
we can rewrite the phase factor as it was given in Equation \eqref{eqn:unitop:pj*X_j}, where we find
\begin{align}
 p_j =& (-\iE)^{\sum_{k,l=1}^m b_{kl} j_k j_l} (-1)^{\sum_{k=1}^m b_{kk} j_k}\nonumber\\
     =& \prod_{k>l=1}^m \left((-1)^{j_k j_l}\right)^{b_{kl}} \cdot \prod_{k=1}^m \left(\iE^{ j_k}\right)^{b_{kk}}\label{eqn:gatedecomp:fibset:pj}.
\end{align}
The second term of Equation \eqref{eqn:gatedecomp:fibset:pj} tells us, that a phase of $\iE$ is applied to the qubit with
index $k$, if it is in state $\ket{1}_k\bra{1}$ and $b_{kk}=1$; in the case $\ket{0}_k\bra{0}$ nothing happens. The first
term is only relevant in the case that the qubits with index $k$ and $l$ are in the state $\ket{1}_k\bra{1} \otimes \ket{1}_l\bra{1}$
(which means that $j_k = j_l=1$) and if $b_{kl}=1$. The described actions can be performed by the following elementary
gates:
\begin{itemize}[--]
 \item The single-qubit \texttt{Phase} gate acting on qubit $k$ which is defined as
  \begin{align}
   \texttt{Phase}_k(\EZ^{\iE\phi}) &= \ket{0}_k\bra{0} + \EZ^{\iE \phi} \ket{1}_k\bra{1},
  \end{align}
  and visualized as
  \begin{center}
  \begin{tikzpicture}[scale=1, transform shape]
    \tikzstyle{operator} = [draw,fill=white,minimum size=1.5em] 
    \tikzstyle{phase} = [fill,shape=circle,minimum size=5pt,inner sep=0pt]
    \node at (0,0) (q1) {$\ket{\Psi}$};
    \node[operator] (op1) at (2,0) {$\EZ^{\iE\phi}$} edge [-] (q1);
    \node at (4,0) (q1') {$\ket{\Psi'}$} edge [-] (op1);
  \end{tikzpicture}
  \end{center} 
  with phase $\phi \in [0,2\pi)$.
 \item The two-qubit controlled-\texttt{Phase} gate
  \begin{align}
\texttt{CPhase}_{s \rightarrow t}(\EZ^{\iE\phi}) &= \ket{0}_s\bra{0} \otimes \ket{0}_t\bra{0}
    + \ket{0}_s\bra{0} \otimes \ket{1}_t\bra{1} \nonumber\\
    &\hspace{-1cm}+ \ket{1}_s\bra{1} \otimes \ket{0}_t\bra{0}
    + \EZ^{\iE\phi} \ket{1}_s\bra{1} \otimes \ket{1}_t\bra{1},
  \end{align}
  with control qubit $s$, target qubit $t$ and phase $\phi \in [0,2\pi)$. For a two-qubit state
  $\ket{\Psi_s, \Psi_t}$, the \texttt{CPhase} gate induces in general a transformation
  $\ket{\Psi_s, \Psi_t} \rightarrow \ket{\Psi_s, \Psi'_t}$ and is visualized by the following
  circuit:
  \begin{center}
  \begin{tikzpicture}[scale=1, transform shape]
    \tikzstyle{operator} = [draw,fill=white,minimum size=1.5em] 
    \tikzstyle{phase} = [fill,shape=circle,minimum size=5pt,inner sep=0pt]
    \node at (0,0) (q1) {$\ket{\Psi_s}$};
    \node at (0,-1) (q2) {$\ket{\Psi_t}$};
    \node[phase] (p1) at (2,0) {} edge [-] (q1);
    \node[operator] (op1) at (2,-1) {$\EZ^{\iE\phi}$} edge [-] (q2);
    \node at (4,0) (q1') {$\ket{\Psi_s}$} edge [-] (p1);
    \node at (4,-1) (q2') {$\ket{\Psi'_t}$} edge [-] (op1);
    \draw[-] (p1) -- (op1);
  \end{tikzpicture}
  \end{center} 
\end{itemize}
We can use these two gates to implement the phase factor of Equation \eqref{eqn:gatedecomp:fibset:pj},
namely by identifying the gates from the reduced stabilizer matrix $B$ as:
\begin{itemize}[--]
 \item For $b_{kk} = 1$, apply $\texttt{Phase}_k(\iE)$.
 \item For $b_{kl} = 1$ and $k>l$, apply $\texttt{CPhase}_{k \rightarrow l}(-1)$.
\end{itemize}
As an example, a possible reduced stabilizer matrix that can be used to generate a complete system of cyclic MUBs for a 
quantum system of four qubits, is given by the recursive construction of Equation \eqref{eqn:ferset:recursion} with
\begin{align}
 B_{2^2} = \begin{pmatrix}1&1&1&0\\1&0&0&1\\1&0&0&0\\0&1&0&0\end{pmatrix}.
\end{align}


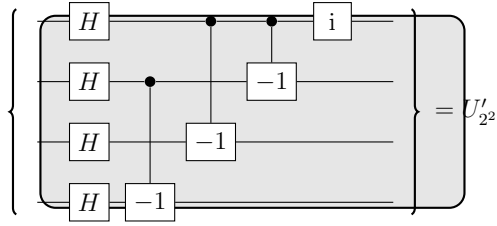
\begin{figure}[t]
  \centerline{
    \begin{tikzpicture}[scale=0.8, transform shape] 
    %
    \tikzstyle{operator} = [draw,fill=white,minimum size=1.5em] 
    \tikzstyle{phase} = [fill,shape=circle,minimum size=5pt,inner sep=0pt]
    \tikzstyle{surround} = [fill=black!10,thick,draw=black,rounded corners=2mm]
    %
    \node at (1,0) (q1) {};
    \node at (1,-1) (q2) {};
    \node at (1,-2) (q3) {};
    \node at (1,-3) (q4) {};
    %
    \node[operator] (op11) at (2,0) {$H$} edge [-] (q1);
    \node[operator] (op21) at (2,-1) {$H$} edge [-] (q2);
    \node[operator] (op31) at (2,-2) {$H$} edge [-] (q3);
    \node[operator] (op41) at (2,-3) {$H$} edge [-] (q4);
    \node[phase] (op22) at (3,-1) {} edge [-] (op21);
    \node[operator] (op42) at (3,-3) {$-1$} edge [-] (op41);
    \draw[-] (op22) -- (op42);
    %
    \node[phase] (op12) at (4,0) {} edge [-] (op11);
    \node[operator] (op32) at (4,-2) {$-1$} edge [-] (op31);
    \draw[-] (op12) -- (op32);
    \node[phase] (op13) at (5,0) {} edge [-] (op12);
    \node[operator] (op23) at (5,-1) {$-1$} edge [-] (op22);
    \draw[-] (op13) -- (op23);
    \node[operator] (op14) at (6,0) {$\iE$} edge [-] (op13);
    \draw[-] (op14) -- (7,0);
    \draw[-] (op23) -- (7,-1);
    \draw[-] (op32) -- (7,-2);
    \draw[-] (op42) -- (7,-3);
    %
    \draw[decorate,decoration={brace},thick] (7.3,0.2) to
	node[midway,right] (bracketR) {$\;\;= U'_{2^2}$}
	(7.3,-3.2);
    \draw[decorate,decoration={brace},thick, rotate=180] (-0.8,3.2) to
	node[midway,left,rotate=180] (bracketL) {}
	(-0.8,-.2);
    %
    \begin{pgfonlayer}{background} 
    \node[surround] (background) [fit = (op11) (op42) (bracketL) (bracketR)] {};
    \end{pgfonlayer}
    \end{tikzpicture}
  }

  \caption{Quantum circuit for the generator $U'_{2^2}$ of a complete set of cyclic MUBs for a four qubit system.}
  \label{fig:gatedecomp:fibset:circuit}
\end{figure} 

Following the results given above, this solution results in one \texttt{Phase} gate, three \texttt{CPhase} gates and
four \texttt{Hadamard} gates, as depicted in
Figure \ref{fig:gatedecomp:fibset:circuit}. Accordingly, in the case of the recursive construction of Fermat sets
given by Equation~\eqref{eqn:ferset:recursion}, the implementation of the generator of a complete set of cyclic MUBs
for $m$ qubits needs one \texttt{Phase} gate, $m-1$ \texttt{CPhase} gates and $m$ \texttt{Hadamard} gates, which
is optimal in the number of two-qubit gates.\footnote{As the bases of a complete set of cyclic MUBs for an $m$ qubit system measure the complete
state of this system, a possible quantum circuit should link all qubits together, which can be done by controlled gates.
If not all qubits were linked together, then only information on subsystems could be measured. Consequently, if for controlled gates
only two qubit gates are used, the minimum number of these gates is $m-1$.} 
In the case of the Fibonacci-based sets in the Form of Equation \eqref{eqn:fibset:A}, an implementation would need
not more than $\lceil m/2 \rceil + 5$ \texttt{Phase} gates, $(\lceil m/2 \rceil^2 - \lceil m/2 \rceil)/2 + 10$ \texttt{CPhase}
gates and $m$ \texttt{Hadamard} gates for $m \in \MgE{600}$ as can be seen in Appendix~\ref{app:fibonacci_based:triangle}.
This does not seem to be optimal at all. Also the proposed method to find a symmetric companion matrix (cf. Conjecture \ref{con:fibset:exsymmcomp})
does not seem to be capable to solve this problem optimally. Therefore, further investigations on how to reduce the number
of non-vanishing entries of the reduced stabilizer matrices are essential in order to reduce the number of gates needed.\par

\subsection{Homogeneous sets}\label{sec:gatedecomp:homogeneous}
To implement homogeneous sets into quantum circuits we cannot use similar methods as for the Fibonacci-based sets as long
as no construction of the unitary matrix is given in an analogous form to that of Equations \eqref{eqn:unitop:U} and~\eqref{eqn:gatedecomp:fibset:pj}.
Therefore, we will introduce a method that is capable to generate the quantum circuit
of nearly any matrix $C \in M_{2m}(\F_2)$ that is symplectic. At first, symplectic representations of different quantum gates will
be discussed; thereafter, it will be shown how a large set of symplectic matrices with entries in $\F_2$ can be decomposed into a
product of symplectic matrices in order to become directly implementable by corresponding quantum gates. Quantum circuits
derived by this method for the homogeneous sets which are discussed, are shown in Appendices \ref{app:homosets:group} and \ref{app:homosets:semigroup}.\par
Let us define an arbitrary symplectic matrix $C \in M_{2m}(\F_2)$ in block-matrix form as
\begin{align}\label{eqn:gatedecomp:homo:C}
  C = \begin{pmatrix} s & t \\u & v \end{pmatrix},
\end{align}
with submatrices $s, t, u, v \in M_m(\F_2)$. As stated in Appendix \ref{app:clifford}, elements of the Clifford group
are automorphisms of the set of Pauli operators. By their group property, these elements are invertible, so also their
representation as symplectic matrices. Therefore, any symplectic matrix has full rank, thus, the row vectors within
the submatrix $(u,v)^t$ are linearly independent. For the group-based sets defined by Equation~\eqref{eqn:homosets:group:C}
this fact can be seen also in the stabilizer matrix directly, since $v:=0_m$ and $u:=R^{-1}$ is invertible. The same holds
for the submatrix $(s,u)$, but is not that obvious. According to this
observation, it is in general possible to obtain by elementary row and column operations a transformed matrix where the submatrix
$u$ is invertible. Within this section, we limit the consideration only to those cases where $u$ is invertible.
To apply these necessary operations and to implement the resulting matrix, different
quantum gates have to be considered.\par
By the construction of the classes $\cC_j$ with $j \in \MgN{d}$ as given in Equation~\eqref{eqn:fibset:classesC}, each
class consists of Pauli operators. The matrix $C$ then realizes by its symplecticity property that the set of a certain
class of Pauli operators is transformed to another set of operators, i.\,e. the next class. Therefore, a circuit of
quantum gates which represents the matrix $C$ has to implement transformations of a set of Pauli operators to another
set. As we can see the whole circuit as the product of basic transformations (i.\,e. the quantum gates), the implication
of each quantum gate can be analyzed by the consideration of its truth table of Pauli operators. To explain the idea, we
analyze the \texttt{Hadamard} gate as an example:\par
The Hadamard operator is defined by Equation \eqref{eqn:app:hadamard:H} as
\begin{align}
 H = \frac{1}{\sqrt{2}} \begin{pmatrix} 1 & 1\\ 1 & -1\end{pmatrix};\quad \mathrm{with} \quad H^{\dagger} = H,
\end{align}
it acts on the Pauli operators $X,Y$ and $Z$, which are listed in Appendix \ref{app:pauli}, as $H X H^{\dagger} = Z$,
$H Y H^{\dagger} = -Y$ and $H Z H^{\dagger} = X$. To construct a complete set
of cyclic MUBs, the eigenbases of the classes $\cC_j$ are relevant, but the phases of the elements are not relevant. A quantum
gate, represented as a symplectic matrix $C_\texttt{H}$, that implements the Hadamard operator, should then act on the
Pauli operators for a single qubit, represented as elements of the finite field $\F_2^2$, as 
\begin{align}
 C_\texttt{H} \begin{pmatrix} 1 \\ 0 \end{pmatrix} = \begin{pmatrix} 0 \\ 1 \end{pmatrix}, 
 \quad C_\texttt{H} \begin{pmatrix} 0 \\ 1 \end{pmatrix} = \begin{pmatrix} 1 \\ 0 \end{pmatrix},\quad\mathrm{and}\quad
 C_\texttt{H} \begin{pmatrix} 1 \\ 1 \end{pmatrix} = \begin{pmatrix} 1 \\ 1 \end{pmatrix}.
\end{align}
Obviously, any symplectic matrix will set the zero vector to the zero vector, which is in accordance to the unitary representation
given above. The representation of the Hadamard operator as a symplectic matrix is finally given by
\begin{align}
 C_\texttt{H} = \begin{pmatrix} 0 & 1\\ 1 & 0\end{pmatrix},
\end{align}
with the corresponding visual representation as the \texttt{Hadamard} gate.
  \begin{center}
  \begin{tikzpicture}[scale=1, transform shape]
    \tikzstyle{operator} = [draw,fill=white,minimum size=1.5em] 
    \tikzstyle{phase} = [fill,shape=circle,minimum size=5pt,inner sep=0pt]
    \node at (1,0) (q1) {};
    \node[operator] (op1) at (2,0) {$\texttt{H}$} edge [-] (q1);
    \node at (3,0) (q1') {} edge [-] (op1);
  \end{tikzpicture}
  \end{center} 
This gate acts only on a single qubit, so the unitary operator on a multi-qubit system with $m$ qubits that applies a
Hadamard gate on the $i$-th qubit, is given by
\begin{align}\label{eqn:gatedecomp:homo:UHi}
 U^{(i)}_{\texttt{H}} = \left( \bigotimes_{l=1}^{i-1} \Eins_2 \right) \otimes H \otimes \left( \bigotimes_{l=i+1}^{m} \Eins_2 \right),
\end{align}
its representation as a symplectic matrix $C^{(i)}_{\texttt{H}} = \left(c_\texttt{H}^{(i)}\right)_{kl}$ has entries
\begin{align}\label{eqn:gatedecomp:homo:cHikl}
\left(c_\texttt{H}^{(i)}\right)_{kl} = 
 \begin{cases}
  \delta_{kl} & \quad\mathrm{for}\quad k,l \notin \Mg{i,m+i},\\
  1-\delta_{kl} & \quad\mathrm{for}\quad k, l \in \Mg{i,m+i}.
 \end{cases}
\end{align}
Another single-qubit gate is the \texttt{Phase} gate, that was already mentioned in Section \ref{sec:gatedecomp:fibset}, but we will
again concentrate on the \texttt{Phase-\iE} gate, with a phase of $\pm \iE$.\footnote{It turns out, that a phase of $+\iE$ leads to
the same result as a phase of $-\iE$, if we do only concentrate on equivalence classes of Pauli operators, thus ignoring additional phases.}
The unitary matrix of the \texttt{Phase-\iE} gate is given by
\begin{align}
 U_{\texttt{P}_{\pm \iE}} = \begin{pmatrix} 1 & 0 \\ 0 & \pm \iE \end{pmatrix},
\end{align}
and can be represented by the symplectic matrix
\begin{align}
 C_{\texttt{P}_{\pm \iE}} = \begin{pmatrix} 1 & 1 \\ 0 & 1 \end{pmatrix},
\end{align}
with the following visual representation.
  \begin{center}
  \begin{tikzpicture}[scale=1, transform shape]
    \tikzstyle{operator} = [draw,fill=white,minimum size=1.5em] 
    \tikzstyle{phase} = [fill,shape=circle,minimum size=5pt,inner sep=0pt]
    \node at (1,0) (q1) {};
    \node[operator] (op1) at (2,0) {$\iE$} edge [-] (q1);
    \node at (3,0) (q1') {} edge [-] (op1);
  \end{tikzpicture}
  \end{center} 
The multi-qubit operators can be built in a similar fashion as for the \texttt{Hadamard} gate which are given by Equations
\eqref{eqn:gatedecomp:homo:UHi} and \eqref{eqn:gatedecomp:homo:cHikl} and yield in the representation as a symplectic matrix,
acting on the $i$-th qubit,
\begin{align}\label{eqn:gatedecomp:homo:cPikl}
 \left(c^{(i)}_{\texttt{P}_{\pm \iE}}\right)_{kl} = \delta_{kl} + \delta_{ki} \delta_{l (m+i)}.
\end{align}
As a next step, we consider controlled two-qubit gates
like the \texttt{CPhase} gate that was already introduced in Section \ref{sec:gatedecomp:fibset}. For these gates, an operation is applied
to the \emph{target} qubit if and only if the state of the \emph{control} qubit equals $\ket{1}$. Here, we concentrate on
the \texttt{CNot} gate and the \texttt{controlled-Z} gate. The unitary operator of the \texttt{CNot} gate, which is equivalent to
a \texttt{controlled-X} gate, is given by
\begin{align}
 U^{(1,2)}_{\texttt{CNot}} = \begin{pmatrix} 1 & 0 & 0 & 0\\ 0 & 1 & 0 & 0\\ 0 & 0 & 0 & 1\\ 0 & 0 & 1 & 0\end{pmatrix}.
\end{align}
The symplectic representation derived from the truth table of Pauli operators of this operation is given by
\begin{align}
 C^{(1,2)}_{\texttt{CNot}} = \begin{pmatrix} 1 & 1 & 0 & 0\\ 0 & 1 & 0 & 0\\ 0 & 0 & 1 & 0\\ 0 & 0 & 1 & 1\end{pmatrix},
\end{align}
and has also a visual representation.
  \begin{center}
  \begin{tikzpicture}[scale=1, transform shape]
    \tikzstyle{not} = [draw,fill=white,shape=circle, minimum size=1.5em] 
    \tikzstyle{phase} = [fill,shape=circle,minimum size=5pt,inner sep=0pt]
    \node at (1,0) (q1) {};
    \node at (1,-1) (q2) {};
    \node[phase] (p1) at (2,0) {} edge [-] (q1);
    \draw (2,-1) node[draw,circle, minimum size=1.1em] {};
    \draw[-] (p1) -- (2,-1cm-.551em);
    \node at (3,0) (q1') {} edge [-] (p1);
    \node at (3,-1) (q2') {} edge [-] (q2);
    \draw[-] (p1) -- (op1);
  \end{tikzpicture}
  \end{center}
The multi-qubit operator for a \texttt{CNot} gate with control qubit $c$ and target qubit~$t$, represented as a symplectic matrix,
is given by
\begin{align}\label{eqn:gatedecomp:homo:cnotmulti}
 \left(c^{(c,t)}_{\texttt{CNot}}\right)_{kl} = \delta_{kl} + \delta_{kc} \delta_{lt} + \delta_{k(t+m)} \delta_{l(c+m)}.
\end{align}

The last gate we want to consider is the \texttt{controlled-Z} gate, which is a special version of the \texttt{controlled-Phase} gate, with a
fixed phase factor of $-1$. We will denote this gate by \texttt{CZ} in the following. Its unitary representation is given by
\begin{align}
 U^{(1,2)}_{\texttt{CZ}} = \begin{pmatrix} 1 & 0 & 0 & 0\\ 0 & 1 & 0 & 0\\ 0 & 0 & 1 & 0\\ 0 & 0 & 0 & -1\end{pmatrix},
\end{align}
and its symplectic representation by
\begin{align}
 C^{(1,2)}_{\texttt{CZ}} = \begin{pmatrix} 1 & 0 & 0 & 1\\ 0 & 1 & 1 & 0\\ 0 & 0 & 1 & 0\\ 0 & 0 & 0 & 1\end{pmatrix}.
\end{align}
The visualization of the \texttt{CZ} gate is the following circuit.
  \begin{center}
  \begin{tikzpicture}[scale=1, transform shape]
    \tikzstyle{operator} = [draw,fill=white,minimum size=1.5em] 
    \tikzstyle{phase} = [fill,shape=circle,minimum size=5pt,inner sep=0pt]
    \node at (1,0) (q1) {};
    \node at (1,-1) (q2) {};
    \node[phase] (p1) at (2,0) {} edge [-] (q1);
    \node[operator] (op1) at (2,-1) {$Z$} edge [-] (q2);
    \node at (3,0) (q1') {} edge [-] (p1);
    \node at (3,-1) (q2') {} edge [-] (op1);
    \draw[-] (p1) -- (op1);
  \end{tikzpicture}
  \end{center} 
Within a multi-qubit environment, the symplectic representation of the \texttt{CZ} gate with control qubit named $c$
and target qubit $t$ is accordingly given by
\begin{align}\label{eqn:gatedecomp:homo:czmulti}
 \left(c^{(c,t)}_{\texttt{CZ}}\right)_{kl} = \delta_{kl} + \delta_{kc} \delta_{l(t+m)} + \delta_{kt} \delta_{l(c+m)}.
\end{align}

Armed with all those gates we can implement any stabilizer matrix $C \in M_{2m}(\F_2)$ in the form of Equation
\eqref{eqn:gatedecomp:homo:C} with $u$ invertible by a quantum circuit. The
process we suggest is realized in two parts. In the first part, the stabilizer matrix is transformed into a special
form by symplectic operations such that~$u$ is mapped to $\Eins_m$, in the second part, this form can be written as a product of symplectic matrices
that can be implemented directly. 

\begin{lem}[Gaussian elimination]\label{lem:gatedecomp:homo:gauss}\hfill\\
 Any symplectic matrix $C \in M_{2m}(\F_2)$ in the form of Equation \eqref{eqn:gatedecomp:homo:C} with $u$ invertible, can be transformed by applying \textup{\texttt{CNot}} operations
 in the form of Equation \eqref{eqn:gatedecomp:homo:cnotmulti} from the left in order to obtain a symplectic matrix in the form
\begin{align}\label{eqn:gatedecomp:homo:C'}
 C' = \begin{pmatrix}s'&t'\\\Eins_m&v'\end{pmatrix}.
\end{align}
\end{lem}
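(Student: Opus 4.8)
The plan is to make explicit, from Equation~\eqref{eqn:gatedecomp:homo:cnotmulti}, which elementary row manipulation a left multiplication by a \texttt{CNot} matrix induces, and then to reduce the invertible block $u$ to $\Eins_m$ by ordinary Gaussian elimination over $\F_2$. First I would compute the product $M := C^{(c,t)}_{\texttt{CNot}}\cdot C$ entrywise: from Equation~\eqref{eqn:gatedecomp:homo:cnotmulti} one gets
\begin{align}
 M_{kl} = C_{kl} + \delta_{kc}\,C_{tl} + \delta_{k(t+m)}\,C_{(c+m)l},
\end{align}
so that $M$ arises from $C$ by replacing row $c$ with (row $c$)$+$(row $t$) and, simultaneously, row $t+m$ with (row $t+m$)$+$(row $c+m$), all other rows being untouched. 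Restricted to the bottom block $(u\mid v)$ of Equation~\eqref{eqn:gatedecomp:homo:C}, i.e.\ to rows $m+1,\dots,2m$, this is exactly the elementary operation ``add row $c$ to row $t$'' (for any $c\neq t$); the accompanying change in the top block $(s\mid t)$ is irrelevant for the target form.

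Next I would invoke that $u\in\GL{m}{\F_2}$. Over $\F_2$ every invertible matrix is a product of transvections $\Eins_m+E_{tc}$ with $t\neq c$; equivalently it can be brought to $\Eins_m$ by a finite sequence of ``add one row to another'' operations (a row transposition being a composition of three of these, and no rescaling ever being needed). Choosing such a sequence for $u$, and realizing each of its steps on the bottom block by a left multiplication with the corresponding $C^{(c,t)}_{\texttt{CNot}}$ (control $c$ the source row, target $t$ the destination row), yields a product $\Gamma$ of \texttt{CNot} matrices whose action on the last $m$ rows equals $u^{-1}$. Hence $\Gamma C$ has bottom-left block $u^{-1}u=\Eins_m$ and bottom-right block $v':=u^{-1}v$, while its top block becomes some new pair $(s',t')$. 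Because each $C^{(c,t)}_{\texttt{CNot}}$ is symplectic and symplectic matrices are closed under multiplication, $\Gamma$, and therefore $\Gamma C$, is symplectic, so $\Gamma C$ has precisely the form of Equation~\eqref{eqn:gatedecomp:homo:C'}.

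The only genuinely delicate point, and hence the step to treat with care, is the first one: confirming that left multiplication by $C^{(c,t)}_{\texttt{CNot}}$ acts on the last $m$ rows as a single elementary row operation independently of its coupled effect on the first $m$ rows, and that the elementary operations so obtained (the transvections $E_{tc}$, $t\neq c$) generate all of $\GL{m}{\F_2}$. Everything else -- invertibility of $u$ supplying exactly the right amount of freedom, closure of the symplectic group under products, and reading off $v'=u^{-1}v$ -- is routine. It is worth remarking that, although the top block is modified along the way, one never needs to control it: the statement only pins down the bottom-left block.
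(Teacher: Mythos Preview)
Your proposal is correct and follows essentially the same approach as the paper: identify left multiplication by a \texttt{CNot} as an elementary row addition on the bottom block $(u\mid v)$, then reduce the invertible $u$ to $\Eins_m$ by Gaussian elimination over $\F_2$ (with row swaps realized as triple products of transvections). You are, if anything, slightly more explicit than the paper in writing out the entrywise product and in noting that symplecticity is preserved under the product of \texttt{CNot} matrices.
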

\begin{proof}
 If $u$ is an invertible matrix, it can be diagonalized by the Gaussian elimination, that uses elementary row or column operations. 
 By multiplying a \texttt{CNot} operation in the manner of Equation \eqref{eqn:gatedecomp:homo:cnotmulti} from the left to the
 symplectic matrix $C$, within the block of the submatrix $u$, the row which has the index of the control qubit is added and stored
 to the row which has the index of the target qubit. Two rows $i,j$ of $u$ can be swapped by applying 
 $C^{(i,j)}_{\texttt{CNot}} C^{(j,i)}_{\texttt{CNot}} C^{(i,j)}_{\texttt{CNot}}$, as the elements of $u$ have characteristic two.
 Therefore, all elementary row operations can be realized with \texttt{CNot} operations.\footnote{The third operation, namely the multiplication
 of any row with a non-zero scalar, is given by the identity operation in $\F_2$, as no non-zero element exists besides $1$.}
\end{proof}
The transformation may also be realized by multiplying \texttt{CNot} operations from the right which induces elementary column operations, respectively.\linebreak
Lemma~\ref{lem:gatedecomp:homo:gauss} leads implicitly to the following corollary:
\begin{cor}[Gate commutation]\label{cor:gatedecomp:homo:commute}\hfill\\
 Two quantum gates which have representations as symplectic matrices in the form
 \begin{align}
  C_1= \begin{pmatrix} \Eins_m & a_1\\ 0_m & \Eins_m \end{pmatrix}\quad\mathrm{and}\quad C_2= \begin{pmatrix} \Eins_m & a_2\\ 0_m & \Eins_m \end{pmatrix},
 \end{align}
 with $a_1, a_2 \in M_m(\F_2)$, commute.
\end{cor}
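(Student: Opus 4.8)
The plan is to establish the commutation by a direct block-matrix computation, exploiting that both $C_1$ and $C_2$ are block upper-triangular with the identity in both diagonal blocks. First I would multiply $C_1$ by $C_2$ in the $2\times 2$ block structure; the diagonal blocks remain $\Eins_m$, the lower-left block remains $0_m$, and the only nontrivial block is the upper-right one, which is $\Eins_m \cdot a_2 + a_1 \cdot \Eins_m = a_1 + a_2$. This gives
\begin{align}
 C_1 C_2 = \begin{pmatrix} \Eins_m & a_1 + a_2 \\ 0_m & \Eins_m \end{pmatrix}.
\end{align}

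Next I would carry out the same computation for the product in the opposite order, obtaining
\begin{align}
 C_2 C_1 = \begin{pmatrix} \Eins_m & a_2 + a_1 \\ 0_m & \Eins_m \end{pmatrix}.
\end{align}
Since matrix addition over $\F_2$ (indeed over any ring) is commutative, $a_1 + a_2 = a_2 + a_1$, so the two products coincide and $C_1 C_2 = C_2 C_1$, which is the claim. One can phrase this more structurally: writing $C(a)$ for the matrix appearing in the corollary, the identity $C(a_1) C(a_2) = C(a_1 + a_2)$ shows that $a \mapsto C(a)$ is a homomorphism from the additive group of symmetric matrices over $\F_2$ — which is exactly the set of $a$ for which $C(a)$ is symplectic, cf.\ Definition~\ref{def:app:clifford:symplecticmatrix} — into the symplectic group, so commutativity of the image follows from commutativity of the domain.

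There is essentially no obstacle here: the statement is an immediate consequence of the block form, and it makes precise the observation already used inside the proof of Lemma~\ref{lem:gatedecomp:homo:gauss}, namely that gates of this shape act additively on the $t$-block and hence independently of one another. The only point deserving a word of care is that, in order to read the corollary as a statement about bona fide quantum gates rather than about arbitrary symplectic representatives, the $a_i$ should be taken symmetric so that $C_1$ and $C_2$ are genuinely symplectic; the commutation computation above is, however, valid verbatim for arbitrary $a_1, a_2 \in M_m(\F_2)$.
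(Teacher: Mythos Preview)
Your proof is correct and follows essentially the same approach as the paper: compute the block product $C_1 C_2 = \begin{pmatrix}\Eins_m & a_1+a_2\\ 0_m & \Eins_m\end{pmatrix}$ and observe that the result is symmetric in $a_1,a_2$. Your additional remarks (the homomorphism viewpoint and the symmetry condition on the $a_i$) are accurate elaborations, but the core argument is identical to the paper's one-line computation.
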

\begin{proof}
 The application of $C_1$ to $C_2$ (or vice versa) gives
 \begin{align}
  C_1 C_2 = \begin{pmatrix} \Eins_m & a_1+a_2\\ 0_m & \Eins_m \end{pmatrix}.
 \end{align}
 Therefore, the \texttt{Phase-\iE} gates and the \texttt{CZ} gates commute.
\end{proof}

The resulting matrix of Lemma \ref{lem:gatedecomp:homo:gauss} can be decomposed into a product of matrices that can be used to read off the required
\texttt{CZ} gates, \texttt{Phase} gates and \texttt{Hadamard} gates directly.
\begin{lem}[Stabilizer matrix decomposition]\label{lem:gatedecomp:homo:decomp}\hfill\\
 Any symplectic matrix $C \in M_{2m}(\F_2)$ in the form of Equation \eqref{eqn:gatedecomp:homo:C'}, can be factorized in matrices that describe the product
 of \texttt{CZ} gates, \texttt{Phase} gates, and an $m$-folded tensor product of the Hadamard matrix as
\begin{align}\label{eqn:gatedecomp:homo:decomp}
 C' = \begin{pmatrix}s'&t'\\ \Eins_m&v'\end{pmatrix} = 
\begin{pmatrix}\Eins_m & s'\\ 0_m & \Eins_m\end{pmatrix} \begin{pmatrix}0_m &\Eins_m\\ \Eins_m & 0_m\end{pmatrix} \begin{pmatrix}\Eins_m & v'\\ 0_m & \Eins_m\end{pmatrix},
\end{align}
which implies $t'=\Eins_m+s'v'$.
\end{lem}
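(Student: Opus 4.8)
The plan is to read off the factorisation from a direct block-matrix computation over $\F_2$, after first using that $C'$ is symplectic to identify its block entries. Throughout I would exploit that over $\F_2$ all signs disappear, so that in particular the symplectic form may be taken as $J=\bigl(\begin{smallmatrix}0_m&\Eins_m\\\Eins_m&0_m\end{smallmatrix}\bigr)$ and every cross term of the shape $X+X$ vanishes.

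First I would invoke symplecticity. By Definition \ref{def:app:clifford:symplecticmatrix}, $C'$ satisfies $C'^{t}JC'=J$. Substituting the block form $C'=\bigl(\begin{smallmatrix}s'&t'\\\Eins_m&v'\end{smallmatrix}\bigr)$ and multiplying out $C'^{t}JC'$ yields four block equations: the $(1,1)$-block gives $s'=s'^{t}$; the $(1,2)$-block gives $s'^{t}v'+t'=\Eins_m$, hence $t'=\Eins_m+s'v'$; the $(2,1)$-block is merely the transpose of this; and the $(2,2)$-block, after substituting $t'=\Eins_m+s'v'$ and using $s'=s'^{t}$, reduces (the two quadratic terms $v'^{t}s'v'$ cancel mod $2$) to $v'=v'^{t}$. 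So symplecticity forces $s'$ and $v'$ to be symmetric and pins $t'$ down exactly as $\Eins_m+s'v'$ — this is the ``which implies'' clause. The symmetry of $s'$ and $v'$ is also what makes the decomposition meaningful as a circuit: the two outer factors $\bigl(\begin{smallmatrix}\Eins_m&s'\\0_m&\Eins_m\end{smallmatrix}\bigr)$ and $\bigl(\begin{smallmatrix}\Eins_m&v'\\0_m&\Eins_m\end{smallmatrix}\bigr)$ are then symplectic and, by Equations \eqref{eqn:gatedecomp:homo:cPikl} and \eqref{eqn:gatedecomp:homo:czmulti} together with Corollary \ref{cor:gatedecomp:homo:commute}, decompose into commuting \texttt{Phase} gates (one for each nonzero diagonal entry) and \texttt{CZ} gates (one for each off-diagonal entry of the symmetric matrix), while the middle factor $\bigl(\begin{smallmatrix}0_m&\Eins_m\\\Eins_m&0_m\end{smallmatrix}\bigr)$ is the symplectic image $\prod_{i=1}^{m}C^{(i)}_{\texttt{H}}$ of a Hadamard applied to every qubit, cf. Equation \eqref{eqn:gatedecomp:homo:cHikl}.

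Then I would simply carry out the multiplication. Combining the last two factors,
\begin{align*}
 \begin{pmatrix}0_m&\Eins_m\\\Eins_m&0_m\end{pmatrix}\begin{pmatrix}\Eins_m&v'\\0_m&\Eins_m\end{pmatrix}=\begin{pmatrix}0_m&\Eins_m\\\Eins_m&v'\end{pmatrix},
\end{align*}
and then
\begin{align*}
 \begin{pmatrix}\Eins_m&s'\\0_m&\Eins_m\end{pmatrix}\begin{pmatrix}0_m&\Eins_m\\\Eins_m&v'\end{pmatrix}=\begin{pmatrix}s'&\Eins_m+s'v'\\\Eins_m&v'\end{pmatrix}.
\end{align*}
By the previous paragraph $\Eins_m+s'v'=t'$, so the right-hand side of the stated identity equals $\bigl(\begin{smallmatrix}s'&t'\\\Eins_m&v'\end{smallmatrix}\bigr)=C'$, which is precisely Equation \eqref{eqn:gatedecomp:homo:decomp}. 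Symplecticity of the three factors can be double-checked with Corollary \ref{cor:app:clifford:symplecticmatrixprop}.

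There is no genuinely hard step here: the whole content is a two-line block multiplication plus the routine extraction of the symplectic relations. The only point that needs care is not to assume $t'=\Eins_m+s'v'$ (and the symmetry of $v'$) but to derive both from $C'^{t}JC'=J$, being attentive to $\F_2$ arithmetic; this is also exactly what guarantees that the factorisation can actually be performed rather than merely written down, since otherwise the would-be \texttt{Phase}/\texttt{CZ} blocks need not be symmetric.
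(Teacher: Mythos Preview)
Your proof is correct and follows essentially the same approach as the paper's: both derive the symmetry of $s'$ and $v'$ and the relation $t'=\Eins_m+s'v'$ from symplecticity (you compute $C'^{t}JC'$ block-by-block, the paper invokes Corollary~\ref{cor:app:clifford:symplecticmatrixprop}, which amounts to the same thing), and both identify the three factors with Hadamard, \texttt{Phase}, and \texttt{CZ} gates via Equations~\eqref{eqn:gatedecomp:homo:cHikl}, \eqref{eqn:gatedecomp:homo:cPikl}, \eqref{eqn:gatedecomp:homo:czmulti}. Your explicit block multiplication verifying the factorisation is a small addition the paper leaves implicit.
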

\begin{proof}
 Corollary \ref{cor:app:clifford:symplecticmatrixprop} of Appendix \ref{app:clifford} states the conditions on $C'$ to be a symplectic matrix. From the first condition
 follows, that $s'$ has to be symmetric. From the last condition follows then with $u'=\Eins_m$ that $t'=\Eins_m+s'v'$. Inserting $t'$ into the second condition
 shows finally, that also $v'$ has to be symmetric.\par
 The given equation appears naturally, where the middle term corresponds to Hada\-mard operations given by Equation \eqref{eqn:gatedecomp:homo:cHikl}. The remaining
 two terms can be created using Corollary \ref{cor:gatedecomp:homo:commute} the following way:
 Each non-zero element on the diagonal of $s'$ and $v'$, respectively, refers to a \texttt{Phase-$\iE$} gate, as can be seen from Equation \eqref{eqn:gatedecomp:homo:cPikl}.
 As $s'$ and $v'$ are symmetric matrices, their off-diagonal, non-zero elements refer to \texttt{CZ} gates, according to Equation \eqref{eqn:gatedecomp:homo:czmulti}.
\end{proof}
To implement $s'$ and $v'$, at most $2m$ \texttt{Phase-$\iE$} gates and at most $m^2 - m$ \texttt{CZ} gates are needed. The form derived by Equation
\eqref{lem:gatedecomp:homo:decomp} requires exactly $m$ \texttt{Hadamard} gates. An upper bound to realize the Gaussian elimination as discussed in Lemma
\ref{lem:gatedecomp:homo:gauss} is the implementation of $m^2$ different \texttt{CNot} gates, as any $m$-bit vector can be build by maximally $m$
\texttt{XOR} operations within the set of $m$ linearly independent vectors. For $v'=0_m$ follows $t' = \Eins_m$, thus a Fibonacci-based set. The quantum
circuit which appears then by applying Lemma \ref{lem:gatedecomp:homo:decomp} is the same as the one given by Figure \ref{fig:gatedecomp:fibset:circuit}.\par
Results of the semigroup sets (cf. Appendix \ref{app:homosets:semigroup}) indicate, that the sets of MUBs can be limited to sets which have an invertible $m \times m$ submatrix in the
upper right corner (i.\,e. the matrix $t$ of Equation \eqref{eqn:gatedecomp:homo:C} is invertible). If this is not the case, one may find a set of quantum
gates that realizes an appropriate transformation. Implementations of complete sets of MUBs for three qubit systems are shown by Figure~\ref{fig:gatedecomp:homo:group:circuit}
and Figure \ref{fig:gatedecomp:homo:semigroup:circuit}; quantum circuits for four qubit systems are listed in Appendices \ref{app:homosets:group} and \ref{app:homosets:semigroup}.

\begin{figure}[t]
\centerline{
\begin{tikzpicture}[scale=0.8, transform shape]
\tikzstyle{operator} = [draw,fill=white,minimum size=1.5em]
\tikzstyle{phase} = [fill,shape=circle,minimum size=5pt,inner sep=0pt]
\tikzstyle{surround} = [fill=black!10,thick,draw=black,rounded corners=2mm]
\node at (1,0) (q1) {};
\node at (1,-1) (q2) {};
\node at (1,-2) (q3) {};
\node[phase] (p0) at (2,-2) {} edge [-] (q3);
\draw (2,0) node[draw,circle, minimum size=1.1em] {};
\draw[-] (p0) -- (2,0cm+.551em);
\node[phase] (p1) at (3,0) {} edge [-] (q1);
\draw (3,-2) node[draw,circle, minimum size=1.1em] {};
\draw[-] (p1) -- (3,-2cm-.551em);
\node[phase] (p2) at (4,-2) {} edge [-] (p0);
\draw (4,0) node[draw,circle, minimum size=1.1em] {};
\draw[-] (p2) -- (4,0cm+.551em);
\node[operator] (p3) at (5,0) {$\iE$} edge [-] (p1);
\node[operator] (p4) at (5,-2) {$\iE$} edge [-] (p2);
\node[phase] (p5) at (6,-1) {} edge [-] (q2);
\node[operator] (p6) at (6,-2) {$-1$} edge [-] (p4);
\draw[-] (p5) -- (p6);
\node[operator] (p7) at (7,0) {$H$} edge [-] (p3);
\node[operator] (p8) at (7,-1) {$H$} edge [-] (p5);
\node[operator] (p9) at (7,-2) {$H$} edge [-] (p6);
\draw[-] (p7) -- (8,0);
\draw[-] (p8) -- (8,-1);
\draw[-] (p9) -- (8,-2);
\draw[decorate,decoration={brace},thick] (8.3,0.2) to node[midway,right] (bracketR) {$\;\;=C_{(2,3,4)}$} (8.3,-2.2);
\begin{pgfonlayer}{background}
\node[surround] (background) [fit = (bracketL) (p7) (p9) (bracketR)] {};
\end{pgfonlayer}
\end{tikzpicture}
}
  \caption{Quantum circuit for the generator $C_{(2,3,4)}$ of Equation \eqref{eqn:homoset:group:C234} of a complete, homogeneous set of group-based cyclic MUBs for a three qubit system.}
  \label{fig:gatedecomp:homo:group:circuit}
\end{figure}
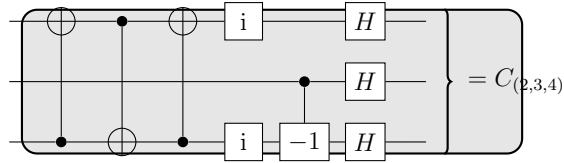 

\begin{figure}[t]
\centerline{
\begin{tikzpicture}[scale=0.8, transform shape]
\tikzstyle{operator} = [draw,fill=white,minimum size=1.5em]
\tikzstyle{phase} = [fill,shape=circle,minimum size=5pt,inner sep=0pt]
\tikzstyle{surround} = [fill=black!10,thick,draw=black,rounded corners=2mm]
\node at (1,0) (q1) {};
\node at (1,-1) (q2) {};
\node at (1,-2) (q3) {};
\node[phase] (p0) at (2,-2) {} edge [-] (q3);
\draw (2,0) node[draw,circle, minimum size=1.1em] {};
\draw[-] (p0) -- (2,0cm+.551em);
\node[phase] (p1) at (3,0) {} edge [-] (q1);
\draw (3,-2) node[draw,circle, minimum size=1.1em] {};
\draw[-] (p1) -- (3,-2cm-.551em);
\node[phase] (p2) at (4,-2) {} edge [-] (p0);
\draw (4,0) node[draw,circle, minimum size=1.1em] {};
\draw[-] (p2) -- (4,0cm+.551em);
\node[operator] (p3) at (5,-1) {$\iE$} edge [-] (q2);
\node[operator] (p4) at (5,-2) {$\iE$} edge [-] (p2);
\node[phase] (p5) at (6,-1) {} edge [-] (p3);
\node[operator] (p6) at (6,-2) {$-1$} edge [-] (p4);
\draw[-] (p5) -- (p6);
\node[operator] (p7) at (7,0) {$H$} edge [-] (p1);
\node[operator] (p8) at (7,-1) {$H$} edge [-] (p5);
\node[operator] (p9) at (7,-2) {$H$} edge [-] (p6);
\node[operator] (p11) at (8,-1) {$\iE$} edge [-] (p8);
\node[operator] (p12) at (8,-2) {$\iE$} edge [-] (p9);
\draw[-] (p7) -- (9,0);
\draw[-] (p11) -- (9,-1);
\draw[-] (p12) -- (9,-2);
\draw[decorate,decoration={brace},thick] (9.3,0.2) to node[midway,right] (bracketR) {$\;\;=C_{(1,6,2)}$} (9.3,-2.2);
\begin{pgfonlayer}{background}
\node[surround] (background) [fit = (bracketL) (p7) (p9) (bracketR)] {};
\end{pgfonlayer}
\end{tikzpicture}
}
  \caption{Quantum circuit for the generator $C_{(1,6,2)}$ of Equation \eqref{eqn:homoset:semigroup:C162} of a complete, homogeneous set of semigroup-based cyclic MUBs for a three qubit system.}
  \label{fig:gatedecomp:homo:semigroup:circuit}
\end{figure}

\subsection{Inhomogeneous sets}\label{sec:gatedecomp:inhomogeneous}

Quantum circuits for inhomogeneous sets can in principle be built in the same way as discussed for the homogeneous sets in the former section.
But these sets do not include the standard basis as shown in Section \ref{sec:inhomogeneous}, i.\,e. the generator $G_0$
of the first class $\cC_0$ is not given by $G'_0=(\Eins_m,0_m)^t$; in standard form it reads $\bar G_0 = (\Eins_m,G_0^x)^t$ (cf. Equation \eqref{eqn:standfor:standfor}).
Any symplectic matrix $C_0$ that transforms the generator $G'_0$ to the generator $\bar G_0$ can be taken to implement the
quantum circuit which produces the class $\cC_0$. In order to generate a class with commuting elements, $G_0^x$ has to be a symmetric matrix (cf. \cite[Lemma 4.3]{Bandy02}).
The symplectic transformation $C_0 G'_0 = \bar G_0$ equals
\begin{align}
 C_0 = \begin{pmatrix} \Eins_m & 0_m\\ G_0^x & \Eins_m \end{pmatrix},
\end{align}
and can be rewritten using \texttt{Hadamard} gates similarly to Lemma \ref{lem:gatedecomp:homo:decomp} as
\begin{align}
 C_0 = \begin{pmatrix}0_m &\Eins_m\\ \Eins_m & 0_m\end{pmatrix} \begin{pmatrix} \Eins_m & G_0^x\\ 0_m & \Eins_m \end{pmatrix} \begin{pmatrix}0_m &\Eins_m\\ \Eins_m & 0_m\end{pmatrix}.
\end{align}
The quantum circuit is then given by a \texttt{Hadamard} gate on each individual qubit (to implement the rightmost matrix),
followed by \texttt{Phase-$\iE$} gates on each diagonal non-zero value of $G_0^x$ and \texttt{CZ} gates on each off-diagonal, upper triangular
value of $G_0^x$, as done for Lemma \ref{lem:gatedecomp:homo:decomp} (to implement the matrix in the middle) and again
\texttt{Hadamard} gates on each individual qubit (to implement the leftmost matrix). Since the application of two consecutively applied
Hadamard operations is the unity operation, we can omit \texttt{Hadamard} gates on those qubits, that are involved neither in any 
\texttt{Phase-$\iE$} nor in any \texttt{CZ} gate.\par
Resulting quantum circuits for complete sets of MUBs on systems with four qubits are given in Appendix \ref{app:inhomosets}, the quantum circuit for
the three-qubit system which is generated by $C_{(0,9,0)}$ of Equation \eqref{eqn:inhomoset:C090} is shown in Figure \ref{fig:gatedecomp:inhomo:circuit}.

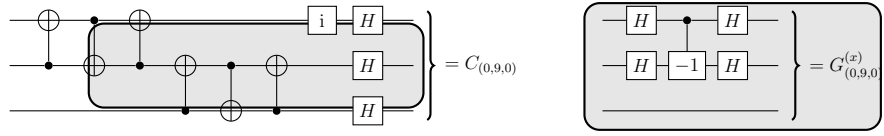
\begin{figure}[t]
  \centerline{
\begin{tikzpicture}[scale=0.6, transform shape]
\tikzstyle{operator} = [draw,fill=white,minimum size=1.5em]
\tikzstyle{phase} = [fill,shape=circle,minimum size=5pt,inner sep=0pt]
\tikzstyle{surround} = [fill=black!10,thick,draw=black,rounded corners=2mm]
\node at (1,0) (q1) {};
\node at (1,-1) (q2) {};
\node at (1,-2) (q3) {};
\node[phase] (p0) at (2,-1) {} edge [-] (q2);
\draw (2,0) node[draw,circle, minimum size=1.1em] {};
\draw[-] (p0) -- (2,0cm+.551em);
\node[phase] (p1) at (3,0) {} edge [-] (q1);
\draw (3,-1) node[draw,circle, minimum size=1.1em] {};
\draw[-] (p1) -- (3,-1cm-.551em);
\node[phase] (p2) at (4,-1) {} edge [-] (p0);
\draw (4,0) node[draw,circle, minimum size=1.1em] {};
\draw[-] (p2) -- (4,0cm+.551em);
\node[phase] (p3) at (5,-2) {} edge [-] (q3);
\draw (5,-1) node[draw,circle, minimum size=1.1em] {};
\draw[-] (p3) -- (5,-1cm+.551em);
\node[phase] (p4) at (6,-1) {} edge [-] (p2);
\draw (6,-2) node[draw,circle, minimum size=1.1em] {};
\draw[-] (p4) -- (6,-2cm-.551em);
\node[phase] (p5) at (7,-2) {} edge [-] (p3);
\draw (7,-1) node[draw,circle, minimum size=1.1em] {};
\draw[-] (p5) -- (7,-1cm+.551em);
\node[operator] (p6) at (8,0) {$\iE$} edge [-] (p1);
\node[operator] (p7) at (9,0) {$H$} edge [-] (p6);
\node[operator] (p8) at (9,-1) {$H$} edge [-] (p4);
\node[operator] (p9) at (9,-2) {$H$} edge [-] (p5);
\draw[-] (p7) -- (10,0);
\draw[-] (p8) -- (10,-1);
\draw[-] (p9) -- (10,-2);
%
%
%
\node at (14,0) (qq1) {};
\node at (14,-1) (qq2) {};
\node at (14,-2) (qq3) {};
\node[operator] (qp0) at (15,0) {$H$} edge [-] (qq1);
\node[operator] (qp1) at (15,-1) {$H$} edge [-] (qq2);
\node[phase] (qp4) at (16,0) {} edge [-] (qp0);
\node[operator] (qp5) at (16,-1) {$-1$} edge [-] (qp1);
\draw[-] (qp4) -- (qp5);
\node[operator] (qp6) at (17,0) {$H$} edge [-] (qp4);
\node[operator] (qp7) at (17,-1) {$H$} edge [-] (qp5);
\draw[-] (qp6) -- (18,0);
\draw[-] (qp7) -- (18,-1);
\draw[-] (qq3) -- (18,-2);
\draw[decorate,decoration={brace},thick] (10.3,0.2) to node[midway,right] (bracketR) {$\;\;=C_{(0,9,0)}$} (10.3,-2.2);
\draw[decorate,decoration={brace},thick] (18.3,0.2) to node[midway,right] (bracketR2) {$\;\;=G^{(x)}_{(0,9,0)}$} (18.3,-2.2);
\begin{pgfonlayer}{background}
\node[surround] (background) [fit = (bracketL) (p7) (p9) (bracketR)] {};
\node at (17,-1.02cm) [surround,minimum width=6.5cm,minimum height=2.8cm] (background) {};
\end{pgfonlayer}
\end{tikzpicture}
}
  \caption{Quantum circuit for the generator $C_{(0,9,0)}$ of Equation \eqref{eqn:inhomoset:C090} and the lower submatrix of the corresponding generator
   of the class $\cC_0$, namely $G_{(0,9,0)}$ of Equation~\eqref{eqn:inhomoset:G090}, of a complete, inhomogeneous set of cyclic MUBs for a three qubit system.}
   \label{fig:gatedecomp:inhomo:circuit}
\end{figure} 

\subsection{Practical implementation}\label{sec:gatedecomp:practical}

For a practical implementation of a complete set of cyclic MUBs, a quantum circuit like the example of Figure \ref{fig:gatedecomp:fibset:circuit}
would be able to transform a quantum state
by an operation that shifts one of the $d+1$ mutually unbiased bases to the next. Thus, after this process, variables that
are complementary to the former ones, could be measured in the computational basis. To be able to measure in a specific
basis of these $d+1$ different bases, the quantum circuit has to be used several times, in order to achieve the desired set of variables.
In average, this would cause $d/2$ applications of the quantum circuit. Consequently, for a large number of bases, namely a high dimension
$d$ of the Hilbert space, we propose to decompose the index of the bases binary and create for each power of two an individual
circuit. This raises the size of the experimental setup, but reduces the number of gates which are used in average to create
a specific basis.\par
To give an example, for a dimension $d=2^{10}$, a complete set of MUBs has $1025$ different bases. Following the suggestion,
there would be ten different quantum circuits, thus any basis can be reached by using five of them in average. This is
a huge improvement over using $512$ times the same circuit in average if only the generator of the cyclic set of MUBs is implemented.\par
Depending on the dimension, it could also be favored to decompose the index to any other number system; fortunately, all those
approaches can be realized by using the powers of the stabilizer matrix.

\chapter{Equivalence of mutually unbiased bases}\label{chap:equivalence}
A set of mutually unbiased bases defines a set of bases, where the absolute value of the overlap
of all pairs of vectors that are taken from two different bases within the set is constant
and equals $d^{-1/2}$, where $d$ is the dimension of the corresponding Hilbert space $\cH = \C^d$,
as was seen in Definition \ref{defi:limits:mubs}. From this definition it does not make any sense
to take the ordering of the bases into account, as well as the ordering of the basis vectors of any
basis. Also the multiplication of any basis vector with a phase factor preserves the mutual
unbiasedness which is finally even invariant under any unitary transformation which is applied to the whole set of bases.
By choosing a unitary transformation which is the inverse of any unitary operator $\cB_j$ of the set, this basis
becomes the standard basis, thus, each set of MUBs we find, can be transformed into a set that includes
the standard basis. We sum up these naturally appearing arguments to define the \emph{equivalence}
of different sets of MUBs.
\begin{defi}[Equivalence of mutually unbiased bases]\label{defi:equiv:equivalence}\hfill\\
 Two sets of mutually unbiased bases, namely $\fS = \Mg{\cB_0, \ldots, \cB_{r-1}}$ and\linebreak
$\fS' = \Mg{\cB'_0, \ldots, \cB'_{r-1}}$ with $r \in \N^*$, are said to be equivalent, if there holds
 \begin{align}
   \cB'_j = U \cB_{\pi(j)} W_j,
 \end{align}
 with a unitary matrix $U \in M_d(\C)$, a permutation $\pi$ on $\MgN{r-1}$ and monomial matrices\footnote{A
 monomial matrix is defined to be a square matrix which has in every row and every column exactly one entry
 that differs from zero.} $W_j$
that are given by the product of a permutation matrix and a diagonal phase matrix, for $j \in \MgN{r-1}$.
\end{defi}
It has to be kept in mind that this definition of equivalence does not distinguish between the different
sets of cyclic MUBs, that were discussed in Sections~\ref{sec:fibset}, \ref{sec:homogeneous} and \ref{sec:inhomogeneous}.
Their equivalence in a Hilbert space of dimension $d=2^3$ was shown already in~\cite{Romero05}. To
distinguish those sets, the consideration of the entanglement properties is essential, as discussed
in Section \ref{sec:entangleprop}.\par
Nevertheless, on this mathematical level of equivalence, different sets were discussed in the literature \cite{Calderbank97, Godsil09, Kantor12}.
It is still an open problem how many inequivalent sets exist and how they are related.
Within the following section, we will discuss the equivalence of sets of MUBs which are based on
the construction of Bandyopadhyay \etal{} \cite{Bandy02} and compare them with other constructions.
The construction of the Fibonacci-based sets of MUBs allows us to construct a set of operators that realize
permutations of the bases which is shown in the subsequent section.

\section{Heisenberg group partition sets}
To analyze the equivalence of sets of MUBs which are based on the construction of Bandyopadhyay \etal{},
we can use their construction principles in order to simplify the comparison of two different sets. Following
the explanation given in Section \ref{sec:bandy}, each element $U_j$ of the set of MUBs $\fS$ which is
generated by the powers of a unitary matrix $U$ as $\fS = \gen{U}$, is given by the set of common eigenvectors
of a corresponding class of Pauli operators $\cC_j$ (cf. Equation~\eqref{eqn:bandy:Mj}) with $j \in \MgN{r-1}$
and $r \in \N^*$. For a complete set of MUBs we have $r=d+1$. By the construction of Equation~\eqref{eqn:bandy:classes},
a class $\cC_j$ is a set of elements $\ZX( \vec a )$ of the set of Pauli operators.
Analogously to the set of unitary matrices, we can define the set $\fC = \Mg{\cC_0, \ldots, \cC_{r-1}}$ of
classes $\cC_j$, that defines a complete set of MUBs, but without determining the order of the basis vectors or
their phase within a single class. To define equivalence on this level, there is no need for matrices $W_j$ as
introduced by Definition \ref{defi:equiv:equivalence}. We can formulate the following lemma:

\begin{lem}[Equivalence of mutually unbiased bases]\label{lem:equiv:bandy:equiv}\hfill\\
  Two sets of MUBs which are characterized by $\fC = \Mg{\cC_0, \ldots, \cC_{r-1}}$ and\linebreak $\fC' = \Mg{\cC'_0, \ldots, \cC'_{r-1}}$
  are equivalent, if there holds
 \begin{align}
   \cC'_j = U \cC_{\pi(j)} U^{\dagger},
 \end{align}
 with a unitary matrix $U \in \C^d$, a permutation $\pi$ on $\MgN{r-1}$, and $r \in \N^*$.
\end{lem}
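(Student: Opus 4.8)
The plan is to pass from the relation between the classes of Pauli operators to the relation between their common eigenbases, and then to read off Definition~\ref{defi:equiv:equivalence}. The key observation is that each class $\cC_j$ determines its common eigenbasis $\cB_j$ uniquely up to a reordering of the basis vectors and an individual phase on each of them; once this is available, the assertion follows by a routine eigenvector computation.

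First I would recall from Section~\ref{sec:bandy} that $\cC_j \cup \Mg{\Eins_d}$ is a set of $d$ pairwise commuting and pairwise Hilbert--Schmidt-orthogonal Pauli operators (as in Lemma~\ref{lem:bandy:linear}). Writing all of them in a fixed common eigenbasis of $\cC_j$ turns them into diagonal matrices whose diagonal vectors are mutually orthogonal in $\C^d$, hence form a basis of the $d$-dimensional space of diagonal matrices. It follows that the assignment sending a common eigenvector of $\cC_j$ to the tuple of the corresponding eigenvalues is injective, so any two common eigenbases of $\cC_j$ differ only by a permutation of their vectors and by phases; in other words, they are related by a monomial matrix.

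Next I would use that conjugation by a unitary sends eigenvectors to eigenvectors with the same eigenvalue: if $A \in \cC_{\pi(j)}$ and $\ket{\psi} \in \cB_{\pi(j)}$ satisfy $A\ket{\psi} = \lambda\ket{\psi}$, then $(UAU^{\dagger})(U\ket{\psi}) = \lambda\, U\ket{\psi}$. Since $\cC'_j = U\,\cC_{\pi(j)}\,U^{\dagger}$, the orthonormal basis $\Mge{U\ket{\psi}}{\ket{\psi}\in\cB_{\pi(j)}}$ simultaneously diagonalizes every element of $\cC'_j$, so it is a common eigenbasis of $\cC'_j$. By the uniqueness established above, the basis $\cB'_j$ attached to $\cC'_j$ satisfies $\cB'_j = U\,\cB_{\pi(j)}\,W_j$ for some monomial matrix $W_j$. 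This is precisely the equivalence condition of Definition~\ref{defi:equiv:equivalence}, with the same unitary $U$ and the same permutation $\pi$, and hence $\fS$ and $\fS'$ are equivalent.

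The only delicate point — and it is a mild one — is justifying the uniqueness of the common eigenbasis up to monomial transformations, which rests on the maximality of the classes (that each contains $d-1$ non-identity, mutually orthogonal, pairwise commuting Pauli operators); without this the passage from a class $\cC_j$ to a basis $\cB_j$ is not well defined and the statement loses its meaning. All remaining steps are straightforward manipulations of eigenvectors and require no computation beyond what is sketched above.
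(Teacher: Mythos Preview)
Your proof is correct and follows essentially the same approach as the paper: the core step in both is the observation that if $v$ is an eigenvector of $A$, then $Uv$ is an eigenvector of $UAU^{\dagger}$, so conjugating the class by $U$ transports the common eigenbasis. The paper's proof is much terser and simply asserts that the monomial matrices $W_j$ ``do not play any role'' at the level of Pauli-operator classes; your argument makes this precise by explicitly establishing that the common eigenbasis of a maximal class $\cC_j$ is unique up to a monomial transformation, which is a genuine improvement in rigor over the paper's treatment.
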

\begin{proof}
 Starting with the Definition \ref{defi:equiv:equivalence} on the equivalence of MUBs, the matrices $W_j$ do not
 play any role if we observe sets of Pauli operators $\cC_j$ and $\cC'_{\pi(j)}$ that do neither have any ordering
 nor keep any phase. The equivalence has to be formulated the given way, since in the definition, we transform an
 eigenvector $v$ as $Uv$; so, considering $v$ is an eigenvector of a matrix $A$, then $U v$ is an eigenvector
 of $U A U^{\dagger}$.
\end{proof}
As the classes of both sets $\fC$ and $\fC'$ contain only Pauli operators, the unitary transformation $U$ has to be
an element of the Clifford group (cf. Appendix \ref{app:clifford}), that maps by definition Pauli operators onto
Pauli operators. A certain class $\cC_j$ can be written in terms of a $2m \times (d-1)$-matrix, where each column
represents a Pauli operator, namely one of the vectors $\vec a$ given in Equation \eqref{eqn:bandy:classes}. It can be
seen within that formula, that the $d-1$ elements can be generated by the (non-unique) so-called $2m \times m$ generator
matrix $G_j$, thus $\cC_j = \gen{G_j}$.\footnote{Normally, the symbol $\gen{A}$ refers to the set, which is generated by 
the powers of $A$. Adopting the abstract meaning of this symbol, the expression $\gen{G_j}$ refers to a generation
which is specified by Equation \eqref{eqn:bandy:classes}.}
Consequently, equivalence of two sets $\fC$ and $\fC'$ is given, if
and only if
\begin{align}\label{eqn:equiv:bandy:gen}
 \gen{G'_j} = f \gen{G_{\pi(j)}} Q_j,
\end{align}
with a symplectic matrix $f$ (that maps Pauli operators onto Pauli operators), a permutation $Q_j$ and a permutation $\pi$ on
$\MgN{d}$. In the case, we write the
generators in standard form as discussed in Section \ref{sec:standardform}, the generators $G_j$ and $G'_j$ become
unique. Then the permutation matrix $Q_j$ can be set to unity, thus neglected. For the transformation matrix
$f$ we are free to assume the block-matrix form
\begin{align}
 f = \begin{pmatrix}s&t\\u&v\end{pmatrix}.
\end{align}
As it is shown by Corollary \ref{cor:app:clifford:symplecticmatrixprop}, this matrix is symplectic if and only if
$s^t u$ and $t^t v$ are symmetric and $v^t s - t^t u = \Eins_m$. With the methods introduced so far, we are able to
prove the equivalence of all Fibonacci-based complete sets of MUBs for a specific dimension $d=2^m$ with $m \in \N^*$.

\subsection{Fibonacci-based sets}\label{sec:equiv:fibset}
To construct a Fibonacci-based set, we have to follow simply the \mbox{Steps 1.--3.} that were given in Section \ref{sec:fibset}.
It turns out, that neither the choice of the reduced stabilizer matrix $B$ is unique, nor the polynomial with
Fibonacci index $d+1$. Within this section we will show, that all Fibonacci-based sets which are based on the same
polynomial, are equivalent, independent of the choice of~$B$. In a second step it is shown, that also a different
choice of the polynomial will create an equivalent set of cyclic MUBs.\par

Applying the generator of a set to all elements rotates this set, so we can choose $\pi(0) = 0$ for Equation
\eqref{eqn:equiv:bandy:gen}, which implies $\gen{G_0} = \gen{G'_0}$ and in standard form $\bar G_0 = \bar G'_0$.
According to the construction of the Fibonacci-based sets, this generator is given
by
\begin{align}
 G_0 = \begin{pmatrix}\Eins_m \\0_m\end{pmatrix},
\end{align}
which can also be seen in the derivation of the standard form in Equation~\eqref{eqn:standfor:generatorsstandfor}.
Thus, the choice for the permutation $\pi$ maps the class of Pauli-$Z$ operators of the set $\fC$ to the class
of Pauli-$Z$ operators of the set $\fC'$. Using Equation \eqref{eqn:equiv:bandy:gen},
we have to demand that $u=0_m$; to keep $f$ symplectic, we need to set $v=(s^t)^{-1}$. If both sets represent complete
sets of MUBs, we can represent their generators in standard form, where the elements $G^z_j$ as given in Equation~\eqref{eqn:standfor:standfor}
form a matrix representation of the finite field $\F_2^m$; the same holds true for the elements $G'^z_j$. We can
formulate a lemma on equivalent MUBs:
\begin{lem}[Equivalence of certain Fibonacci-based sets]\label{lem:equiv:bandy:equivB}\hfill\\
  The choice of different reduced stabilizer matrices $B \in M_m(\F_2)$ with the same characteristic polynomial that has
  Fibonacci index $2^m+1$ in the construction of Fibonacci-based MUBs leads to equivalent sets.
\end{lem}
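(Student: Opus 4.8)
The plan is to exploit the field structure of the standard-form generators established in Lemma~\ref{lem:entprop:fibfield}. Suppose $B$ and $B'$ are two reduced stabilizer matrices with the same characteristic polynomial $\chi$, which has Fibonacci index $2^m+1$. By Lemma~\ref{lem:entprop:fibfield}, the matrices $G_j^z = F_{j+1}(B)F_j(B)^{-1}$ with $j\in\MgE{d}$ run over all nonzero elements of a copy of $\F_{2^m}$ realised inside $M_m(\F_2)$, namely $\F_2[B]$; likewise the $G_j'^z = F_{j+1}(B')F_j(B')^{-1}$ run over all nonzero elements of $\F_2[B']$. Since $\chi$ is irreducible, $\F_2[B]\cong\F_2[x]/(\chi)\cong\F_2[B']$, and I can fix an explicit isomorphism by sending $B\mapsto B'$ (both are roots of $\chi$ by Hamilton--Cayley, cf.\ Proposition~\ref{prop:fibset:rep}). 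This isomorphism is $\F_2$-linear, hence given by an invertible matrix conjugation or, more precisely, by a linear map $\phi:\F_2[B]\to\F_2[B']$ with $\phi(p(B))=p(B')$ for every $p\in\F_2[x]$.

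Next I would translate the abstract isomorphism into the symplectic equivalence required by Equation~\eqref{eqn:equiv:bandy:gen}. As noted in the discussion preceding this lemma, applying the group generator rotates a set, so I may take $\pi(0)=0$, forcing $\bar G_0=\bar G_0'$ and hence $u=0_m$, $v=(s^t)^{-1}$ in the block form of $f$. What remains is to choose $s,t$ so that $f$ maps each class $\gen{\bar G_{\pi(j)}}$ of $\fC$ to the class $\gen{\bar G_j'}$ of $\fC'$. In standard form the class generated by $\bar G_k=(G_k^z,\Eins_m)^t$ is carried by $f$ to the span of $(s\,G_k^z + t,\ u\,G_k^z+v)^t=(s\,G_k^z+t,\ (s^t)^{-1})^t$; multiplying on the right by the invertible matrix $(s^t)$ (allowed by Corollary~\ref{cor:standfor:generatorfree}) puts this in standard form with $z$-block $s^t(s\,G_k^z+t)$. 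So I need a linear-fractional-type action $G_k^z\mapsto s^t s\,G_k^z + s^t t$ on the field $\F_2[B]$ that matches the relabelling of field elements induced by $\phi$. Since $\phi$ is a field isomorphism, it is in particular additive and sends the multiplicative identity $\Eins_m$ to $\Eins_m$; writing $\phi$ as an $\F_2$-linear map $M$ on the underlying vector space, the natural choice is $s^t t=0_m$ (so $t=0_m$) and $s^t s = M$, which is consistent because $M$, being a field automorphism's matrix composed appropriately, can be arranged to be symmetric positive in the required sense — and one checks $s^t u=0$ and $t^t v=0$ are symmetric and $v^t s - t^t u = \Eins_m$ trivially, so $f$ is symplectic by Corollary~\ref{cor:app:clifford:symplecticmatrixprop}.

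I expect the main obstacle to be the symmetry constraint: the equivalence matrix $f$ must be symplectic over $\F_2$, and while the field isomorphism $\phi:\F_2[B]\to\F_2[B']$ is canonical, its matrix $M$ need not obviously factor as $s^t s$ with the block structure that keeps $f$ symplectic, nor need $t$ vanish. The honest way around this is to observe that $\F_2[B]$ and $\F_2[B']$ each come equipped with the symmetric bilinear form making $B$, $B'$ symmetric (this is exactly the content of the stabilizer matrices being in the form \eqref{eqn:fibset:B110} with symmetric reduced block), and that any two such forms on the same field — both nondegenerate and both making multiplication by the generator self-adjoint — are congruent; this congruence is precisely the matrix $s$ I want, and $\phi$ then automatically respects the forms up to this congruence. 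So the proof proper will: (i) fix $\phi$ via $B\mapsto B'$; (ii) produce $s\in\GL{m}{\F_2}$ intertwining the two symmetric forms, set $t=u=0_m$, $v=(s^t)^{-1}$; (iii) verify $f=\left(\begin{smallmatrix}s&0\\0&(s^t)^{-1}\end{smallmatrix}\right)$ is symplectic and satisfies \eqref{eqn:equiv:bandy:gen} with $\pi$ the permutation induced on $\MgN{d}$ by $\phi$; (iv) conclude via Lemma~\ref{lem:equiv:bandy:equiv}. The routine verifications in (iii) I would not grind through here.
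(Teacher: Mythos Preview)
Your standard-form computation is wrong: right-multiplying $(sG_k^z+t,\,(s^t)^{-1})^t$ by $s^t$ yields $z$-block $(sG_k^z+t)s^t=sG_k^zs^t+ts^t$, not $s^t(sG_k^z+t)$. The condition you then extract, ``$s^ts=M$ where $M$ is the matrix of the field isomorphism $\phi$'', is confused on top of this: $\phi$ is a map between the matrix subalgebras $\F_2[B]$ and $\F_2[B']$, not a left-multiplication operator on the $m\times m$ matrices $G_k^z\in M_m(\F_2)$, so no such $M$ can encode it in the way you want.

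With the computation fixed and $t=0$, what you actually need is that $sG_k^zs^t$ lands in $\{G_j'^z\}_j$ for each $k$. The paper cuts straight to this: two \emph{symmetric} matrices $B,B'\in M_m(\F_2)$ with the same irreducible characteristic polynomial satisfy $B'=sBs^t$ for some \emph{orthogonal} $s$ (i.e.\ $s^{-1}=s^t$); then $f=\left(\begin{smallmatrix}s&0\\0&s\end{smallmatrix}\right)$ is symplectic and $sG_k^zs^t=sG_k^zs^{-1}=p_k(sBs^{-1})=p_k(B')=G_k'^z$, so $\pi$ is the identity and one is done. Your ``form-intertwining'' paragraph is groping toward precisely this orthogonal-similarity statement but never pins it down. (If you would rather avoid orthogonality: for \emph{any} similarity $B'=sBs^{-1}$ one gets $sp_k(B)s^t=p_k(B')\cdot(ss^t)$; a short check shows $ss^t$ commutes with $B'$ and hence lies in $\F_2[B']^\times$, so multiplication by it permutes the nonzero field elements and a suitable $\pi$ exists. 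That also works, but it is not what your write-up argues.) Either way, the abstract scaffolding around $\phi$ and $M$ is unnecessary --- the whole proof is producing the right $s$ and checking what $G_k^z\mapsto sG_k^zs^t$ does.
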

\begin{proof}
 Two reduced stabilizer matrices $B, B' \in M_m(\F_2)$, which have to be symmetric according to Condition (i),
 formulated in Section \ref{sec:fibset}, are related as $B' = s B s^t$ with an orthogonal matrix $s \in M_m(\F_2)$, i.\,e.
 $s^{-1} = s^t$. If we
 set the transformation matrix as
 \begin{align}
  f = \begin{pmatrix}s&0_m\\0_m&s\end{pmatrix},
 \end{align}
 this matrix is symplectic. Applied to the generators of the classes $\cC_j$ of $\fC$ in standard form, we get $f \bar G_j = (s p_j(B), s)^t$,
 where $p_j(B)$ is the polynomial $F_{j+1}(B) (F_{j}(B))^{-1}$.
 As seen by Corollary \ref{cor:standfor:generatorfree}, the transformed generators can be multiplied from the right by any invertible, block-diagonal matrix,
 in particular also by $f^{-1}$. This leads to $f \bar G_j f^{-1} = (s p_j(B) s^t, \Eins_m)^t \equiv \bar G'_j$, the generators
 of the classes of the set $\fC'$, in standard form. The case of $j=0$ can be shown the same way.
\end{proof}
We can use this lemma to give a general theorem on the equivalence of Fibonacci-based sets.
\begin{thm}[Equivalence of Fibonacci-based sets]\label{thm:equiv:bandy:equiv}\hfill\\
 All Fibonacci-based complete sets of MUBs are equivalent.
\end{thm}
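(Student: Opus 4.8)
The plan is to reduce the equivalence of two arbitrary Fibonacci-based complete sets to the case already handled by Lemma \ref{lem:equiv:bandy:equivB}, so the only remaining work is to show that a \emph{different choice of irreducible polynomial} with Fibonacci index $d+1$ still produces an equivalent set. First I would recall the setup: given reduced stabilizer matrices $B, B' \in M_m(\F_2)$ with (possibly different) irreducible characteristic polynomials $p = \chi_B$ and $p' = \chi_{B'}$, both of Fibonacci index $d+1$, the standard-form generators are $\bar G_j = (p_j(B), \Eins_m)^t$ and $\bar G'_j = (p'_j(B'), \Eins_m)^t$, where $p_j(x) = F_{j+1}(x) F_j(x)^{-1}$. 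By Lemma \ref{lem:entprop:fibfield} the sets $\{p_j(B) : j \in \MgE{d}\} \cup \{0_m\}$ and $\{p'_j(B') : j \in \MgE{d}\} \cup \{0_m\}$ are each faithful matrix representations of $\F_{2^m}$, and by Proposition \ref{prop:fibset:rep} the algebras $\F_2[B]$ and $\F_2[B']$ are each isomorphic to $\F_{2^m}$.

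The key step is to construct a single symplectic block-matrix $f$ that carries the whole family $\{\bar G_j\}$ to $\{\bar G'_j\}$ up to the right-multiplication freedom of Corollary \ref{cor:standfor:generatorfree}. As in the proof of Lemma \ref{lem:equiv:bandy:equivB}, one sets $\pi(0)=0$, which forces $\gen{G_0}=\gen{G'_0}$, hence the lower-left block $u=0_m$ and, for symplecticity, $v=(s^t)^{-1}$; also $t$ may be taken $0_m$. So $f = \mathrm{diag}(s, (s^t)^{-1})$ and the requirement becomes: find an invertible $s \in M_m(\F_2)$ such that $s\, p_j(B)\, s^{-1}$, after absorbing a suitable right factor from $\GL{m}{\F_2}$, equals $p'_{\sigma(j)}(B')$ for some index permutation $\sigma$. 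Since $\F_2[B] \cong \F_{2^m} \cong \F_2[B']$ as $\F_2$-algebras, there is a ring isomorphism $\phi: \F_2[B] \to \F_2[B']$; because every automorphism of a finite-dimensional algebra over a field acting on a faithful module is implemented by a change of basis, $\phi$ is realized by conjugation by some $s \in \GL{m}{\F_2}$ after identifying the two $m$-dimensional $\F_2$-spaces. Conjugation by $s$ sends $B$ to an element of $\F_2[B']$ whose minimal polynomial is $p$; since $p$ and $p'$ are both irreducible of degree $m$ over $\F_2$, and $\F_{2^m}$ contains a root of every such polynomial, this is consistent, and $s B s^{-1}$ is some conjugate of $B'$ inside $\F_2[B']$ — i.e. $s B s^{-1} = g(B')$ for a polynomial $g$ with $\chi_{g(B')}=p$. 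The point is then that the map $j \mapsto \sigma(j)$ induced on the field elements $p_j$ is a well-defined permutation of the nonzero elements of the field (it is the restriction of the field automorphism/isomorphism), and by Corollary \ref{cor:standfor:generatorfree} the resulting generators coincide with the standard-form generators of $\fC'$ up to relabeling, which is exactly equivalence in the sense of Lemma \ref{lem:equiv:bandy:equiv}. One must also check the $j=0$ case, which is immediate as in Lemma \ref{lem:equiv:bandy:equivB} since both $\bar G_0 = \bar G'_0 = (\Eins_m, 0_m)^t$.

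Finally I would assemble the argument: Lemma \ref{lem:equiv:bandy:equivB} already gives equivalence when $p = p'$; the paragraph above extends this to $p \neq p'$; and since equivalence of MUBs is transitive (composition of Clifford unitaries and permutations), \emph{any} two Fibonacci-based complete sets — regardless of which polynomial of Fibonacci index $d+1$ and which symmetric $B$ realizing it were chosen — are equivalent. The main obstacle I anticipate is the middle step: making precise and rigorous the claim that the abstract $\F_2$-algebra isomorphism $\F_2[B] \cong \F_2[B']$ is implemented by an explicit conjugating matrix $s$ that simultaneously (i) is a legitimate symplectic transformation after forming $f=\mathrm{diag}(s,(s^t)^{-1})$, and (ii) maps the ordered family $\{p_j(B)\}$ onto $\{p'_j(B')\}$ only up to the permitted right-multiplication, so that the index permutation $\pi$ is well defined. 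One has to be careful that $s$ need not be orthogonal here (unlike in Lemma \ref{lem:equiv:bandy:equivB}, where $B' = sBs^t$ with $s^{-1}=s^t$ sufficed because the polynomials agreed); the symmetry of $B'$ and the symplecticity of $f$ interact, and verifying that a suitable $s$ can always be found — equivalently, that symmetric companion matrices of $p$ and of $p'$ are ``symplectically conjugate'' — is the delicate core of the proof.
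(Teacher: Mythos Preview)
Your strategy is sound and your identification of the delicate point is accurate, but the paper circumvents exactly that obstacle by a simpler trick. Instead of seeking an invertible $s$ that conjugates the algebra $\F_2[B]$ onto $\F_2[B']$ (and then worrying whether $s$ can be taken orthogonal so that $s(\cdot)s^{-1}$ and $s(\cdot)s^t$ agree), the paper stays inside one algebra: since $\F_2[B']\cong\F_{2^m}$ contains a root of every degree-$m$ irreducible polynomial over $\F_2$, there is an element $\tilde B = g(B') \in \F_2[B']$ whose characteristic polynomial is $p=\chi_B$. Because $\tilde B$ is a polynomial in the symmetric matrix $B'$, it is itself symmetric; and because $\F_2[\tilde B]=\F_2[B']$, the two families of standard-form $Z$-parts $\{p_j(\tilde B)\}_j$ and $\{p'_j(B')\}_j$ are the \emph{same set of matrices}, merely re-indexed. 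Hence the MUBs built from $\tilde B$ and from $B'$ coincide up to a permutation $\pi$, and what remains is to compare $\tilde B$ with $B$: two symmetric matrices with the \emph{same} characteristic polynomial $p$. That is exactly the hypothesis of Lemma~\ref{lem:equiv:bandy:equivB}, which the paper then invokes.

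The upshot is that your ``delicate core'' --- producing an orthogonal $s$ relating symmetric matrices with \emph{different} characteristic polynomials --- never needs to be faced. The orthogonal conjugation is only ever required between symmetric matrices sharing a characteristic polynomial, and that is precisely what Lemma~\ref{lem:equiv:bandy:equivB} already supplies. Your direct-conjugation approach would work if you could establish that the module isomorphism $\F_2^m\cong\F_2^m$ intertwining the two $\F_{2^m}$-actions can be chosen orthogonal, but the paper's ``change the generator inside one algebra, then apply Lemma~\ref{lem:equiv:bandy:equivB}'' avoids proving anything new.
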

\begin{proof}
 It was shown by Lemma \ref{lem:equiv:bandy:equivB}, that all reduced stabilizer matrices with the same characteristic polynomial with Fibonacci
 index $d+1$ result in equivalent sets of MUBs. In the case that the two stabilizer matrices $B$ and $B'$ have different characteristic polynomials
 $p$ and $p'$, respectively, with Fibonacci index $d+1$, we can use the following consideration: In general, the two different irreducible polynomials
 have degree $m$ and are defined over the ground field $\F_2$. We pick a pair of their roots, namely $\beta$ and $\beta'$, respectively. The adjunction of $\beta$
 to $\F_2$ results in the field $\F_2(\beta) \cong \F_2[x]/p \F_2[x] \cong \F_{2^m}$, a similar consideration holds for $\beta'$. But as both extensions are \emph{Galois extensions},
 the elements in $\F_2(\beta)$ and in $\F_2(\beta')$ are equivalent up to a permutation \cite[Chapter~4.1]{Bosch06}. The elements of those fields appear
 in the Pauli-$Z$ part of the generators in standard form (cf. Equation \eqref{eqn:standfor:generatorsstandfor}), thus both roots result in the same set of generators.
 If we chose a symmetric matrix $B \in M_m(\F_2)$ to represent the root $\beta$, we have the freedom shown in Lemma \ref{lem:equiv:bandy:equivB},
 where the minimal polynomial of $B$ equals the minimal polynomial of $\beta$.
\end{proof}
In other words, Theorem \ref{thm:equiv:bandy:equiv} shows, that the minimal polynomials of different elements
of the finite field $\F_{2^m}$ are possible minimal polynomials of a polynomial ring $\F_2[x]$ which is congruent
to the field itself; the chosen element only needs to have a minimal polynomial with degree $m$ (which is irreducible by definition).

\subsection{(In-)Homogeneous sets}
To show the equivalence of all sets which are based on the construction of Bandyopadhyay \etal{} in a similar fashion as for the Fibonacci-based sets,
further investigations are essential and exceed the scope of this work. As already stated above, it was shown in \cite{Romero05}
that all discussed sets in Hilbert space dimension $d=2^3$ are equivalent and the equivalence of different sets in higher dimensions
was indicated. Considerations in the fashion of Section~\ref{sec:equiv:fibset} would also lead to explicit transformations between
the different sets. In a first step, one may prove the equivalence of homogeneous sets with an additive group structure in the $Z$ component
of the class generators to Fibonacci-based sets and in a second step their equivalence with sets that possess only an additive semi-group
structure (cf. Sections \ref{sec:homogeneous:group} and \ref{sec:homogeneous:semigroup}). As inhomogeneous sets do not contain
the standard basis, they are by Definition \ref{defi:equiv:equivalence} always equivalent to sets with a standard
basis, where a unitary operation for an arbitrary basis of this set exists, that transforms this set into a homogeneous set.

\subsection{Different constructions}
We can relate the Fibonacci-based sets to different constructions by following the discussion given in \cite{Bandy02}: Theorem 4.4
introduces a construction that is equivalent to the standard form of homogeneous sets (cf. Equation \eqref{eqn:standfor:standfor}).
In Section 4.3 the construction is limited to sets with an additive group structure (cf. Equation \eqref{eqn:standfor:group}). Finally, at
the end of this section, the group structure is fixed in a way to obtain a field structure. Bandyopadhyay \etal{} took this
construction from the Wootters and Fields construction \cite{WF89}, indicating their equivalence.\footnote{In fact, the similarity
of the phase factor in Equation \eqref{eqn:wootters:nonstandnofieldtwoH} and Equation \eqref{eqn:unitop:pj*} is a strong evidence of this equivalence.}
Godsil and Roy \cite{Godsil09} relate these sets also with the sets constructed by Klappenecker and Rötteler
\cite{Klappenecker04} and again to those of Bandyopadhyay \etal{}.


\section{Class-permutation operators}
Starting with the definition of equivalence given by Definition \ref{defi:equiv:equivalence}, the
question arises whether the permutation $\pi$ on the indexing of the different bases can be
realized by a symplectic transformation. In the case of Fibonacci-based sets we discovered a set
of operators that seems to be capable to implement the set of permutations by a set of symplectic
transformations. We call this set the set of \emph{class-permutation operators}%
\index{Class-permutation operators}. The construction is quite intuitive, to follow the ideas, we
start with the definition of the first class-permutation operator $A_1$, that can be deduced from 
the generator of a Fibonacci-based set in the form of Equation \eqref{eqn:fibset:B110} as
\begin{align}
 A_1 = \begin{pmatrix}\Eins_m & B \\ 0_m & \Eins_m\end{pmatrix},
\end{align}
with the same reduced stabilizer matrix $B \in M_m(\F_2)$. Since the generator $G_j$ with $j \in \MgN{d}$
of any class is given by Equation \eqref{eqn:fibset:gen}, the application of $A_1$ to this class generator leads
to
\begin{align}
 A_1 G_j =& \begin{pmatrix}\Eins_m & B \\ 0_m & \Eins_m\end{pmatrix} \begin{pmatrix} F_{j+1} (B)\\ F_{j}(B)\end{pmatrix}
   = \begin{pmatrix} F_{j+1} (B) + B F_{j}(B) \\ F_j(B)\end{pmatrix} = \begin{pmatrix} F_{j-1} (B) \\ F_j(B)\end{pmatrix}.
\end{align}
Using Lemma \ref{lem:fibset:symmetry}, we finally find
\begin{align}
 A_1 G_j = \begin{pmatrix} F_{d+1-j+1} (B) \\ F_{d+1-j}(B)\end{pmatrix} \equiv G_{d+1-j},
\end{align}
with $d=2^m$. The first class-permutation operator $A_1$ inverts therefore the ordering of the classes $\cC_j$,
where $G_0 \rightarrow G_{d+1} \equiv G_0$ keeps its position. The symplecticity of the operator $A_1$
is obvious and can easily be checked with the help of Corollary \ref{cor:app:clifford:symplecticmatrixprop}.
In this fashion, we may construct further class-permutation operators as
\begin{align}
 A'_l = \begin{pmatrix} F_l(B) & F_{l+1}(B)\\ 0_m & F_l(B) \end{pmatrix},
\end{align}
with $l \in \MgE{d}$, where in this case the properties of Corollary \ref{cor:app:clifford:symplecticmatrixprop}
are not fulfilled. To obtain a set of symplectic operators we multiply this set from the right by a
block-diagonal matrix $A^{\times}_l$ and get
\begin{align}
 A_l := A'_l \cdot A^{\times}_l &= \begin{pmatrix} F_l(B) & F_{l+1}(B)\\ 0_m & F_l(B) \end{pmatrix} \begin{pmatrix} (F_l(B))^{-1} & 0_m\\ 0_m & (F_l(B))^{-1} \end{pmatrix}\\
                     &= \begin{pmatrix} \Eins_m & F_{l+1}(B) (F_l(B))^{-1}\\ 0_m & \Eins_m \end{pmatrix}.
\end{align}
By Proposition \ref{prop:fibset:rep} the mentioned polynomials in $B$ exist; we also like to point out the striking similarity of
these operators with the generator matrices in standard form\index{Standard form} given by Equation \eqref{eqn:standfor:generatorsstandfor}.
If we apply these class-permutation operators to an arbitrary generator in standard form, we get for $A_l\cdot \bar G_0 = \bar G_0$ and for $j \neq 0$,
\begin{align}\label{eqn:equiv:classperm:fibperm}
 A_l \cdot \bar G_j = \begin{pmatrix} F_{j+1}(B) (F_j(B))^{-1} + F_{l+1}(B) (F_l(B))^{-1}\\ \Eins_m \end{pmatrix}.
\end{align}
Thus, the operator $A_l$ adds a fixed polynomial of $B$ to the Pauli-$Z$ operator part of the generator $G_j$ in standard
form. As the set of all generator matrices contains all polynomials in $B$, the operator $A_l$ permutes these generators
according to fundamental finite field theory. Since we have $d$ different such class-permutation operators $A_l$, we realize all
basic permutations which are possible this way. Finally, the set of elements of a field is invariant under the addition of any
of its elements, so we can change the position of an arbitrary generator to any new position with this set of transformations.
But it is not clear if all permutations $\pi$ on $\MgE{d}$ can be realized. If so, the permutations can be expanded on the
set $\MgN{d}$ by adding the MUB generator $C$ to the set of permutation operators, since it permutes the classes cyclically, namely
\begin{align}
 C \cdot G_j = G_{(j+1) \; \mathrm{mod} \; (d+1)}.
\end{align}

For general homogeneous sets of MUBs, an appropriate set of class-per\-mu\-ta\-tion operators can be constructed by naturally expanding
Equation \eqref{eqn:equiv:classperm:fibperm} for homogeneous sets, which results in the set of operators
\begin{align}
 A_l^{\mathrm{hom}} := \begin{pmatrix} \Eins_m & F_{l+1}(B) (F_l(B))^{-1} R\\ 0_m & \Eins_m \end{pmatrix},
\end{align}
and leads to the same effective permutations as for the Fibonacci-based sets.

\chapter{Conclusions and further work}\label{chap:conclusionsMUBs}
\emph{Mutually unbiased bases} (MUBs) find their applications in the fields of \emph{quantum state estimation}
and \emph{quantum key distribution}. By their relation to mathematical objects like \emph{orthogonal Latin squares},
\emph{symplectic spreads} and many more, considerations result in immediate effects in these neighboring fields. Many (even fundamental) questions
on MUBs are still unsolved. The most famous are the question of the existence of a set with more than three MUBs in dimension six
(and how large sets may be in all other non-prime power dimensions), as well as the number of inequivalent sets in general.\par
What we treat in this work is the problem of explicitly
constructing \emph{cyclic mutually unbiased bases} in a straightforward way. Those bases have advantages in theoretical tasks as well as in experimental
implementations. They provide also a reduced formulation of a concrete set of MUBs. It turns out that the introduced \emph{Fibonacci-based sets} have
nice properties and are related to well-known algebraic questions and may solve some of them if a construction for the \emph{symmetric companion matrix} is found.
Generalizations indicate the potential of this approach and produce sets with different \emph{entanglement properties} which can be observed in the
\emph{standard form}. A relation to an open conjecture given by Wiedemann 1988 in finite field theory is discovered and serves a realization of this conjecture which may
lead to a proof, as well as a recursive construction of cyclic MUBs in an infinite subset of the dimensions. Most fortunately, such
a construction is not only found in terms of finite field theory, which can be used immediately for a discussed
implementation in \emph{quantum circuits}, but also in terms of unitary matrices. The algorithmical formulation of an implementation strategy that may be suitable for all MUBs based on the
investigations of this work generalizes these observations and leads to feasible implementations. Approaches to prove the conjecture of Wiedemann, a 
positive test of this conjecture for $k \in \MgE{11}$ in dimensions $d=2^{2^k}$ (which is limited by the largest known prime-number factorization of Fermat numbers)
and resulting cyclic MUBs for the Fibonacci-based sets for
dimensions $d=2^m$ with $m=\MgE{600}$ complete the considerations. The examination of the construction schemes relates the discussed MUBs
to known constructions on the level of \emph{equivalence} of the measurement behavior.\par
Obviously, several questions remain open, but lead with possible generalizations of the derived results to promising future investigations. At first,
it seems to be possible to construct a symmetric companion matrix in order to generate the cyclic MUBs directly from the algebraic results. For more
general sets, the derivation of a relation between the entanglement properties and the form of the stabilizer matrix would be valuable. The equivalence
transformations between all sets would show their equivalence and be useful in order to observe the transformations of their properties. Indications
on generalizations of the Fermat-based sets and the use of the doubling scheme for MUBs over the field of real numbers, should be followed.
Considerations for odd-prime power dimensions will probably lead to somehow different constructions as they refer to finite fields with an odd characteristic
(that behave entirely different comparing with an even characteristic), but would again expand the dimensions where sets of MUBs with a reduced
set of generators can be created. Finally, it would be nice to know if the derived methods can be used in a similar way to construct inequivalent sets
of MUBs by replacing the set of Pauli operators with different sets.

\part{Quantum public-key encryption}

\chapter{Introduction}\label{chap:georgios}
The foundation for the field of quantum cryptography was laid in the early seventies of the last century,
when Wiesner had his prospective ideas on \emph{quantum money} that cannot be counterfeited, based on a physical
uncertainty principle (and the fact, that an unknown quantum state cannot be cloned perfectly \cite{NoCloning82}).
Shortly after this work appeared in 1983\footnote{The manuscript by Wiesner on quantum money was
rejected by different journals and finally presented at a conference \cite{Wiesner83}.}, the field emerged with
the \emph{quantum key distribution} (QKD) protocol of Bennett and Brassard \cite{BB84} which is
used for symmetric encryption schemes. Different modifications of this protocol were discussed
\cite{Ekert91,Bennett92} and different security proofs were given later \cite{Mayers96, LC99, SP00, GL03}.%
\footnote{A more detailed discussion on the development of symmetric QKD protocols is given in
Part \ref{part:mubs} in Section \ref{sec:qkd}.} With the new millennium, the potential benefit of the
properties of quantum physics was also discussed for further cryptographic primitives.
Examples are \emph{quantum digital signatures} \cite{GC01}, \emph{quantum fingerprinting} \cite{BCWW01} and
\emph{quantum direct communication} \cite{BF02}. Another primitive is referred to as
\emph{quantum public-key encryption} (QPKE) (see \cite{OTU00}), which will be of interest here.\footnote{
Many propositions for such a scheme were given \cite{KKNY05, HKK08, Kak06, Niko08}.}\par
In contrast to the QKD, the aim of this scheme is to reduce the number of keys needed in a network with a
large number of parties which want to exchange secret information pairwise. To implement a QKD protocol in a network where all parties should be able to communicate
securely, the total number of keys scales quadratically with the number of parties, where in the QPKE scheme this
number scales linearly.\footnote{A different approach, where QKD protocols are used, implements a \emph{key-distribution center}
(KDC), which is a trusted third party, but obviously also an attractive target.}\par
The discussion in this Part of the work is based on a scheme given by Nikolopoulos \cite{Niko08} which achieves its security
from so-called \emph{quantum one-way functions} (qOWF). At first, we will redraw within this chapter the protocol
itself. In Chapter~\ref{chap:invest}, a detailed analysis of the security of this protocol against a powerful \emph{individual
attack} is given as well as an idea to enhance the security against attacks by a \emph{noisy preprocessing} method. The
investigations will be summarized finally in Chapter \ref{chap:sumpk}.

\section{Single-qubit-rotation protocol}\label{sec:georgios}
A proposition for a QPKE protocol was given by Nikolopoulos in 2008 \cite{Niko08}\nocite{Niko08E} which can be
described in a nice way and therefore used to calculate the robustness against different
attacks and easily expanded with preprocessing steps, error performance and so on. The
security of the protocol is based on a \emph{quantum one-way function}\index{Quantum one-way function}.
Alice chooses a classical private key randomly and uses the mentioned function in order to
create as much public keys--which are given as qubit strings--as wanted (and proved to
be secure against certain classes of attacks). A party Bob, which wants to send a message to Alice, can apply for such a key and
perform a determined transformation on each qubit depending on the corresponding bit value
of the message and send it back. As Alice knows by the public key the initial states
of the qubits, she is able to extract the information Bob encrypted. We will sketch in this
section the basic properties of this protocol; a more substantial security analysis will be given in
Sections \ref{sec:privkey} and \ref{sec:message}.\par
The protocol starts by creating a classical private key which is the integer string
$\key = (k_1, \ldots, k_N)$ with $N \in \N^*$ and uses a security parameter $n \in \N^*$, \mbox{$n\gg 1$}.
The key generation part of the protocol is the following:
\begin{enumerate}
 \item Take a (public) random positive integer $n \gg 1$.
 \item Generate a (private) random integer string $\key = (k_1, \ldots, k_N)$ with $N \in \N^*$, where each
     integer is individually chosen randomly and independently from $\Z_{2^n}$.
 \item Prepare $T' \in \N^*$ copies of the (public) $N$-qubit \emph{public-key} state from the \emph{private key} $\textbf k$ as
   \begin{align}\label{eqn:georgios:genPK}
    \ket{\Psi_{\key}(\theta_n)} = \bigotimes_{j=1}^N \ket{\psi_{k_j} (\theta_n)},
   \end{align}
  with
   \begin{align}\label{eqn:georgios:PKsingle}
    \ket{\psi_{k_j} (\theta_n)} = \cos \left(\frac{k_j \theta_n}{2}\right) \ket{0_z} + \sin \left(\frac{k_j \theta_n}{2}\right) \ket{1_z},
   \end{align}
  where $\ket{0_z}$ and $\ket{1_z}$ refer to the eigenstates of the Pauli operator $\sigma_z$.\label{enum:georgios:genPK:public}
\end{enumerate}
An appropriate lower limit on the value of $n$ depends on the number of
provided public keys; it will be derived in Section \ref{sec:privkey}.
A copy of the public key
$\ket{\Psi_{\key}(\theta_n)}$ is simply given by the tensor product of $N$ individual public key qubits
$\ket{\psi_{k_j} (\theta_n)}$. Any such qubit can be represented as a \emph{Bloch vector}\index{Bloch vector}
(cf. Appendix \ref{eqn:pauli:blochvector})
\begin{align}\label{eqn:georgios:bloch}
 \bm R_j(\theta_n) = \cos ( k_j \theta_n) \sigma_z + \sin ( k_j \theta_n) \sigma_x,
\end{align}
with the Pauli operators $\sigma_z$ and $\sigma_x$. Therefore, each chosen bit of the private key $k_j$
results in a qubit that is in the state $\ket{\psi_{k_j} (\theta_n)}$, where the protocol defines the
angles to be equidistant and determined by 
\begin{align}
 \theta_n := \pi/2^{n-1}.
\end{align}
In other words, the public key is a representation of the private key, where each integer $k_j$ is represented by
the angle $\pi/2^{n-1}$ and implemented as a rotation of a qubit within the $z$-$x$-plane of the Bloch sphere as
can be seen in Figure \ref{fig:georgios:xzplane}. This mapping is also called a \emph{quantum one-way function},
where the probability is large to generate the quantum state from the private key and the probability is low 
to reconstruct the private key perfectly from a single copy of the quantum state of the public key, if the initial
transformation is unknown.

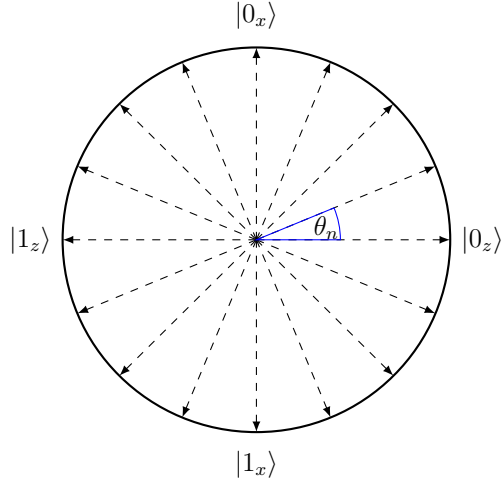
\begin{figure}[t]
  \centerline{
    \begin{tikzpicture}[scale=0.85, transform shape] 
      \draw [thick] (0,0) circle (3cm);
      \node at (0,3.5) {\ket{0_x}};
      \node at (0,-3.5) {\ket{1_x}};
      \node at (3.5,0) {\ket{0_z}};
      \node at (-3.5,0) {\ket{1_z}};
      \foreach \angle in {22.5, 45, ..., 360}
 	\draw[dashed, black, -latex] (0,0) -- (xyz polar cs:angle=\angle, radius=3);
      \draw[blue] (0,0) -- (13mm, 0mm) arc (0:22.5:13mm) -- cycle;
      \draw (10:11mm) node {$\theta_n$};
    %
    \end{tikzpicture}
  }

  \caption{For $n=4$, a single private key bit has one of $16$ different values, which is transformed into one of these $16$ equidistant
   states of the corresponding public-key qubit within the \mbox{$z$-$x$-plane} of the Bloch sphere.}
  \label{fig:georgios:xzplane}
\end{figure}

Alice sends the copies of the public key directly to parties like Bob that want to send encrypted messages back to Alice.
Alice does not delegate the distribution to a so-called
\emph{key distribution center} (KDC).\footnote{Such a KDC which
distributes public keys has to be \emph{unconditionally trusted}.}
Bob may use one of those keys in order to encrypt a message
$\bm{m}$ which has a length of at most $N$ bits. For simplicity and w.\,l.\,o.\,g. we consider a message which
is exactly $N$ bit long, thus $\bm{m} = (m_1, \ldots, m_N)$.\par
The encryption of the protocol is defined as
follows:
For each bit $m_j$ of the message with $j \in \MgE{N}$, apply the operator $\mathcal{E}_{m_j}$ to the qubit
$\ket{\psi_{k_j} (\theta_n)}$ of the public key, where this operator is given by
\begin{align}
 \mathcal{E}_{m_j} = \mathcal{R}_y (m_j \pi),
\end{align}
with $\mathcal{R}_y(\alpha) = \EZ^{-\iE \alpha/2 \sigma_y}$ being a rotation around the $\sigma_y$-axis of the
Bloch sphere. A rotation is therefore applied to the original public key state, if the corresponding bit value of
the message is one and the original state is left untouched if the value equals zero, resulting in the (quantum) cipher
state
\begin{align}\label{eqn:georgios:cipher}
 \ket{\bm X_{\bm k, \bm m} ( \theta_n)} = \bigotimes_{j=1}^N \mathcal{E}_{m_j} \ket{\psi_{k_j}(\theta_n)} = \bigotimes_{j=1}^N \ket{\chi_{k_j,m_j} (\theta_n)}.
\end{align}
In the last step, this cipher state is sent by Bob back to Alice who is able to undo the initial rotations and measure in
the Pauli-$\sigma_z$ basis. This part of the protocol reads as:
\begin{enumerate}
 \item Undo the initial rotations on the cipher state of Equation \eqref{eqn:georgios:cipher}:
 \begin{align}
  \ket{\bm M_{\bm m}} = \bigotimes_{j=1}^N \mathcal{R}_y^{-1} (k_j \theta_n) \ket{X_{k_j}(\theta_n)} = \bigotimes_{j=1}^N \ket{M_{m_j}}.
 \end{align}
 \item Measure each qubit of the resulting state in the eigenbasis of the Pauli-$\sigma_z$ operator to
       extract the sent message.
\end{enumerate}
Finally, Alice has received the classical message from Bob.\par
But conversely, if the cipher state is leaked to the
eavesdropper Eve, she may measure each qubit in a random basis and obtain already by this simple attack the correct
bit value of each message bit with a probability of $3/4$. To avoid similar attacks, Nikolopoulos introduced a second security
parameter
\begin{align}\label{eqn:georgios:s}
 s \in \N^*.
\end{align}
In this generalization, each message bit $m_j$ is encoded first into an $s$-bit codeword~$\bm w_j$ and subsequently
encrypted into $s$ qubits of the public key. Therefore, the public key needs to have at least $s$ times the length of
the message, again we concentrate for simplicity on the case that this ratio is given exactly. The codeword is chosen as follows:
If the message bit $m_j$ is zero, take a codeword $\bm w_j$ of length $s$ randomly from the set of codewords with even parity.
Appropriately, if the message bit $m_j$ equals zero, the codeword is randomly chosen from the set of codewords with odd parity.
In principle, the message is only extended by this procedure, so we can use the already defined steps of the protocol by applying the
replacement
\begin{align}\label{eqn:georgios:word}
 m_j \rightarrow \bm{w}_j.
\end{align}
Nikolopoulos derived an upper bound \cite{Niko08}, that limits the number of public keys Alice can distribute by keeping the
security of the private key against a powerful individual attack which is of interest in Section \ref{sec:message}.
Discussions on a \emph{chosen-plaintext} attack, a \emph{forward-search} attack (cf. also \cite{NI09}) and a
\emph{chosen-ciphertext} were given by Nikolopoulos \cite{Niko08}. A very general attack on the private key as well as a powerful and more
direct attack on the encoded message by using the information of all distributed keys will be discussed in Chapter \ref{chap:invest},
and leads to a limit for the number of public keys which may be distributed in order to achieve a certain security.

\chapter{Security analysis} \label{chap:invest}
For the QPKE protocol presented in Section \ref{sec:georgios}, different attacks were already
discussed by Nikolopolous and Ioannou \cite{Niko08, NI09}, but do not always benefit from the complete set
of available public keys. The aim of this chapter is to take advantage of the complete set and to
derive the strength of practical attacks. In Section \ref{sec:prelim}, preliminary considerations will
be discussed.  For an asymmetric protocol, it is clear that the
security of the private key is the most important goal in order to guarantee the secrecy of the
communicated messages, as if it is broken, all messages which use that key will be leaked. In
Section \ref{sec:privkey}, this security will be discussed. Direct attacks on the message will be investigated
subsequently in Section \ref{sec:message}, where it will be shown that a certain individual attack\footnote{Individual
attacks denote attacks where the transmitted quantum systems are measured individually. Collective attacks denote attacks where
information about the quantum systems is collected, e.\,g. by entangling them to ancilla systems and measuring them after the protocol
is finished and information of post-processing protocols can be taken into account.} possesses
an almost equal strength as collective attacks.\footnote{The ideas discussed in those sections were published
in 2012 \cite{SNA12}.}
Finally, in Section \ref{sec:noisyprepro}, a potential extension of the protocol will be considered
which uses a noisy-preprocessing step in order to reduce the information an eavesdropper may leak.
Obviously, this step influences considerably the setup of the scheme, but it is in principle a
generalization which takes partly the effect of noise into account.

\section{Preliminary considerations} \label{sec:prelim}
All values which refer to security issues of the protocol will be considered in the following in the case of a \emph{single run}
of the protocol, meaning the information of a single bit which is transmitted. For reasons of simplicity, we will
omit the index $j$ which refers to the specific bit, in the following. In some cases, averages of quantities
will be taken into account, but in principle a message has a finite length and ensuring the security of a single
bit is a stronger constraint.\par
To visualize this difference, let us compare the conditional probability of success of an eavesdropping strategy in both cases.
Within a single run of the protocol, values for the key $\key$ and the codeword $\word$ as defined in
Equations \eqref{eqn:georgios:genPK} and \eqref{eqn:georgios:word} are fixed, which lead to a mean probability of success%
\footnote{Probabilities which are denoted in the following in general by $P(\suc \vert X)$, refer to the probability for an attacker
to eavesdrop successfully, given the event $X$. The abbreviation ``\suc``, meaning ``success``, is chosen for a better readability.}
by averaging over all possible keys, namely
\begin{align}
 \bar P (\suc \vert \word) = \sum_{\key} P(\key) P(\suc \vert \key, \word) = \frac{1}{2^{nN}} \sum_{\key} P ( \suc \vert \key, \word),
\end{align}
where the keys are uniformly distributed over the set $\Mg{0,1}^{nN}$. Additionally, an optimal eavesdropping strategy has to be symmetric
with respect to the codewords, as the value for a one-bit message is given by $m \in \Mg{0,1}$ with equal probability
and the conditional probability of obtaining the codeword $\word$ is $P(\word \vert m) = 2^{-(s-1)}$ (cf. Equation \eqref{eqn:georgios:s}).
Thus the probability of any codeword to occur is given by
\begin{align}
 P(\word) = \sum_m P(\word \vert m) / 2 = 2^{-s}.
\end{align}

\section{Security of the private key}\label{sec:privkey}
As given by the third step of the key generation part of the protocol which is introduced in Section \ref{sec:georgios},
$T'$ copies of the public key exist. According to Eve's chosen strategy, she may use for practical reasons $\tau < T'$ copies of the
public key.\footnote{If Eve wants to attack a message, already one key is consumed for the encryption; if her aim is to attack the
key, she may in principle use all $T'$ copies of the public key, but this knowledge does not help her to gain any profitably information.}
Obviously, she can obtain those keys directly from Alice, as they are publicly available. Without measuring
the state of these copies, she holds \emph{a priori} a mixed quantum state, given by
\begin{align}
 \rho_{\mathrm{prior}}^{(\tau)} =& \frac{1}{2^n} \sum_{k'=0}^{2^n-1} \left( \ketbra{\psi_{k'} (\theta_n)}{\psi_{k'} (\theta_n)} \right)^{\otimes \tau}\\
   =& \frac{1}{2^n} \sum_{k'=0}^{2^n-1} \left( \ketbra{\bm{\phi}_{k'}^{(\tau)} (\theta_n)}{\bm{\phi}_{k'}^{(\tau)} (\theta_n)} \right),\label{eqn:prelim:rhoprior}
\end{align}
where we have abbreviated the $\tau$-qubit state as $\ket{\bm{\phi}_{k'}^{(\tau)} (\theta_n)} := \ket{\psi_{k'} (\theta_n)}^{\otimes \tau}$.
To bring Equation \eqref{eqn:prelim:rhoprior} into a nice form, we can take advantage of the following consideration:
A possible basis for a single qubit is given by the set $\Mg{ \ket{0_z}, \ket{1_z} }$ of states, which are eigenstates of the Pauli-$\sigma_z$
operator. If we treat a tensor product of $\tau$ single qubits, the basis vectors are consequently given by the set
\begin{align}\label{eqn:privkey:basisi}
 \Mg{\ket{i}: i \in \Z_{2^\tau}},
\end{align}
where $\ket{i}$ represents an eigenstate of the $\tau$-folded tensor product of Pauli-$\sigma_z$ operators $\sigma_z^{\otimes \tau}$. 
Eve knows from the construction of the protocol, that these $\tau$ copies of the qubit (which individually represent the public key) are equal. 
The state $\ket{\bm{\phi}_{k'}^{(\tau)} (\theta_n)}$ is therefore invariant under a permutation of the copies. Thus, a decomposition of that
state in the basis defined by Equation \eqref{eqn:privkey:basisi} will have equal parameters in those terms, where the number of ones in the
string $i$ coincides. The state $\ket{\bm{\phi}_{k'}^{(\tau)} (\theta_n)}$ can then be represented by the basis
\begin{align}
 \ket{l} = \left( \sum_{i=1}^{2^\tau} \delta_{l,H(i)} \ket{i} \right) /\sqrt{\binom{\tau}{l}},
\end{align}
where the \emph{Hamming weight}\index{Hamming weight} $H(i)$ is the number of ones in the string $i$. These
$\tau +1$ different states $\ket{l}$ (with $l \in \MgN{\tau}$) refer to subspaces of the $2^{\tau}$ dimensional Hilbert space.\par
We can use this basis in order to reformulate the state of Equation \eqref{eqn:prelim:rhoprior} with
\begin{align}
 \ket{\bm{\phi}_{k'}^{(\tau)} (\theta_n)} = \sum_{l=0}^{\tau} \sqrt{\binom{\tau}{l}} f_{\tau,l} (k' \theta_n) \ket{l},
\end{align}
where
\begin{align}
 f_{\tau,l} ( k' \theta_n) = \left( \cos \left(\frac{k' \theta_n}{2}\right) \right)^{\tau-l} \left( \sin \left(\frac{k' \theta_n}{2}\right) \right)^{l}.
\end{align}
Introducing the coefficients $C_{l,l'}$ with
\begin{align}
 C_{l,l'} = \frac{1}{2^n} \sqrt{\binom{\tau}{l}\binom{\tau}{l'}} \sum_{k'=0}^{2^n-1} f_{\tau,l}(k' \theta_n) f^*_{\tau,l'}(k' \theta_n),
\end{align}
leads finally to
\begin{align}\label{eqn:privkey:rhopriorll}
 \rho_{\mathrm{prior}}^{(\tau)} = \sum_{l,l'=0}^{\tau} C_{l,l'} \ketbra{l}{l'}.
\end{align}
We can use this form of the \emph{a priori} state Eve may hold, in order to obtain an upper bound on the amount of information about the private key,
Eve may extract from the state $\rho_{\mathrm{prior}}^{(\tau)}$. Therefore, we need the \emph{Holevo quantity}\index{Holevo quantity} $\chi$ which limits
the average amount of information $I_{\mathrm{av}}$ that can be extracted from an unknown state $\rho$. It is given by
\begin{align}
 \chi = S(\rho) - \sum_k p_k S(\rho_k) \geq I_{\mathrm{av}},
\end{align}
where $S(\rho)$ denotes the usual \emph{von-Neumann entropy} of a state $\rho$, where the system is in a mixed state with states $\rho_k$ and corresponding
probabilities $p_k$.\footnote{Discussions, derivations, and proofs on quantities of quantum information theory can be found in the book of
Nielsen and Chuang \cite{NC00}.} As seen in Equation~\eqref{eqn:privkey:rhopriorll},
the $\tau$-qubit state $\rho_{\mathrm{prior}}^{(\tau)}$ can be expressed in terms of a $\tau+1$ dimensional Hilbert space, therefore, its von-Neumann
entropy is again upper bounded as\footnote{Simulations suggest that the eigenvalues of Equation \eqref{eqn:privkey:rhopriorll} are given by $\lambda_i = 2^{-\tau} \binom{\tau}{i}$.
Then $\rho_{\mathrm{prior}}^{(\tau)}$ would be the entropy of a binomial distribution with mean $\tau/2$ and variance $\tau/4$ and upper bounded by the
Gaussian distribution; this would lead to a lower upper bound of $I_{\mathrm{av}}$,
namely, $I_{\mathrm{av}} \leq S(\rho_{\mathrm{prior}}^{(\tau)}) \leq \frac{1}{2} \log_2 \tau + \frac{1}{2}\log_2 (\pi \EZ /2)$.}
\begin{align}
 I_{\mathrm{av}} \leq \chi \leq S(\rho_{\mathrm{prior}}^{(\tau)}) \leq \log_2 (\tau+1).
\end{align}
By the construction of the private key and the resulting public key as given in Section~\ref{sec:georgios}, the state of the public key is in one
of $2^n$ different states, which means that this key is secure, as long as the information Eve may gain from her measurements is much smaller
than the number of bits which describe the information, namely, as long as
\begin{align}
 n \gg \log_2 (\tau+1)
\end{align}
holds. As in principle $T'$ copies of the public key exist and at least one is consumed to encrypt a message, this limit reads in terms of $T'=\tau+1$ as
\begin{align}\label{eqn:privkey:nlimit}
 n \gg \log_2 (T').
\end{align}
With the help of this limitation, the security of the private key can be guaranteed. Nevertheless, direct attacks on the message
are possible and will be discussed in Section~\ref{sec:message}.

\section{Security of a message}\label{sec:message}
The security of the sent message in the protocol of Nikolopoulos relies essentially on the
security of the private key that can be guaranteed according to the
limitations given in Equation \eqref{eqn:privkey:nlimit}. Thus, if the private key can be reconstructed
perfectly, a potential eavesdropper \emph{Eve} may use this information in order to obtain messages which are
sent with this key. But as the mentioned limitation should be achieved by construction, the optimal
strategy for Eve may be different. Instead of recovering the private key perfectly, she may recover the
key approximately. As the private key is in one-to-one correspondence to the public key, we can easily
discuss this issue in terms of the public key. Let us assume first that
Eve manages to receive $T'-1$ copies of the public key as well as an encrypted message which uses
the same key. Since the integers of the corresponding private key are chosen individually
and uniformly distributed, we are free to consider a single-bit message.\par
The strategy of Eve is
then to estimate the state of the public-key qubits, by taking into account that each qubit is prepared in
a state which lies on the $z$-$x$ plane of the Bloch sphere (cf. Appendix \ref{app:pauli:bloch}). The security parameter $s$, which was
introduced in Equation \eqref{eqn:georgios:s}, requires an $s$-bit codeword, which encodes
a bit of the message and is subsequently encrypted into $s$ qubits of the public key. At first, we will
consider Eve's attack on a single qubit, which encodes only a single bit of the codeword $\word$ and finally
generalize this approach on the complete ciphertext of the codeword. As already seen in Part \ref{part:mubs}
of this work, an optimal choice to estimate an unknown state with a minimal number of single-qubit measurements
is realized by using mutually unbiased bases\index{Mutually unbiased bases}. Taking the limitation to the $z$-$x$ plane of the
Bloch sphere into account, Eve's best measurement strategy is to use the eigenbases of the Pauli-$\sigma_z$ and
Pauli-$\sigma_x$ operators equally often. To simplify matters, we assume that the number of distributed
keys is given by
\begin{align}
 T' = 2 T + 1.
\end{align}
This results in an estimation of the public key and defines an estimation of the basis that was used to
encrypt the message. If Eve uses this basis estimation to measure the ciphertext, the fidelity of the
basis which results from the private key and the estimation of the bases gives the success probability of
Eve's attack.\par
Eve can apply the two different experiments (measuring in the eigenbasis of the Pauli-$\sigma_z$ or the
Pauli-$\sigma_x$ operator, respectively) and will receive either the measurement outcome zero or one.
This outcome depends in average on the expectation value which is given by testing $\ket{0_z}$ on the single-qubit state
$\ket{\psi_{k} (\theta_n)}$, that is introduced in Equation~\eqref{eqn:georgios:PKsingle}, namely
\begin{align}
 \braket{\psi_{k} (\theta_n)}{0_z}\braket{0_z}{\psi_{k} (\theta_n)},
\end{align}
which yields the probability
\begin{align}\label{eqn:message:p0z}
 p_0^{(z)}(k) = \cos^2 \left(\frac{k \theta_n}{2}\right),
\end{align}
to measure a zero in the eigenbasis of the $\sigma_z$ operator. To measure a zero in the eigenbasis of the
$\sigma_x$ operator, we find accordingly
\begin{align}\label{eqn:message:p0x}
 p_0^{(x)}(k) = \cos^2 \left(\frac{\pi}{4}-\frac{k \theta_n}{2}\right).
\end{align}
In both cases, the probabilities to measure a one are obviously given by $p_1^{(z)}(k) = 1 -p_0^{(z)}(k)$ and
$p_1^{(x)}(k) = 1 -p_0^{(x)}(k)$, respectively. If we denote the number of outcomes of a zero in the eigenbasis
of the $\sigma_z$ operator, by $T_0^{(z)}$ and the number of outcomes of a zero in the eigenbasis of the
$\sigma_x$ operator, by $T_0^{(x)}$, Eve's attack leads to one out of $(T+1)^2$ ordered pairs $\left(T_0^{(z)}, T_0^{(x)}\right)$.

\subsection{Information about the public key}
In a first step, this special attack yields information about the public key (and therefore about the private key).
The \emph{a posteriori} probability for Eve to guess a certain bit value $k'$ with the help of the
ordered pair $\left(T_0^{(z)}, T_0^{(x)}\right)$ is given by \emph{Bayes law} as
\begin{align}\label{eqn:message:aposteriori}
 p(k' \vert T_0^{(z)}, T_0^{(x)}) = \frac{p(T_0^{(z)}, T_0^{(x)}\vert k')}{2^n p(T_0^{(z)}, T_0^{(x)})}.
\end{align}
To calculate the \emph{a priori} probability of the right hand side of this expression, we can use the quantities
defined in Equation \eqref{eqn:message:p0z} and \eqref{eqn:message:p0x}, keeping in mind all possible permutations
of the measurement outcomes, which leads to
\begin{align}\label{eqn:message:apriori}
 p(T_0^{(z)}, T_0^{(x)}\vert k') = \prod_{b \in \Mg{z,x}} \binom{T}{T_0^{(b)}} \left( p_0^{(b)}(k') \right)^{T_0^{(b)}} \left( p_1^{(b)}(k') \right)^{T-T_0^{(b)}}.
\end{align}
The probability for Eve to obtain a certain ordered pair $\left(T_0^{(z)}, T_0^{(x)}\right)$ independently of the private-key state
is given by the sum of the \emph{a priory} probability over all possible values of $k'$, namely
\begin{align}
 p(T_0^{(z)}, T_0^{(x)}) = \frac{1}{2^n} \sum_{k'=0}^{2^n-1} p(T_0^{(z)}, T_0^{(x)}\vert k').
\end{align}
The information which Eve can accumulate in average by evaluating the discussed measurement outcome is given by
the difference of the Shannon entropies before and after the measurements, thus,
\begin{align}
 I_{\mathrm{av}} =& H_{\mathrm{prior}} - \langle H_{\mathrm{post}} \rangle\\
  =& n + \sum_{T_0^{(z)}, T_0^{(x)}} p(T_0^{(z)}, T_0^{(x)}) \sum_{k'=0}^{2^n-1} p(k' \vert T_0^{(z)}, T_0^{(x)}) \log_2 p(k' \vert T_0^{(z)}, T_0^{(x)}).
\end{align}
As the \emph{a priori} state is uniformly distributed and can be described by $n$ bits, the corresponding \emph{a priori}
entropy is given by $n$.

\subsection{Information about the message}
Alternatively, the measurement outcome $\left(T_0^{(z)}, T_0^{(x)}\right)$ can be used by Alice to estimate the public-key state.
With the help of the \emph{a posteriori} probability, that a certain value $k$ may have been the private key, as it
was introduced in Equation \eqref{eqn:message:aposteriori}, we can formulate the \emph{a posteriori} state which Eve has
after the measurement, taking all possible values for $k$ into account. This $\tau$-qubit state is given by
\begin{align}
 \rho_{\mathrm{post}}^{(\tau)}(T_0^{(z)}, T_0^{(x)}) = \sum_{k'=0}^{2^n-1} p(k' \vert T_0^{(z)}, T_0^{(x)}) \ketbra{\bm{\phi}_{k'}^{(\tau)} (\theta_n)}{\bm{\phi}_{k'}^{(\tau)} (\theta_n)}.
\end{align}
Although Eve intercepted $\tau$ raw copies of the public key, she is only interested in a single-qubit state, which can be
obtained by tracing out $\tau-1$ states and results in
\begin{align}
 \rho_{\mathrm{post}}^{(1)}(T_0^{(z)}, T_0^{(x)}) = \sum_{k'=0}^{2^n-1} p(k' \vert T_0^{(z)}, T_0^{(x)}) \ketbra{\psi_{k'} (\theta_n)}{\psi_{k'} (\theta_n)}.
\end{align}
Analogously to Equation \eqref{eqn:georgios:bloch} we can represent the single-qubit \emph{a posteriori} state by the Bloch vector (cf. Equation \ref{eqn:pauli:blochvector})
\begin{align}
 \tilde {\bm R} = \sum_{k'=0}^{2^n-1} p(k' \vert T_0^{(z)}, T_0^{(x)}) (\cos (k' \theta_n) \sigma_z + \sin (k' \theta_n) \sigma_x),
\end{align}
where this vector represents in general a mixed state and has therefore a length which is given by $\vert \tilde{\bm R} \vert \leq 1$.\par
In principle, we may compare this estimated Bloch vector now with the original Bloch vector and calculate the probability that both
coincide. However, we have to call back into mind, that a single bit is encoded by an $s$-bit codeword~$\word$ and subsequently encrypted
into $s$ qubits defining the security parameter that was formulated in Equation \eqref{eqn:georgios:s}. Following this construction, the first
step is to calculate the probability that Eve correctly estimates the bit $w_j$ of the codeword $\word = \Mg{1,\ldots, s}$, which is given
by
\begin{align}
 P(\suc \vert w_j,k, T_0^{(z)}, T_0^{(x)}) = \cos^2 (\Omega_j/2),
\end{align}
where $\Omega_j$ denotes the angle between the two Bloch vectors $\tilde{\bm R_j}$ and ${\bm R_j}$ and the index $j$ refers to
the qubit that encrypts the bit $w_j$. Solving this approach leads to the expression
\begin{align}
 P(\suc \vert w_j,k, T_0^{(z)}, T_0^{(x)}) = \frac{1}{2} + \frac{\tilde{\bm R_j} \cdot \bm R_j}{2 \vert \tilde{\bm R_j} \vert},
\end{align}
with the scalar product
\begin{align}
 \tilde{\bm R_j} \cdot \bm R_j = \sum_{k'=0}^{2^n-1} p(k' \vert T_0^{(z)}, T_0^{(x)}) \cos ((k' -k) \theta_n).
\end{align}
The ordered pair $\left(T_0^{(z)}, T_0^{(x)}\right)$ is an output of the measurements by Eve which is not observed by Alice. A more
meaningful quantity is therefore given by
\begin{figure}[t]
 \centering
 \def\svgwidth{230pt} 
 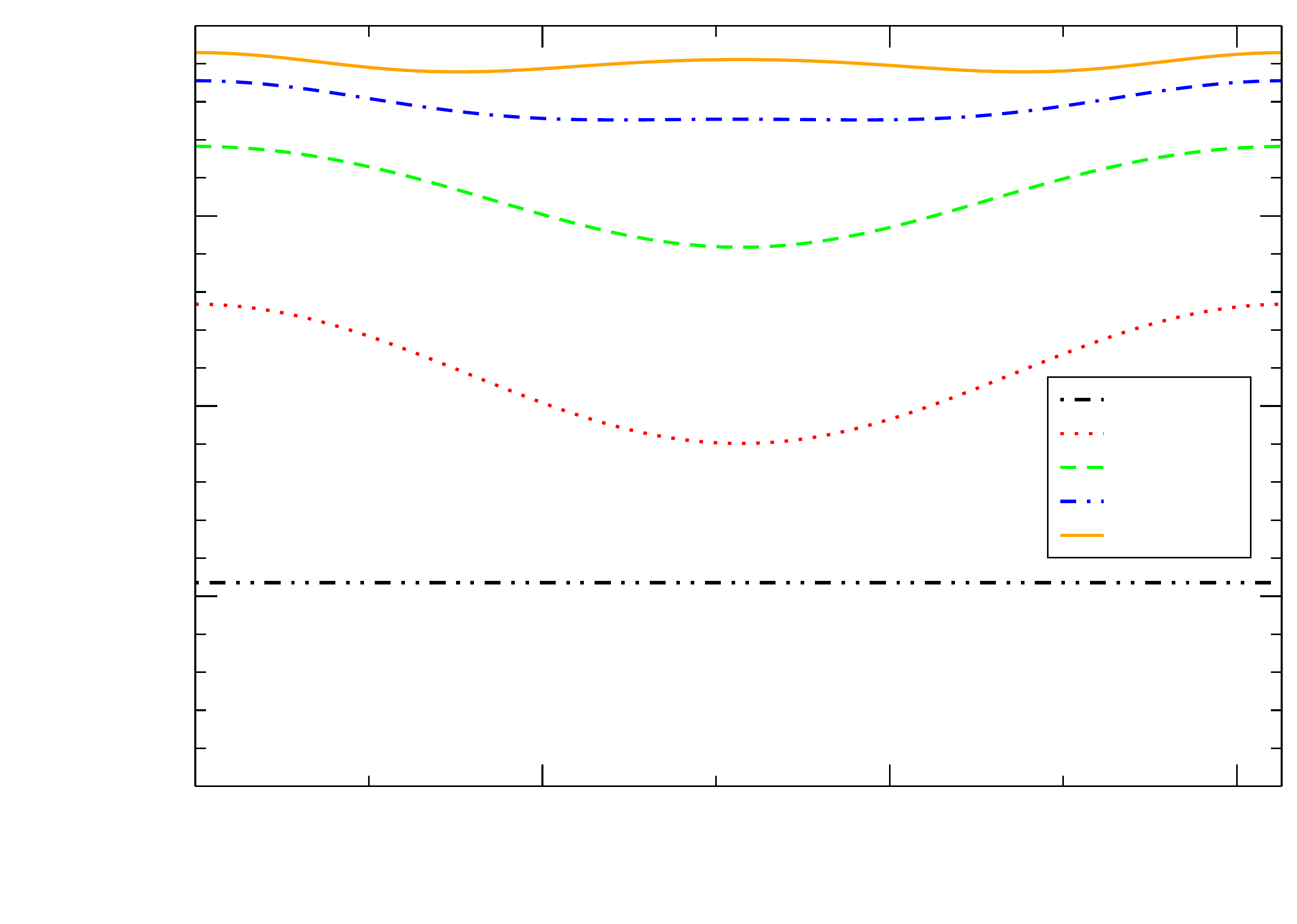
 \caption{Conditional probability $P(\suc\vert w_j,k)$ that Eve successfully attacks the public key for $n=10$ and different values of $T$.}
 \label{fig:message:psuc_wk}
\end{figure}
\begin{align}
 P(\suc \vert w_j, k) = \sum_{T_0^{(z)}, T_0^{(x)}} P(\suc, w_j,k, T_0^{(z)}, T_0^{(x)}) p(T_0^{(z)}, T_0^{(x)} \vert k)
\end{align}
and visualized in Figure \ref{fig:message:psuc_wk} for different values of $T$. Plotted against the key value $\key$, the
probability shows an oscillatory behavior around its mean value, with decreasing amplitude for a larger number of available copies.
The mean value can be calculated as
\begin{align}\label{eqn:message:psuc_wj}
 \bar P (\suc \vert w_j) = \frac{1}{2^n} \sum_{k=0}^{2^n-1} P(\suc \vert w_j, k)
\end{align}
and scales for $T>1$ like
\begin{align}\label{eqn:message:psuc_wj_bound}
 \bar P (\suc \vert w_j) \lesssim 1 - \frac{1}{6T},
\end{align}
which is given together with the optimal probability that is discussed below, in Figure \ref{fig:message:psuc}.
\begin{figure}[t!]
 \centering
 \def\svgwidth{230pt} 
 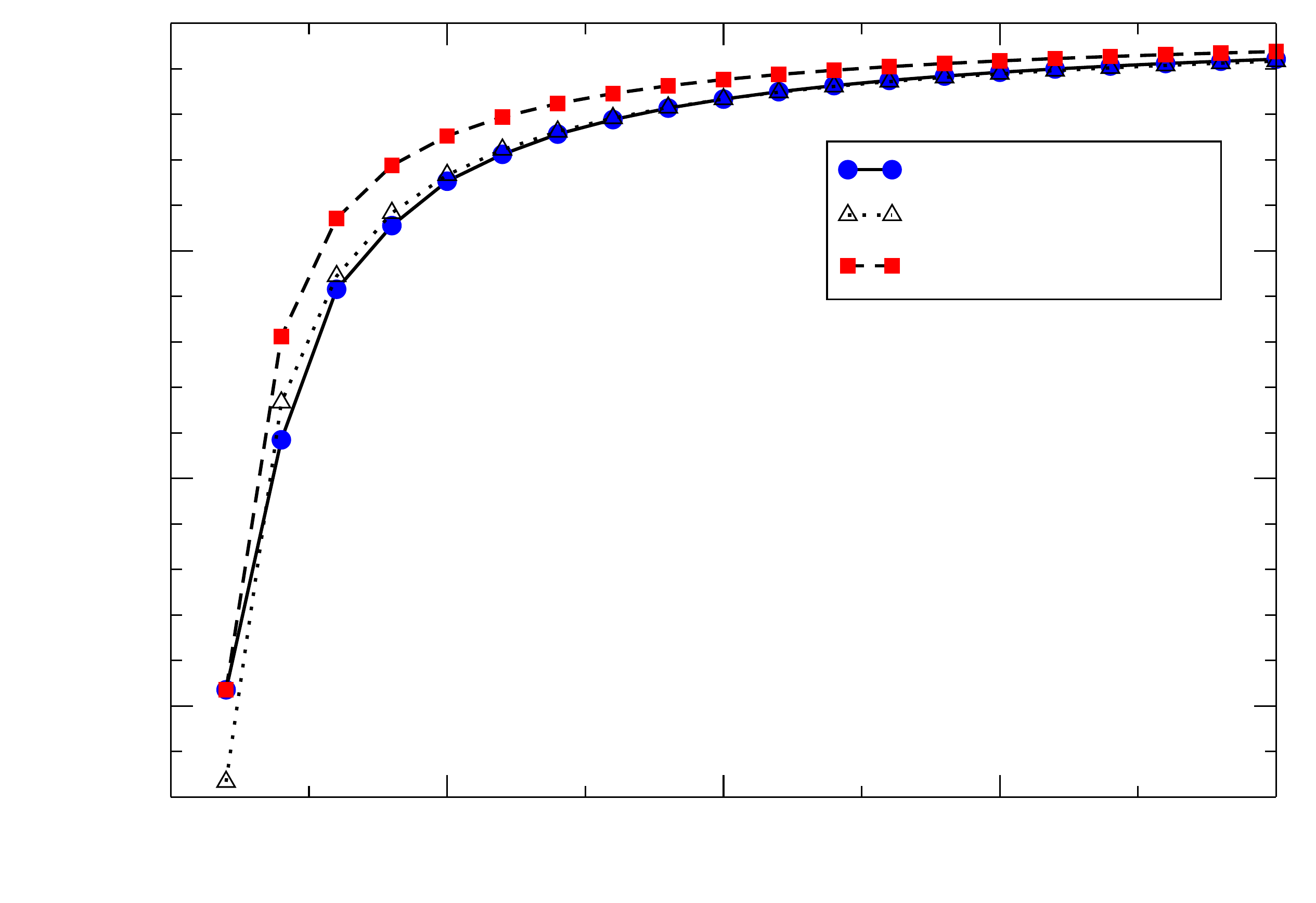
 \caption{Conditional probability $\bar P (\suc \vert w_j)$ that Eve successfully attacks the message for $n=10$ for the discussed attack,
 the asymptotic behavior of this attack and the optimal collective attack \cite{DBE98}, plotted against $T$.}
 \label{fig:message:psuc}
\end{figure}
The comparison with the \emph{optimal} probability of successfully estimating the prepared state which takes advantage of
a collective measurement method (and is therefore harder to get implemented) shows the good performance of this approach,
where this optimal probability is given by \cite{DBE98}
\begin{align}
 \bar P_{\mathrm{opt}} (\suc \vert w_j) = \frac{1}{2}+ \frac{1}{2^{2T+1}} \sum_{i=0}^{2T-1}\sqrt{\binom{2T}{i} \binom{2T}{i+1}},
\end{align}
and scales as
\begin{align}
 \bar P_{\mathrm{opt}} (\suc \vert w_j) \sim 1- \frac{1}{8T}.
\end{align}
Even this scaling can be reached by measurements on individual qubits, as it was shown by Bagan \etal{} \cite{Bagan89} with
an attack which is similar to the discussed one. Contrary to the attack given in \cite{NI09}, the probability of success does
not depend on the value $w_j$ and can be implemented using only measurement devices instead of the more complicated quantum
operations which are used in the mentioned approach.\par
Hence, we are well prepared to discuss the final step of this attack, namely considering the probability of correctly guessing
the message bit $m$. As seen in Section~\ref{sec:georgios}, the bit $m$ is encoded into the parity of an $s$-bit codeword $\bm w$,
where all appropriate codewords are equally probable. As seen in Equation \eqref{eqn:message:psuc_wj}, Eve has a certain probability
for each bit of the codeword to succeed. But even though, namely in cases where Eve fails an even number of times to predict the
correct value of a bit $w_j$ of the codeword, the parity she gets results in the same and therefore correct message bit $m$. Additionally,
as we have already seen in Figure \ref{fig:message:psuc_wk}, the amplitude of the oscillation of the success probability becomes
negligible for large values of $T$, at least it is an order of magnitude smaller than the mean. Therefore, we
will not gain a much better notion by observing the mean value of the success probability, as it should be comparable to the results
within a single-run experiment. Following the structure of the codeword, Eve has a mean probability of success, given a certain codeword
and a certain message bit,
\begin{align}\label{eqn:message:pssuc_mw}
 \bar P_s (\suc \vert m, \word) = \sum_{\substack{\alpha = 0,\\\mathrm{even}}}^s \binom{s}{\alpha} (1-\bar P(\suc \vert w_j))^{\alpha} (\bar P(\suc \vert w_j))^{s-\alpha},
\end{align}
where $\alpha$ denotes the number of incorrectly estimated bits of the codeword. Since this value does not depend on the message
bit and all codewords have the same probability, we finally find
\begin{align}
 \bar P_s (\suc) = \bar P_s (\suc \vert m, \word).
\end{align}
Within Figure \ref{fig:message:pssuc} we show the behavior of this probability for different values of $T$ against the security parameter~$s$.
\begin{figure}[t!]
 \centering
 \def\svgwidth{250pt} 
 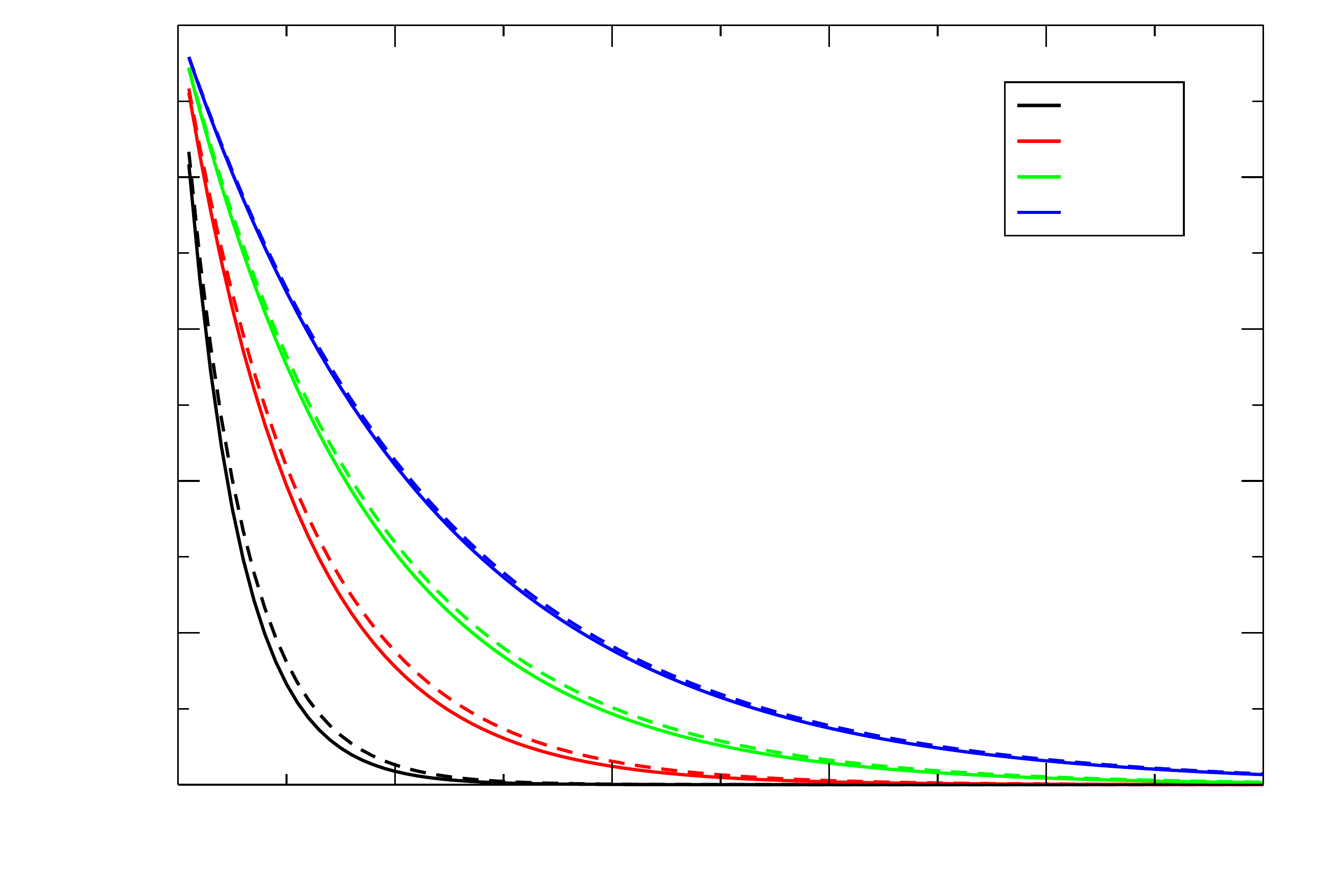
 \caption{Conditional probability $\bar P_s (\suc \vert m, \word)$ that Eve successfully attacks the message for $n=10$ and different values of $T$,
 plotted against the security parameter $s$. The solid lines refer to the exact results of Equation \eqref{eqn:message:pssuc_mw} and the dashed lines to the
 upper bound given by Equation \eqref{eqn:message:pssuc_mw_bound}.}
 \label{fig:message:pssuc}
\end{figure}
As expected, this parameter produces a monotonic decrease in the success probability of Eve and can be tightly upper bounded with the help
of the following lemma that can be used in order to generalize Equation \eqref{eqn:message:psuc_wj_bound}.\pagebreak
\begin{lem}[Success probabilities with increasing security parameter]\label{lem:message:genbound}\hfill\\
 If the probability (or an upper bound) of successfully eavesdropping a single-bit message encoded by a codeword of length $s=1$ is given by
 \begin{align}\label{eqn:message:q1}
  Q^{(1)} = \frac{1}{2} + \frac{\lambda}{2},
 \end{align}
 with $\lambda \in \C$, the corresponding quantity for a single-bit message, encoded by an $s$-bit codeword, such that
 the parity of the codeword equals the message bit and all possible codewords are equally probable, is given by
 \begin{align}\label{eqn:message:qs}
  Q^{(s)} = \frac{1}{2} + \frac{\lambda^s}{2}.
 \end{align}
\end{lem}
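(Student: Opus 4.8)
The plan is to prove Lemma \ref{lem:message:genbound} by a generating-function / binomial argument that mirrors the structure already set up in Equation \eqref{eqn:message:pssuc_mw}. The key observation is that the probability $Q^{(s)}$ of correctly guessing the message bit (which is the parity of the $s$-bit codeword) equals the probability that Eve makes an \emph{even} number of errors among the $s$ independent per-bit estimations, each of which succeeds with probability $Q^{(1)} = \tfrac12 + \tfrac{\lambda}{2}$ and fails with probability $1 - Q^{(1)} = \tfrac12 - \tfrac{\lambda}{2}$. Here I will need to invoke (or re-derive in one line) that the optimal symmetric strategy treats the $s$ qubits encoding a single codeword independently and identically, so that the per-bit success probability is the same $Q^{(1)}$ for each of the $s$ positions; this is already implicit in the derivation leading to Equation \eqref{eqn:message:pssuc_mw}, and is the only structural input required.

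First I would write the exact expression
\begin{align}
 Q^{(s)} = \sum_{\substack{\alpha = 0\\ \alpha\ \mathrm{even}}}^{s} \binom{s}{\alpha} \left(1 - Q^{(1)}\right)^{\alpha} \left(Q^{(1)}\right)^{s-\alpha},
\end{align}
which is exactly Equation \eqref{eqn:message:pssuc_mw} with $\bar P(\suc\vert w_j)$ replaced by $Q^{(1)}$. Then I would apply the standard parity-extraction identity: for any $p, q$,
\begin{align}
 \sum_{\substack{\alpha = 0\\ \alpha\ \mathrm{even}}}^{s} \binom{s}{\alpha} q^{\alpha} p^{s-\alpha} = \frac{(p+q)^s + (p-q)^s}{2}.
\end{align}
Substituting $p = Q^{(1)} = \tfrac12 + \tfrac{\lambda}{2}$ and $q = 1 - Q^{(1)} = \tfrac12 - \tfrac{\lambda}{2}$ gives $p + q = 1$ and $p - q = \lambda$, hence
\begin{align}
 Q^{(s)} = \frac{1^s + \lambda^s}{2} = \frac{1}{2} + \frac{\lambda^s}{2},
\end{align}
which is Equation \eqref{eqn:message:qs}. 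The identity for the even-index binomial sum is itself proved in one line by adding $(p+q)^s = \sum_\alpha \binom{s}{\alpha} q^\alpha p^{s-\alpha}$ and $(p-q)^s = \sum_\alpha \binom{s}{\alpha} (-q)^\alpha p^{s-\alpha}$ and noting the odd terms cancel.

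I do not expect a serious obstacle in the computation — it is genuinely just the parity-sum identity. The one point that deserves a careful sentence, and which I would regard as the "hard part," is the justification that the per-codeword success probability factorizes into $s$ independent and identical factors of $Q^{(1)}$: one must argue that the codewords of a given parity are drawn uniformly and independently of the key, that each qubit is attacked by the same (optimal, symmetric) single-qubit procedure, and that the events "position $j$ guessed correctly" are mutually independent conditioned on the key value (which they are, since distinct qubits carry distinct private-key integers chosen independently). Once this independence/identical-distribution statement is granted — and it is consistent with how Equation \eqref{eqn:message:pssuc_mw} was written — the lemma follows immediately, and moreover it explains why the bound \eqref{eqn:message:psuc_wj_bound} lifts to \eqref{eqn:message:pssuc_mw_bound}: substituting $\lambda \approx 1 - \tfrac{1}{3T}$ into $Q^{(s)} = \tfrac12 + \tfrac{\lambda^s}{2}$ and expanding for large $T$ yields the stated dashed-line estimate.
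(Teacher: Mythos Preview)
Your proof is correct and begins from the same even-index binomial sum as the paper, but you evaluate it differently. The paper rewrites the sum as the recursion
\begin{align}
 Q^{(s)} = Q^{(1)} Q^{(s-1)} + \bigl(1 - Q^{(1)}\bigr)\bigl(1 - Q^{(s-1)}\bigr)
\end{align}
and then verifies $Q^{(s)} = \tfrac12 + \tfrac{\lambda^s}{2}$ by induction on $s$. You instead invoke the parity-extraction identity $\sum_{\alpha\ \mathrm{even}} \binom{s}{\alpha} q^{\alpha} p^{s-\alpha} = \tfrac12\bigl((p+q)^s + (p-q)^s\bigr)$ and read off the answer in one line. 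Both arguments are standard and equally rigorous; yours is a bit more direct and avoids the inductive bookkeeping, while the paper's recursive formulation has the minor conceptual advantage of making transparent how adding one more codeword bit transforms $Q^{(s-1)}$ into $Q^{(s)}$. Your remarks on the independence assumption are a useful clarification of what the paper takes as implicit.
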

\begin{proof}
 As an even number of wrong results in the attack leads to a correct estimate of the parity of the codeword which represents the sent message, a
 generalization of Equation \eqref{eqn:message:q1} to a codeword of length $s$ is given by
 \begin{align}
  Q^{(s)} = \sum_{\substack{\alpha = 0,\\\mathrm{even}}}^s  \binom{s}{\alpha} ( 1- Q^{(1)})^{\alpha} ( Q^{(1)})^{s-\alpha}.
 \end{align}
 Alternatively, this generalization can also be written iteratively as
 \begin{align}
  Q^{(s)} =  Q^{(1)}Q^{(s-1)} + (1-Q^{(1)})(1-Q^{(s-1)}).
 \end{align}
 If for $Q^{(1)}$ Equation \eqref{eqn:message:q1} holds and we assume that also Equation \eqref{eqn:message:qs} is given, we find
 for $Q^{(s+1)}$, using the iterative construction,
 \begin{align}
  Q^{(s+1)} =& \left( \frac{1}{2} + \frac{1}{2} \right) \left( \frac{1}{2} + \frac{\lambda^s}{2} \right) + \left( \frac{1}{2} - \frac{1}{2} \right) \left( \frac{1}{2} - \frac{\lambda^s}{2} \right)\\
        =& \left( \frac{1}{2} + \frac{\lambda^{s+1}}{2} \right),
 \end{align}
 which completes the induction and provides the expected result.
\end{proof}
If we set $Q^{(1)}$ to the value of Equation \eqref{eqn:message:psuc_wj_bound}, we can use Lemma \ref{lem:message:genbound}
with $\lambda = 1- 1/(3T)$ and get finally a generalized upper bound for the correct guessing of the message\footnote{The message is encoded to the 
parity of an $s$-bit codeword as introduced in Section~\ref{sec:georgios}.} by
\begin{align}\label{eqn:message:pssuc_mw_bound}
 Q^{(s)} := \tilde P_s (\suc) = \frac{1}{2} + \frac{1}{2} \left( 1- \frac{1}{3T} \right)^s.
\end{align}

Finally, to achieve a security against the introduced attack, we have to show that for any small positive
value $\varepsilon \in (0,1/2]$, we find a finite value of the security parameter $s$, such that the probability
of successfully estimating goes below the sum of the probability of randomly guessing with that value.
This notion of security is also known as \emph{asymptotic security}%
\index{Asymptotic security}. Thus, to find a security parameter which fulfills $\bar P_s (\suc) < 1/2 + \varepsilon$,
we find, using Equation \eqref{eqn:message:pssuc_mw_bound},
\begin{align}
 s > \frac{1+\log_2 \varepsilon}{\log_2 \left(1- \frac{1}{3T} \right)},
\end{align}
which is also fulfilled if there holds
\begin{align}\label{eqn:message:slimitbound}
 s > 3T (1+\log_2 \varepsilon).
\end{align}
This easily implementable attack is stronger than the \emph{forward-search} attack\index{Forward-search attack} which was already discussed in the
original work of the protocol \cite{Niko08, NI09} and uses complicated quantum operations as \emph{Fourier transformations}
and permutations on large sets of qubits. Similarly to Equation \eqref{eqn:message:pssuc_mw_bound}, an upper bound
for its probability of successfully eavesdropping (which does not vary from run to run) is given by \cite{NI09}
\begin{align}\label{eqn:message:sFSlimit}
 \tilde P_s^{\mathrm{FS}} (\suc) = \frac{1}{2} + \frac{1}{2} \left( 1- \frac{1}{2T} \right)^s.
\end{align}
The success probability of that attack leads to a lower bound for the security parameter in the notion of Equation \eqref{eqn:message:slimitbound} as
\begin{align}\label{eqn:message:sFSlimitbound}
 s^{\mathrm{FS}} > 2T (1+\log_2 \varepsilon),
\end{align}
and differs only by a factor of $2/3$ with the new investigated attack.\footnote{The differences of Equations \eqref{eqn:message:sFSlimit} and
\eqref{eqn:message:sFSlimitbound} with their counterparts in \cite{NI09} are due to the fact that within the attack discussed here, $T$ is defined to refer
to the half of the number of keys which are used in order to attack the encrypted message. The missing factor of two in \cite[Equation (30)]{SNA12}, comparing
to Equation \eqref{eqn:message:sFSlimitbound}, is based on a misprint.}

\section{Noisy preprocessing}\label{sec:noisyprepro}
In order to increase the robustness of the protocol discussed in Section \ref{sec:georgios} against
various attacks like those investigated in Sections \ref{sec:privkey} and \ref{sec:message} and
potentially others, we show in the following exemplarily the behavior of a modified protocol against 
two simple attacks. Conversely to the original key-generation Step \ref{enum:georgios:genPK:public} of the protocol,
we assume that the states are not taken from the $z$-$x$-plane of the Bloch sphere. One may think of
an \emph{affine plane} which would imply states on a certain degree of latitude, areas on the surface which are
limited by two degrees of latitude, or even the complete surface of the sphere. Obviously, as we like
to leave the measurement process untouched, \emph{post-processing} (namely \emph{error correction}) steps will
be obligatory. Even within the original protocol they have to be added in order to get a practical implementation,
as errors in quantum states will always occur. Possibly, the introduced steps may help in order to model
the influence of errors in the original protocol. Nevertheless, as this consideration is based on two simple attacks,
the results can only be seen as limited, but indicate the positive behavior and motivate a subsequent
analysis.

\subsection{Single-key test attack}\label{sec:singletest}
The first attack we use in order to compare a notion of the security of the original protocol with the
security of the modified protocol is in principle a single-copy version of the attack, which was discussed
in Section \ref{sec:message} and published in the same work~\cite{SNA12}. If Eve intercepts a ciphertext and
a corresponding copy of the public key, she can attack the message bitwise by a simple experiment. For each
bit of the codeword $\word$ as introduced in Section \ref{sec:georgios}, she measures both corresponding
qubits in the same, but in principle arbitrary basis within the $z$-$x$-plane, which leads to a success probability similar to that of
Equation \eqref{eqn:message:p0z}. As rotations of the measurement axis around the $\sigma_y$ axis lead only to a constant term in the
argument of the $\cos^2$ and the possible states are arranged symmetrically with even probabilities, the success
probability is independent of the chosen axis for a random chosen public key bit, so we will take the $\sigma_z$ axis
and have
\begin{align}\label{eqn:prepro:p0}
 p_0(k) = \cos^2 \left(\frac{ k \theta_n}{2} \right),
\end{align}
to measure a value of zero on a qubit. As a codeword bit of one generates a flip on the Pauli-$\sigma_y$ axis, we have to
calculate the probability of measuring this flip--without knowing the correct basis--which is given by
\begin{multline}
 P^{(1)}(\suc \vert k, w_j)\\
 = \begin{cases}
  p^{(\mathrm{key})}_1(k) p^{(\mathrm{ct})}_0(k+2^{n-1}) + p^{(\mathrm{key})}_0(k) p^{(\mathrm{ct})}_1(k+2^{n-1})&\mathrm{for}\; w_j=1,\\
  p^{(\mathrm{key})}_0(k) p^{(\mathrm{ct})}_0(k) + p^{(\mathrm{key})}_1(k) p^{(\mathrm{ct})}_1(k)&\mathrm{for}\; w_j=0,
 \end{cases}
\end{multline}
where $p^{(\mathrm{key})}$ belongs to the measurement on the key and $p^{(\mathrm{ct})}$ to the measurement on the ciphertext.
Using Equation \eqref{eqn:prepro:p0}, we find
\begin{align}\label{eqn:prepro:psuc_k}
 P^{(1)}(\suc \vert k) = \sin^4\left(\frac{ k \theta_n}{2} \right) + \cos^4\left(\frac{ k \theta_n}{2} \right).
\end{align}
With the help of Lemma \ref{lem:message:genbound} we can generalize this expression to the parity of the codeword $\word$ and finally get
\begin{align}\label{eqn:prepro:pssuc_k}
 P^{(1)}_s(\suc \vert k) =& \frac{1}{2} + \frac{(1 + \cos(2 k \theta_n))^s}{2^{s+1}}\\
           =& \frac{1+\cos^{2s} (k \theta_n)}{2}.
\end{align}
For a mean probability by averaging over all possible private key states, we find with Equation \eqref{eqn:prepro:psuc_k} a probability of
\begin{align}\label{eqn:prepro:psuc1}
 \bar P^{(1)}(\suc) = \frac{1}{2^n} \sum_{k=1}^{2^n} P(\suc \vert k) = \frac{3}{4},
\end{align}
for Eve to measure the correct value of a single bit of the codeword $\word$ and with Equation \eqref{eqn:prepro:pssuc_k}
a probability of
\begin{align}
 \bar P_s^{(1)}(\suc) = \frac{1}{2} + \frac{(1/2)^s}{2},
\end{align}
to measure correctly the parity of the codeword.

\subsection{Double-key test attack}\label{sec:doubletest}
The second attack we discuss takes measurement results on two public keys and the corresponding ciphertext into account, so
we assume Eve holds those three qubit-strings. For each qubit triple that belongs to the bit of a codeword, she measures the first
key in the $\sigma_z$ basis as for the single-key test attack, but measures the second key in the orthogonal $\sigma_x$ basis.
This defines one out of two bases which is taken as the measurement basis for the ciphertext qubit.\par
In detail: If the measurements of $\sigma_z$ and $\sigma_x$ give the same output, that basis is chosen, which bisects the inner angle;
in the converse situation, a basis is chosen which bisects the outer angle. This attack is another formulation of the attack
discussed in Section \ref{sec:message} for $T=1$. For the probabilities of measuring the value zero or one from a copy of the key, we
can take the expressions of Equations \eqref{eqn:message:p0z} and \eqref{eqn:message:p0x}. The probability of getting the correct value of the
codeword bit by measuring the message within the resulting basis is given by the expectation value of this basis with the corresponding state
of the ciphertext. Following these ideas we find
\begin{align}\nonumber
 P^{(2)}(\suc \vert k,\theta_n) = p_0^{(z)}(k) p_0^{(x)}(k) \vert \braket{0_+}{\psi_k(\theta_n)} \vert^2 + p_1^{(z)}(k) p_1^{(x)}(k) \vert \braket{1_+}{\psi_k(\theta_n)} \vert^2\\
      + p_0^{(z)}(k) p_1^{(x)}(k) \vert \braket{0_-}{\psi_k(\theta_n)} \vert^2 + p_1^{(z)}(k) p_0^{(x)}(k) \vert \braket{1_-}{\psi_k(\theta_n)} \vert^2,
\label{eqn:prepr:p2suc}
\end{align}
where $\ket{0_+}$ and $\ket{1_+}$ refer to outcomes $0$ and $1$ for the basis which bisects the inner angle between $\ket{0_x}$ and $\ket{0_z}$ and
$\ket{0_-}$ and $\ket{1_-}$ to outcomes $0$ and $1$ for the basis which bisects the inner angle between $\ket{0_z}$ and $\ket{1_x}$. Figure \ref{fig:prepro:doublebases} illustrates this
structure.
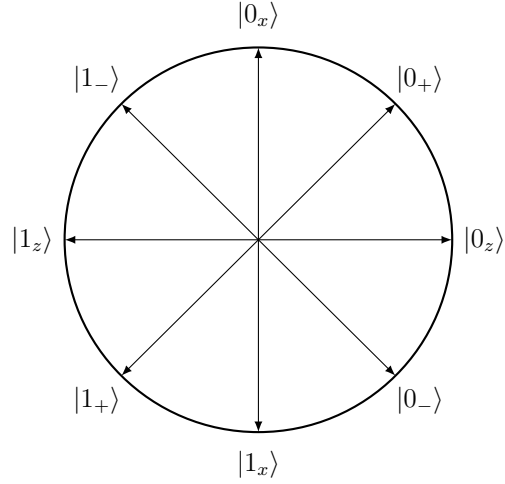
\begin{figure}[t]
  \centerline{
    \begin{tikzpicture}[scale=0.85, transform shape] 
      \draw [thick] (0,0) circle (3cm);
      \node at (0,3.5) {\ket{0_x}};
      \node at (0,-3.5) {\ket{1_x}};
      \node at (3.5,0) {\ket{0_z}};
      \node at (-3.5,0) {\ket{1_z}};
      \node at (2.5,2.5) {\ket{0_+}};
      \node at (-2.5,-2.5) {\ket{1_+}};
      \node at (2.5,-2.5) {\ket{0_-}};
      \node at (-2.5,2.5) {\ket{1_-}};
      \foreach \angle in {45, 90, ..., 360}
 	\draw[black, -latex] (0,0) -- (xyz polar cs:angle=\angle, radius=3);
    \end{tikzpicture}
  }
  \caption{The four bases which are defined in order to implement the double-key test attack.}
  \label{fig:prepro:doublebases}
\end{figure}
With
$\vert \braket{0_+}{\psi_k(\theta_n)} \vert^2 = \cos^2(\pi/8-k \theta_n/2)$ and $\vert \braket{0_-}{\psi_k(\theta_n)} \vert^2 = \cos^2(-\pi/8-k \theta_n/2)$,
we find
\begin{align}
 P^{(2)}(\suc) = p(\suc \vert k, \theta_n) = \frac{1}{4} (2+\sqrt{2}) \approx 0.85,
\end{align}
which does neither depend on $n$, nor on $k$, as the plot of Figure \ref{fig:message:psuc_wk} already indicated. For correctly estimating the
message bit, Eve needs to guess correctly the parity of the codeword $\word$, therefore we can take again advantage of Lemma \ref{lem:message:genbound}
and find for this attack
\begin{align}
 P_s^{(2)}(\suc) = \frac{1}{2} + \frac{(\sqrt{2}/2)^s}{2}.
\end{align}

\subsection{Protocol and security analysis}\label{sec:noisysecurity}
We extend the protocol which was discussed in Section \ref{sec:georgios} by a (private) displacement parameter $a\in [-1,1]$ which moves the plane
that is shown in Figure \ref{fig:georgios:xzplane} into the positive or negative $\sigma_y$ direction. The parameter is defined
to be one if the plane touches the surface only at $\ket{1_y}$ and equals minus one if it touches the opposite pole. It is linear on the $\sigma_z$ axis
in the representation of the Bloch sphere. Obviously, there is another representation of that parameter as an angle $\alpha \in [-\pi/2,\pi/2]$
with $a=\sin(\alpha)$, which is used in the following. The implementation into the protocol is achieved by generalizing the state $\ket{\psi_k(\theta_n)}$
which is defined in the key generation part in Step \ref{enum:georgios:genPK:public} as
\begin{align}
 \ket{\psi_k(\theta_n,\alpha)} =& \mathcal{R}_y(k\theta_n) \mathcal{R}_x(\alpha) \ket{0_z}\\
   =& \left(\cos \left(\frac{k \theta_n}{2}\right) \cos \frac{\alpha}{2} + \iE \sin \left(\frac{k \theta_n}{2}\right) \sin \frac{\alpha}{2}\right) \ket{0_z}\nonumber\\
   &+ \left(\sin \left(\frac{k \theta_n}{2}\right) \cos \frac{\alpha}{2} - \iE \cos \left(\frac{k \theta_n}{2}\right) \sin \frac{\alpha}{2}\right) \ket{1_z}.
\end{align}
The probability of Eve to correctly estimate the transmitted bit using the single-key test attack (which was discussed in Section \ref{sec:singletest}) is given by
\begin{align}
  \bar P^{(1)}_{\mathrm{E}}(\suc\vert\alpha) = \sum_{k=0}^{2^n-1}(\vert \braket{\xi}{\psi_k(\theta_n,\alpha)} \vert^2)^2 + (1-\vert \braket{\xi}{\psi_k(\theta_n,\alpha)} \vert^2)^2 = \frac{1}{8} (5+ \cos (2 \alpha))
\end{align}
for an arbitrary measurement basis in the $z$-$x$-plane, averaged over all $2^n$ possible states. For $\alpha=0$ this result coincides as expected with the
value derived in Equation~\eqref{eqn:prepro:psuc1}. As Alice does not know the transmitted message bit and the possible states are not within a single
plane, we assume for simplicity that she measures on an axis which is in the plane, where the states are defined for $\alpha=0$. This leads to a probability for Alice
to get the correct bit value, given by
\begin{align}
  \bar P_{\mathrm{A}}(\suc\vert\alpha) = \vert \bra{0_z} \mathcal{R}_x(\alpha) \ket{0_z} \vert^2 = \cos^2 \left(\frac{\alpha}{2}\right).
\end{align}
If we further assume, that an \emph{error correction}\index{Error correction} protocol exists which corrects as many bits as are distributed correctly
in average, the probability for Eve to find the correct bit value in the relevant cases is given by the relative success probability
\begin{align}\label{eqn:prepro:p1}
 \bar P^{(1)}(\suc\vert\alpha) = \frac{\bar P^{(1)}_{\mathrm{E}}(\suc\vert\alpha)}{\bar P_{\mathrm{A}}(\suc\vert\alpha)} = \frac{5+ \cos (2 \alpha)}{8 \cos^2 (\alpha/2)}.
\end{align}
Thus, to minimize the information Eve may gain by the single-key test attack for this modified protocol, we find
\begin{align}\label{eqn:prepro:a1min}
 \alpha^{(1)}_{\mathrm{min}} = \pm 2 \arccos \sqrt[\leftroot{-1}\uproot{2}\scriptstyle 4]{\frac{3}{4}},
\end{align}
which leads to
\begin{align}
 \bar P^{(1)}(\suc\vert\alpha^{(1)}_{\mathrm{min}}) = \sqrt{3}-1 \approx 0.732,
\end{align}
and is below the value $\bar P^{(1)}(\suc) = 3/4$ for $\alpha=0$.\par
Alternatively, we can use the double-key test attack (which was discussed in Section~\ref{sec:doubletest}) to check if the discussed tendency becomes stronger
or weaker in the case that Eve is able to catch more copies of the key. The probability for Eve to estimate the correct bit value can be calculated using
Equation \eqref{eqn:prepr:p2suc}, but replacing $\ket{\psi_k(\theta_n)}$ by $\ket{\psi_k(\theta_n,\alpha)}$ and deriving $p_v^{(b)}$ as
\begin{align}
 p_v^{(b)}(k) = \vert \braket{v_b}{\psi_k(\theta_n,\alpha)} \vert^2,
\end{align}
with $v \in \Mg{0,1}$ and $b \in \Mg{z,x}$, which leads to
\begin{align}
 \bar P^{(2)}_{\mathrm{E}}(\suc\vert\alpha) = \frac{1}{2} + \frac{\sqrt{2}}{4} \cos^2 \alpha.
\end{align}
As the probability for Alice to get the correct bit value does not depend on Eve's attack and therefore not the number of keys she uses for the attack, the
probability for Eve to get the correct bit value in the relevant cases is given by
\begin{align}\label{eqn:prepro:p2}
 \bar P^{(2)}(\suc\vert\alpha) = \frac{\bar P^{(2)}_{\mathrm{E}}(\suc\vert\alpha)}{\bar P_{\mathrm{A}}(\suc\vert\alpha)} = \frac{1 + (\cos^2 \alpha)/\sqrt{2}}{2 \cos^2 \frac{\alpha}{2}}.
\end{align}
Analogously to Equation \eqref{eqn:prepro:a1min} we can find the minimum of this probability in order to minimize the information gain for Alice by
\begin{align}
 \alpha^{(2)}_{\mathrm{min}} = \pm 2 \arccos \left( (\sqrt{2}-1)^{3/4} + (\sqrt{2}-1)^{3/4}/\sqrt{2} \right),
\end{align}
which leads to
\begin{align}
 \bar P^{(2)}(\suc\vert\alpha^{(2)}_{\mathrm{min}}) = \sqrt{2} \left(\sqrt{1 +\sqrt{2}} - 1\right) \approx 0.783,
\end{align}
and is considerably below Eve's success probability for $\alpha=0$ which is given by
\begin{align}\label{eqn:prepro:p2zero}
 \bar P^{(2)}(\suc\vert\alpha=0) \approx 0.854.
\end{align}
For a security parameter $s>1$, we can again use Lemma \ref{lem:message:genbound} in order to derive the success probability for Eve and
find 
\begin{align}
 \bar P^{(2)}_s(\suc\vert\alpha^{(2)}_{\mathrm{min}}) = \frac{1}{2} + \frac{1}{2} \left( \frac{5- 4 \cos^2 \frac{\alpha}{2} + \cos^2 \alpha}{4 \cos^2 \frac{\alpha}{2}} \right)^s,
\end{align}
which decreases faster than for $\alpha=0$ as can be seen in Figure \ref{fig:prepro:alphas}.%
\begin{figure}[t]
 \centering
 \def\svgwidth{220pt} 
 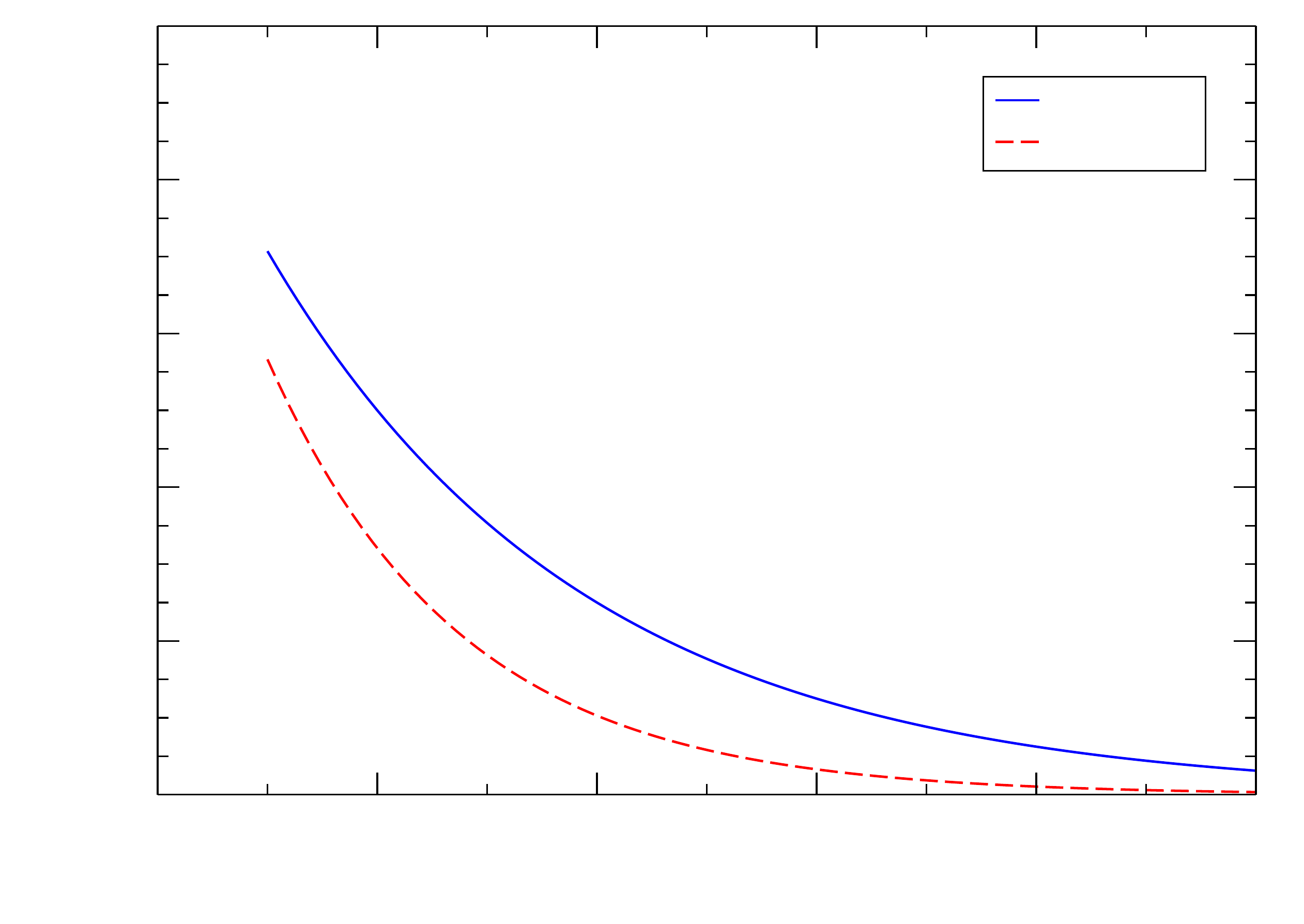
 \caption{Conditional probability $\bar P^{(2)}_s(\suc\vert\alpha)$ against the security parameter~$s$ that Eve successfully attacks the public key for $\alpha=0$ and $\alpha=\alpha_{\mathrm{min}}$,
  using the double-key test attack.\protect\footnotemark}
 \label{fig:prepro:alphas}
\end{figure}
Finally, we may imagine cases where $\alpha$ is not fixed, for example to test the robustness of the protocol against certain kinds of errors or simply to
adapt the protocol. We shortly calculate two situations: At first, $\alpha$ is a random parameter in the interval $[-\pi/2, \pi/2]$. In the second case, we
average over those values of $\alpha$ where Eve has a success probability that is not larger than for $\alpha=0$. In the first case, we find for the single-key test attack,
using Equation \eqref{eqn:prepro:p1},
\begin{align}
 \bar P^{(1)}(\suc) = \frac{1}{\pi} \int\limits_{-\pi/2}^{\pi/2} \frac{5+ \cos (2 \alpha)}{8 \cos^2 (\alpha/2)} \mathrm{d} \alpha \approx 0.773,
\end{align}
which is above the value of $3/4$ for $\alpha=0$, but for the double-key test attack, using Equation \eqref{eqn:prepro:p2}, we have
\begin{align}
 \bar P^{(2)}(\suc) = \frac{1}{\pi} \int\limits_{-\pi/2}^{\pi/2} \frac{1 + (\cos^2 \alpha)/\sqrt{2}}{2 \cos^2 \frac{\alpha}{2}} \mathrm{d} \alpha \approx 0.830,
\end{align}
which is slightly below the value for $\alpha=0$ which was given in Equation \eqref{eqn:prepro:p2zero}. If this tendency is monotonic, the protocol can in
principle be enhanced by robustness and security, if $\alpha$ is taken to be arbitrary.\footnotetext{Please keep in mind that $s$ is a discrete parameter,
the presentation is taken to get a better notion of the behavior of the probabilities.} 
\footnote{This statement holds as long as the number of public key copies is larger than two.}\par
For the second case we find
\begin{align}
 \bar P^{(1)}(\suc\vert \bar \alpha_{\mathrm{min}}^{(1)}) = \frac{3}{2\pi} \int\limits_{-\pi/3}^{\pi/3} \frac{5+ \cos (2 \alpha)}{8 \cos^2 (\alpha/2)} \mathrm{d} \alpha \approx 0.740,
\end{align}
which is below $3/4$ and
\begin{align}
 \bar P^{(2)}(\suc\vert \bar \alpha_{\mathrm{min}}^{(2)}) = \frac{1}{2 \alpha_{\mathrm{lim}}^{(2)}} \int\limits_{-\alpha_{\mathrm{lim}}^{(2)}}^{\alpha_{\mathrm{lim}}^{(2)}} \frac{1 + (\cos^2 \alpha)/\sqrt{2}}{2 \cos^2 \frac{\alpha}{2}} \mathrm{d} \alpha \approx 0.816,
\end{align}
which is below the value for $\alpha=0$ for the double-key test attack, with\linebreak $\alpha_{\mathrm{lim}}^{(2)} = 2 \arccos\left(\frac{\sqrt{1+\sqrt{2}}}{2}\right)$.\par
Concluding, we have seen in these two simple test attack scenarios, that a noisy-preprocessing in the introduced way is able to enhance the security of the
original protocol. Even if this additional parameter $\alpha$ is taken in a larger interval, the security against these test attacks increases.\footnote{For the
case of two copies of the public key, the test attack coincides by construction with the quite general attack discussed in Section \ref{sec:message}.} Finally,
the new parameter can be used as a first approach in order to model a protocol with errors in the preparation process of Alice, as Alice does not need the parameter~$\alpha$
for the considered measurement of the message.

\chapter{Conclusions and outlook}\label{chap:sumpk}
Quantum cryptographic primitives expand the field of classical cryptography and provide perspectives
which have the potential to influence existing schemes fundamentally. Whereas in classical cryptography
the security is based on \emph{computational complexity}, quantum cryptographic statements are based on
physically unsolvable problems. This leads to complete \emph{provably secure} cryptographic schemes.
The emerging field of quantum public-key encryption offers new ways in order to benefit from the
fundamental characteristics of quantum physics for cryptographic schemes.\par
Within the second part of this work, the recently invented
\emph{single-qubit-rotation protocol} is introduced and the \emph{security of the private key}
as well as the \emph{security of a message} which is transmitted with the help of the protocol are
discussed. A powerful and practical attack is used to consider the security of messages. Even though
it attacks individual qubits, its asymptotic behavior is comparable with collective attacks.
Finally, a possible extension of the protocol that implements a kind of \emph{noisy preprocessing} is
discussed and its advantage against simple test attacks is considered. The results indicate a positive effect
of this method also for more general attacks and its ability to model noise.\par
As a next step, a practical version of the protocol could be considered which contains \emph{error correction}
methods, an exact formulation of noise effects as well as preprocessing steps. Further analytical expressions
of important quantities would be useful to generalize Eve's success probabilities.

\appendix
\addcontentsline{toc}{part}{Appendix}

\chapter{Algebra and quantum information}\label{app:algebra}

\section{Fundamentals}
Within this section we arrange some known results from algebra that are relevant
for this work and were done for \cite{KRS10} and \cite{SR12}.

\begin{lem}[Determinant decomposition]\label{lem:algfund:detdecomp}\hfill\\
 Let $R$ be a ring. Then there holds
 \begin{align}
  \det \begin{pmatrix}A & B\\C & D\end{pmatrix} = \det(D) \det ( A- B D^{-1} C )
 \end{align}
 if $D$ is invertible, for $A, B, C, D \in M_n(R)$.
\end{lem}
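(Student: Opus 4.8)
The plan is to prove the Schur-complement–type determinant identity
\[
 \det \begin{pmatrix}A & B\\C & D\end{pmatrix} = \det(D) \det ( A- B D^{-1} C )
\]
by exhibiting an explicit block factorization of the matrix on the left-hand side into a product of block-triangular matrices whose determinants are easy to read off.

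First I would write down the factorization
\[
 \begin{pmatrix}A & B\\C & D\end{pmatrix}
 = \begin{pmatrix}\Eins_n & B D^{-1}\\ 0_n & \Eins_n\end{pmatrix}
   \begin{pmatrix}A - B D^{-1} C & 0_n\\ 0_n & D\end{pmatrix}
   \begin{pmatrix}\Eins_n & 0_n\\ D^{-1} C & \Eins_n\end{pmatrix},
\]
which one verifies by simply multiplying out the three factors on the right (using $D D^{-1} = D^{-1} D = \Eins_n$, valid since $D$ is invertible); this needs nothing beyond the distributive law and associativity, so it holds over an arbitrary ring $R$. Then I would take determinants of both sides, using multiplicativity of the determinant, i.e. $\det(XY) = \det(X)\det(Y)$ for $X,Y \in M_{2n}(R)$. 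The two outer factors are block-triangular with identity blocks on the diagonal, hence have determinant $1$; the middle factor is block-diagonal, so its determinant equals $\det(A - B D^{-1} C)\det(D)$. Combining gives the claim, and since $R$ is commutative the order of this product is irrelevant.

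The only point requiring minor care is that the excerpt states the lemma for a general ring $R$, so I should note that the determinant of a block-triangular matrix over a commutative ring is still the product of the diagonal blocks' determinants, and that the determinant is multiplicative over $M_m(R)$ — both standard facts. The main (and really only) obstacle is ensuring the block factorization is written in the correct order so that the cross term in the product reproduces $B$, $C$, and $A$ exactly; once that bookkeeping is set up correctly, the rest is immediate. No deep idea is needed — this is a purely computational identity.
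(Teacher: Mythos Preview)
Your proof is correct and follows essentially the same approach as the paper: a block factorization into block-triangular matrices, followed by multiplicativity of the determinant and the block-triangular determinant formula. The only cosmetic difference is that the paper uses a two-factor decomposition
\[
  \begin{pmatrix}A & B\\C & D\end{pmatrix} = \begin{pmatrix}\Eins_n & B\\0_n & D\end{pmatrix} \begin{pmatrix}A-BD^{-1} C & 0_n\\D^{-1} C & \Eins_n\end{pmatrix}
\]
instead of your three-factor LDU-type decomposition; both lead immediately to the same conclusion.
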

\begin{proof}
 If $D$ is invertible, we are free to decompose the matrix as
 \begin{align}
  \begin{pmatrix}A & B\\C & D\end{pmatrix} = \begin{pmatrix}\Eins_n & B\\0_n & D\end{pmatrix} \begin{pmatrix}A-BD^{-1} C & 0_n\\D^{-1} C & \Eins_n\end{pmatrix}.
 \end{align}
 Since the determinant of block-triangular matrices is the product of the determinants of the blocks on the diagonal, the statement follows.
\end{proof}

\begin{lem}[Block matrix invertibility]\label{lem:algfund:detblock}\hfill\\
 Let $R$ be a commutative ring and $A, B, C, D \in R$ be commuting elements. Then the block matrix
 $\begin{pmatrix}A&B\\C&D\end{pmatrix}$ is invertible, if and only if $AD-BC$ is invertible.
\end{lem}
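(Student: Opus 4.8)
The statement is a block-matrix analogue of the scalar fact that a $2\times 2$ matrix over a commutative ring is invertible iff its determinant is a unit. The plan is to reduce the $2n\times 2n$ block matrix over $R$ (really $M_n(K)$ in the applications, but the lemma is stated for a commutative ring $R$ with commuting entries $A,B,C,D$) to an ordinary $2\times 2$ matrix over the commutative subring $S:=R[A,B,C,D]\subseteq R$ generated by the four commuting entries. Since $A,B,C,D$ pairwise commute, $S$ is a commutative ring and the $2\times 2$ matrix $M:=\bigl(\begin{smallmatrix}A&B\\C&D\end{smallmatrix}\bigr)$ is an honest element of $M_2(S)$. Over a commutative ring the classical adjugate identity holds: $M\cdot\operatorname{adj}(M)=\operatorname{adj}(M)\cdot M=\det(M)\,\Eins_2$, where $\det(M)=AD-BC\in S$ and $\operatorname{adj}(M)=\bigl(\begin{smallmatrix}D&-B\\-C&A\end{smallmatrix}\bigr)$.

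First I would prove the ``if'' direction: if $AD-BC$ is invertible in $R$ (equivalently, a unit in $S$, since its inverse is then automatically a polynomial expression that commutes with everything relevant — or one simply works in $R$ throughout), then $(AD-BC)^{-1}\operatorname{adj}(M)$ is a two-sided inverse of $M$ by the adjugate identity, so $M$ is invertible. Conversely, for the ``only if'' direction, suppose $M$ is invertible with inverse $N=\bigl(\begin{smallmatrix}A'&B'\\C'&D'\end{smallmatrix}\bigr)$; a priori the entries of $N$ need not commute with $A,B,C,D$, so one cannot directly take determinants. The clean way around this is the Cayley–Hamilton / characteristic-polynomial route used implicitly elsewhere in the paper, or more elementarily: multiply the identity $MN=\Eins_2$ on the left by $\operatorname{adj}(M)$ to get $(AD-BC)N=\operatorname{adj}(M)$, hence $(AD-BC)$ times each entry of $N$ equals the corresponding entry of $\operatorname{adj}(M)$; then multiply $NM=\Eins_2$ on the right by $\operatorname{adj}(M)$ to get $N(AD-BC)=\operatorname{adj}(M)$ as well. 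Comparing, $\operatorname{adj}(M)\cdot M = M\cdot\operatorname{adj}(M)=(AD-BC)\Eins_2$ combined with $MN=NM=\Eins_2$ gives $(AD-BC)(\text{suitable combination of }N\text{'s entries})=\Eins_n$ on the $n$-dimensional level, exhibiting a one-sided — hence, by finite-dimensionality over $K$ in the applications, two-sided — inverse of $AD-BC$. For a general commutative ring $R$ one phrases this as: $(AD-BC)$ divides $1$ in $R$ because $\det(M)\cdot\det(N)=\det(\Eins_2)=1$ once we note $\det$ is multiplicative on $M_2$ of the commutative ring $S':=S[A',B',C',D']$, after checking that the full set of eight entries still pairwise commutes — which it does, since $N$ is the inverse of $M$ and $M\in M_2(S)$, so $N\in M_2(S)$ as well (the inverse of a matrix over a commutative ring, when it exists, has entries in that same ring, via $N=\det(M)^{-1}\operatorname{adj}(M)$).

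The main obstacle is precisely this last subtlety about commutativity: one must be careful that the inverse matrix $N$ lives in $M_2(S)$ and not in some larger noncommutative ring, so that the multiplicative property of $\det$ is legitimately available. The resolution — that over a commutative ring an invertible matrix has its inverse given by the adjugate formula, keeping all entries inside the same commutative ring — is standard, but it is the one place where the hypothesis ``$A,B,C,D$ are commuting elements of a commutative ring $R$'' is genuinely used rather than decorative. Everything else (the adjugate identity, multiplicativity of $\det$ on $2\times 2$ matrices over a commutative ring, and identifying $AD-BC$ with the determinant of the $2n\times 2n$ block matrix via a block-triangular decomposition as in Lemma~\ref{lem:algfund:detdecomp} when the relevant blocks are invertible) is routine and can be cited or dispatched in a line.
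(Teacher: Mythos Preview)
The paper does not prove this lemma at all; it simply cites Silvester~\cite{Silvester00}. So there is no in-paper proof to compare against, and your proposal already goes further than the paper does.

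Taken literally --- $R$ commutative, $A,B,C,D\in R$ --- the result is the standard fact that $M\in M_2(R)$ is a unit iff $\det M\in R^\times$, and your adjugate/multiplicativity argument is correct and routine. The complications in your write-up come from (rightly) anticipating how the lemma is \emph{used} in the paper, where the ambient ring is really $M_m(K)$ and only the four blocks are assumed to pairwise commute. In that reading your ``only if'' direction contains a circularity: to place the inverse $N$ inside $M_2(S)$ you invoke $N=\det(M)^{-1}\operatorname{adj}(M)$, but invertibility of $\det(M)=AD-BC$ is precisely the conclusion sought. Your earlier ``suitable combination of $N$'s entries'' is the right instinct; it just isn't carried out. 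One concrete fix: from $\operatorname{adj}(M)\,M=M\operatorname{adj}(M)=\delta\,\Eins_2$ and $MN=NM=\Eins_2$ one gets $\delta N_{ij}=N_{ij}\delta=(\operatorname{adj}M)_{ij}$, and then $(N_{11}N_{22}-N_{12}N_{21})\,\delta = N_{11}A+N_{12}C = 1$ and $\delta\,(N_{22}N_{11}-N_{12}N_{21}) = AN_{11}+BN_{21} = 1$, exhibiting $\delta$ as a two-sided unit. Silvester's own argument is different again: he establishes the determinantal identity $\det(\text{block})=\det(AD-BC)$ over the base ring directly, first for invertible $D$ via the Schur complement (using $CD=DC$) and then in general by a perturbation argument, which avoids handling $N$ altogether.
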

A proof is given by Silvester \cite{Silvester00}.
\pagebreak

\section{Finite fields}
A field with a finite number of elements is called a \emph{finite field} or \emph{Galois field}. It turns out that a finite
field exists only if the number of elements equals the power of a prime, which is called the \emph{order} of the field.
As for a given order, the corresponding finite field is unique up to isomorphisms, it can be denoted as
\begin{align}
 \F_q = \F_{p^m},
\end{align}
where $\F_{p^m}$ is a field with $p^m$ elements, thus with order $p^m$. The prime number~$p$
is called the \emph{characteristic} of the field and $m$ is a natural number. For $m=1$, the field is called
\emph{prime field}. An important linear transformation from $\F_{q^n}$ to its corresponding prime field, namely
$\F_{q}$, with $q=p^m$ and $n,m \in \N^*$, is the field trace $\tr_{L/K} (\alpha)$ which maps an element of a field $L$ to a field $K$, where
$K$ has not to be a prime field, but the order of $K$ has to be a divisor of $L$.

\begin{defi}[Field trace]\label{defi:app:algfield:trace}\hfill\\
 The trace of an element $\alpha \in L:=\F_{q^n}$ to a certain ground field $K:=\F_{q}$ with $q=p^m$, $p$
 prime and $m,n \in \N^*$, is given by
 \begin{align}
  \tr_{L/K} (\alpha) = \alpha + \alpha^{q} + \alpha^{q^2} + \ldots + \alpha^{q^{n-1}}.
 \end{align}
\end{defi}

\section{Hadamard matrix}\label{app:hadamard}
Hadamard matrices appear often in considerations of the quantum information theory. They also have applications
in coding theory.
\begin{defi}[Hadamard matrix]\label{defi:app:hadamard:hadamard}\hfill\\
 Matrices with entries $\pm 1$ and whose rows are pairwise orthogonal, are called \emph{Hadamard matrices}\index{Hadamard matrix}.
\end{defi}
It then follows obviously by this definition, that
the product of a Hadamard matrix with its transpose yields $n$ times the unity matrix, where $n$ is the
dimension. In quantum information theory, the most often used Hadamard matrix is defined for two-level systems
and given by
\begin{align}\label{eqn:app:hadamard:barH}
 \bar H := \begin{pmatrix} 1 & 1\\ 1 & -1\end{pmatrix}.
\end{align}
In many cases, a normalized version, which is then also the representation of a unitary operator, reads as
\begin{align}\label{eqn:app:hadamard:H}
 H := \frac{1}{\sqrt{2}}\begin{pmatrix} 1 & 1\\ 1 & -1\end{pmatrix},
\end{align}
thus, it is an orthogonal matrix.
Whereas different construction schemes for Hadamard matrices are known and their existence is only conjectured
for dimensions which are multiples of four\footnote{The existence of Hadamard matrices for the dimension one and two
seems to be an exception.}, we will only use the \emph{Sylvester construction} which is given as follows. Starting
with $\bar H_1 = (1)$ and $\bar H_2 = \bar H$, any matrix $\bar H_{2^m}$ with $m>1$ and $m \in \N^*$ is given by
\begin{align}\label{eqn:app:hadamard:sylvester}
 \bar H_{2^m} = \begin{pmatrix} \bar H_{2^{m-1}} & \bar H_{2^{m-1}}\\\bar H_{2^{m-1}} &-\bar H_{2^{m-1}}\end{pmatrix},
\end{align}
and equals the tensor product $\bar H_{2^m} = \bar H_2 \otimes \bar H_{2^{m-1}}$. The matrix $\bar H_{2^m}$ can then be
derived directly with the help of the $m$-folded tensor product of $\bar H$ as
\begin{align}\label{eqn:app:hadamard:mfolded}
 \bar H_{2^m} = \bar H^{\otimes m}.
\end{align}
Another representation of this matrix considers its entries and it holds
\begin{align}\label{eqn:app:hadamard:-1rep}
 (\bar H_{2^m})_{i,j} = (-1)^{i\cdot j},
\end{align}
with $i,j \in \F_{2^m}$, which can be read off directly from the construction of Sylvester given in Equation \eqref{eqn:app:hadamard:sylvester}.
The same arguments hold analogously for the normalized version, where an additional factor of $2^{-m/2}$ appears for the matrix $H_{2^m}$.

\section{Pauli operators}\label{app:pauli}
For a complex Hilbert space of dimension two, the three Pauli operators $\sigma_x, \sigma_y$, and $\sigma_z$ (or $X, Y$, and $Z$) are
defined as
\begin{align}\label{eqn:pauli:matrices}
 \sigma_x = \begin{pmatrix}0&1\\1&0\end{pmatrix},\quad \sigma_y = \begin{pmatrix}0&-\iE\\\iE&0\end{pmatrix},\quad\mathrm{and}\quad \sigma_z = \begin{pmatrix}1&0\\0&-1\end{pmatrix}.
\end{align}
For $x \rightarrow 1$, $y \rightarrow 2$, and $z \rightarrow 3$ it holds
\begin{align}
 \sigma_i \sigma_j = \delta_{ij} \Eins_2 + \iE \sum_{k=1}^3 \epsilon_{ijk} \sigma_k,
\end{align}
for $i,j \in \Mg{1,2,3}$ and with $\epsilon_{ijk}$ denoting the fully antisymmetric operator, also
known as \emph{Levi-Civita} symbol. It can be checked easily, that the eigenbases of the Pauli
operators are mutually unbiased.\par
For a $d$-dimensional complex Hilbert space with $d \in \N^*$, the generators of the \emph{generalized Pauli operators}\index{Generalized Pauli operators}
are defined by
\begin{align}
 Z \ket{i} = \omega^i \ket{i} \quad\mathrm{and}\quad X \ket{i} = \ket{i \oplus_d 1},
\end{align}
with $\omega = \exp{(2 \pi \iE /d)}$ being the first $d$-th root of unity. The set of Pauli operators which is generated by $Z$ and $X$ can be used
as a basis in order to represent a linear operator of the complex Hilbert space $\cH = \C^d$.
\begin{lem}[Orthogonality of generalized Pauli operators]\label{lem:pauli:orthogonal}\hfill\\
 The set of generalized Pauli operators is an orthogonal operator basis in the sense of Hilbert-Schmidt.
\end{lem}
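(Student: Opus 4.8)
The plan is to prove Lemma \ref{lem:pauli:orthogonal} by a direct computation of the Hilbert-Schmidt inner product of two generalized Pauli operators. Recall that the set in question is $\{Z^k X^l : k,l \in \{0,\ldots,d-1\}\}$ (up to irrelevant phases), and the Hilbert-Schmidt inner product of two operators $A,B$ on $\cH=\C^d$ is $\langle A | B \rangle_{\mathrm{HS}} = \tr(A^\dagger B)$. So it suffices to show that for $(k,l)\neq(k',l')$ we have $\tr\big((Z^k X^l)^\dagger Z^{k'} X^{l'}\big)=0$, while for $(k,l)=(k',l')$ the trace equals $d$; this establishes both orthogonality and that the $d^2$ operators are linearly independent, hence form a basis of the $d^2$-dimensional space of linear operators on $\cH$.

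First I would compute $(Z^k X^l)^\dagger = X^{-l} Z^{-k}$ (using that $Z,X$ are unitary with $Z^d=X^d=\Eins_d$), so that the relevant quantity is $\tr(X^{-l} Z^{-k} Z^{k'} X^{l'}) = \tr(X^{l'-l} Z^{k'-k})$ by cyclicity of the trace. Writing $a = k'-k$ and $b = l'-l$ modulo $d$, I need $\tr(X^b Z^a)$. Evaluating on the computational basis $\{\ket{i}\}_{i=0}^{d-1}$ with the definitions from Equation \eqref{eqn:bandy:ZX}, namely $Z\ket{i}=\omega^i\ket{i}$ and $X\ket{i}=\ket{i\oplus_d 1}$, gives
\begin{align}
 \tr(X^b Z^a) = \sum_{i=0}^{d-1} \bra{i} X^b Z^a \ket{i} = \sum_{i=0}^{d-1} \omega^{ai} \bra{i}\,\ket{i \oplus_d b}.
\end{align}
The inner product $\braket{i}{i\oplus_d b}$ is $1$ if $b\equiv 0 \pmod d$ and $0$ otherwise, so the sum vanishes unless $b=0$, in which case it reduces to $\sum_{i=0}^{d-1}\omega^{ai}$. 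This is a geometric sum: it equals $d$ when $a\equiv 0\pmod d$, and when $a\not\equiv 0$ it equals $(\omega^{ad}-1)/(\omega^a-1)=0$ since $\omega^d=1$ while $\omega^a\neq 1$. Hence $\tr(X^b Z^a)=d\,\delta_{a,0}\delta_{b,0}$ (indices mod $d$), which is exactly what is needed.

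The main thing to be careful about is bookkeeping of the phase conventions: the Pauli operators $\ZX(k,l)$ in the even-characteristic case carry a factor $(-\iE)^{kl}$, and one should check that this factor, being a unit modulus scalar, does not affect the vanishing (it only rescales the diagonal $d$ by a phase, which is harmless for the orthogonality statement and for linear independence). For the prime-power tensor-product case of Equation \eqref{eqn:bandy:ZXa}, orthogonality follows by multiplicativity of the trace over tensor factors: $\tr(A_1\otimes\cdots\otimes A_m) = \prod_i \tr(A_i)$, so the inner product of two tensor-product Pauli operators is the product of the single-qudit inner products, and vanishes as soon as any one factor differs. I do not expect a serious obstacle here; the only subtlety is simply to state clearly that there are $d^2$ such operators, that they live in an operator space of dimension $d^2$, and that mutual orthogonality (with nonzero norm) of $d^2$ vectors in a $d^2$-dimensional space forces them to be a basis.
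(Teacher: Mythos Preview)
Your proposal is correct and follows essentially the same approach as the paper's proof: both reduce the Hilbert-Schmidt inner product to $\tr(Z^{a}X^{b})$ (after using unitarity and cyclicity) and observe that this vanishes unless $a\equiv b\equiv 0$. Your argument is in fact more detailed than the paper's, which simply states the trace formula without the explicit computational-basis evaluation, and your remarks on the phase factors and the tensor-product case are valid refinements that the paper omits.
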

\begin{proof}
 Using the Hilbert-Schmidt inner product, 
 \begin{align}
  \braket{Z^{\alpha}X^{\beta}}{Z^{\gamma}X^{\delta}}_{\mathrm{HS}} = \tr(Z^{\alpha-\gamma}X^{\beta-\delta}) = 
  \begin{cases}
   1\quad \mathrm{for}\;\alpha-\gamma=\beta-\delta=0,\\
   0\quad \mathrm{else},
  \end{cases}
 \end{align}
 holds for $\alpha,\beta, \gamma, \delta \in \MgE{d}$. Thus, the $d^2$ operators $Z^{\alpha}X^{\beta}$ form an orthogonal operator basis.
\end{proof}

\section{Bloch sphere representation}\label{app:pauli:bloch}
Each state of a two-dimensional quantum system can be represented as a \emph{Bloch vector}\index{Bloch vector} by three real parameters in the \emph{Bloch sphere},
where the axes are defined by the three Pauli matrices of Equation \eqref{eqn:pauli:matrices}. A pure quantum state which is written in the eigenbasis
of the Pauli-$\sigma_z$ operator, e.\,g.
\begin{align}
 \ket{\Psi} = \cos \left(\frac{\alpha}{2}\right)\ket{0_z} + \EZ^{\iE \varphi} \sin \left(\frac{\alpha}{2}\right) \ket{1_z}
\end{align}
leads to a density operator
\begin{align}
 \rho =& \ketbra{\Psi}{\Psi} = \frac{1}{2} \begin{pmatrix} 1+\cos \alpha &  \cos \varphi \sin \alpha - \iE \sin \varphi \sin \alpha \\  \cos \varphi \sin \alpha + \iE \sin \varphi \sin \alpha & 1-\cos \alpha \end{pmatrix}\\
     =& \frac{1}{2} \left( \Eins_2 + (\cos \varphi \sin \alpha) \sigma_x + (\sin \varphi \sin \alpha) \sigma_y + (\cos \alpha) \sigma_z \right), 
\end{align}
where basic trigonometric identities were used. The Bloch vector is given by
\begin{align}\label{eqn:pauli:blochvector}
 \bm R = (\cos \varphi \sin \alpha) \sigma_x + (\sin \varphi \sin \alpha) \sigma_y + (\cos \alpha) \sigma_z,
\end{align}
where the length of this vector equals one. Thus, all pure states are placed on the surface of the sphere.
As mixed states are given by $\sum_i p_i \rho_i$ with \mbox{$\sum_i p_i = 1$}, classical probabilities $p_i \in \R$ and
pure states $\rho_i$ with $i \in \N$, they lead to a sum of vectors in the Bloch sphere. Obviously, the resulting
vector will have a length smaller than one, therefore being placed inside the sphere. The zero-vector represents
then the \emph{completely mixed state}. Figure \ref{fig:pauli:bloch} illustrates the Bloch sphere, where the
extremal points on the axes are usually labeled by the eigenstates of the corresponding basis vectors. A similar
representation in higher dimensions is in principle not possible.
\begin{figure}[t]
 \centering
 \includegraphics[scale=0.20]{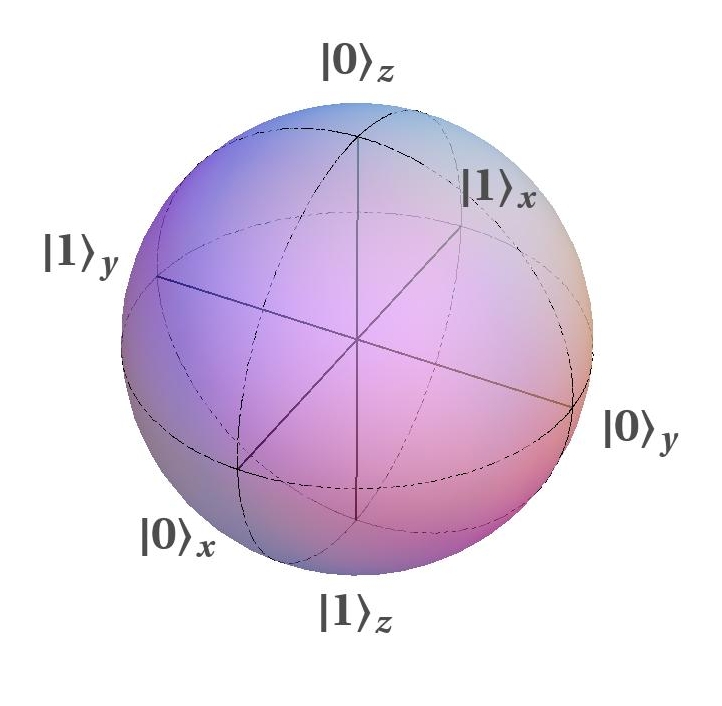}
 \caption{Illustration of a Bloch sphere, where a vector, the Bloch vector, starting at the origin of the sphere and pointing inside
 or on the surface, is capable to represent any state of a two-dimensional quantum system (qubit). The axes refer to the three Pauli
 operators, where their points on the surface are given by the corresponding eigenstates.}
 \label{fig:pauli:bloch}
\end{figure}

\section{Clifford group}\label{app:clifford}
Unitary matrices which map the set of Pauli operators onto the set of Pauli operators are called
\emph{Clifford unitary operators}\index{Clifford unitary operator} and can be represented as a symplectic matrix $A \in M_{2m}(\F_p)$
with $m \in \N^*$ describing the number of qudits, where the dimension of the Hilbert space is given by
$d=p^m$ with $p$ prime; the positive integer $p$ defines the Hilbert space dimension of a single qudit.
Within this section we give some properties of the symplectic vector space.

\begin{defi}[Symplectic matrix]\label{def:app:clifford:symplecticmatrix}\hfill\\
 A matrix $A \in M_{2m} ( \F_p )$ is called \emph{symplectic} \index{Symplectic}
 if there holds
 \begin{align}
   A^t S A = S \quad \mathrm{for} \quad S = \begin{pmatrix} 0_m & \Eins_m\\ - \Eins_m & 0_m\end{pmatrix}.
 \end{align}
\end{defi}

\begin{cor}[Symplectic matrix properties]\label{cor:app:clifford:symplecticmatrixprop}\hfill\\
 The matrix $A \in M_{2m} ( \F_p )$ is symplectic, if and only if there holds 
 \begin{align}
  s^t u = u^t s,\quad t^t v = v^t t,\quad\mathrm{and}\quad s^t v - u^t t =& \Eins_m\quad \mathrm{for}\quad A=
 \begin{pmatrix}
  s&t\\u&v
 \end{pmatrix},
 \end{align}
 for $s,t,u,v \in M_m(\F_p)$.
\end{cor}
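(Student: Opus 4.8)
\textbf{Proof strategy for Corollary~\ref{cor:app:clifford:symplecticmatrixprop}.}
The plan is to simply expand the defining condition $A^t S A = S$ of Definition~\ref{def:app:clifford:symplecticmatrix} in block form and collect the four resulting $m\times m$ blocks. First I would write $A = \begin{pmatrix} s & t \\ u & v \end{pmatrix}$ with $s,t,u,v \in M_m(\F_p)$, so that $A^t = \begin{pmatrix} s^t & u^t \\ t^t & v^t \end{pmatrix}$, and compute the product $A^t S A$ with $S = \begin{pmatrix} 0_m & \Eins_m \\ -\Eins_m & 0_m\end{pmatrix}$ in two stages. One gets
\begin{align}
 SA = \begin{pmatrix} 0_m & \Eins_m \\ -\Eins_m & 0_m \end{pmatrix} \begin{pmatrix} s & t \\ u & v \end{pmatrix} = \begin{pmatrix} u & v \\ -s & -t \end{pmatrix},
\end{align}
and hence
\begin{align}
 A^t S A = \begin{pmatrix} s^t & u^t \\ t^t & v^t \end{pmatrix} \begin{pmatrix} u & v \\ -s & -t \end{pmatrix} = \begin{pmatrix} s^t u - u^t s & s^t v - u^t t \\ t^t u - v^t s & t^t v - v^t t \end{pmatrix}.
\end{align}

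Next I would equate this block matrix with $S$ entry-by-entry. The upper-left block gives $s^t u - u^t s = 0_m$, i.e.\ $s^t u = u^t s$; the lower-right block gives $t^t v - v^t t = 0_m$, i.e.\ $t^t v = v^t t$; and the upper-right block gives $s^t v - u^t t = \Eins_m$. The lower-left block gives $t^t u - v^t s = -\Eins_m$, which is just the transpose of the upper-right condition (since $(s^t v - u^t t)^t = v^t s - t^t u$), so it is automatically implied by the other three and needs no separate listing. This establishes the ``only if'' direction. For the converse, I would observe that the computation above is an identity in the block entries: given the three stated relations, the same block expansion shows $A^t S A = S$, so $A$ is symplectic by definition. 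Both directions therefore follow from the single block calculation.

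The argument is entirely routine; there is really no hard step, only bookkeeping. The one point deserving a word of care is the characteristic-two (or general $p$) setting: when $p=2$ the sign in $S$ is irrelevant since $-\Eins_m = \Eins_m$, and likewise $-\Eins_m = \Eins_m$ on the right-hand side of the lower-left block, so the derivation of the three conditions and the redundancy of the fourth go through verbatim; for odd $p$ the signs are genuine but cancel consistently as shown. Since the paper applies this corollary repeatedly with entries in $\F_2$, it is worth noting explicitly in the write-up that the stated form of the conditions is valid over any $\F_p$ and in particular over $\F_2$, where it is used in Sections~\ref{sec:homogeneous} and~\ref{sec:inhomogeneous}.
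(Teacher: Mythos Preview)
Your proposal is correct and follows exactly the approach the paper indicates: expand $A^t S A$ blockwise from Definition~\ref{def:app:clifford:symplecticmatrix} and read off the conditions. You have in fact supplied more detail than the paper's one-line proof, including the explicit verification that the lower-left block is redundant and that the converse holds, which makes your write-up a clean drop-in.
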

\begin{proof}
 The three properties follow by simply applying Definition \ref{def:app:clifford:symplecticmatrix}
 to the matrix $A$.
\end{proof}

\begin{defi}[Symplectic product]\label{def:app:clifford:symplecticproduct}\hfill\\
  For two vectors $\vec a, \vec b \in \F_p^{2m}$, 
  the \emph{symplectic product}\index{Symplectic product} is defined as
  \begin{align}\label{eqn:app:clifford:symprod}
   (\vec a, \vec b)_{\mathrm{sp}} := \sum_{k=1}^m a_k^z b_k^x - a_k^x b_k^z \quad \mathrm{mod}\; p,
  \end{align}
  where $a^z$ refers to the first half of the vector entries and $a^x$ to the second half.
\end{defi}

\begin{cor}[Bilinearity of symplectic product]\label{cor:app:clifford:addsymp}\hfill\\
  For the symplectic product holds
  \begin{align}
    (\vec a + \vec b,  \vec c + \vec d)_{\mathrm{sp}} = (\vec a, \vec c)_{\mathrm{sp}} + (\vec a, \vec d)_{\mathrm{sp}} + (\vec b, \vec c)_{\mathrm{sp}} + (\vec b, \vec d)_{\mathrm{sp}}
  \end{align}
  with $\vec a, \vec b, \vec c$ and $\vec d \in K$.
\end{cor}
\begin{proof}
  The proof follows simply by using the definition of the symplectic product that is given by Equation \eqref{eqn:app:clifford:symprod}.
\end{proof}

\begin{defi}[Symplectic standard basis]\label{def:app:clifford:symplecticbasis}\hfill\\
  The vectors $\vec z_k, \vec x_k \in \F_p^{2m}$ with $k \in \MgE{m}$ form a \emph{symplectic basis}\index{Symplectic basis} of $\F_p^{2m}$,
  if there holds $(\vec z_k,\vec z_l)_{\mathrm{sp}} = (\vec x_k,\vec x_l)_{\mathrm{sp}} = 0$ 
  and $(\vec z_k,\vec x_l)_{\mathrm{sp}} = \delta_{kl}$ for $k,l \in \MgE{m}$.
  For the standard basis the vectors $\vec z_k$ are defined to have a one at position $k$ and zeros else, the vectors
  $\vec x_k$ have accordingly a one at position $m+k$ and zeros else.
\end{defi}

\begin{lem}[Symplectic basis]\label{lem:app:clifford:symplecticbasis}\hfill\\
  The images $\vec z'_k = C \vec z_k$ and $\vec x'_k = C \vec x_k$ of the symplectic standard basis with $k \in \MgE{m}$ and the matrix
  $C \in M_{2m}(\F_p)$ form again a symplectic basis, if and only if $C$ is a symplectic matrix in the sense of Definition \ref{def:app:clifford:symplecticmatrix}.
\end{lem}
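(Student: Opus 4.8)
The plan is to express the symplectic product as a bilinear form given by the matrix $S$ of Definition \ref{def:app:clifford:symplecticmatrix}, and then to translate the three defining relations of a symplectic basis into a single matrix identity. First I would observe that for $\vec a = (\vec a^z, \vec a^x)^t$ and $\vec b = (\vec b^z, \vec b^x)^t$ in $\F_p^{2m}$ there holds
\begin{align}
 (\vec a, \vec b)_{\mathrm{sp}} = \vec a^t S \vec b,
\end{align}
which follows at once by multiplying out the right-hand side with $S$ as in Definition \ref{def:app:clifford:symplecticmatrix} and comparing the result with Definition \ref{def:app:clifford:symplecticproduct}. Consequently, for the images $\vec z'_k = C \vec z_k$ and $\vec x'_k = C\vec x_k$ one obtains $(\vec z'_k, \vec z'_l)_{\mathrm{sp}} = \vec z_k^t (C^t S C)\vec z_l$, and analogously for the mixed pairs $(\vec z'_k, \vec x'_l)_{\mathrm{sp}}$ and for the pairs $(\vec x'_k, \vec x'_l)_{\mathrm{sp}}$.

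For the ``if'' direction I would simply insert $C^t S C = S$, which holds by Definition \ref{def:app:clifford:symplecticmatrix} when $C$ is symplectic; then $(\vec z'_k, \vec z'_l)_{\mathrm{sp}} = (\vec z_k, \vec z_l)_{\mathrm{sp}} = 0$, $(\vec x'_k, \vec x'_l)_{\mathrm{sp}} = (\vec x_k, \vec x_l)_{\mathrm{sp}} = 0$, and $(\vec z'_k, \vec x'_l)_{\mathrm{sp}} = (\vec z_k, \vec x_l)_{\mathrm{sp}} = \delta_{kl}$, using that the standard basis is symplectic. Hence the images satisfy the conditions of Definition \ref{def:app:clifford:symplecticbasis} and therefore form a symplectic basis.

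For the ``only if'' direction I would use that $\{\vec z_1,\dots,\vec z_m,\vec x_1,\dots,\vec x_m\}$ is exactly the standard basis of $\F_p^{2m}$: the entry of the matrix $M := C^t S C$ in row $i$ and column $j$ equals $\vec e_i^t M \vec e_j$, where $\vec e_i,\vec e_j$ range over precisely the vectors $\vec z_k$ and $\vec x_k$. If the images form a symplectic basis, the three families of relations in Definition \ref{def:app:clifford:symplecticbasis}, imposed for all $k,l$, give $\vec e_i^t M \vec e_j = \vec e_i^t S \vec e_j$ for every pair of standard basis vectors, whence $M = S$, that is $C^t S C = S$, which is precisely the statement that $C$ is symplectic. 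I do not expect a real obstacle here; the only point that needs a moment's care is the observation that matching the values of two bilinear forms on all ordered pairs of standard basis vectors is equivalent to equality of the associated Gram matrices, which is exactly why the hypotheses of Definition \ref{def:app:clifford:symplecticbasis} must be read over all indices and all three types of pairs.
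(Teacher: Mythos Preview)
Your proof is correct and takes a more direct route than the paper's. The paper writes $C = \begin{pmatrix} s & t \\ u & v \end{pmatrix}$ in block form, expands each of the three symplectic-product conditions in terms of $s,t,u,v$ to obtain $s^t u = u^t s$, $t^t v = v^t t$, and $s^t v - u^t t = \Eins_m$, and then invokes Corollary~\ref{cor:app:clifford:symplecticmatrixprop} to identify these as precisely the block characterization of symplecticity. Your argument bypasses the block decomposition entirely by recognizing $(\vec a,\vec b)_{\mathrm{sp}} = \vec a^t S \vec b$ and working directly with the $2m\times 2m$ Gram matrix $M = C^t S C$; this is cleaner and makes the equivalence essentially tautological, whereas the paper's version has the advantage of tying the lemma explicitly to the block conditions used elsewhere in the text.

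One small point worth adding: the three families in Definition~\ref{def:app:clifford:symplecticbasis} determine only three of the four $m\times m$ blocks of $M$ directly, since the pairs $(\vec x'_k,\vec z'_l)$ are not listed. To conclude $M = S$ you should observe that $M^t = C^t S^t C = -C^t S C = -M$, so $M$ is skew-symmetric just as $S$ is, and the remaining block is then forced. With that one-line remark your argument is complete.
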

\begin{proof}
  For a symplectic standard basis there holds $(\vec z_k,\vec z_l)_{\mathrm{sp}} = 0$ for all\linebreak $k,l \in \MgE{m}$. The transformation of these properties
  with a matrix
\begin{align}
 C=\begin{pmatrix} s&t\\u&v \end{pmatrix}
\end{align}
  may cause a new symplectic basis, if there holds $(C \vec z_k,C \vec z_l)_{\mathrm{sp}} = 0$.
  With $\vec \nu_k$ being the $k$-th row vector of $\nu \in \Mg{s,t,u,v}$ and $\vec z^z$ being the $z$-part of the vector $\vec z$ and $\vec x^x$ being
  the $x$-part of the vector $\vec x$, we find
  \begin{align}
   ( C \vec z_k, C \vec z_l )_\mathrm{sp} =& \sum_{i=1}^m (\vec s_i \cdot \vec z^z_k) (\vec u_i \cdot \vec z^z_l) - (\vec u_i \cdot \vec z^z_k) (\vec s_i \cdot \vec z^z_l)\\
                                          =& (\vec z^z_k)^t \cdot s^t \cdot u \cdot \vec z^z_l - (\vec z^z_k)^t \cdot u^t \cdot s \cdot \vec z^z_l.
  \end{align}
  Since this equation has to be zero for all $k,l \in \MgE{m}$, the expression \mbox{$s^t u = u^t s$} has to be fulfilled. Similarly, from 
  $(\vec x_k,\vec x_l)_{\mathrm{sp}} = 0$ follows that $t^t v = v^t t$. From $(\vec z_k,\vec x_l)_{\mathrm{sp}} = \delta_{kl}$ we get
  \begin{align}
   ( C \vec z_k, C \vec x_l )_\mathrm{sp} =& \sum_{i=1}^m (\vec s_i \cdot \vec z^z_k) (\vec v_i \cdot \vec x^x_l) - (\vec u_i \cdot \vec z^z_k) (\vec t_i \cdot \vec x^x_l)\\
                                          =& (\vec z^z_k)^t \cdot s^t \cdot v \cdot \vec x^x_l - (\vec z^z_k)^t \cdot u^t \cdot t \cdot \vec x^x_l.
  \end{align}
  In order to set this equation equal to $\delta_{ij}$, the expression $s^t v - u^t t = \Eins_m$ needs to be true. But by Corollary \ref{cor:app:clifford:symplecticmatrixprop}
  these three conditions hold if and only if $C$ is a symplectic matrix.
\end{proof}

\chapter{Wiedemann's conjecture proof approach}\label{app:wiedemann_proofs}
Within this chapter, an approach is given that seems to be a promising candidate for proving Wiedemann's conjecture of \cite{Wiedemann88}%
, where the proven analogy of Theorem~\ref{thm:ferset:wiedemann} is the origin of. The idea is based on the
construction of the unitary operator $U$ as it was given in Section \ref{sec:Uconstruction}.


\section{Unitary operator approach}
By Theorem \ref{thm:ferset:wiedemann} it was proven, that Wiedemann's conjecture is true, if and only if the recursive construction
of Equation \eqref{eqn:ferset:recursion} leads to a complete set of cyclic MUBs, which is given when the multiplicative
order of the stabilizer matrix $C_{2^k}$ of Equation~\eqref{eqn:ferset:C2k} equals $2^{2^k}+1$ for all $k \in \N$. As
the recursive construction of the unitary generator $U_{2^k} = V_{2^k}/(-\tr V_{2^k})$ that is given by Equation
\eqref{eqn:unitop:recursion}, is a representation of the stabilizer matrix (cf. Equation \eqref{eqn:unitop:Vm}), Wiedemann's
conjecture is analogously proven if it is shown that the multiplicative order of $U_{2^k}$ equals $2^{2^k}+1$. We
start with the following corollary that follows from Conjecture \ref{con:unitor:spectrum}:

\begin{cor}[Characteristic polynomial of Fermat sets]\label{cor:wiedproof:charpolfermat}\hfill\\ 
 If the characteristic polynomial of the unitary operator $U_{2^k} = V_{2^k}/(-\tr V_{2^k})$ with $V_m$ as given
 in Equation \eqref{eqn:unitop:Vmij}, equals $\sum_{l=0}^d x^l$ with $d=2^{2^k}$ for all $k \in \N$, $U_{2^k}$
 generates a complete set of cyclic MUBs and Wiedemann's conjecture is true. This would also prove Conjecture
 \ref{con:unitor:spectrum} in the case of Fermat sets.
\end{cor}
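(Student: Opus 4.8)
The plan is to reduce the claimed statement to two facts that are already nearly in hand: (a) that the spectral hypothesis on $U_{2^k}$ forces its multiplicative order to be exactly $2^{2^k}+1$, and (b) that this multiplicative order statement is, via Theorem \ref{thm:ferset:wiedemann} together with Equations \eqref{eqn:unitop:recursion} and \eqref{eqn:unitop:Vm}, equivalent to Wiedemann's conjecture. So the corollary is essentially a bookkeeping consequence of Conjecture \ref{con:unitor:spectrum} and Theorem \ref{thm:ferset:wiedemann}, and the work is to check that the hypothesis ``characteristic polynomial of $U_{2^k}$ equals $\sum_{l=0}^{d} x^l$'' really does deliver (a) and (b) without any gap.

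\textbf{Key steps.} First I would observe that $\sum_{l=0}^{d} x^l = (x^{d+1}-1)/(x-1)$ over $\C$, so its roots are precisely the $(d+1)$-st roots of unity other than $1$, each with multiplicity one, where $d+1 = 2^{2^k}+1$. Hence if this is the characteristic polynomial of $U_{2^k}$, then $U_{2^k}$ is diagonalizable (distinct eigenvalues) with spectrum exactly the non-trivial $(d+1)$-st roots of unity; this is exactly the content of Conjecture \ref{con:unitor:spectrum} specialized to $U_{2^k}$, which proves the last sentence of the corollary in the Fermat case. Second, from the spectrum one reads off $U_{2^k}^{d+1} = \Eins_d$ and, since the eigenvalues include a primitive $(d+1)$-st root of unity (indeed a whole orbit of them), no smaller power of $U_{2^k}$ is the identity; thus the multiplicative order of $U_{2^k}$ is exactly $2^{2^k}+1$. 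Third, I would invoke the identification of $U_{2^k}$ with the Fibonacci-based unitary generator attached to the stabilizer matrix $C_{2^k}$ of Equation \eqref{eqn:ferset:C2k} (via Equation \eqref{eqn:unitop:U} and the recursion of Equation \eqref{eqn:unitop:recursion}): the order of $U_{2^k}$ as a unitary operator coincides with the order of $C_{2^k}$ as a symplectic matrix, because the map $C \mapsto U$ is (projectively) a faithful representation on the relevant cyclic group generated by $C_{2^k}$, so $C_{2^k}$ also has order $2^{2^k}+1$. Fourth, by Theorem \ref{thm:ferset:wiedemann}, $C_{2^k}$ having order $2^{2^k}+1$ for all $k$ is equivalent to the characteristic polynomial of every $B_{2^k}$ having Fibonacci index $2^{2^k}+1$, which is equivalent to Wiedemann's conjecture; this chain closes the proof.

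\textbf{Main obstacle.} The delicate point is step three: pinning down that the multiplicative order of the unitary $U_{2^k}$ equals the multiplicative order of the symplectic matrix $C_{2^k}$, rather than merely dividing it or being a multiple of it up to a global phase. Because $U$ is only determined by $C$ up to a global phase (the factor $\EZ^{\iE\Psi}$ in Equation \eqref{eqn:unitop:U}), a naive argument could lose or gain a factor dividing $d+1$; one must use the specific normalization (the trace condition $\tr U = -1$, which excludes the eigenvalue $1$, cf. the discussion after Conjecture \ref{con:unitor:spectrum}) to fix the phase so that $U_{2^k}^{d+1} = \Eins_d$ holds on the nose and no proper divisor works. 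Concretely, I would argue that the characteristic polynomial hypothesis already encodes this normalization, since $\sum_{l=0}^{d} x^l$ is the characteristic polynomial of a genuine order-$(d+1)$ element whose eigenvalue $1$ is absent; so the hypothesis of the corollary is precisely strong enough to bypass the phase ambiguity. The remaining steps are routine, and indeed most of the content is already proved in Theorem \ref{thm:ferset:wiedemann}; the corollary is an application of that theorem under the stronger, more computationally convenient hypothesis on the characteristic polynomial of $U_{2^k}$.
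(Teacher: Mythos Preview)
Your proposal is correct and follows essentially the same approach as the paper: factor $\sum_{l=0}^d x^l = (x^{d+1}-1)/(x-1)$ to read off that the spectrum of $U_{2^k}$ consists of the $d$ non-trivial $(d+1)$-st roots of unity, deduce that $U_{2^k}$ has multiplicative order exactly $d+1$, and then invoke the already-established link (stated just before the corollary and resting on Theorem \ref{thm:ferset:wiedemann}) that this order statement for all $k$ is equivalent to Wiedemann's conjecture. The paper's proof is somewhat terser and does not dwell on the phase issue you flag in step three; it simply takes for granted the sentence preceding the corollary that the order of $U_{2^k}$ being $2^{2^k}+1$ suffices, so your discussion of the normalization is more careful than the paper's but not a different method.
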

\begin{proof}
 If the multiplicative order of $U_{2^k}$ is $d+1$, its eigenvalues have to be roots of unity of order $d+1$,
 at least one of the eigenvalues has to be a principal root. The converse of Conjecture \ref{con:unitor:spectrum}
 is always true, namely if $U_{2^k}$ has $d$ different eigenvalues, it has order $d+1$. Thus, if we can show that
 $U_{2^k}$ has $d$ different eigenvalues of order $d+1$, it generates a complete set of cyclic MUBs. If we regard
 the characteristic polynomial $\chi_{U_{2^k}}$ of the unitary generator matrix with $d$ different eigenvalues of order $d+1$
 and exclude, as allowed, the eigenvalue $1$ from the set, we get
\begin{align}\label{eqn:wiedproof:prod}
 \chi_{U_{2^k}}(x) = \prod_{l=1}^d ( x - \EZ^{2 \pi \iE l / (d+1)} ).
\end{align}
 If we multiply Equation \eqref{eqn:wiedproof:prod} by $x-1$, the result is equal to $x^{d+1} - 1$, which, divided
 again by $x-1$, equals $\sum_{l=0}^d x^l$ as expected. 
\end{proof}
So we can prove Wiedemann's conjecture by calculating the characteristic polynomial of $U_{2^k}$, which is given by
\begin{align}
 \chi_{U_{2^k}}(x) :=& \det(x\Eins_d - U_{2^k})\\
		    =& x^d - a_1 x^{d-1} + \ldots + (-1)^{d-1} x^1 x_{1} + (-1)^d x^0 a_d;
\end{align}
see e.\,g. Jacobson \cite[p. 196]{Jacobson96}.
The value $a_1$ is defined as the sum of all diagonal elements, thus the trace of $U_{2^k}$. The value $a_2$ is the
sum of the two-rowed diagonal minors and so on; $a_d$ is then the determinant.
With $U_{2^k} = V_{2^k} / (-\tr V_{2^k})$ and $\tr V_{2^k} = -\iE 2^{m/2}$ as shown in Equation \eqref{eqn:unitop:trVm},
we find
\begin{align}
 \chi_{U_{2^k}}(x)  =& \det(x\Eins_d - \alpha V_{2^k})\\
		    =& x^d - a'_1 \alpha x^{d-1} + \ldots + (-1)^{d-1} \alpha^{d-1} x^1 a'_{d-1} + (-1)^d \alpha^d x^0 a'_d,
\end{align}
with $\alpha:=-\iE 2^{-m/2}$. In order to fulfill the requirements of Corollary \ref{cor:wiedproof:charpolfermat}, all the
coefficients $c_l$ of the characteristic polynomial $\chi_{U_{2^k}}$ of $U_{2^k}$ have to be one with
$\chi_{U_{2^k}}(x) = \sum_{l=0}^d c_l x^{d-l}$. Therefore, we should find for the minor sums~$a'_l$ of $V_{2^k}$, that
$a'_l = (-\iE 2^{m/2})^l$ holds for $l \in \MgE{d}$. For the trace $a'_1$ this was already shown in Equation \eqref{eqn:unitop:trVm}.
We will refer to $a'_l$, which is the sum of the $l$-rowed diagonal minors, as $\tr_l$ in the following. Since the proof
is not finished, we can only discuss the calculation of the two-rowed diagonal minors, but we expect that the higher orders may
be derived in a similar manner.

\subsection{Two-rowed diagonal minors}
The sum of the two-rowed diagonal minors of the matrix $V_{2^k}$ is given by
\begin{align}\label{eqn:wiedproof:trvVm}
 \tr_2 (V_{2^k}) := \sum_{j=2}^d \sum_{i=1}^{j-1} \left( (V_{2^k})_{i,i} \cdot (V_{2^k})_{j,j} - (V_{2^k})_{i,j} \cdot (V_{2^k})_{j,i} \right).
\end{align}
To calculate this sum, we will separate the sum into the diagonal and the anti-diagonal term. We like to recall that the elements
of $V_{2^k}$ are by construction limited as $(V_{2^k})_{i,j} \in \Mg{\pm 1, \pm \iE}$.\par
In those cases, where we pick a real and an imaginary element from the diagonal, their product would be imaginary. But we have seen in the
calculation of the trace in Equation \eqref{eqn:unitop:trVm}, that the sum of all real elements on the diagonal is zero. So the sum of these cases adds a zero to $\tr_2 (V_{2^k})$.\par
Whenever we take two real elements from the diagonal, they equal either $-1$ or $+1$, but we know their sum is zero and $d/2$ elements are real.
All their products give
\begin{align}\label{eqn:wiedproof:tr2diagrr}
 \frac{d^2-4d}{32} \cdot 1\cdot 1 + \frac{d^2-4d}{32} \cdot (-1)\cdot (-1) + \frac{d^2}{16} \cdot (-1) \cdot 1 = - \frac{d}{4}.
\end{align}
The last case holds when two imaginary elements from the diagonal axis are multiplied. Since the sum of the $2^{m-1}$ imaginary
elements gives $-\iE 2^{m/2}$ (cf. Equation \eqref{eqn:unitop:trVm}), there occur $2^{m-2} - 2^{m/2-1}$ elements on the
diagonal which are given by $+\iE$ and $2^{m-2} + 2^{m/2-1}$ elements on the diagonal which are given by $-\iE$. In analogy,
we can calculate the sum of the pairwise products of all imaginary elements on the diagonal as
\begin{align}\label{eqn:wiedproof:tr2diagii}
  \frac{(c^-)^2 - c^-}{2}\cdot \iE^2 + \frac{(c^+)^2 - c^+}{2}\cdot (-\iE)^2 + (c^- \cdot c^+)\cdot \iE\cdot(-\iE) = -\frac{d}{4},
\end{align}
with abbreviations $c^-:=(2^{m-2} - 2^{m/2 -1})$ and $c^+:=(2^{m-2} + 2^{m/2 -1})$.\par
In order to calculate the contributions of the anti-diagonal term of Equation~\eqref{eqn:wiedproof:trvVm}, some preliminary considerations
are needed. Both factors of the term, namely $(V_{2^k})_{i,j}$ and $(V_{2^k})_{j,i}$ have a factor of $(-1)^{i\cdot j}$ or $(-1)^{j\cdot i}$,
respectively, according to Equation~\eqref{eqn:unitop:Vmij}. Since their product is zero, also the product of $(-1)^{i\cdot 0}$ and
$(-1)^{j\cdot 0}$ is zero. Thus, we have
\begin{align}\label{eqn:wiedproof:Vi0j0}
 -(V_{2^k})_{i,j} \cdot (V_{2^k})_{j,i} = -(V_{2^k})_{i,0} \cdot (V_{2^k})_{j,0}.
\end{align}
By the recursion relation given in Equation \eqref{eqn:unitop:recursion} it is clear, that the set of elements of the first column of $V_{2^k}$ 
equals the set of elements of $V_{2^{k-1}}$. By construction, for any $V_{2^k}$, every second column has imaginary elements and in each column,
except for the first, the number of positive and negative elements is equal.\footnote{The Hadamard matrix $\bar H$ has an equal number of $+1$
and $-1$ in each column, except for the first, where only $+1$ appears (cf. Appendix \ref{app:hadamard}).}\par
The product of two elements from the first column, with one real and one imaginary element then vanishes, as the sum of imaginary elements in the first
column is zero.\par
If we take two imaginary elements from the first column of $V_{2^k}$, we have an equivalent calculation as for the real elements on the diagonal
to sum up the products of all pairs. With the same number of positive and negative imaginary elements the sum gives, involving the minus sign of Equation
\eqref{eqn:wiedproof:Vi0j0} and analogously to Equation \eqref{eqn:wiedproof:tr2diagrr},
\begin{align}
 -\left( \frac{d^2-4d}{32} \cdot \iE \cdot \iE + \frac{d^2-4d}{32} \cdot (-\iE)\cdot (-\iE) + \frac{d^2}{16} \cdot (-\iE) \cdot \iE\right) = - \frac{d}{4}.
\end{align}\par
For the negative sum of the products of two real elements of the first column of $V_{2^k}$ we remember, that the set of elements of this column
equals the set of all elements of $V_{2^{k-1}}$. This matrix is given by the multiplication of a phase vector with Sylvester's Hadamard matrix.
The rows with even index (starting with $0$), all have only real entries. Therefore, the number of $+1$ and $-1$ is equal in all columns, but not
in the first, where only $+1$ appears. Thus, there are $2^{m-2}+2^{m/2-1}$ entries with value $+1$ and $2^{m-2}-2^{m/2-1}$ entries with value $-1$.
The sum of their products can be calculated analogously to Equation \eqref{eqn:wiedproof:tr2diagii} as
\begin{align}
  - \left(\frac{(c^-)^2 - c^-}{2}\cdot 1^2 + \frac{(c^+)^2 - c^+}{2} (-1)^2 + (c^- \cdot c^+)\cdot 1\cdot (-1)\right) = -\frac{d}{4}.
\end{align}
Summing up four times $-d/4$ in the above discussion leads to
\begin{align}
 \tr_2 (V_{2^k}) = -d = -2^m,
\end{align}
which equals $\tr (V_{2^k})^2$ and is the expected result (cf. Equation \eqref{eqn:unitop:trVm}).\par
Continuing this process, maybe by using a complete induction which shows that $\tr_{l+1} (V_{2^k})$ is given by
$\tr_1 (V_{2^k}) \cdot \tr_{l} (V_{2^k})$, could lead to a proof of Wiedemann's conjecture.

\chapter{Results}\label{app:results}

\section{Solutions for cyclic MUBs}\label{app:MUBsolutions}

\subsection{Homogeneous sets with group structure}\label{app:homosets:group}
Homogeneous sets with an additive group structure in the $G_j^z$ components of the class generators for $j \in \MgE{d}$
were introduced in Section \ref{sec:homogeneous:group}. Results for dimension $d=2^3$ of the Hilbert space were
given. Here we list sets which appear by testing all stabilizer matrices $C$ in the form of Equation~\eqref{eqn:homosets:group:C}
for $d=2^4$. For each set we give the reduced stabilizer matrix $B$ as well as the
matrix~$R$, both with the vector $\vec n$ as their index which indicates the entanglement properties of the bases
of the specific set, as defined in Section \ref{sec:entangleprop}. Additionally, the corresponding quantum circuits
are derived by the methods of Section \ref{sec:gatedecomp:homogeneous} and listed below the matrices.

\begin{align}
 B_{(2,0,0,4,11)} = 
\begin{pmatrix}
 1 & 1 & 1 & 1\\
 1 & 1 & 1 & 0\\
 0 & 0 & 1 & 1\\
 1 & 0 & 0 & 0
\end{pmatrix}\quad\mathrm{and}\quad
 R_{(2,0,0,4,11)} = 
\begin{pmatrix}
 1 & 1 & 0 & 1\\
 1 & 0 & 1 & 0\\
 0 & 1 & 0 & 0\\
 1 & 0 & 0 & 0
\end{pmatrix}.
\end{align}

\centerline{
\begin{tikzpicture}[scale=0.8, transform shape]
\tikzstyle{operator} = [draw,fill=white,minimum size=1.5em]
\tikzstyle{phase} = [fill,shape=circle,minimum size=5pt,inner sep=0pt]
\tikzstyle{surround} = [fill=black!10,thick,draw=black,rounded corners=2mm]
\node at (1,0) (q1) {};
\node at (1,-1) (q2) {};
\node at (1,-2) (q3) {};
\node at (1,-3) (q4) {};
\node[phase] (p0) at (2,-3) {} edge [-] (q4);
\draw (2,0) node[draw,circle, minimum size=1.1em] {};
\draw[-] (p0) -- (2,0cm+.551em);
\node[phase] (p1) at (3,0) {} edge [-] (q1);
\draw (3,-3) node[draw,circle, minimum size=1.1em] {};
\draw[-] (p1) -- (3,-3cm-.551em);
\node[phase] (p2) at (4,-1) {} edge [-] (q2);
\draw (4,0) node[draw,circle, minimum size=1.1em] {};
\draw[-] (p2) -- (4,0cm+.551em);
\node[phase] (p3) at (4,-3) {} edge [-] (p0);
\draw (4,-2) node[draw,circle, minimum size=1.1em] {};
\draw[-] (p3) -- (4,-2cm+.551em);
\node[phase] (p4) at (5,-2) {} edge [-] (q3);
\draw (5,-1) node[draw,circle, minimum size=1.1em] {};
\draw[-] (p4) -- (5,-1cm+.551em);
\node[phase] (p5) at (6,-1) {} edge [-] (p2);
\draw (6,-2) node[draw,circle, minimum size=1.1em] {};
\draw[-] (p5) -- (6,-2cm-.551em);
\node[phase] (p6) at (7,-2) {} edge [-] (p4);
\draw (7,-1) node[draw,circle, minimum size=1.1em] {};
\draw[-] (p6) -- (7,-1cm+.551em);
\node[operator] (p7) at (8,0) {$\iE$} edge [-] (p1);
\node[operator] (p8) at (8,-2) {$\iE$} edge [-] (p6);
\node[phase] (p9) at (9,-1) {} edge [-] (p5);
\node[operator] (p10) at (9,-2) {$-1$} edge [-] (p8);
\draw[-] (p9) -- (p10);
\node[phase] (p11) at (10,-2) {} edge [-] (p10);
\node[operator] (p12) at (10,-3) {$-1$} edge [-] (p3);
\draw[-] (p11) -- (p12);
\node[operator] (p13) at (11,0) {$H$} edge [-] (p7);
\node[operator] (p14) at (11,-1) {$H$} edge [-] (p9);
\node[operator] (p15) at (11,-2) {$H$} edge [-] (p11);
\node[operator] (p16) at (11,-3) {$H$} edge [-] (p12);
\draw[-] (p13) -- (12,0);
\draw[-] (p14) -- (12,-1);
\draw[-] (p15) -- (12,-2);
\draw[-] (p16) -- (12,-3);
\draw[decorate,decoration={brace},thick] (12.3,0.2) to node[midway,right] (bracketR) {$\;\;=C_{(2,0,0,4,11)}$} (12.3,-3.2);
\begin{pgfonlayer}{background}
\node[surround] (background) [fit = (bracketL) (p13) (p16) (bracketR)] {};
\end{pgfonlayer}
\end{tikzpicture}
}
\pagebreak

\begin{align}
 B_{(2,0,1,4,10)} = 
\begin{pmatrix}
 1 & 1 & 1 & 0\\
 0 & 1 & 1 & 1\\
 0 & 0 & 1 & 1\\
 1 & 0 & 0 & 0
\end{pmatrix}\quad\mathrm{and}\quad
 R_{(2,0,1,4,10)} = 
\begin{pmatrix}
 1 & 0 & 0 & 1\\
 0 & 0 & 1 & 0\\
 0 & 1 & 0 & 0\\
 1 & 0 & 0 & 0
\end{pmatrix}.
\end{align}

\centerline{
\begin{tikzpicture}[scale=0.8, transform shape]
\tikzstyle{operator} = [draw,fill=white,minimum size=1.5em]
\tikzstyle{phase} = [fill,shape=circle,minimum size=5pt,inner sep=0pt]
\tikzstyle{surround} = [fill=black!10,thick,draw=black,rounded corners=2mm]
\node at (1,0) (q1) {};
\node at (1,-1) (q2) {};
\node at (1,-2) (q3) {};
\node at (1,-3) (q4) {};
\node[phase] (p0) at (2,-3) {} edge [-] (q4);
\draw (2,0) node[draw,circle, minimum size=1.1em] {};
\draw[-] (p0) -- (2,0cm+.551em);
\node[phase] (p1) at (3,0) {} edge [-] (q1);
\draw (3,-3) node[draw,circle, minimum size=1.1em] {};
\draw[-] (p1) -- (3,-3cm-.551em);
\node[phase] (p2) at (4,-2) {} edge [-] (q3);
\draw (4,-1) node[draw,circle, minimum size=1.1em] {};
\draw[-] (p2) -- (4,-1cm+.551em);
\node[phase] (p3) at (5,-1) {} edge [-] (q2);
\draw (5,-2) node[draw,circle, minimum size=1.1em] {};
\draw[-] (p3) -- (5,-2cm-.551em);
\node[phase] (p4) at (6,-2) {} edge [-] (p2);
\draw (6,-1) node[draw,circle, minimum size=1.1em] {};
\draw[-] (p4) -- (6,-1cm+.551em);
\node[operator] (p5) at (7,0) {$\iE$} edge [-] (p1);
\node[operator] (p6) at (7,-2) {$\iE$} edge [-] (p4);
\node[phase] (p7) at (8,-1) {} edge [-] (p3);
\node[operator] (p8) at (8,-2) {$-1$} edge [-] (p6);
\draw[-] (p7) -- (p8);
\node[phase] (p9) at (9,-1) {} edge [-] (p7);
\node[operator] (p10) at (9,-3) {$-1$} edge [-] (p0);
\draw[-] (p9) -- (p10);
\node[phase] (p11) at (10,-2) {} edge [-] (p8);
\node[operator] (p12) at (10,-3) {$-1$} edge [-] (p10);
\draw[-] (p11) -- (p12);
\node[operator] (p13) at (11,0) {$H$} edge [-] (p5);
\node[operator] (p14) at (11,-1) {$H$} edge [-] (p9);
\node[operator] (p15) at (11,-2) {$H$} edge [-] (p11);
\node[operator] (p16) at (11,-3) {$H$} edge [-] (p12);
\draw[-] (p13) -- (12,0);
\draw[-] (p14) -- (12,-1);
\draw[-] (p15) -- (12,-2);
\draw[-] (p16) -- (12,-3);
\draw[decorate,decoration={brace},thick] (12.3,0.2) to node[midway,right] (bracketR) {$\;\;=C_{(2,0,1,4,10)}$} (12.3,-3.2);
\begin{pgfonlayer}{background}
\node[surround] (background) [fit = (bracketL) (p13) (p16) (bracketR)] {};
\end{pgfonlayer}
\end{tikzpicture}
}

%
%

\begin{align}
 B_{(2,0,2,4,9)} = 
\begin{pmatrix}
 1 & 1 & 1 & 1\\
 0 & 0 & 1 & 0\\
 0 & 0 & 0 & 1\\
 1 & 0 & 0 & 0
\end{pmatrix}\quad\mathrm{and}\quad
 R_{(2,0,2,4,9)} = 
\begin{pmatrix}
 1 & 0 & 0 & 1\\
 0 & 1 & 1 & 0\\
 0 & 1 & 0 & 0\\
 1 & 0 & 0 & 0
\end{pmatrix}.
\end{align}

\centerline{
\begin{tikzpicture}[scale=0.8, transform shape]
\tikzstyle{operator} = [draw,fill=white,minimum size=1.5em]
\tikzstyle{phase} = [fill,shape=circle,minimum size=5pt,inner sep=0pt]
\tikzstyle{surround} = [fill=black!10,thick,draw=black,rounded corners=2mm]
\node at (1,0) (q1) {};
\node at (1,-1) (q2) {};
\node at (1,-2) (q3) {};
\node at (1,-3) (q4) {};
\node[phase] (p0) at (2,-3) {} edge [-] (q4);
\draw (2,0) node[draw,circle, minimum size=1.1em] {};
\draw[-] (p0) -- (2,0cm+.551em);
\node[phase] (p1) at (3,0) {} edge [-] (q1);
\draw (3,-3) node[draw,circle, minimum size=1.1em] {};
\draw[-] (p1) -- (3,-3cm-.551em);
\node[phase] (p2) at (4,-2) {} edge [-] (q3);
\draw (4,-1) node[draw,circle, minimum size=1.1em] {};
\draw[-] (p2) -- (4,-1cm+.551em);
\node[phase] (p3) at (5,-1) {} edge [-] (q2);
\draw (5,-2) node[draw,circle, minimum size=1.1em] {};
\draw[-] (p3) -- (5,-2cm-.551em);
\node[operator] (p4) at (6,0) {$\iE$} edge [-] (p1);
\node[operator] (p5) at (6,-2) {$\iE$} edge [-] (p2);
\node[operator] (p6) at (6,-3) {$\iE$} edge [-] (p0);
\node[phase] (p7) at (7,-1) {} edge [-] (p3);
\node[operator] (p8) at (7,-3) {$-1$} edge [-] (p6);
\draw[-] (p7) -- (p8);
\node[phase] (p9) at (8,-2) {} edge [-] (p5);
\node[operator] (p10) at (8,-3) {$-1$} edge [-] (p8);
\draw[-] (p9) -- (p10);
\node[operator] (p11) at (9,0) {$H$} edge [-] (p4);
\node[operator] (p12) at (9,-1) {$H$} edge [-] (p7);
\node[operator] (p13) at (9,-2) {$H$} edge [-] (p9);
\node[operator] (p14) at (9,-3) {$H$} edge [-] (p10);
\draw[-] (p11) -- (10,0);
\draw[-] (p12) -- (10,-1);
\draw[-] (p13) -- (10,-2);
\draw[-] (p14) -- (10,-3);
\draw[decorate,decoration={brace},thick] (10.3,0.2) to node[midway,right] (bracketR) {$\;\;=C_{(2,0,2,4,9)}$} (10.3,-3.2);
\begin{pgfonlayer}{background}
\node[surround] (background) [fit = (bracketL) (p11) (p14) (bracketR)] {};
\end{pgfonlayer}
\end{tikzpicture}
}

\begin{align}
 B_{(2,1,0,2,12)} = 
\begin{pmatrix}
 1 & 1 & 1 & 1\\
 1 & 1 & 0 & 0\\
 0 & 1 & 1 & 0\\
 1 & 0 & 0 & 0
\end{pmatrix}\quad\mathrm{and}\quad
 R_{(2,1,0,2,12)} = 
\begin{pmatrix}
 1 & 1 & 0 & 1\\
 1 & 0 & 1 & 0\\
 0 & 1 & 0 & 0\\
 1 & 0 & 0 & 0
\end{pmatrix}.
\end{align}

\centerline{
\begin{tikzpicture}[scale=0.8, transform shape]
\tikzstyle{operator} = [draw,fill=white,minimum size=1.5em]
\tikzstyle{phase} = [fill,shape=circle,minimum size=5pt,inner sep=0pt]
\tikzstyle{surround} = [fill=black!10,thick,draw=black,rounded corners=2mm]
\node at (1,0) (q1) {};
\node at (1,-1) (q2) {};
\node at (1,-2) (q3) {};
\node at (1,-3) (q4) {};
\node[phase] (p0) at (2,-3) {} edge [-] (q4);
\draw (2,0) node[draw,circle, minimum size=1.1em] {};
\draw[-] (p0) -- (2,0cm+.551em);
\node[phase] (p1) at (3,0) {} edge [-] (q1);
\draw (3,-3) node[draw,circle, minimum size=1.1em] {};
\draw[-] (p1) -- (3,-3cm-.551em);
\node[phase] (p2) at (4,-1) {} edge [-] (q2);
\draw (4,0) node[draw,circle, minimum size=1.1em] {};
\draw[-] (p2) -- (4,0cm+.551em);
\node[phase] (p3) at (4,-3) {} edge [-] (p0);
\draw (4,-2) node[draw,circle, minimum size=1.1em] {};
\draw[-] (p3) -- (4,-2cm+.551em);
\node[phase] (p4) at (5,-2) {} edge [-] (q3);
\draw (5,-1) node[draw,circle, minimum size=1.1em] {};
\draw[-] (p4) -- (5,-1cm+.551em);
\node[phase] (p5) at (6,-1) {} edge [-] (p2);
\draw (6,-2) node[draw,circle, minimum size=1.1em] {};
\draw[-] (p5) -- (6,-2cm-.551em);
\node[phase] (p6) at (7,-2) {} edge [-] (p4);
\draw (7,-1) node[draw,circle, minimum size=1.1em] {};
\draw[-] (p6) -- (7,-1cm+.551em);
\node[operator] (p7) at (8,0) {$\iE$} edge [-] (p1);
\node[operator] (p8) at (8,-1) {$\iE$} edge [-] (p5);
\node[operator] (p9) at (8,-3) {$\iE$} edge [-] (p3);
\node[phase] (p10) at (9,-1) {} edge [-] (p8);
\node[operator] (p11) at (9,-2) {$-1$} edge [-] (p6);
\draw[-] (p10) -- (p11);
\node[operator] (p12) at (10,0) {$H$} edge [-] (p7);
\node[operator] (p13) at (10,-1) {$H$} edge [-] (p10);
\node[operator] (p14) at (10,-2) {$H$} edge [-] (p11);
\node[operator] (p15) at (10,-3) {$H$} edge [-] (p9);
\draw[-] (p12) -- (11,0);
\draw[-] (p13) -- (11,-1);
\draw[-] (p14) -- (11,-2);
\draw[-] (p15) -- (11,-3);
\draw[decorate,decoration={brace},thick] (11.3,0.2) to node[midway,right] (bracketR) {$\;\;=C_{(2,1,0,2,12)}$} (11.3,-3.2);
\begin{pgfonlayer}{background}
\node[surround] (background) [fit = (bracketL) (p12) (p15) (bracketR)] {};
\end{pgfonlayer}
\end{tikzpicture}
}
\pagebreak

\begin{align}
 B_{(2,1,1,2,11)} = 
\begin{pmatrix}
 1 & 0 & 1 & 0\\
 0 & 0 & 1 & 1\\
 0 & 1 & 0 & 0\\
 1 & 0 & 0 & 0
\end{pmatrix}\quad\mathrm{and}\quad
 R_{(2,1,1,2,11)} = 
\begin{pmatrix}
 1 & 0 & 0 & 1\\
 0 & 0 & 1 & 0\\
 0 & 1 & 0 & 0\\
 1 & 0 & 0 & 0
\end{pmatrix}.
\end{align}

\centerline{
\begin{tikzpicture}[scale=0.8, transform shape]
\tikzstyle{operator} = [draw,fill=white,minimum size=1.5em]
\tikzstyle{phase} = [fill,shape=circle,minimum size=5pt,inner sep=0pt]
\tikzstyle{surround} = [fill=black!10,thick,draw=black,rounded corners=2mm]
\node at (1,0) (q1) {};
\node at (1,-1) (q2) {};
\node at (1,-2) (q3) {};
\node at (1,-3) (q4) {};
\node[phase] (p0) at (2,-3) {} edge [-] (q4);
\draw (2,0) node[draw,circle, minimum size=1.1em] {};
\draw[-] (p0) -- (2,0cm+.551em);
\node[phase] (p1) at (3,0) {} edge [-] (q1);
\draw (3,-3) node[draw,circle, minimum size=1.1em] {};
\draw[-] (p1) -- (3,-3cm-.551em);
\node[phase] (p2) at (4,-2) {} edge [-] (q3);
\draw (4,-1) node[draw,circle, minimum size=1.1em] {};
\draw[-] (p2) -- (4,-1cm+.551em);
\node[phase] (p3) at (5,-1) {} edge [-] (q2);
\draw (5,-2) node[draw,circle, minimum size=1.1em] {};
\draw[-] (p3) -- (5,-2cm-.551em);
\node[phase] (p4) at (6,-2) {} edge [-] (p2);
\draw (6,-1) node[draw,circle, minimum size=1.1em] {};
\draw[-] (p4) -- (6,-1cm+.551em);
\node[operator] (p5) at (7,0) {$\iE$} edge [-] (p1);
\node[operator] (p6) at (7,-1) {$\iE$} edge [-] (p3);
\node[operator] (p7) at (7,-2) {$\iE$} edge [-] (p4);
\node[phase] (p8) at (8,-2) {} edge [-] (p7);
\node[operator] (p9) at (8,-3) {$-1$} edge [-] (p0);
\draw[-] (p8) -- (p9);
\node[operator] (p10) at (9,0) {$H$} edge [-] (p5);
\node[operator] (p11) at (9,-1) {$H$} edge [-] (p6);
\node[operator] (p12) at (9,-2) {$H$} edge [-] (p8);
\node[operator] (p13) at (9,-3) {$H$} edge [-] (p9);
\draw[-] (p10) -- (10,0);
\draw[-] (p11) -- (10,-1);
\draw[-] (p12) -- (10,-2);
\draw[-] (p13) -- (10,-3);
\draw[decorate,decoration={brace},thick] (10.3,0.2) to node[midway,right] (bracketR) {$\;\;=C_{(2,1,1,2,11)}$} (10.3,-3.2);
\begin{pgfonlayer}{background}
\node[surround] (background) [fit = (bracketL) (p10) (p13) (bracketR)] {};
\end{pgfonlayer}
\end{tikzpicture}
}

\begin{align}
 B_{(2,1,2,2,10)} = 
\begin{pmatrix}
 1 & 0 & 1 & 0\\
 0 & 1 & 0 & 1\\
 0 & 1 & 1 & 0\\
 1 & 0 & 0 & 0
\end{pmatrix}\quad\mathrm{and}\quad
 R_{(2,1,2,2,10)} = 
\begin{pmatrix}
 1 & 0 & 0 & 1\\
 0 & 0 & 1 & 0\\
 0 & 1 & 0 & 0\\
 1 & 0 & 0 & 0
\end{pmatrix}.
\end{align}

\centerline{
\begin{tikzpicture}[scale=0.8, transform shape]
\tikzstyle{operator} = [draw,fill=white,minimum size=1.5em]
\tikzstyle{phase} = [fill,shape=circle,minimum size=5pt,inner sep=0pt]
\tikzstyle{surround} = [fill=black!10,thick,draw=black,rounded corners=2mm]
\node at (1,0) (q1) {};
\node at (1,-1) (q2) {};
\node at (1,-2) (q3) {};
\node at (1,-3) (q4) {};
\node[phase] (p0) at (2,-3) {} edge [-] (q4);
\draw (2,0) node[draw,circle, minimum size=1.1em] {};
\draw[-] (p0) -- (2,0cm+.551em);
\node[phase] (p1) at (3,0) {} edge [-] (q1);
\draw (3,-3) node[draw,circle, minimum size=1.1em] {};
\draw[-] (p1) -- (3,-3cm-.551em);
\node[phase] (p2) at (4,-2) {} edge [-] (q3);
\draw (4,-1) node[draw,circle, minimum size=1.1em] {};
\draw[-] (p2) -- (4,-1cm+.551em);
\node[phase] (p3) at (5,-1) {} edge [-] (q2);
\draw (5,-2) node[draw,circle, minimum size=1.1em] {};
\draw[-] (p3) -- (5,-2cm-.551em);
\node[phase] (p4) at (6,-2) {} edge [-] (p2);
\draw (6,-1) node[draw,circle, minimum size=1.1em] {};
\draw[-] (p4) -- (6,-1cm+.551em);
\node[operator] (p5) at (7,0) {$\iE$} edge [-] (p1);
\node[operator] (p6) at (7,-1) {$\iE$} edge [-] (p3);
\node[phase] (p7) at (8,-1) {} edge [-] (p6);
\node[operator] (p8) at (8,-2) {$-1$} edge [-] (p4);
\draw[-] (p7) -- (p8);
\node[phase] (p9) at (9,-2) {} edge [-] (p8);
\node[operator] (p10) at (9,-3) {$-1$} edge [-] (p0);
\draw[-] (p9) -- (p10);
\node[operator] (p11) at (10,0) {$H$} edge [-] (p5);
\node[operator] (p12) at (10,-1) {$H$} edge [-] (p7);
\node[operator] (p13) at (10,-2) {$H$} edge [-] (p9);
\node[operator] (p14) at (10,-3) {$H$} edge [-] (p10);
\draw[-] (p11) -- (11,0);
\draw[-] (p12) -- (11,-1);
\draw[-] (p13) -- (11,-2);
\draw[-] (p14) -- (11,-3);
\draw[decorate,decoration={brace},thick] (11.3,0.2) to node[midway,right] (bracketR) {$\;\;=C_{(2,1,2,2,10)}$} (11.3,-3.2);
\begin{pgfonlayer}{background}
\node[surround] (background) [fit = (bracketL) (p11) (p14) (bracketR)] {};
\end{pgfonlayer}
\end{tikzpicture}
}

\begin{align}
 B_{(2,2,2,0,11)} = 
\begin{pmatrix}
 1 & 0 & 0 & 1\\
 0 & 1 & 1 & 0\\
 1 & 1 & 0 & 0\\
 0 & 1 & 0 & 0
\end{pmatrix}\quad\mathrm{and}\quad
 R_{(2,2,2,0,11)} = 
\begin{pmatrix}
 1 & 1 & 0 & 1\\
 1 & 0 & 1 & 0\\
 0 & 1 & 0 & 0\\
 1 & 0 & 0 & 0
\end{pmatrix}.
\end{align}

\centerline{
\begin{tikzpicture}[scale=0.8, transform shape]
\tikzstyle{operator} = [draw,fill=white,minimum size=1.5em]
\tikzstyle{phase} = [fill,shape=circle,minimum size=5pt,inner sep=0pt]
\tikzstyle{surround} = [fill=black!10,thick,draw=black,rounded corners=2mm]
\node at (1,0) (q1) {};
\node at (1,-1) (q2) {};
\node at (1,-2) (q3) {};
\node at (1,-3) (q4) {};
\node[phase] (p0) at (2,-3) {} edge [-] (q4);
\draw (2,0) node[draw,circle, minimum size=1.1em] {};
\draw[-] (p0) -- (2,0cm+.551em);
\node[phase] (p1) at (3,0) {} edge [-] (q1);
\draw (3,-3) node[draw,circle, minimum size=1.1em] {};
\draw[-] (p1) -- (3,-3cm-.551em);
\node[phase] (p2) at (4,-1) {} edge [-] (q2);
\draw (4,0) node[draw,circle, minimum size=1.1em] {};
\draw[-] (p2) -- (4,0cm+.551em);
\node[phase] (p3) at (4,-3) {} edge [-] (p0);
\draw (4,-2) node[draw,circle, minimum size=1.1em] {};
\draw[-] (p3) -- (4,-2cm+.551em);
\node[phase] (p4) at (5,-2) {} edge [-] (q3);
\draw (5,-1) node[draw,circle, minimum size=1.1em] {};
\draw[-] (p4) -- (5,-1cm+.551em);
\node[phase] (p5) at (6,-1) {} edge [-] (p2);
\draw (6,-2) node[draw,circle, minimum size=1.1em] {};
\draw[-] (p5) -- (6,-2cm-.551em);
\node[phase] (p6) at (7,-2) {} edge [-] (p4);
\draw (7,-1) node[draw,circle, minimum size=1.1em] {};
\draw[-] (p6) -- (7,-1cm+.551em);
\node[operator] (p7) at (8,-1) {$\iE$} edge [-] (p5);
\node[operator] (p8) at (8,-2) {$\iE$} edge [-] (p6);
\node[operator] (p9) at (8,-3) {$\iE$} edge [-] (p3);
\node[phase] (p10) at (9,0) {} edge [-] (p1);
\node[operator] (p11) at (9,-1) {$-1$} edge [-] (p7);
\draw[-] (p10) -- (p11);
\node[operator] (p12) at (10,0) {$H$} edge [-] (p10);
\node[operator] (p13) at (10,-1) {$H$} edge [-] (p11);
\node[operator] (p14) at (10,-2) {$H$} edge [-] (p8);
\node[operator] (p15) at (10,-3) {$H$} edge [-] (p9);
\draw[-] (p12) -- (11,0);
\draw[-] (p13) -- (11,-1);
\draw[-] (p14) -- (11,-2);
\draw[-] (p15) -- (11,-3);
\draw[decorate,decoration={brace},thick] (11.3,0.2) to node[midway,right] (bracketR) {$\;\;=C_{(2,2,2,0,11)}$} (11.3,-3.2);
\begin{pgfonlayer}{background}
\node[surround] (background) [fit = (bracketL) (p12) (p15) (bracketR)] {};
\end{pgfonlayer}
\end{tikzpicture}
}
\pagebreak

\vspace{.5cm}

The listed sets are not unique. For each system we found the following number of different solutions
(which may give rise to possible symmetries):

\begin{align*}
 \vec n = (3,0,2,0,12) \Rightarrow&\quad 1\cdot4\cdot 15\cdot 4!\;\mathrm{solutions},\\
 \vec n = (2,0,0,4,11) \Rightarrow&\quad 4\cdot4\cdot 15\cdot 4!\;\mathrm{solutions},\\
 \vec n = (2,0,1,4,10) \Rightarrow&\quad 6\cdot4\cdot 15\cdot 4!\;\mathrm{solutions},\\
 \vec n = (2,0,2,4,09) \Rightarrow&\quad 1\cdot4\cdot 15\cdot 4!\;\mathrm{solutions},\\
 \vec n = (2,1,0,2,12) \Rightarrow&\quad 6\cdot4\cdot 15\cdot 4!\;\mathrm{solutions},\\
 \vec n = (2,1,1,2,11) \Rightarrow&\quad 6\cdot4\cdot 15\cdot 4!\;\mathrm{solutions},\\
 \vec n = (2,1,2,2,10) \Rightarrow&\quad 2\cdot4\cdot 15\cdot 4!\;\mathrm{solutions},\\
 \vec n = (2,2,2,0,11) \Rightarrow&\quad 2\cdot4\cdot 15\cdot 4!\;\mathrm{solutions}.
\end{align*}
As discussed in Section \ref{sec:homogeneous}, the factor of $15=2^4-1$ can be explained by a freedom
in the choice of $R$ (cf. Equation \eqref{eqn:homosets:group:Cfactor}). It is not clear, if this is
uncorrelated to all permutations of the indexing of the four qubits, which would explain the occurrence
of the factor $4!$.

\subsection{Homogeneous sets with semigroup structure}\label{app:homosets:semigroup}
\input{semigroupset_results}

\subsection{Inhomogeneous sets}\label{app:inhomosets}
\input{inhomoset_results}

\subsection{Testing Wiedemann's conjecture}\label{app:fermat_based:wiedemann}
Wiedemann tested his conjecture with dimensions $d=2^{2^k}$ for $k \in \MgN{8}$ in \cite{Wiedemann88}.
As claimed in Section \ref{sec:fermatset}, we are able to raise this test up to $k=11$. Since the testing procedure
is too complicated to be done by hand, we offer the used \emph{Matlab} code to provide maximal insight into
the calculation. It should be mentioned that the method \emph{divisor\_list} is generated by a
\emph{Mathematica} code, since \emph{Matlab} cannot handle very long integers that easily. This code
takes the factors of a Fermat number, calculates all divisors, writes them in a binary representation
and reverses that string to have the least significant bit on the left hand side. Finally, these
divisors are provided within a list that can be used for the \emph{Matlab} code. 
The following code uses the parameter \emph{ksize} that equals the variable $k$.

\begingroup
\small
\begin{lstlisting}[language=Matlab]
function [ ] = TestWiedemann(ksize)

% testing generator for 2^ksize qubits
size=2^ksize;

% initialize matrices
Zero=zeros([size,size]);
C=zeros([2*size,2*size]);

% create unity matrices One=1_{size} and One2=1_{2*size}
One  = eye(size);
One2 = eye(2*size);

% create matrix C
maxrow=0;
row=1;
for i=1:2*size
  C(i,row)=1;
  C(row,i)=1;
  row=row+1;
  if (row>maxrow)
    row=1;
    if (maxrow==0)
      maxrow=1;
    else
      maxrow=maxrow*2;
    end
  end
end
   

% test if C^{2^m+1}=One2
CC=C;
for l=1:size
  CC=mod(CC*CC,2);
end
  
% if yes...
if ( mod( CC*C,2 ) == One2) 
  % load divisors of 2^m+1 in binary
  % representation whearas the
  % least significant bit is on the left
  divisor_list=divisors_bin(size);
    
  wrong=0;
    
  for i=1:length(divs)
	  
    % initialize testmatrix
    test=One2;

    % load factor
    factor=divisor_list{i};

    % calculate the power of C given by the loaded
    % factor and write the result to the testmatrix
    CC=C;
    for j=1:length(factor)
      if (factor(j)=='1')
        test=mod(test*CC,2);
      end
      CC=mod(CC*CC,2);
    end
      
    % check if off-diagonal blocks equal zero
    if (test(1:size,size+1:2*size) == Zero)
      if (test(size+1:2*size,1:size) == Zero)
        % if yes, Wiedemanns conjecture would be wrong
        wrong=1;
        break
      end
    end
  end
    
  % if Wiedemanns conjecture has never been testet wrong, it
  % fits for the testet dimension
  if (wrong == 0) 
    fprintf('Wiedemanns conjecture tested correct');
    fprintf('for 2^%d qubits!\n\n',ksize); 
  else
    fprintf('Wiedemanns conjecture is wrong!\n\n');
  end
end
end
\end{lstlisting}
\endgroup
\pagebreak

\subsection{Triangle solutions}\label{app:fibonacci_based:triangle}
For the method we introduced in Chapter \ref{subsec:fibset:numerical} we are able to find
reduced stabilizer matrices $B$ in the form of Equation \eqref{eqn:fibset:A}. Here we list
solutions for the submatrix $A$ for dimensions $d=2^m$ with $m=\Mg{2,\ldots, 600}$.\footnote{The
used algorithm keeps the matrix $A$ as small as possible and is able to generate solutions
for even higher dimensions.}
The runtime of that algorithm is much shorter than expected in Section \ref{subsec:fibset:numerical},
since the number of divisors of an integer $2^m+1$ with $m \in \N^*$ seems in most times to be far
away from the supposed limit of $\sqrt{2^m+1}$, which is the leading factor in
Equation~\eqref{eqn:fibset:mmlimit} for large values of $m$.
Table \ref{table:app:triangle} gives solutions for the matrices~$A$, where the number in the leftmost
column indicates the number of qubits $m$ for the solution of $A$ in the subsequent column. Moving a
column to the right increments the number of qubits by one.

\begin{table}[h!]
\begin{tabular}{|c||c|c|c|c|c|c|}
\hline
$2$ \rule{0cm}{.15cm}  &
$\begin{smallmatrix}0\end{smallmatrix}$ &
$\begin{smallmatrix}0\end{smallmatrix}$ &
$\begin{smallmatrix}1\end{smallmatrix}$ &
$\begin{smallmatrix}0\end{smallmatrix}$ &
$\begin{smallmatrix}0\end{smallmatrix}$ &
$\begin{smallmatrix}1\end{smallmatrix}$ \\
\hline
$8$ \rule{0cm}{.55cm}  &
$\begin{smallmatrix}0&1\\1&1\end{smallmatrix}$ &
$\begin{smallmatrix}0\end{smallmatrix}$ &
$\begin{smallmatrix}1&0\\0&0\end{smallmatrix}$ &
$\begin{smallmatrix}0\end{smallmatrix}$ &
$\begin{smallmatrix}0&0&1\\0&0&0\\1&0&0\end{smallmatrix}$ &
$\begin{smallmatrix}1\end{smallmatrix}$ \\[1.15ex]	
\hline
$14$ \rule{0cm}{.44cm}  &
$\begin{smallmatrix}0\end{smallmatrix}$ &
$\begin{smallmatrix}0&1\\1&1\end{smallmatrix}$ &
$\begin{smallmatrix}1\end{smallmatrix}$ &
$\begin{smallmatrix}1\end{smallmatrix}$ &
$\begin{smallmatrix}0\end{smallmatrix}$ &
$\begin{smallmatrix}1\end{smallmatrix}$ \\[.35ex]
\hline
$20$ \rule{0cm}{.55cm}  &
$\begin{smallmatrix}1&0&0\\0&0&0\\0&0&1\end{smallmatrix}$ &
$\begin{smallmatrix}0&0&1\\0&0&0\\1&0&0\end{smallmatrix}$ &
$\begin{smallmatrix}0&1\\1&0\end{smallmatrix}$ &
$\begin{smallmatrix}0\end{smallmatrix}$ &
$\begin{smallmatrix}1&0\\0&1\end{smallmatrix}$ &
$\begin{smallmatrix}0&0&1\\0&0&0\\1&0&0\end{smallmatrix}$ \\[1.15ex]
\hline
$26$ \rule{0cm}{.55cm}  &
$\begin{smallmatrix}0\end{smallmatrix}$ &
$\begin{smallmatrix}1&0\\0&0\end{smallmatrix}$ &
$\begin{smallmatrix}0&0&1\\0&0&1\\1&1&1\end{smallmatrix}$ &
$\begin{smallmatrix}0\end{smallmatrix}$ &
$\begin{smallmatrix}0\end{smallmatrix}$ &
$\begin{smallmatrix}1&0&1\\0&0&1\\1&1&1\end{smallmatrix}$ \\[1.15ex]
\hline
$32$ \rule{0cm}{.55cm}  &
$\begin{smallmatrix}0&1\\1&0\end{smallmatrix}$ &
$\begin{smallmatrix}1&0&1\\0&1&0\\1&0&1\end{smallmatrix}$ &
$\begin{smallmatrix}0&1&1\\1&0&1\\1&1&1\end{smallmatrix}$ &
$\begin{smallmatrix}0\end{smallmatrix}$ &
$\begin{smallmatrix}0&0&1\\0&0&1\\1&1&1\end{smallmatrix}$ &
$\begin{smallmatrix}0&1&0\\1&1&0\\0&0&1\end{smallmatrix}$ \\[1.15ex]
\hline
$38$ \rule{0cm}{.65cm}  &
$\begin{smallmatrix}0&1\\1&1\end{smallmatrix}$ &
$\begin{smallmatrix}1&0&1\\0&0&0\\1&0&1\end{smallmatrix}$ &
$\begin{smallmatrix}1&0\\0&1\end{smallmatrix}$ &
$\begin{smallmatrix}0\end{smallmatrix}$ &
$\begin{smallmatrix}0&0&0&1\\0&0&0&0\\0&0&0&1\\1&0&1&0\end{smallmatrix}$ &
$\begin{smallmatrix}0&0&1\\0&0&1\\1&1&1\end{smallmatrix}$ \\[1.85ex]	
\hline
$44$ \rule{0cm}{.65cm}  &
$\begin{smallmatrix}1&0&0\\0&1&0\\0&0&0\end{smallmatrix}$ &
$\begin{smallmatrix}0&1&0\\1&0&0\\0&0&1\end{smallmatrix}$ &
$\begin{smallmatrix}1&0\\0&0\end{smallmatrix}$ &
$\begin{smallmatrix}1\end{smallmatrix}$ &
$\begin{smallmatrix}0&0&0&1\\0&1&0&1\\0&0&1&0\\1&1&0&0\end{smallmatrix}$ &
$\begin{smallmatrix}1\end{smallmatrix}$ \\[1.85ex]
\hline
$50$ \rule{0cm}{.65cm}  &
$\begin{smallmatrix}0\end{smallmatrix}$ &
$\begin{smallmatrix}0\end{smallmatrix}$ &
$\begin{smallmatrix}0&0&0&1\\0&0&1&1\\0&1&1&0\\1&1&0&0\end{smallmatrix}$ &
$\begin{smallmatrix}0&1&1\\1&1&1\\1&1&1\end{smallmatrix}$ &
$\begin{smallmatrix}1&0&1\\0&0&0\\1&0&1\end{smallmatrix}$ &
$\begin{smallmatrix}0&0&1\\0&1&0\\1&0&0\end{smallmatrix}$ \\[1.85ex]
\hline
$56$ \rule{0cm}{.65cm}  &
$\begin{smallmatrix}1&0&0\\0&0&0\\0&0&1\end{smallmatrix}$ &
$\begin{smallmatrix}0&0&1\\0&0&0\\1&0&0\end{smallmatrix}$ &
$\begin{smallmatrix}0&0&0&1\\0&0&1&0\\0&1&0&1\\1&0&1&0\end{smallmatrix}$ &
$\begin{smallmatrix}1&1&0\\1&0&1\\0&1&0\end{smallmatrix}$ &
$\begin{smallmatrix}0&1&0\\1&0&0\\0&0&1\end{smallmatrix}$ &
$\begin{smallmatrix}0&1\\1&0\end{smallmatrix}$ \\[1.85ex]
\hline
$62$ \rule{0cm}{.65cm}  &
$\begin{smallmatrix}0&1\\1&1\end{smallmatrix}$ &
$\begin{smallmatrix}1&0\\0&0\end{smallmatrix}$ &
$\begin{smallmatrix}0&0&1&0\\0&0&0&0\\1&0&1&1\\0&0&1&0\end{smallmatrix}$ &
$\begin{smallmatrix}0&0&1&0\\0&0&0&1\\1&0&0&1\\0&1&1&1\end{smallmatrix}$ &
$\begin{smallmatrix}1&0&1\\0&0&0\\1&0&1\end{smallmatrix}$ &
$\begin{smallmatrix}1&0&1\\0&0&0\\1&0&1\end{smallmatrix}$ \\[1.85ex]
\hline
$68$ \rule{0cm}{.65cm}  &
$\begin{smallmatrix}1&0&0\\0&0&0\\0&0&1\end{smallmatrix}$ &
$\begin{smallmatrix}0&0&0&1\\0&0&0&1\\0&0&1&0\\1&1&0&0\end{smallmatrix}$ &
$\begin{smallmatrix}0&0&1\\0&1&0\\1&0&0\end{smallmatrix}$ &
$\begin{smallmatrix}0&1&1\\1&0&1\\1&1&1\end{smallmatrix}$ &
$\begin{smallmatrix}0&1&1\\1&0&0\\1&0&1\end{smallmatrix}$ &
$\begin{smallmatrix}0&0&0&1\\0&0&1&0\\0&1&1&1\\1&0&1&1\end{smallmatrix}$ \\[1.85ex]
\hline
$74$ \rule{0cm}{.55cm}  &
$\begin{smallmatrix}1&0&0\\0&0&1\\0&1&0\end{smallmatrix}$ &
$\begin{smallmatrix}0&0&1\\0&0&0\\1&0&0\end{smallmatrix}$ &
$\begin{smallmatrix}1\end{smallmatrix}$ &
$\begin{smallmatrix}0&1\\1&0\end{smallmatrix}$ &
$\begin{smallmatrix}1&0\\0&0\end{smallmatrix}$ &
$\begin{smallmatrix}1\end{smallmatrix}$ \\[1.15ex]
\hline
$80$ \rule{0cm}{.65cm}  &
$\begin{smallmatrix}0&1\\1&1\end{smallmatrix}$ &
$\begin{smallmatrix}0&0&1\\0&0&1\\1&1&1\end{smallmatrix}$ &
$\begin{smallmatrix}1&0\\0&1\end{smallmatrix}$ &
$\begin{smallmatrix}1&0&1\\0&1&0\\1&0&0\end{smallmatrix}$ &
$\begin{smallmatrix}1&0&0\\0&0&0\\0&0&1\end{smallmatrix}$ &
$\begin{smallmatrix}0&0&1&0\\0&0&0&0\\1&0&0&1\\0&0&1&1\end{smallmatrix}$ \\[1.85ex]
\hline
$86$ \rule{0cm}{.65cm}  &
$\begin{smallmatrix}0\end{smallmatrix}$ &
$\begin{smallmatrix}0&0&0&1\\0&0&1&1\\0&1&0&1\\1&1&1&1\end{smallmatrix}$ &
$\begin{smallmatrix}0&0&0&1\\0&0&0&0\\0&0&0&0\\1&0&0&1\end{smallmatrix}$ &
$\begin{smallmatrix}0\end{smallmatrix}$ &
$\begin{smallmatrix}0\end{smallmatrix}$ &
$\begin{smallmatrix}1&0&1\\0&0&1\\1&1&1\end{smallmatrix}$ \\[1.85ex]
\hline
$92$ \rule{0cm}{.65cm}  &
$\begin{smallmatrix}0&0&0&1\\0&0&1&0\\0&1&0&1\\1&0&1&0\end{smallmatrix}$ &
$\begin{smallmatrix}0&0&0&1\\0&1&0&0\\0&0&0&1\\1&0&1&1\end{smallmatrix}$ &
$\begin{smallmatrix}0&0&1&1\\0&1&0&1\\1&0&1&1\\1&1&1&1\end{smallmatrix}$ &
$\begin{smallmatrix}0&1&0&0\\1&1&1&0\\0&1&0&1\\0&0&1&1\end{smallmatrix}$ &
$\begin{smallmatrix}0&1\\1&1\end{smallmatrix}$ &
$\begin{smallmatrix}0&0&0&1\\0&0&0&1\\0&0&1&0\\1&1&0&0\end{smallmatrix}$ \\[1.85ex]
\hline
\end{tabular}
\end{table}
\pagebreak

\begin{table}[h!]
\begin{tabular}{|c||c|c|c|c|c|c|}
\hline
$98$ \rule{0cm}{.65cm}  &
$\begin{smallmatrix}0&0&0&1\\0&0&1&1\\0&1&0&1\\1&1&1&1\end{smallmatrix}$ &
$\begin{smallmatrix}0&0&1&0\\0&0&1&1\\1&1&1&0\\0&1&0&1\end{smallmatrix}$ &
$\begin{smallmatrix}0&0&1&0\\0&1&0&0\\1&0&1&0\\0&0&0&0\end{smallmatrix}$ &
$\begin{smallmatrix}1\end{smallmatrix}$ &
$\begin{smallmatrix}0&0&0&1\\0&0&1&0\\0&1&1&1\\1&0&1&0\end{smallmatrix}$ &
$\begin{smallmatrix}0&0&0&1\\0&0&0&1\\0&0&1&0\\1&1&0&0\end{smallmatrix}$ \\[1.85ex]
\hline
$104$ \rule{0cm}{.65cm}  &
$\begin{smallmatrix}0&0&0&1\\0&1&1&0\\0&1&1&1\\1&0&1&0\end{smallmatrix}$ &
$\begin{smallmatrix}1&1&0\\1&1&1\\0&1&0\end{smallmatrix}$ &
$\begin{smallmatrix}0&0&0&1\\0&1&1&1\\0&1&1&1\\1&1&1&1\end{smallmatrix}$ &
$\begin{smallmatrix}0&0&0&1\\0&1&1&1\\0&1&1&1\\1&1&1&0\end{smallmatrix}$ &
$\begin{smallmatrix}0&0&1\\0&1&0\\1&0&0\end{smallmatrix}$ &
$\begin{smallmatrix}1\end{smallmatrix}$ \\[1.85ex]
\hline
$110$ \rule{0cm}{.65cm}  &
$\begin{smallmatrix}0&0&0&1\\0&0&1&0\\0&1&0&1\\1&0&1&0\end{smallmatrix}$ &
$\begin{smallmatrix}0&0&1&1\\0&1&1&1\\1&1&1&1\\1&1&1&1\end{smallmatrix}$ &
$\begin{smallmatrix}1&0&1\\0&0&0\\1&0&1\end{smallmatrix}$ &
$\begin{smallmatrix}0\end{smallmatrix}$ &
$\begin{smallmatrix}0&1&1\\1&1&1\\1&1&1\end{smallmatrix}$ &
$\begin{smallmatrix}0&0&1\\0&1&0\\1&0&1\end{smallmatrix}$ \\[1.85ex]
\hline
$116$ \rule{0cm}{.65cm}  &
$\begin{smallmatrix}1&0&0\\0&1&0\\0&0&1\end{smallmatrix}$ &
$\begin{smallmatrix}0&0&0&1\\0&1&1&0\\0&1&0&0\\1&0&0&1\end{smallmatrix}$ &
$\begin{smallmatrix}0&0&1&0\\0&0&1&1\\1&1&0&1\\0&1&1&1\end{smallmatrix}$ &
$\begin{smallmatrix}0&0&1&1\\0&1&1&0\\1&1&1&0\\1&0&0&1\end{smallmatrix}$ &
$\begin{smallmatrix}1&0&0\\0&1&0\\0&0&1\end{smallmatrix}$ &
$\begin{smallmatrix}0&0&1\\0&0&0\\1&0&0\end{smallmatrix}$ \\[1.85ex]
\hline
$122$ \rule{0cm}{.65cm}  &
$\begin{smallmatrix}0&0&0&1\\0&0&1&0\\0&1&1&1\\1&0&1&0\end{smallmatrix}$ &
$\begin{smallmatrix}0&0&0&1\\0&1&0&1\\0&0&0&1\\1&1&1&1\end{smallmatrix}$ &
$\begin{smallmatrix}0&1&1\\1&0&1\\1&1&1\end{smallmatrix}$ &
$\begin{smallmatrix}1&0&0&1\\0&0&0&0\\0&0&0&0\\1&0&0&1\end{smallmatrix}$ &
$\begin{smallmatrix}1&0&1\\0&0&0\\1&0&1\end{smallmatrix}$ &
$\begin{smallmatrix}0&0&0&1\\0&1&1&1\\0&1&0&1\\1&1&1&1\end{smallmatrix}$ \\[1.85ex]
\hline
$128$ \rule{0cm}{.65cm}  &
$\begin{smallmatrix}0&1&0\\1&1&0\\0&0&1\end{smallmatrix}$ &
$\begin{smallmatrix}0&0&1&0\\0&0&1&1\\1&1&1&0\\0&1&0&1\end{smallmatrix}$ &
$\begin{smallmatrix}1&1&0\\1&0&0\\0&0&0\end{smallmatrix}$ &
$\begin{smallmatrix}0\end{smallmatrix}$ &
$\begin{smallmatrix}0&0&0&1\\0&0&1&0\\0&1&0&1\\1&0&1&1\end{smallmatrix}$ &
$\begin{smallmatrix}1&0&1\\0&0&1\\1&1&0\end{smallmatrix}$ \\[1.85ex]
\hline
$134$ \rule{0cm}{.65cm}  &
$\begin{smallmatrix}0\end{smallmatrix}$ &
$\begin{smallmatrix}1&0&1&1\\0&1&1&0\\1&1&0&1\\1&0&1&1\end{smallmatrix}$ &
$\begin{smallmatrix}1&1&0&0\\1&1&1&1\\0&1&1&1\\0&1&1&1\end{smallmatrix}$ &
$\begin{smallmatrix}0&0&0&1\\0&1&0&1\\0&0&0&0\\1&1&0&0\end{smallmatrix}$ &
$\begin{smallmatrix}0&0&1&0\\0&1&0&0\\1&0&0&0\\0&0&0&1\end{smallmatrix}$ &
$\begin{smallmatrix}1&1&0\\1&0&1\\0&1&0\end{smallmatrix}$ \\[1.85ex]
\hline
$140$ \rule{0cm}{.65cm}  &
$\begin{smallmatrix}0&0&0&1\\0&1&0&0\\0&0&0&0\\1&0&0&1\end{smallmatrix}$ &
$\begin{smallmatrix}0&0&0&1\\0&1&1&0\\0&1&0&0\\1&0&0&1\end{smallmatrix}$ &
$\begin{smallmatrix}0&1&0&0\\1&1&0&0\\0&0&0&1\\0&0&1&1\end{smallmatrix}$ &
$\begin{smallmatrix}0&1\\1&1\end{smallmatrix}$ &
$\begin{smallmatrix}1&0&0\\0&0&0\\0&0&1\end{smallmatrix}$ &
$\begin{smallmatrix}0&0&0&1\\0&1&1&0\\0&1&0&1\\1&0&1&0\end{smallmatrix}$ \\[1.85ex]
\hline
$146$ \rule{0cm}{.65cm}  &
$\begin{smallmatrix}0&0&0&1\\0&0&1&0\\0&1&0&0\\1&0&0&0\end{smallmatrix}$ &
$\begin{smallmatrix}0&0&1&1\\0&1&0&1\\1&0&1&0\\1&1&0&1\end{smallmatrix}$ &
$\begin{smallmatrix}0&1&1\\1&0&1\\1&1&1\end{smallmatrix}$ &
$\begin{smallmatrix}0&1&1&1\\1&1&1&1\\1&1&0&0\\1&1&0&0\end{smallmatrix}$ &
$\begin{smallmatrix}1&0&0\\0&0&1\\0&1&0\end{smallmatrix}$ &
$\begin{smallmatrix}0&0&1\\0&0&1\\1&1&1\end{smallmatrix}$ \\[1.85ex]
\hline
$152$ \rule{0cm}{.65cm}  &
$\begin{smallmatrix}0&0&1&0\\0&1&1&0\\1&1&1&0\\0&0&0&0\end{smallmatrix}$ &
$\begin{smallmatrix}0&0&0&1\\0&0&0&0\\0&0&0&1\\1&0&1&0\end{smallmatrix}$ &
$\begin{smallmatrix}0&0&0&1\\0&0&1&1\\0&1&1&0\\1&1&0&0\end{smallmatrix}$ &
$\begin{smallmatrix}0\end{smallmatrix}$ &
$\begin{smallmatrix}1&0&0\\0&0&0\\0&0&1\end{smallmatrix}$ &
$\begin{smallmatrix}0&1&1&1\\1&1&1&1\\1&1&0&0\\1&1&0&0\end{smallmatrix}$ \\[1.85ex]
\hline
$158$ \rule{0cm}{.65cm}  &
$\begin{smallmatrix}0\end{smallmatrix}$ &
$\begin{smallmatrix}0&0&1\\0&1&0\\1&0&0\end{smallmatrix}$ &
$\begin{smallmatrix}0&0&0&1\\0&0&0&0\\0&0&0&0\\1&0&0&1\end{smallmatrix}$ &
$\begin{smallmatrix}0&0&1&0\\0&0&0&1\\1&0&0&1\\0&1&1&1\end{smallmatrix}$ &
$\begin{smallmatrix}0&0&0&1\\0&1&1&0\\0&1&0&1\\1&0&1&0\end{smallmatrix}$ &
$\begin{smallmatrix}0&0&0&1\\0&0&0&1\\0&0&1&0\\1&1&0&0\end{smallmatrix}$ \\[1.85ex]
\hline
$164$ \rule{0cm}{.65cm}  &
$\begin{smallmatrix}0&0&0&1\\0&0&1&1\\0&1&0&1\\1&1&1&1\end{smallmatrix}$ &
$\begin{smallmatrix}0&0&0&1\\0&0&0&1\\0&0&0&1\\1&1&1&0\end{smallmatrix}$ &
$\begin{smallmatrix}0&0&0&1\\0&0&1&0\\0&1&0&1\\1&0&1&1\end{smallmatrix}$ &
$\begin{smallmatrix}0&0&0&1\\0&1&1&0\\0&1&0&0\\1&0&0&1\end{smallmatrix}$ &
$\begin{smallmatrix}0&0&1&0\\0&0&0&0\\1&0&1&0\\0&0&0&0\end{smallmatrix}$ &
$\begin{smallmatrix}0&0&0&1\\0&0&0&1\\0&0&0&0\\1&1&0&0\end{smallmatrix}$ \\[1.85ex]
\hline
$170$ \rule{0cm}{.65cm}  &
$\begin{smallmatrix}0&0&0&1\\0&1&0&0\\0&0&0&0\\1&0&0&1\end{smallmatrix}$ &
$\begin{smallmatrix}0&0&0&1\\0&0&1&0\\0&1&1&1\\1&0&1&0\end{smallmatrix}$ &
$\begin{smallmatrix}0&0&0&1\\0&0&1&1\\0&1&1&0\\1&1&0&0\end{smallmatrix}$ &
$\begin{smallmatrix}0\end{smallmatrix}$ &
$\begin{smallmatrix}0\end{smallmatrix}$ &
$\begin{smallmatrix}0&1&0&0\\1&0&1&0\\0&1&0&1\\0&0&1&1\end{smallmatrix}$ \\[1.85ex]
\hline
$176$ \rule{0cm}{.65cm}  &
$\begin{smallmatrix}0&1\\1&1\end{smallmatrix}$ &
$\begin{smallmatrix}0&0&0&1\\0&0&1&0\\0&1&0&1\\1&0&1&1\end{smallmatrix}$ &
$\begin{smallmatrix}0&0&0&1\\0&0&0&1\\0&0&1&0\\1&1&0&0\end{smallmatrix}$ &
$\begin{smallmatrix}0&0&0&1\\0&1&0&0\\0&0&0&0\\1&0&0&1\end{smallmatrix}$ &
$\begin{smallmatrix}0&0&1\\0&0&1\\1&1&1\end{smallmatrix}$ &
$\begin{smallmatrix}0&1&1&1\\1&1&0&1\\1&0&1&0\\1&1&0&0\end{smallmatrix}$ \\[1.85ex]
\hline
$182$ \rule{0cm}{.65cm}  &
$\begin{smallmatrix}0&1&0&0\\1&0&0&0\\0&0&0&1\\0&0&1&1\end{smallmatrix}$ &
$\begin{smallmatrix}1&0&0&1\\0&1&0&0\\0&0&0&1\\1&0&1&1\end{smallmatrix}$ &
$\begin{smallmatrix}0&1&0&0\\1&0&1&1\\0&1&0&1\\0&1&1&0\end{smallmatrix}$ &
$\begin{smallmatrix}0&1\\1&1\end{smallmatrix}$ &
$\begin{smallmatrix}0\end{smallmatrix}$ &
$\begin{smallmatrix}0&0&0&1\\0&0&0&0\\0&0&1&0\\1&0&0&1\end{smallmatrix}$ \\[1.85ex]
\hline
$188$ \rule{0cm}{.65cm}  &
$\begin{smallmatrix}0&0&1&1\\0&1&1&1\\1&1&1&1\\1&1&1&1\end{smallmatrix}$ &
$\begin{smallmatrix}0&0&0&1\\0&1&1&0\\0&1&0&0\\1&0&0&1\end{smallmatrix}$ &
$\begin{smallmatrix}1&0\\0&1\end{smallmatrix}$ &
$\begin{smallmatrix}0&0&1&0\\0&1&1&1\\1&1&0&1\\0&1&1&0\end{smallmatrix}$ &
$\begin{smallmatrix}0&0&1&0\\0&0&0&0\\1&0&1&0\\0&0&0&0\end{smallmatrix}$ &
$\begin{smallmatrix}0&0&1&0\\0&0&0&1\\1&0&0&1\\0&1&1&1\end{smallmatrix}$ \\[1.85ex]
\hline
$194$ \rule{0cm}{.65cm}  &
$\begin{smallmatrix}0\end{smallmatrix}$ &
$\begin{smallmatrix}0&0&1&1\\0&1&1&1\\1&1&1&0\\1&1&0&0\end{smallmatrix}$ &
$\begin{smallmatrix}0&0&0&1\\0&1&1&1\\0&1&1&0\\1&1&0&1\end{smallmatrix}$ &
$\begin{smallmatrix}1&0&1&0\\0&1&1&1\\1&1&1&0\\0&1&0&0\end{smallmatrix}$ &
$\begin{smallmatrix}0&0&0&1\\0&1&1&1\\0&1&0&0\\1&1&0&1\end{smallmatrix}$ &
$\begin{smallmatrix}1&0&0&1\\0&1&0&0\\0&0&1&0\\1&0&0&1\end{smallmatrix}$ \\[1.85ex]
\hline
$200$ \rule{0cm}{.65cm}  &
$\begin{smallmatrix}0&0&0&1\\0&0&0&1\\0&0&0&1\\1&1&1&0\end{smallmatrix}$ &
$\begin{smallmatrix}0&0&0&1\\0&0&1&0\\0&1&1&1\\1&0&1&0\end{smallmatrix}$ &
$\begin{smallmatrix}1&0\\0&0\end{smallmatrix}$ &
$\begin{smallmatrix}0&0&1&0\\0&0&1&0\\1&1&1&1\\0&0&1&1\end{smallmatrix}$ &
$\begin{smallmatrix}0&0&0&1\\0&0&1&0\\0&1&1&1\\1&0&1&0\end{smallmatrix}$ &
$\begin{smallmatrix}0&0&0&1\\0&0&1&0\\0&1&1&1\\1&0&1&1\end{smallmatrix}$ \\[1.85ex]
\hline
$206$ \rule{0cm}{.65cm}  &
$\begin{smallmatrix}1&0&0&0\\0&0&0&0\\0&0&0&1\\0&0&1&1\end{smallmatrix}$ &
$\begin{smallmatrix}0&0&1&1\\0&1&1&0\\1&1&1&1\\1&0&1&1\end{smallmatrix}$ &
$\begin{smallmatrix}0&1&1\\1&0&1\\1&1&1\end{smallmatrix}$ &
$\begin{smallmatrix}0\end{smallmatrix}$ &
$\begin{smallmatrix}0\end{smallmatrix}$ &
$\begin{smallmatrix}0&0&1&1\\0&0&1&0\\1&1&1&1\\1&0&1&0\end{smallmatrix}$ \\[1.85ex]
\hline
\end{tabular}
\end{table}
\pagebreak

\begin{table}[h!]
\begin{tabular}{|c||c|c|c|c|c|c|}
\hline
$212$ \rule{0cm}{.65cm}  &
$\begin{smallmatrix}0&0&0&1\\0&1&0&0\\0&0&0&0\\1&0&0&1\end{smallmatrix}$ &
$\begin{smallmatrix}0&0&0&1\\0&0&0&1\\0&0&1&0\\1&1&0&1\end{smallmatrix}$ &
$\begin{smallmatrix}0&1&0&0\\1&1&0&1\\0&0&1&0\\0&1&0&1\end{smallmatrix}$ &
$\begin{smallmatrix}0&1&1&1\\1&0&0&1\\1&0&1&1\\1&1&1&1\end{smallmatrix}$ &
$\begin{smallmatrix}0&0&1\\0&1&0\\1&0&1\end{smallmatrix}$ &
$\begin{smallmatrix}1&1&0\\1&1&1\\0&1&0\end{smallmatrix}$ \\[1.85ex]
\hline
$218$ \rule{0cm}{.65cm}  &
$\begin{smallmatrix}0&0&0&1\\0&0&0&1\\0&0&0&1\\1&1&1&0\end{smallmatrix}$ &
$\begin{smallmatrix}0&0&0&1\\0&0&1&0\\0&1&1&1\\1&0&1&0\end{smallmatrix}$ &
$\begin{smallmatrix}0&1&0&0\\1&0&1&0\\0&1&0&1\\0&0&1&1\end{smallmatrix}$ &
$\begin{smallmatrix}0\end{smallmatrix}$ &
$\begin{smallmatrix}1&0&0\\0&0&1\\0&1&0\end{smallmatrix}$ &
$\begin{smallmatrix}0&1&0&0\\1&1&1&0\\0&1&1&0\\0&0&0&1\end{smallmatrix}$ \\[1.85ex]
\hline
$224$ \rule{0cm}{.65cm}  &
$\begin{smallmatrix}0&0&0&1\\0&1&1&1\\0&1&1&1\\1&1&1&0\end{smallmatrix}$ &
$\begin{smallmatrix}0&1&0&0\\1&1&1&1\\0&1&0&1\\0&1&1&0\end{smallmatrix}$ &
$\begin{smallmatrix}1&1&0\\1&0&0\\0&0&0\end{smallmatrix}$ &
$\begin{smallmatrix}0&0&0&1\\0&1&0&0\\0&0&0&0\\1&0&0&1\end{smallmatrix}$ &
$\begin{smallmatrix}1&0&1&0\\0&0&1&0\\1&1&1&1\\0&0&1&1\end{smallmatrix}$ &
$\begin{smallmatrix}0&0&1\\0&1&0\\1&0&0\end{smallmatrix}$ \\[1.85ex]
\hline
$230$ \rule{0cm}{.75cm}  &
$\begin{smallmatrix}0&0&0&1\\0&1&0&1\\0&0&0&0\\1&1&0&0\end{smallmatrix}$ &
$\begin{smallmatrix}0\end{smallmatrix}$ &
$\begin{smallmatrix}1&0\\0&1\end{smallmatrix}$ &
$\begin{smallmatrix}0&0&0&0&1\\0&0&0&0&0\\0&0&0&1&0\\0&0&1&0&1\\1&0&0&1&1\end{smallmatrix}$ &
$\begin{smallmatrix}1&1&1&0\\1&1&1&1\\1&1&1&1\\0&1&1&0\end{smallmatrix}$ &
$\begin{smallmatrix}0&0&1&1\\0&1&0&1\\1&0&1&1\\1&1&1&1\end{smallmatrix}$ \\[2.45ex]
\hline
$236$ \rule{0cm}{.65cm}  &
$\begin{smallmatrix}0&0&0&1\\0&1&1&1\\0&1&0&1\\1&1&1&0\end{smallmatrix}$ &
$\begin{smallmatrix}1&0&1\\0&1&0\\1&0&1\end{smallmatrix}$ &
$\begin{smallmatrix}0&0&1&1\\0&0&0&1\\1&0&0&1\\1&1&1&1\end{smallmatrix}$ &
$\begin{smallmatrix}1&0&0&0\\0&1&0&1\\0&0&0&1\\0&1&1&0\end{smallmatrix}$ &
$\begin{smallmatrix}0&1&1\\1&0&0\\1&0&1\end{smallmatrix}$ &
$\begin{smallmatrix}0&1&0&0\\1&1&0&0\\0&0&1&0\\0&0&0&0\end{smallmatrix}$ \\[1.85ex]
\hline
$242$ \rule{0cm}{.65cm}  &
$\begin{smallmatrix}0&0&0&1\\0&0&1&1\\0&1&0&1\\1&1&1&1\end{smallmatrix}$ &
$\begin{smallmatrix}0\end{smallmatrix}$ &
$\begin{smallmatrix}0&0&1\\0&1&0\\1&0&0\end{smallmatrix}$ &
$\begin{smallmatrix}0&1&0&0\\1&1&1&0\\0&1&1&0\\0&0&0&1\end{smallmatrix}$ &
$\begin{smallmatrix}0&0&0&1\\0&1&1&0\\0&1&1&1\\1&0&1&0\end{smallmatrix}$ &
$\begin{smallmatrix}0&0&1&0\\0&0&0&1\\1&0&1&0\\0&1&0&1\end{smallmatrix}$ \\[1.85ex]
\hline
$248$ \rule{0cm}{.65cm}  &
$\begin{smallmatrix}0&0&0&1\\0&1&1&1\\0&1&0&1\\1&1&1&0\end{smallmatrix}$ &
$\begin{smallmatrix}0&0&1&1\\0&1&0&1\\1&0&1&1\\1&1&1&0\end{smallmatrix}$ &
$\begin{smallmatrix}0&0&0&1\\0&0&0&0\\0&0&0&0\\1&0&0&1\end{smallmatrix}$ &
$\begin{smallmatrix}0\end{smallmatrix}$ &
$\begin{smallmatrix}0&0&1&1\\0&0&1&0\\1&1&1&1\\1&0&1&0\end{smallmatrix}$ &
$\begin{smallmatrix}0&0&0&1\\0&1&0&0\\0&0&0&1\\1&0&1&0\end{smallmatrix}$ \\[1.85ex]
\hline
$254$ \rule{0cm}{.65cm}  &
$\begin{smallmatrix}0&0&0&1\\0&0&1&0\\0&1&0&0\\1&0&0&0\end{smallmatrix}$ &
$\begin{smallmatrix}0&0&0&1\\0&1&1&0\\0&1&1&1\\1&0&1&0\end{smallmatrix}$ &
$\begin{smallmatrix}1&0&0&0\\0&0&0&0\\0&0&0&0\\0&0&0&0\end{smallmatrix}$ &
$\begin{smallmatrix}0&1&1&1\\1&1&0&1\\1&0&1&1\\1&1&1&1\end{smallmatrix}$ &
$\begin{smallmatrix}0&0&1&1\\0&0&0&1\\1&0&1&0\\1&1&0&0\end{smallmatrix}$ &
$\begin{smallmatrix}0&1&1&0\\1&1&0&0\\1&0&0&1\\0&0&1&0\end{smallmatrix}$ \\[1.85ex]
\hline
$260$ \rule{0cm}{.65cm}  &
$\begin{smallmatrix}1&0&0&0\\0&0&1&1\\0&1&1&1\\0&1&1&1\end{smallmatrix}$ &
$\begin{smallmatrix}0\end{smallmatrix}$ &
$\begin{smallmatrix}0&0&1\\0&0&0\\1&0&0\end{smallmatrix}$ &
$\begin{smallmatrix}1&0&0&1\\0&1&1&0\\0&1&0&1\\1&0&1&0\end{smallmatrix}$ &
$\begin{smallmatrix}1&0&0&1\\0&1&1&0\\0&1&0&1\\1&0&1&0\end{smallmatrix}$ &
$\begin{smallmatrix}1&0&1&1\\0&0&1&1\\1&1&1&0\\1&1&0&1\end{smallmatrix}$ \\[1.85ex]
\hline
$266$ \rule{0cm}{.65cm}  &
$\begin{smallmatrix}0&1\\1&0\end{smallmatrix}$ &
$\begin{smallmatrix}0&0&0&1\\0&0&0&1\\0&0&1&0\\1&1&0&1\end{smallmatrix}$ &
$\begin{smallmatrix}0&0&0&1\\0&1&1&1\\0&1&1&1\\1&1&1&1\end{smallmatrix}$ &
$\begin{smallmatrix}0&0&1&1\\0&0&0&1\\1&0&1&0\\1&1&0&1\end{smallmatrix}$ &
$\begin{smallmatrix}0\end{smallmatrix}$ &
$\begin{smallmatrix}0&0&1&0\\0&1&0&1\\1&0&1&0\\0&1&0&1\end{smallmatrix}$ \\[1.85ex]
\hline
$272$ \rule{0cm}{.65cm}  &
$\begin{smallmatrix}0&0&0&1\\0&1&0&0\\0&0&0&0\\1&0&0&1\end{smallmatrix}$ &
$\begin{smallmatrix}0&0&1&0\\0&1&0&0\\1&0&0&0\\0&0&0&0\end{smallmatrix}$ &
$\begin{smallmatrix}0&0&1&1\\0&0&1&1\\1&1&1&1\\1&1&1&1\end{smallmatrix}$ &
$\begin{smallmatrix}1&1&1&0\\1&1&0&1\\1&0&0&0\\0&1&0&1\end{smallmatrix}$ &
$\begin{smallmatrix}0&1&0&0\\1&1&1&1\\0&1&1&1\\0&1&1&0\end{smallmatrix}$ &
$\begin{smallmatrix}1&1&0\\1&1&1\\0&1&0\end{smallmatrix}$ \\[1.85ex]
\hline
$278$ \rule{0cm}{.65cm}  &
$\begin{smallmatrix}0\end{smallmatrix}$ &
$\begin{smallmatrix}1&0&0\\0&0&1\\0&1&0\end{smallmatrix}$ &
$\begin{smallmatrix}1&0&0&0\\0&0&0&0\\0&0&0&0\\0&0&0&0\end{smallmatrix}$ &
$\begin{smallmatrix}0\end{smallmatrix}$ &
$\begin{smallmatrix}0&1&0&0\\1&1&1&1\\0&1&1&1\\0&1&1&1\end{smallmatrix}$ &
$\begin{smallmatrix}1&0&1\\0&0&1\\1&1&1\end{smallmatrix}$ \\[1.85ex]
\hline
$284$ \rule{0cm}{.65cm}  &
$\begin{smallmatrix}0&1&1\\1&0&1\\1&1&1\end{smallmatrix}$ &
$\begin{smallmatrix}1&1&0&0\\1&0&1&1\\0&1&0&1\\0&1&1&0\end{smallmatrix}$ &
$\begin{smallmatrix}0&1&0&0\\1&0&1&0\\0&1&1&1\\0&0&1&0\end{smallmatrix}$ &
$\begin{smallmatrix}0&0&1&0\\0&0&0&0\\1&0&0&0\\0&0&0&1\end{smallmatrix}$ &
$\begin{smallmatrix}1&0&0\\0&0&0\\0&0&1\end{smallmatrix}$ &
$\begin{smallmatrix}1&0&0&1\\0&0&0&0\\0&0&1&0\\1&0&0&1\end{smallmatrix}$ \\[1.85ex]
\hline
$290$ \rule{0cm}{.75cm}  &
$\begin{smallmatrix}0&0&0&1\\0&0&0&1\\0&0&0&0\\1&1&0&0\end{smallmatrix}$ &
$\begin{smallmatrix}1&0&0&1\\0&0&0&0\\0&0&0&1\\1&0&1&0\end{smallmatrix}$ &
$\begin{smallmatrix}1&0\\0&1\end{smallmatrix}$ &
$\begin{smallmatrix}0\end{smallmatrix}$ &
$\begin{smallmatrix}0&0&0&1\\0&0&1&1\\0&1&0&0\\1&1&0&0\end{smallmatrix}$ &
$\begin{smallmatrix}0&0&0&0&1\\0&0&0&1&1\\0&0&0&0&0\\0&1&0&1&0\\1&1&0&0&0\end{smallmatrix}$ \\[2.45ex]
\hline
$296$ \rule{0cm}{.75cm}  &
$\begin{smallmatrix}0&0&1&1\\0&0&0&0\\1&0&1&0\\1&0&0&0\end{smallmatrix}$ &
$\begin{smallmatrix}0&1&0\\1&0&0\\0&0&1\end{smallmatrix}$ &
$\begin{smallmatrix}0&0&0&0&1\\0&0&0&1&1\\0&0&0&0&0\\0&1&0&1&0\\1&1&0&0&1\end{smallmatrix}$ &
$\begin{smallmatrix}0&0&0&0&1\\0&0&1&0&1\\0&1&0&1&1\\0&0&1&1&0\\1&1&1&0&0\end{smallmatrix}$ &
$\begin{smallmatrix}0&0&0&1\\0&1&0&0\\0&0&1&0\\1&0&0&0\end{smallmatrix}$ &
$\begin{smallmatrix}0&0&0&1\\0&0&1&0\\0&1&1&0\\1&0&0&0\end{smallmatrix}$ \\[2.45ex]
\hline
$302$ \rule{0cm}{.65cm}  &
$\begin{smallmatrix}0&1&0&0\\1&1&1&1\\0&1&1&1\\0&1&1&1\end{smallmatrix}$ &
$\begin{smallmatrix}0&1&1&0\\1&1&0&0\\1&0&1&0\\0&0&0&1\end{smallmatrix}$ &
$\begin{smallmatrix}0&0&1&1\\0&0&0&1\\1&0&1&0\\1&1&0&1\end{smallmatrix}$ &
$\begin{smallmatrix}0&0&0&1\\0&1&1&0\\0&1&0&0\\1&0&0&1\end{smallmatrix}$ &
$\begin{smallmatrix}0&1&0&1\\1&0&1&1\\0&1&0&0\\1&1&0&1\end{smallmatrix}$ &
$\begin{smallmatrix}0&0&1&1\\0&0&0&1\\1&0&1&0\\1&1&0&0\end{smallmatrix}$ \\[1.85ex]
\hline
$308$ \rule{0cm}{.65cm}  &
$\begin{smallmatrix}0&0&1&0\\0&0&0&0\\1&0&1&0\\0&0&0&0\end{smallmatrix}$ &
$\begin{smallmatrix}0\end{smallmatrix}$ &
$\begin{smallmatrix}0&0&0&1\\0&1&0&0\\0&0&0&0\\1&0&0&1\end{smallmatrix}$ &
$\begin{smallmatrix}0&0&1&1\\0&1&0&1\\1&0&1&1\\1&1&1&1\end{smallmatrix}$ &
$\begin{smallmatrix}1&0&0&1\\0&1&1&0\\0&1&0&0\\1&0&0&0\end{smallmatrix}$ &
$\begin{smallmatrix}0&0&1&0\\0&1&0&1\\1&0&1&1\\0&1&1&1\end{smallmatrix}$ \\[1.85ex]
\hline
$314$ \rule{0cm}{.75cm}  &
$\begin{smallmatrix}0&0&0&0&1\\0&0&0&0&1\\0&0&1&1&0\\0&0&1&0&0\\1&1&0&0&1\end{smallmatrix}$ &
$\begin{smallmatrix}0&0&1&0\\0&1&0&1\\1&0&1&1\\0&1&1&0\end{smallmatrix}$ &
$\begin{smallmatrix}1&1&0\\1&1&1\\0&1&0\end{smallmatrix}$ &
$\begin{smallmatrix}0&1&1\\1&1&1\\1&1&1\end{smallmatrix}$ &
$\begin{smallmatrix}0&1&0&1\\1&0&0&1\\0&0&1&0\\1&1&0&0\end{smallmatrix}$ &
$\begin{smallmatrix}0&0&1&1\\0&1&0&0\\1&0&1&0\\1&0&0&0\end{smallmatrix}$ \\[2.45ex]
\hline
\end{tabular}
\end{table}
\pagebreak

\begin{table}[h!]
\begin{tabular}{|c||c|c|c|c|c|c|}
\hline
$320$ \rule{0cm}{.75cm}  &
$\begin{smallmatrix}0&0&1&1\\0&0&0&0\\1&0&1&1\\1&0&1&0\end{smallmatrix}$ &
$\begin{smallmatrix}0&0&1&0\\0&0&1&0\\1&1&1&0\\0&0&0&1\end{smallmatrix}$ &
$\begin{smallmatrix}1&0&0&0\\0&0&0&1\\0&0&0&1\\0&1&1&1\end{smallmatrix}$ &
$\begin{smallmatrix}0\end{smallmatrix}$ &
$\begin{smallmatrix}0&0&0&0&1\\0&0&0&1&1\\0&0&0&1&1\\0&1&1&1&1\\1&1&1&1&1\end{smallmatrix}$ &
$\begin{smallmatrix}0&0&1&0\\0&0&1&1\\1&1&1&0\\0&1&0&1\end{smallmatrix}$ \\[2.45ex]
\hline
$326$ \rule{0cm}{.75cm}  &
$\begin{smallmatrix}0\end{smallmatrix}$ &
$\begin{smallmatrix}0&0&1&1\\0&0&0&0\\1&0&1&0\\1&0&0&0\end{smallmatrix}$ &
$\begin{smallmatrix}0&0&1&0\\0&1&0&1\\1&0&1&0\\0&1&0&1\end{smallmatrix}$ &
$\begin{smallmatrix}0\end{smallmatrix}$ &
$\begin{smallmatrix}1&0&0\\0&1&0\\0&0&1\end{smallmatrix}$ &
$\begin{smallmatrix}0&0&0&0&1\\0&0&0&1&1\\0&0&1&0&1\\0&1&0&1&1\\1&1&1&1&0\end{smallmatrix}$ \\[2.45ex]
\hline
$332$ \rule{0cm}{.65cm}  &
$\begin{smallmatrix}1&1&1\\1&1&0\\1&0&0\end{smallmatrix}$ &
$\begin{smallmatrix}1&0&0&1\\0&0&1&1\\0&1&0&0\\1&1&0&0\end{smallmatrix}$ &
$\begin{smallmatrix}1&0\\0&1\end{smallmatrix}$ &
$\begin{smallmatrix}1&0&1\\0&1&0\\1&0&0\end{smallmatrix}$ &
$\begin{smallmatrix}1&0&1&1\\0&0&1&0\\1&1&1&1\\1&0&1&1\end{smallmatrix}$ &
$\begin{smallmatrix}0&1&0&0\\1&1&0&0\\0&0&1&0\\0&0&0&0\end{smallmatrix}$ \\[1.85ex]
\hline
$338$ \rule{0cm}{.75cm}  &
$\begin{smallmatrix}0\end{smallmatrix}$ &
$\begin{smallmatrix}0&1&1&0\\1&1&1&1\\1&1&1&0\\0&1&0&0\end{smallmatrix}$ &
$\begin{smallmatrix}0&0&0&0&1\\0&0&0&0&0\\0&0&0&0&0\\0&0&0&0&0\\1&0&0&0&0\end{smallmatrix}$ &
$\begin{smallmatrix}0&0&1&0\\0&1&0&0\\1&0&1&0\\0&0&0&1\end{smallmatrix}$ &
$\begin{smallmatrix}0&0&1&1\\0&0&0&0\\1&0&0&1\\1&0&1&1\end{smallmatrix}$ &
$\begin{smallmatrix}1\end{smallmatrix}$ \\[2.45ex]
\hline
$344$ \rule{0cm}{.65cm}  &
$\begin{smallmatrix}1&0&0&1\\0&1&1&0\\0&1&0&1\\1&0&1&0\end{smallmatrix}$ &
$\begin{smallmatrix}0&1&1&1\\1&0&0&1\\1&0&1&1\\1&1&1&1\end{smallmatrix}$ &
$\begin{smallmatrix}0&0&1\\0&1&0\\1&0&0\end{smallmatrix}$ &
$\begin{smallmatrix}0&0&0&1\\0&1&0&1\\0&0&0&0\\1&1&0&0\end{smallmatrix}$ &
$\begin{smallmatrix}0&0&0&1\\0&0&0&0\\0&0&0&1\\1&0&1&0\end{smallmatrix}$ &
$\begin{smallmatrix}0&0&1&0\\0&1&0&0\\1&0&0&0\\0&0&0&0\end{smallmatrix}$ \\[1.85ex]
\hline
$350$ \rule{0cm}{.65cm}  &
$\begin{smallmatrix}0\end{smallmatrix}$ &
$\begin{smallmatrix}0&0&1\\0&1&0\\1&0&1\end{smallmatrix}$ &
$\begin{smallmatrix}0&0&0&1\\0&1&0&0\\0&0&0&1\\1&0&1&0\end{smallmatrix}$ &
$\begin{smallmatrix}1\end{smallmatrix}$ &
$\begin{smallmatrix}0\end{smallmatrix}$ &
$\begin{smallmatrix}1&1&0&0\\1&1&0&0\\0&0&0&1\\0&0&1&0\end{smallmatrix}$ \\[1.85ex]
\hline
$356$ \rule{0cm}{.75cm}  &
$\begin{smallmatrix}0&1\\1&0\end{smallmatrix}$ &
$\begin{smallmatrix}1&0&1&1\\0&0&0&1\\1&0&0&1\\1&1&1&0\end{smallmatrix}$ &
$\begin{smallmatrix}0&0&0&0&1\\0&0&0&0&0\\0&0&0&1&1\\0&0&1&1&0\\1&0&1&0&1\end{smallmatrix}$ &
$\begin{smallmatrix}0\end{smallmatrix}$ &
$\begin{smallmatrix}0&0&0&1\\0&0&0&0\\0&0&1&0\\1&0&0&1\end{smallmatrix}$ &
$\begin{smallmatrix}0&0&0&1\\0&0&1&0\\0&1&1&0\\1&0&0&0\end{smallmatrix}$ \\[2.45ex]
\hline
$362$ \rule{0cm}{.75cm}  &
$\begin{smallmatrix}0&0&0&0&1\\0&0&1&0&0\\0&1&1&0&1\\0&0&0&0&1\\1&0&1&1&0\end{smallmatrix}$ &
$\begin{smallmatrix}0&0&1&0\\0&1&0&0\\1&0&0&0\\0&0&0&0\end{smallmatrix}$ &
$\begin{smallmatrix}1&0&1\\0&0&1\\1&1&0\end{smallmatrix}$ &
$\begin{smallmatrix}1&0&0&1\\0&1&0&0\\0&0&1&0\\1&0&0&1\end{smallmatrix}$ &
$\begin{smallmatrix}0&0&1\\0&1&0\\1&0&0\end{smallmatrix}$ &
$\begin{smallmatrix}0&1&1&0\\1&1&1&0\\1&1&1&0\\0&0&0&1\end{smallmatrix}$ \\[2.45ex]
\hline
$368$ \rule{0cm}{.75cm}  &
$\begin{smallmatrix}0&1&1&0\\1&1&0&1\\1&0&1&1\\0&1&1&1\end{smallmatrix}$ &
$\begin{smallmatrix}0&0&0&0&1\\0&0&0&0&1\\0&0&1&0&1\\0&0&0&0&1\\1&1&1&1&1\end{smallmatrix}$ &
$\begin{smallmatrix}0&1&1&0\\1&1&0&0\\1&0&0&1\\0&0&1&0\end{smallmatrix}$ &
$\begin{smallmatrix}0&0&0&0&1\\0&0&0&0&1\\0&0&1&1&1\\0&0&1&1&0\\1&1&1&0&0\end{smallmatrix}$ &
$\begin{smallmatrix}0&0&1&0\\0&1&1&0\\1&1&0&1\\0&0&1&0\end{smallmatrix}$ &
$\begin{smallmatrix}0&0&1&1\\0&1&0&0\\1&0&1&0\\1&0&0&0\end{smallmatrix}$ \\[2.45ex]
\hline
$374$ \rule{0cm}{.75cm}  &
$\begin{smallmatrix}1&0&0&0\\0&1&0&0\\0&0&1&0\\0&0&0&1\end{smallmatrix}$ &
$\begin{smallmatrix}0\end{smallmatrix}$ &
$\begin{smallmatrix}0&0&0&0&1\\0&0&0&0&0\\0&0&0&0&1\\0&0&0&0&1\\1&0&1&1&0\end{smallmatrix}$ &
$\begin{smallmatrix}0&0&0&1\\0&1&0&0\\0&0&0&0\\1&0&0&1\end{smallmatrix}$ &
$\begin{smallmatrix}1&0&1&1\\0&1&0&1\\1&0&1&1\\1&1&1&0\end{smallmatrix}$ &
$\begin{smallmatrix}0&1&1&1\\1&1&0&1\\1&0&1&1\\1&1&1&1\end{smallmatrix}$ \\[2.45ex]
\hline
$380$ \rule{0cm}{.75cm}  &
$\begin{smallmatrix}1&1&1&1\\1&0&0&0\\1&0&0&1\\1&0&1&0\end{smallmatrix}$ &
$\begin{smallmatrix}1&0\\0&0\end{smallmatrix}$ &
$\begin{smallmatrix}0&0&1\\0&0&0\\1&0&0\end{smallmatrix}$ &
$\begin{smallmatrix}1&0&0&1\\0&1&0&0\\0&0&1&0\\1&0&0&0\end{smallmatrix}$ &
$\begin{smallmatrix}0&0&1&0\\0&0&1&1\\1&1&1&0\\0&1&0&1\end{smallmatrix}$ &
$\begin{smallmatrix}0&0&0&0&1\\0&0&0&1&1\\0&0&0&0&0\\0&1&0&1&0\\1&1&0&0&0\end{smallmatrix}$ \\[2.45ex]
\hline
$386$ \rule{0cm}{.65cm}  &
$\begin{smallmatrix}0&0&1&0\\0&0&1&0\\1&1&1&0\\0&0&0&1\end{smallmatrix}$ &
$\begin{smallmatrix}0&0&0&1\\0&0&0&0\\0&0&1&0\\1&0&0&1\end{smallmatrix}$ &
$\begin{smallmatrix}0&1&0&0\\1&1&0&0\\0&0&1&0\\0&0&0&1\end{smallmatrix}$ &
$\begin{smallmatrix}0&1&0\\1&1&0\\0&0&1\end{smallmatrix}$ &
$\begin{smallmatrix}1&0&1&1\\0&1&0&1\\1&0&0&0\\1&1&0&1\end{smallmatrix}$ &
$\begin{smallmatrix}0&0&1&0\\0&1&0&1\\1&0&1&0\\0&1&0&1\end{smallmatrix}$ \\[1.85ex]
\hline
$392$ \rule{0cm}{.65cm}  &
$\begin{smallmatrix}0&0&0&1\\0&1&0&1\\0&0&0&0\\1&1&0&0\end{smallmatrix}$ &
$\begin{smallmatrix}1&0&1&0\\0&1&0&1\\1&0&1&0\\0&1&0&0\end{smallmatrix}$ &
$\begin{smallmatrix}0&0&1&1\\0&1&0&1\\1&0&1&0\\1&1&0&1\end{smallmatrix}$ &
$\begin{smallmatrix}0&0&1&1\\0&0&1&1\\1&1&1&0\\1&1&0&0\end{smallmatrix}$ &
$\begin{smallmatrix}1&0&1&0\\0&0&1&1\\1&1&1&1\\0&1&1&1\end{smallmatrix}$ &
$\begin{smallmatrix}1&1&1\\1&1&0\\1&0&0\end{smallmatrix}$ \\[1.85ex]
\hline
$398$ \rule{0cm}{.65cm}  &
$\begin{smallmatrix}0\end{smallmatrix}$ &
$\begin{smallmatrix}0&0&1&0\\0&1&0&0\\1&0&0&0\\0&0&0&1\end{smallmatrix}$ &
$\begin{smallmatrix}1&0\\0&1\end{smallmatrix}$ &
$\begin{smallmatrix}0&0&1&1\\0&0&0&1\\1&0&1&1\\1&1&1&1\end{smallmatrix}$ &
$\begin{smallmatrix}0&0&1&1\\0&1&0&0\\1&0&0&1\\1&0&1&0\end{smallmatrix}$ &
$\begin{smallmatrix}1\end{smallmatrix}$ \\[1.85ex]
\hline
$404$ \rule{0cm}{.75cm}  &
$\begin{smallmatrix}0&0&1&1\\0&0&0&0\\1&0&1&0\\1&0&0&0\end{smallmatrix}$ &
$\begin{smallmatrix}1&0&0&1\\0&0&1&0\\0&1&0&1\\1&0&1&0\end{smallmatrix}$ &
$\begin{smallmatrix}0&0&0&1\\0&1&1&1\\0&1&1&1\\1&1&1&1\end{smallmatrix}$ &
$\begin{smallmatrix}0&0&0&0&1\\0&0&1&0&1\\0&1&1&1&1\\0&0&1&0&0\\1&1&1&0&0\end{smallmatrix}$ &
$\begin{smallmatrix}0&0&1&0\\0&1&0&1\\1&0&1&0\\0&1&0&1\end{smallmatrix}$ &
$\begin{smallmatrix}0&1&1&0\\1&1&1&0\\1&1&1&0\\0&0&0&1\end{smallmatrix}$ \\[2.45ex]
\hline
$410$ \rule{0cm}{.65cm}  &
$\begin{smallmatrix}0\end{smallmatrix}$ &
$\begin{smallmatrix}0\end{smallmatrix}$ &
$\begin{smallmatrix}0&1&0\\1&0&0\\0&0&1\end{smallmatrix}$ &
$\begin{smallmatrix}0&1&1&0\\1&1&0&1\\1&0&1&1\\0&1&1&1\end{smallmatrix}$ &
$\begin{smallmatrix}0\end{smallmatrix}$ &
$\begin{smallmatrix}0&0&1\\0&0&1\\1&1&1\end{smallmatrix}$ \\[1.85ex]
\hline
$416$ \rule{0cm}{.65cm}  &
$\begin{smallmatrix}1&0&1&0\\0&1&0&0\\1&0&1&0\\0&0&0&0\end{smallmatrix}$ &
$\begin{smallmatrix}1&0&0&1\\0&1&1&1\\0&1&0&0\\1&1&0&1\end{smallmatrix}$ &
$\begin{smallmatrix}1&0&0&0\\0&1&0&0\\0&0&0&0\\0&0&0&1\end{smallmatrix}$ &
$\begin{smallmatrix}0&0&1&1\\0&1&1&1\\1&1&1&0\\1&1&0&0\end{smallmatrix}$ &
$\begin{smallmatrix}0&0&0&1\\0&0&0&0\\0&0&1&0\\1&0&0&1\end{smallmatrix}$ &
$\begin{smallmatrix}1&0&1&0\\0&0&1&1\\1&1&0&0\\0&1&0&1\end{smallmatrix}$ \\[1.85ex]
\hline
\end{tabular}
\end{table}
\pagebreak

\begin{table}[h!]
\begin{tabular}{|c||c|c|c|c|c|c|}
\hline
$422$ \rule{0cm}{.75cm}  &
$\begin{smallmatrix}0&0&1&0\\0&1&0&1\\1&0&0&1\\0&1&1&0\end{smallmatrix}$ &
$\begin{smallmatrix}1&1&1&1\\1&0&1&1\\1&1&0&0\\1&1&0&1\end{smallmatrix}$ &
$\begin{smallmatrix}0&1&1\\1&0&1\\1&1&1\end{smallmatrix}$ &
$\begin{smallmatrix}0&0&0&1&0\\0&0&0&0&1\\0&0&1&1&0\\1&0&1&1&0\\0&1&0&0&1\end{smallmatrix}$ &
$\begin{smallmatrix}0\end{smallmatrix}$ &
$\begin{smallmatrix}0&0&0&1&0\\0&0&1&0&1\\0&1&1&0&0\\1&0&0&1&0\\0&1&0&0&1\end{smallmatrix}$ \\[2.45ex]
\hline
$428$ \rule{0cm}{.65cm}  &
$\begin{smallmatrix}0&1&1&1\\1&1&1&0\\1&1&0&0\\1&0&0&0\end{smallmatrix}$ &
$\begin{smallmatrix}0&1&1&0\\1&1&1&1\\1&1&1&0\\0&1&0&0\end{smallmatrix}$ &
$\begin{smallmatrix}0&0&1&0\\0&0&0&1\\1&0&0&0\\0&1&0&0\end{smallmatrix}$ &
$\begin{smallmatrix}0\end{smallmatrix}$ &
$\begin{smallmatrix}0&0&1&0\\0&1&0&0\\1&0&1&0\\0&0&0&1\end{smallmatrix}$ &
$\begin{smallmatrix}0&0&1&1\\0&0&0&0\\1&0&1&1\\1&0&1&1\end{smallmatrix}$ \\[1.85ex]
\hline
$434$ \rule{0cm}{.75cm}  &
$\begin{smallmatrix}0&0&0&0&1\\0&0&0&0&1\\0&0&0&1&1\\0&0&1&0&0\\1&1&1&0&0\end{smallmatrix}$ &
$\begin{smallmatrix}1&0&1&1\\0&1&0&1\\1&0&0&0\\1&1&0&1\end{smallmatrix}$ &
$\begin{smallmatrix}1&0&0&0\\0&1&0&1\\0&0&1&0\\0&1&0&0\end{smallmatrix}$ &
$\begin{smallmatrix}1&0&0&0\\0&1&0&1\\0&0&1&0\\0&1&0&1\end{smallmatrix}$ &
$\begin{smallmatrix}0\end{smallmatrix}$ &
$\begin{smallmatrix}0&0&0&0&1\\0&1&0&1&1\\0&0&1&0&0\\0&1&0&1&0\\1&1&0&0&0\end{smallmatrix}$ \\[2.45ex]
\hline
$440$ \rule{0cm}{.65cm}  &
$\begin{smallmatrix}0&1&1&0\\1&1&1&0\\1&1&1&0\\0&0&0&1\end{smallmatrix}$ &
$\begin{smallmatrix}0\end{smallmatrix}$ &
$\begin{smallmatrix}0&1&1&1\\1&1&0&0\\1&0&1&0\\1&0&0&1\end{smallmatrix}$ &
$\begin{smallmatrix}0\end{smallmatrix}$ &
$\begin{smallmatrix}0&0&1&1\\0&0&0&1\\1&0&1&0\\1&1&0&0\end{smallmatrix}$ &
$\begin{smallmatrix}0&0&1&0\\0&0&0&0\\1&0&1&1\\0&0&1&0\end{smallmatrix}$ \\[1.85ex]
\hline
$446$ \rule{0cm}{.75cm}  &
$\begin{smallmatrix}0&0&0&1\\0&0&0&1\\0&0&0&1\\1&1&1&1\end{smallmatrix}$ &
$\begin{smallmatrix}1&0&0&1\\0&1&0&1\\0&0&0&1\\1&1&1&1\end{smallmatrix}$ &
$\begin{smallmatrix}0&0&0&0&1\\0&0&0&0&1\\0&0&0&1&0\\0&0&1&0&0\\1&1&0&0&0\end{smallmatrix}$ &
$\begin{smallmatrix}0&0&1&0\\0&0&0&1\\1&0&0&1\\0&1&1&1\end{smallmatrix}$ &
$\begin{smallmatrix}0&0&1&0\\0&1&0&0\\1&0&0&0\\0&0&0&1\end{smallmatrix}$ &
$\begin{smallmatrix}1&0&0&1\\0&0&0&0\\0&0&0&0\\1&0&0&0\end{smallmatrix}$ \\[2.45ex]
\hline
$452$ \rule{0cm}{.65cm}  &
$\begin{smallmatrix}0&0&1&1\\0&0&1&1\\1&1&1&1\\1&1&1&1\end{smallmatrix}$ &
$\begin{smallmatrix}0\end{smallmatrix}$ &
$\begin{smallmatrix}0&0&1&0\\0&0&0&0\\1&0&0&1\\0&0&1&1\end{smallmatrix}$ &
$\begin{smallmatrix}1&0&1\\0&1&0\\1&0&0\end{smallmatrix}$ &
$\begin{smallmatrix}0&1&1&1\\1&1&0&0\\1&0&1&0\\1&0&0&1\end{smallmatrix}$ &
$\begin{smallmatrix}1\end{smallmatrix}$ \\[1.85ex]
\hline
$458$ \rule{0cm}{.65cm}  &
$\begin{smallmatrix}0&0&0&1\\0&1&0&0\\0&0&0&1\\1&0&1&0\end{smallmatrix}$ &
$\begin{smallmatrix}0&0&0&1\\0&0&1&1\\0&1&0&1\\1&1&1&1\end{smallmatrix}$ &
$\begin{smallmatrix}0&0&1&0\\0&1&0&1\\1&0&0&0\\0&1&0&0\end{smallmatrix}$ &
$\begin{smallmatrix}1&0&0&1\\0&1&1&1\\0&1&1&0\\1&1&0&1\end{smallmatrix}$ &
$\begin{smallmatrix}0&0&0&1\\0&0&0&0\\0&0&0&1\\1&0&1&0\end{smallmatrix}$ &
$\begin{smallmatrix}0&0&1&0\\0&0&0&1\\1&0&0&0\\0&1&0&0\end{smallmatrix}$ \\[1.85ex]
\hline
$464$ \rule{0cm}{.75cm}  &
$\begin{smallmatrix}0&1&1\\1&0&1\\1&1&1\end{smallmatrix}$ &
$\begin{smallmatrix}0&0&0&0&1\\0&0&1&1&1\\0&1&0&1&1\\0&1&1&0&0\\1&1&1&0&1\end{smallmatrix}$ &
$\begin{smallmatrix}1&1&0\\1&1&1\\0&1&0\end{smallmatrix}$ &
$\begin{smallmatrix}0&0&1&0\\0&1&0&1\\1&0&0&1\\0&1&1&0\end{smallmatrix}$ &
$\begin{smallmatrix}0&0&0&1\\0&0&1&1\\0&1&0&0\\1&1&0&0\end{smallmatrix}$ &
$\begin{smallmatrix}0&0&1&0\\0&1&0&0\\1&0&1&0\\0&0&0&0\end{smallmatrix}$ \\[2.45ex]
\hline
$470$ \rule{0cm}{.75cm}  &
$\begin{smallmatrix}0&0&0&0&1\\0&1&0&1&0\\0&0&0&1&1\\0&1&1&0&1\\1&0&1&1&1\end{smallmatrix}$ &
$\begin{smallmatrix}0&0&0&0&1\\0&0&1&0&1\\0&1&0&0&0\\0&0&0&1&0\\1&1&0&0&0\end{smallmatrix}$ &
$\begin{smallmatrix}1&1&0&0\\1&1&0&0\\0&0&0&0\\0&0&0&1\end{smallmatrix}$ &
$\begin{smallmatrix}0\end{smallmatrix}$ &
$\begin{smallmatrix}0&0&0&1&0\\0&0&0&0&0\\0&0&1&1&0\\1&0&1&1&1\\0&0&0&1&0\end{smallmatrix}$ &
$\begin{smallmatrix}0&0&0&0&1\\0&1&0&1&1\\0&0&0&1&0\\0&1&1&1&0\\1&1&0&0&1\end{smallmatrix}$ \\[2.45ex]
\hline
$476$ \rule{0cm}{.65cm}  &
$\begin{smallmatrix}1&0&1\\0&1&0\\1&0&0\end{smallmatrix}$ &
$\begin{smallmatrix}0&0&1&1\\0&1&0&0\\1&0&1&1\\1&0&1&0\end{smallmatrix}$ &
$\begin{smallmatrix}1&1&1\\1&1&0\\1&0&0\end{smallmatrix}$ &
$\begin{smallmatrix}0&0&1&1\\0&1&0&1\\1&0&1&1\\1&1&1&1\end{smallmatrix}$ &
$\begin{smallmatrix}1&0&1&0\\0&0&0&0\\1&0&1&1\\0&0&1&1\end{smallmatrix}$ &
$\begin{smallmatrix}1&0&1&0\\0&0&0&0\\1&0&0&1\\0&0&1&0\end{smallmatrix}$ \\[1.85ex]
\hline
$482$ \rule{0cm}{.75cm}  &
$\begin{smallmatrix}0&0&0&1\\0&1&1&0\\0&1&1&0\\1&0&0&1\end{smallmatrix}$ &
$\begin{smallmatrix}1&1&1&1\\1&0&0&0\\1&0&0&1\\1&0&1&1\end{smallmatrix}$ &
$\begin{smallmatrix}0&1&1&0\\1&0&0&0\\1&0&1&0\\0&0&0&1\end{smallmatrix}$ &
$\begin{smallmatrix}0&0&0&0&1\\0&1&0&1&1\\0&0&1&0&0\\0&1&0&1&0\\1&1&0&0&0\end{smallmatrix}$ &
$\begin{smallmatrix}0&1&1&0\\1&0&1&0\\1&1&1&0\\0&0&0&1\end{smallmatrix}$ &
$\begin{smallmatrix}0&1&0\\1&1&0\\0&0&1\end{smallmatrix}$ \\[2.45ex]
\hline
$488$ \rule{0cm}{.75cm}  &
$\begin{smallmatrix}0&0&1&0\\0&0&1&0\\1&1&1&0\\0&0&0&1\end{smallmatrix}$ &
$\begin{smallmatrix}0&0&1&0\\0&0&1&0\\1&1&0&1\\0&0&1&1\end{smallmatrix}$ &
$\begin{smallmatrix}1&0&0&0\\0&1&1&0\\0&1&1&1\\0&0&1&0\end{smallmatrix}$ &
$\begin{smallmatrix}0\end{smallmatrix}$ &
$\begin{smallmatrix}0&0&1&1\\0&1&0&0\\1&0&0&1\\1&0&1&0\end{smallmatrix}$ &
$\begin{smallmatrix}0&0&0&0&1\\0&0&0&0&1\\0&0&1&1&0\\0&0&1&1&1\\1&1&0&1&1\end{smallmatrix}$ \\[2.45ex]
\hline
$494$ \rule{0cm}{.75cm}  &
$\begin{smallmatrix}1&0&0&1\\0&1&0&1\\0&0&1&0\\1&1&0&0\end{smallmatrix}$ &
$\begin{smallmatrix}0\end{smallmatrix}$ &
$\begin{smallmatrix}0&0&0&1\\0&0&0&1\\0&0&1&0\\1&1&0&0\end{smallmatrix}$ &
$\begin{smallmatrix}0&0&0&0&1\\0&0&0&1&0\\0&0&1&1&0\\0&1&1&0&1\\1&0&0&1&1\end{smallmatrix}$ &
$\begin{smallmatrix}1&0&1&0\\0&0&1&0\\1&1&1&0\\0&0&0&1\end{smallmatrix}$ &
$\begin{smallmatrix}0&0&1&1\\0&1&0&1\\1&0&1&1\\1&1&1&1\end{smallmatrix}$ \\[2.45ex]
\hline
$500$ \rule{0cm}{.75cm}  &
$\begin{smallmatrix}0&0&1&1\\0&0&1&1\\1&1&1&0\\1&1&0&0\end{smallmatrix}$ &
$\begin{smallmatrix}0&0&0&1\\0&1&0&1\\0&0&0&0\\1&1&0&0\end{smallmatrix}$ &
$\begin{smallmatrix}0&0&0&0&1\\0&0&0&0&0\\0&0&0&1&1\\0&0&1&0&0\\1&0&1&0&0\end{smallmatrix}$ &
$\begin{smallmatrix}1&0&0&0\\0&1&1&1\\0&1&0&0\\0&1&0&0\end{smallmatrix}$ &
$\begin{smallmatrix}1&0&0\\0&0&1\\0&1&0\end{smallmatrix}$ &
$\begin{smallmatrix}1&0&1\\0&0&1\\1&1&1\end{smallmatrix}$ \\[2.45ex]
\hline
$506$ \rule{0cm}{.75cm}  &
$\begin{smallmatrix}0&0&0&1&0\\0&0&0&1&0\\0&0&1&0&0\\1&1&0&1&1\\0&0&0&1&0\end{smallmatrix}$ &
$\begin{smallmatrix}1&1&1&1\\1&0&0&0\\1&0&1&0\\1&0&0&1\end{smallmatrix}$ &
$\begin{smallmatrix}0&0&0&0&1\\0&0&0&1&0\\0&0&0&0&0\\0&1&0&0&0\\1&0&0&0&0\end{smallmatrix}$ &
$\begin{smallmatrix}1&0&1&1\\0&1&0&1\\1&0&0&1\\1&1&1&1\end{smallmatrix}$ &
$\begin{smallmatrix}0&0&0&0&1\\0&0&0&0&0\\0&0&1&0&0\\0&0&0&0&0\\1&0&0&0&1\end{smallmatrix}$ &
$\begin{smallmatrix}0&0&1&0\\0&0&1&0\\1&1&1&1\\0&0&1&1\end{smallmatrix}$ \\[2.45ex]
\hline
$512$ \rule{0cm}{.65cm}  &
$\begin{smallmatrix}1&0&0&0\\0&1&0&1\\0&0&1&0\\0&1&0&1\end{smallmatrix}$ &
$\begin{smallmatrix}1&0&0&1\\0&0&0&0\\0&0&0&0\\1&0&0&0\end{smallmatrix}$ &
$\begin{smallmatrix}0&1&0&0\\1&1&0&1\\0&0&0&1\\0&1&1&0\end{smallmatrix}$ &
$\begin{smallmatrix}0\end{smallmatrix}$ &
$\begin{smallmatrix}0&1&0&0\\1&0&1&0\\0&1&1&1\\0&0&1&0\end{smallmatrix}$ &
$\begin{smallmatrix}0&0&1&1\\0&1&0&0\\1&0&1&1\\1&0&1&0\end{smallmatrix}$ \\[1.85ex]
\hline
$518$ \rule{0cm}{.75cm}  &
$\begin{smallmatrix}0&0&0&0&1\\0&0&0&1&0\\0&0&0&1&0\\0&1&1&1&0\\1&0&0&0&1\end{smallmatrix}$ &
$\begin{smallmatrix}0\end{smallmatrix}$ &
$\begin{smallmatrix}1&0\\0&1\end{smallmatrix}$ &
$\begin{smallmatrix}0&0&0&0&1\\0&1&0&0&1\\0&0&0&0&0\\0&0&0&1&0\\1&1&0&0&1\end{smallmatrix}$ &
$\begin{smallmatrix}0&0&0&0&1\\0&1&0&1&0\\0&0&0&0&1\\0&1&0&0&0\\1&0&1&0&0\end{smallmatrix}$ &
$\begin{smallmatrix}1&0&0&0\\0&0&0&0\\0&0&0&0\\0&0&0&0\end{smallmatrix}$ \\[2.45ex]
\hline
\end{tabular}
\end{table}
\pagebreak

\begin{table}[h!]
\begin{tabular}{|c||c|c|c|c|c|c|}
\hline
$524$ \rule{0cm}{.65cm}  &
$\begin{smallmatrix}1&0&1\\0&1&0\\1&0&0\end{smallmatrix}$ &
$\begin{smallmatrix}0&1&0&1\\1&1&0&1\\0&0&1&0\\1&1&0&1\end{smallmatrix}$ &
$\begin{smallmatrix}0&0&1\\0&0&0\\1&0&0\end{smallmatrix}$ &
$\begin{smallmatrix}0&1&0&0\\1&1&0&1\\0&0&0&1\\0&1&1&0\end{smallmatrix}$ &
$\begin{smallmatrix}1&0&0&1\\0&0&0&1\\0&0&0&0\\1&1&0&1\end{smallmatrix}$ &
$\begin{smallmatrix}0&0&0&1\\0&0&1&1\\0&1&1&0\\1&1&0&0\end{smallmatrix}$ \\[1.85ex]
\hline
$530$ \rule{0cm}{.65cm}  &
$\begin{smallmatrix}0\end{smallmatrix}$ &
$\begin{smallmatrix}0\end{smallmatrix}$ &
$\begin{smallmatrix}1&1&0&0\\1&0&1&1\\0&1&1&1\\0&1&1&1\end{smallmatrix}$ &
$\begin{smallmatrix}0&0&1&1\\0&1&0&1\\1&0&1&1\\1&1&1&1\end{smallmatrix}$ &
$\begin{smallmatrix}1&0&0&1\\0&1&1&1\\0&1&0&1\\1&1&1&0\end{smallmatrix}$ &
$\begin{smallmatrix}1&0&0&1\\0&0&1&1\\0&1&0&1\\1&1&1&1\end{smallmatrix}$ \\[1.85ex]
\hline
$536$ \rule{0cm}{.65cm}  &
$\begin{smallmatrix}1&1&1\\1&1&0\\1&0&0\end{smallmatrix}$ &
$\begin{smallmatrix}1&0&0&1\\0&0&0&1\\0&0&0&1\\1&1&1&1\end{smallmatrix}$ &
$\begin{smallmatrix}0&0&0&1\\0&0&0&1\\0&0&1&0\\1&1&0&0\end{smallmatrix}$ &
$\begin{smallmatrix}0&1&0&0\\1&0&0&0\\0&0&1&0\\0&0&0&0\end{smallmatrix}$ &
$\begin{smallmatrix}0&0&0&1\\0&1&1&0\\0&1&1&1\\1&0&1&0\end{smallmatrix}$ &
$\begin{smallmatrix}0&1&1&1\\1&0&1&1\\1&1&0&0\\1&1&0&1\end{smallmatrix}$ \\[1.85ex]
\hline
$542$ \rule{0cm}{.75cm}  &
$\begin{smallmatrix}0&0&1&1\\0&1&1&0\\1&1&0&0\\1&0&0&0\end{smallmatrix}$ &
$\begin{smallmatrix}0\end{smallmatrix}$ &
$\begin{smallmatrix}0&0&0&0&1\\0&0&0&1&0\\0&0&1&1&0\\0&1&1&1&1\\1&0&0&1&0\end{smallmatrix}$ &
$\begin{smallmatrix}0\end{smallmatrix}$ &
$\begin{smallmatrix}0&0&0&0&1\\0&0&1&0&0\\0&1&0&1&1\\0&0&1&1&0\\1&0&1&0&0\end{smallmatrix}$ &
$\begin{smallmatrix}1&0&1&1\\0&0&0&1\\1&0&0&1\\1&1&1&1\end{smallmatrix}$ \\[2.45ex]
\hline
$548$ \rule{0cm}{.75cm}  &
$\begin{smallmatrix}0&1&0&1\\1&0&0&1\\0&0&1&0\\1&1&0&0\end{smallmatrix}$ &
$\begin{smallmatrix}1&0&0&1\\0&1&1&0\\0&1&0&0\\1&0&0&1\end{smallmatrix}$ &
$\begin{smallmatrix}0&0&1&0\\0&1&0&0\\1&0&1&0\\0&0&0&0\end{smallmatrix}$ &
$\begin{smallmatrix}0&0&0&1\\0&1&1&1\\0&1&1&1\\1&1&1&0\end{smallmatrix}$ &
$\begin{smallmatrix}1&0&0&1\\0&1&1&1\\0&1&0&0\\1&1&0&1\end{smallmatrix}$ &
$\begin{smallmatrix}0&0&0&0&1\\0&0&0&0&1\\0&0&1&1&1\\0&0&1&0&1\\1&1&1&1&0\end{smallmatrix}$ \\[2.45ex]
\hline
$554$ \rule{0cm}{.75cm}  &
$\begin{smallmatrix}0&0&0&1&0\\0&1&1&1&0\\0&1&0&1&0\\1&1&1&0&0\\0&0&0&0&1\end{smallmatrix}$ &
$\begin{smallmatrix}0&1&0&1\\1&0&1&1\\0&1&0&0\\1&1&0&1\end{smallmatrix}$ &
$\begin{smallmatrix}1&0&0&0\\0&0&0&1\\0&0&1&0\\0&1&0&0\end{smallmatrix}$ &
$\begin{smallmatrix}1&0&0&1\\0&0&1&0\\0&1&1&0\\1&0&0&0\end{smallmatrix}$ &
$\begin{smallmatrix}0\end{smallmatrix}$ &
$\begin{smallmatrix}0&1&0\\1&0&0\\0&0&1\end{smallmatrix}$ \\[2.45ex]
\hline
$560$ \rule{0cm}{.75cm}  &
$\begin{smallmatrix}0&0&0&1\\0&0&1&1\\0&1&0&1\\1&1&1&1\end{smallmatrix}$ &
$\begin{smallmatrix}1&0&1&0\\0&0&0&0\\1&0&0&0\\0&0&0&1\end{smallmatrix}$ &
$\begin{smallmatrix}0&1&1&0\\1&1&0&0\\1&0&0&1\\0&0&1&0\end{smallmatrix}$ &
$\begin{smallmatrix}0&0&0&1&0\\0&0&0&0&0\\0&0&1&1&1\\1&0&1&1&0\\0&0&1&0&0\end{smallmatrix}$ &
$\begin{smallmatrix}0&0&0&0&1\\0&0&1&1&1\\0&1&1&0&0\\0&1&0&1&1\\1&1&0&1&1\end{smallmatrix}$ &
$\begin{smallmatrix}0&1&1&0\\1&1&0&0\\1&0&1&1\\0&0&1&0\end{smallmatrix}$ \\[2.45ex]
\hline
$566$ \rule{0cm}{.75cm}  &
$\begin{smallmatrix}0&0&0&1\\0&0&1&0\\0&1&0&1\\1&0&1&0\end{smallmatrix}$ &
$\begin{smallmatrix}1&0&0&1\\0&1&1&0\\0&1&0&1\\1&0&1&0\end{smallmatrix}$ &
$\begin{smallmatrix}0&0&1&1\\0&0&0&1\\1&0&1&0\\1&1&0&0\end{smallmatrix}$ &
$\begin{smallmatrix}0&0&1&0\\0&1&0&1\\1&0&1&1\\0&1&1&0\end{smallmatrix}$ &
$\begin{smallmatrix}0&0&0&0&1\\0&0&0&1&0\\0&0&1&1&0\\0&1&1&0&1\\1&0&0&1&1\end{smallmatrix}$ &
$\begin{smallmatrix}0&1&0&0\\1&0&1&0\\0&1&1&0\\0&0&0&0\end{smallmatrix}$ \\[2.45ex]
\hline
$572$ \rule{0cm}{.75cm}  &
$\begin{smallmatrix}0&0&0&0&1\\0&0&0&1&0\\0&0&0&1&0\\0&1&1&1&0\\1&0&0&0&1\end{smallmatrix}$ &
$\begin{smallmatrix}1&0&1\\0&1&0\\1&0&0\end{smallmatrix}$ &
$\begin{smallmatrix}1&0&1\\0&0&1\\1&1&1\end{smallmatrix}$ &
$\begin{smallmatrix}1&0&0&0\\0&0&1&0\\0&1&0&0\\0&0&0&1\end{smallmatrix}$ &
$\begin{smallmatrix}1&0&0&1\\0&1&1&0\\0&1&0&0\\1&0&0&0\end{smallmatrix}$ &
$\begin{smallmatrix}0&0&0&0&1\\0&1&1&1&1\\0&1&1&0&0\\0&1&0&0&0\\1&1&0&0&1\end{smallmatrix}$ \\[2.45ex]
\hline
$578$ \rule{0cm}{.75cm}  &
$\begin{smallmatrix}1\end{smallmatrix}$ &
$\begin{smallmatrix}0&0&0&1\\0&1&0&0\\0&0&1&0\\1&0&0&1\end{smallmatrix}$ &
$\begin{smallmatrix}1&0&0&0\\0&0&1&0\\0&1&0&0\\0&0&0&1\end{smallmatrix}$ &
$\begin{smallmatrix}0&1&0&1\\1&0&0&1\\0&0&1&0\\1&1&0&1\end{smallmatrix}$ &
$\begin{smallmatrix}0&0&0&0&1\\0&0&1&0&0\\0&1&0&1&0\\0&0&1&0&1\\1&0&0&1&0\end{smallmatrix}$ &
$\begin{smallmatrix}1&1&0&0\\1&0&1&1\\0&1&0&1\\0&1&1&1\end{smallmatrix}$ \\[2.45ex]
\hline
$584$ \rule{0cm}{.75cm}  &
$\begin{smallmatrix}1&0&1&1\\0&1&0&0\\1&0&0&1\\1&0&1&0\end{smallmatrix}$ &
$\begin{smallmatrix}0&0&0&0&1\\0&0&1&0&1\\0&1&1&0&1\\0&0&0&1&0\\1&1&1&0&1\end{smallmatrix}$ &
$\begin{smallmatrix}0&0&0&0&1\\0&1&0&0&1\\0&0&1&0&1\\0&0&0&0&0\\1&1&1&0&0\end{smallmatrix}$ &
$\begin{smallmatrix}0&0&0&0&1\\0&0&0&0&0\\0&0&1&1&0\\0&0&1&0&1\\1&0&0&1&1\end{smallmatrix}$ &
$\begin{smallmatrix}1&0&0&1\\0&0&0&0\\0&0&0&1\\1&0&1&1\end{smallmatrix}$ &
$\begin{smallmatrix}0&0&0&0&1\\0&0&1&1&1\\0&1&0&0&1\\0&1&0&1&0\\1&1&1&0&1\end{smallmatrix}$ \\[2.45ex]
\hline
$590$ \rule{0cm}{.75cm}  &
$\begin{smallmatrix}1&0&1&1\\0&0&1&0\\1&1&1&0\\1&0&0&1\end{smallmatrix}$ &
$\begin{smallmatrix}0&0&0&0&1\\0&0&0&0&1\\0&0&1&0&0\\0&0&0&1&0\\1&1&0&0&0\end{smallmatrix}$ &
$\begin{smallmatrix}0&0&1&1\\0&1&0&0\\1&0&1&1\\1&0&1&0\end{smallmatrix}$ &
$\begin{smallmatrix}0\end{smallmatrix}$ &
$\begin{smallmatrix}0&1&0&0\\1&1&1&1\\0&1&1&1\\0&1&1&0\end{smallmatrix}$ &
$\begin{smallmatrix}0&0&0&0&1\\0&0&0&0&0\\0&0&1&1&1\\0&0&1&1&1\\1&0&1&1&0\end{smallmatrix}$ \\[2.45ex]
\hline
$596$ \rule{0cm}{.65cm}  &
$\begin{smallmatrix}0&0&1&0\\0&1&0&0\\1&0&1&0\\0&0&0&0\end{smallmatrix}$ &
$\begin{smallmatrix}0&0&1&0\\0&1&0&0\\1&0&0&0\\0&0&0&1\end{smallmatrix}$ &
$\begin{smallmatrix}0&1&1&1\\1&0&1&0\\1&1&0&0\\1&0&0&1\end{smallmatrix}$ &
$\begin{smallmatrix}1&0&0&0\\0&1&0&0\\0&0&1&0\\0&0&0&1\end{smallmatrix}$ &
$\begin{smallmatrix}0&1&1&1\\1&0&0&0\\1&0&1&0\\1&0&0&0\end{smallmatrix}$ & \\[1.85ex]
\hline
\end{tabular}
\caption{This table lists the matrices $A$ which correspond to the lower right corner of the matrices $B\in M_m(\mathbb{F}_2)$ for $m=\Mg{2,\ldots, 600}$
 in the form of Equation \eqref{eqn:fibset:A} satisfying Conditions (i) and (ii') as discussed in Chapter \ref{sec:fibset} and can be used to construct
 complete sets of Fibonacci-based cyclic MUBs for dimensions $2^m$.\label{table:app:triangle}}
\end{table}

\subsection{Companion matrix solutions}\label{app:fibonacci_based:companion}
In Chapter \ref{subsec:fibset:analytical} we conjecture the existence of a symmetric companion matrix
over $\GF{2}$. As long as we do not have the specific form, we propose found solutions for dimensions $d=2^m$
with $m=\Mg{2,\ldots, 36}$. Contrary to the solutions of Appendix \ref{app:fibonacci_based:triangle}, the
number of matrices we have to test before getting a correct solution seems to scale worse. According to
the representation of the solutions, as in Equation \eqref{eqn:fibset:antidiag}, we write the solutions
in Table \ref{table:app:antidiag} as a binary string, starting from left with $s_1$ and ending with the
last entry that equals one.

\begin{table}[h!]
\begin{tabular}{|c|l|}
\hline
$m$ & $s$\\
\hline
$2$ & 11\\
\hline
$4$ & 1101\\
\hline
$5$ & 01101\\
\hline
$6$ & 001101\\
\hline
$7$ & 1101001\\
\hline
$8$ & 11010001\\
\hline
$9$ & 011001001\\
\hline
$10$ & 0001100001\\
\hline
$11$ & 01100100001\\
\hline
$12$ & 001101000001\\
\hline
$13$ & 1100101000001\\
\hline
$14$ & 10100110000001\\
\hline
$15$ & 110100010000001\\
\hline
$16$ & 1101000100000001\\
\hline
$17$ & 10011000100000001\\
\hline
$18$ & 000000011000000001\\
\hline
$19$ & 1110100001000000001\\
\hline
$20$ & 00011000010000000001\\
\hline
$21$ & 111101000010000000001\\
\hline
$22$ & 0001111000100000000001\\
\hline
$23$ & 01100100000100000000001\\
\hline
$24$ & 001101000001000000000001\\
\hline
$25$ & 1000011000001000000000001\\
\hline
$26$ & 01100000100010000000000001\\
\hline
$27$ & 110010100000010000000000001\\
\hline
$28$ & 1010011000000100000000000001\\
\hline
$29$ & 01010001000000100000000000001\\
\hline
$30$ & 110110111000001000000000000001\\
\hline
$31$ & 0000110100000001000000000000001\\
\hline
$32$ & 11010001000000010000000000000001\\
\hline
$33$ & 010110001000000010000000000000001\\
\hline
$34$ & 0100100011000000100000000000000001\\
\hline
$35$ & 10110100100000000100000000000000001\\
\hline
$36$ & 100111001000000001000000000000000001\\
\hline
\end{tabular}
\caption{Solutions for the symmetric companion matrix\index{Symmetric companion matrix} which is discussed
in Section~\ref{subsec:fibset:analytical} and represents a reduced stabilizer matrix $B$ in order
to construct a complete set of Fibonacci-based cyclic MUBs (cf. Section \ref{sec:fibset}) for dimensions
$d=2^m$ with $m=2,\ldots,36$. The solutions are represented as a binary string, starting from left with
$s_1$ and ending with the last entry that equals one, corresponding to Equation \eqref{eqn:fibset:antidiag}.
\label{table:app:antidiag}}
\end{table}

\section{Fractal patterns}\label{app:fractals}
In Section \ref{sec:fibonacci_polynomials}, properties of the Fibonacci polynomials over $\GF{2}$ are discussed.
The structure of these polynomials is shown in Table \ref{table:app:Fibpattern} and related to the Sierpinski triangle.
Also the characteristic polynomials of matrices over $\GF{2}$, that have only ones in the upper left half and
zeros otherwise (thus, setting $A$ to zero in Equation \eqref{eqn:fibset:A}), show an equivalent pattern (see Table \ref{table:app:Apattern}).
Removing from Table \ref{table:app:Fibpattern} every second line and every second column leads to (an obviously smaller version of) Table \ref{table:app:Apattern}.
Both are related to Pascal's triangle over $\GF{2}$ (see Table \ref{table:app:Pascal}).
To have a good impression, we encode ones by ``\#'' and zeros by a space.

\begin{table}[ht!]
\begingroup
    \fontsize{8pt}{8pt}
\begin{verbatim}
#
 #
# #
   #
# # #
 #   #
#   # #
       #
#   # # #
 #   #   #
# # #   # #
   #       #
# #     # # #
 #       #   #
#       #   # #
               #
#       #   # # #
 #       #   #   #
# #     # # #   # #
   #       #       #
# # #   # #     # # #
 #   #   #       #   #
#   # # #       #   # #
       #               #
#   # #         #   # # #
 #   #           #   #   #
# # #           # # #   # #
   #               #       #
# #             # #     # # #
 #               #       #   #
#               #       #   # #
                               #
#               #       #   # # #
 #               #       #   #   #
# #             # #     # # #   # #
   #               #       #       #
# # #           # # #   # #     # # #
 #   #           #   #   #       #   #
#   # #         #   # # #       #   # #
       #               #               #
#   # # #       #   # #         #   # # #
 #   #   #       #   #           #   #   #
# # #   # #     # # #           # # #   # #
   #       #       #               #       #
# #     # # #   # #             # #     # # #
 #       #   #   #               #       #   #
#       #   # # #               #       #   # #
               #                               #
#       #   # #                 #       #   # # #
 #       #   #                   #       #   #   #
# #     # # #                   # #     # # #   # #
   #       #                       #       #       #
# # #   # #                     # # #   # #     # # #
 #   #   #                       #   #   #       #   #
#   # # #                       #   # # #       #   # #
       #                               #               #
#   # #                         #   # #         #   # # #
 #   #                           #   #           #   #   #
# # #                           # # #           # # #   # #
   #                               #               #       #
# #                             # #             # #     # # #
 #                               #               #       #   #
#                               #               #       #   # #
                                                               #
#                               #               #       #   # # #
 #                               #               #       #   #   #
# #                             # #             # #     # # #   # #
   #                               #               #       #       #
# # #                           # # #           # # #   # #     # # #
 #   #                           #   #           #   #   #       #   #
\end{verbatim}
\endgroup
\caption{Fibonacci polynomials over $\GF{2}$, where ``\#'' encodes a coefficient
that equals one in the polynomial, and a space encodes a zero. The rightmost coefficient is the
highest, the rows indicate the index $j$ of the Fibonacci polynomial $F_j(x)$, starting with
$1$ and ending with $70$.}\label{table:app:Fibpattern}
\end{table}

\begin{table}[ht!]
\begingroup
\fontsize{8pt}{8pt}
\begin{verbatim}
##
###
# ##
# ###
### ##
##  ###
#   # ##
#   # ###
##  ### ##
### ##  ###
# ###   # ##
# ##    # ###
###     ### ##
##      ##  ###
#       #   # ##
#       #   # ###
##      ##  ### ##
###     ### ##  ###
# ##    # ###   # ##
# ###   # ##    # ###
### ##  ###     ### ##
##  ### ##      ##  ###
#   # ###       #   # ##
#   # ##        #   # ###
##  ###         ##  ### ##
### ##          ### ##  ###
# ###           # ###   # ##
# ##            # ##    # ###
###             ###     ### ##
##              ##      ##  ###
#               #       #   # ##
#               #       #   # ###
##              ##      ##  ### ##
###             ###     ### ##  ###
# ##            # ##    # ###   # ##
# ###           # ###   # ##    # ###
### ##          ### ##  ###     ### ##
##  ###         ##  ### ##      ##  ###
#   # ##        #   # ###       #   # ##
#   # ###       #   # ##        #   # ###
##  ### ##      ##  ###         ##  ### ##
### ##  ###     ### ##          ### ##  ###
# ###   # ##    # ###           # ###   # ##
# ##    # ###   # ##            # ##    # ###
###     ### ##  ###             ###     ### ##
##      ##  ### ##              ##      ##  ###
#       #   # ###               #       #   # ##
#       #   # ##                #       #   # ###
##      ##  ###                 ##      ##  ### ##
###     ### ##                  ###     ### ##  ###
# ##    # ###                   # ##    # ###   # ##
# ###   # ##                    # ###   # ##    # ###
### ##  ###                     ### ##  ###     ### ##
##  ### ##                      ##  ### ##      ##  ###
#   # ###                       #   # ###       #   # ##
#   # ##                        #   # ##        #   # ###
##  ###                         ##  ###         ##  ### ##
### ##                          ### ##          ### ##  ###
# ###                           # ###           # ###   # ##
# ##                            # ##            # ##    # ###
###                             ###             ###     ### ##
##                              ##              ##      ##  ###
#                               #               #       #   # ##
#                               #               #       #   # ###
##                              ##              ##      ##  ### ##
###                             ###             ###     ### ##  ###
# ##                            # ##            # ##    # ###   # ##
# ###                           # ###           # ###   # ##    # ###
### ##                          ### ##          ### ##  ###     ### ##
##  ###                         ##  ###         ##  ### ##      ##  ###
\end{verbatim}
\endgroup
\caption{Characteristic polynomials of matrices over $\GF{2}$ with ones in the upper left half and
zeros otherwise ($A$ in Equation \eqref{eqn:fibset:A} is set to zero). ``\#'' encodes a coefficient
that equals one in the polynomial, whereas a space encodes a zero. The leftmost coefficient is the
highest, the rows indicate the dimension $m$ of the matrix $B$, starting with $1$ and ending with $70$.}\label{table:app:Apattern}
\end{table}

\begin{table}[ht!]
\begingroup
\fontsize{8pt}{8pt}
\begin{verbatim}
#
##
# #
####
#   #
##  ##
# # # #
########
#       #
##      ##
# #     # #
####    ####
#   #   #   #
##  ##  ##  ##
# # # # # # # #
################
#               #
##              ##
# #             # #
####            ####
#   #           #   #
##  ##          ##  ##
# # # #         # # # #
########        ########
#       #       #       #
##      ##      ##      ##
# #     # #     # #     # #
####    ####    ####    ####
#   #   #   #   #   #   #   #
##  ##  ##  ##  ##  ##  ##  ##
# # # # # # # # # # # # # # # #
################################
#                               #
##                              ##
# #                             # #
####                            ####
#   #                           #   #
##  ##                          ##  ##
# # # #                         # # # #
########                        ########
#       #                       #       #
##      ##                      ##      ##
# #     # #                     # #     # #
####    ####                    ####    ####
#   #   #   #                   #   #   #   #
##  ##  ##  ##                  ##  ##  ##  ##
# # # # # # # #                 # # # # # # # #
################                ################
#               #               #               #
##              ##              ##              ##
# #             # #             # #             # #
####            ####            ####            ####
#   #           #   #           #   #           #   #
##  ##          ##  ##          ##  ##          ##  ##
# # # #         # # # #         # # # #         # # # #
########        ########        ########        ########
#       #       #       #       #       #       #       #
##      ##      ##      ##      ##      ##      ##      ##
# #     # #     # #     # #     # #     # #     # #     # #
####    ####    ####    ####    ####    ####    ####    ####
#   #   #   #   #   #   #   #   #   #   #   #   #   #   #   #
##  ##  ##  ##  ##  ##  ##  ##  ##  ##  ##  ##  ##  ##  ##  ##
# # # # # # # # # # # # # # # # # # # # # # # # # # # # # # # #
################################################################
#                                                               #
##                                                              ##
# #                                                             # #
####                                                            ####
#   #                                                           #   #
##  ##                                                          ##  ##
# # # #                                                         # # # #
\end{verbatim}
\endgroup
\caption{Pascal's  triangle over $\GF{2}$, where ``\#'' encodes a binomial coefficient
that equals one and a space encodes a zero. For a binomial coefficient $\binom{n}{k}$ with $n,k \in \N$
and $k\leq n$, the columns indicate different values of $k$, where the ordering is symmetric and
the rows indicate the values for $n$, starting with $0$ and ending with $70$.}\label{table:app:Pascal}
\end{table}

\bibliographystyle{scriptum} 
\bibliography{references}

\printindex

\selectlanguage{german}
%
%
%

\end{document}